\DeclareSymbolFont{AMSb}{U}{msb}{m}{n}
\DeclareSymbolFontAlphabet{\mathbb}{AMSb}
\newcommand{\Ran}{{\rm Ran\3}}
\newcommand{\spec}{\mathop{\rm Spec\,}\nolimits}
\newcommand\nab{{\bf \nabla}}
\newcommand{\rot}{{\rm rot}}
\def\supp{\operatorname{supp}\nolimits}
\newcommand{\mes}{{\rm mes}}
\newcommand{\cC}{{\cal C}} 
\newcommand{\cD}{{\cal D}} 
\newcommand{\E}{{\cal E}} 
\newcommand{\cE}{{\cal E}}
\newcommand{\cF}{{\cal F}}
\newcommand{\cH}{{\cal H}}
\newcommand{\cK}{{\cal K}}
\newcommand{\cM}{{\cal M}}
\newcommand{\cO}{{\cal O}}
\newcommand{\cR}{{\cal R}}
\newcommand{\cS}{{\cal S}}
\newcommand{\cV}{{\cal V}}
\newcommand{\cW}{{\cal W}}
\newcommand{\cX}{{\cal X}}
\newcommand{\cY}{{\cal Y}}
\newcommand{\al}{\alpha}
\newcommand{\ccT}{{\cal T}} 
\newcommand{\bba}{{\bf a}} 
\newcommand{\n}{{\bf n}} 
\newcommand{\bb}{{\bf b}} 
\newcommand{\e}{{\bf e}}
\newcommand{\x}{{\bf x}} 
\newcommand{\y}{{\bf y}}
\newcommand{\bk}{{\bf k}} 
\newcommand{\cm}{{\rm m}}
\newcommand{\ci}{\cite}
\newcommand{\ti}{\tilde}
\newcommand{\de}{\delta}
\newcommand{\De}{\Delta}
\newcommand{\Ga}{\Gamma}
\newcommand{\ds}{\displaystyle}
\newcommand{\fr}{\frac}
\newcommand{\la}{\label}
\newcommand{\lam}{\lambda}
\newcommand{\Lam}{\Lambda}
\newcommand{\na}{\nabla}
\newcommand{\om}{\omega}
\newcommand{\Om}{\Omega}
\newcommand{\vp}{\varphi}
\newcommand{\ov}{\overline}
\newcommand{\pa}{\partial}
\newcommand{\re}{\ref}
\newcommand{\ga}{\gamma}
\newcommand{\h}{\hbar}
\newcommand{\Si}{\Sigma}
\newcommand{\si}{\sigma}
\newcommand{\dist}{\rm dist\5}
\newcommand{\const}{\rm const}
\newcommand{\ve}{\varepsilon}
\newcommand\C{{\mathbb C}}
\newcommand\R{{\mathbb R}}
\newcommand\N{{\mathbb N}}
\newcommand\Z{{\mathbb Z}}
\newcommand\T{{\mathbb T}}
\newcommand{\vka}{\varkappa}
\newcommand{\5}{{\hspace{0.5mm}}}
\newcommand{\3}{{\hspace{0.2mm}}}
\newcommand{\rRe}{{\rm Re\5}}
\newcommand{\rIm}{{\rm Im\5}}
 \newcommand{\st}{\stackrel} 
 \newcommand{\Spec}{\mathop{\rm Spec\,}\nolimits}
 \newcommand{\toLpR}{\st{L^p_R}{\longrightarrow}} 
\newcommand{\toLLR}{\st{L^2_R}{\longrightarrow}}
 \newcommand{\tow}{\st{L^2_w}{\longrightarrow}} 
  \newcommand{\toLtt}{\st{L^2(\ov\T)}{-\!\!\!-\!\!\!-\!\!\!\!\longrightarrow}}
 \newcommand{\toLp}{\st{L^p}\longrightarrow} 
 \newcommand{\toHd}{\st{H^2}\longrightarrow} 
 \newcommand{\toLd}{\st{L^2}\longrightarrow}
\newcommand{\toYs}{\stackrel{{\cY^0}_{-\al}}{-\!\!-\!\!\!\longrightarrow}}
\newcommand{\toLt}{\st{L^2({\T})}{-\!\!\!-\!\!\!-\!\!\!\!\longrightarrow}} 
\newcommand{\toLwt}{\st{L^2_w({\T})}{-\!\!\!-\!\!\!-\!\!\!\!\longrightarrow}} 
\newcommand{\toLC}{\st{C({\T})}{-\!\!-\!\!\!\!\longrightarrow}} 
\newcommand{\tocX}{\st{\cX}{-\!\!\!\!\longrightarrow}}
\newcommand{\Ker}{{\rm Ker\5}}
\newtheorem{theorem}{Theorem}[section]
\newtheorem{definition}[theorem]{Definition}
\newtheorem{lemma}[theorem]{Lemma}
\newtheorem{example}[theorem]{Example}
\newtheorem{remark}[theorem]{Remark}
\newtheorem{remarks}[theorem]{Remarks}
\newtheorem{cor}[theorem]{Corollary}
\newtheorem{proposition}[theorem]{Proposition}
\newcommand{\be}{\begin{equation}}
 \newcommand{\ee}{\end{equation}}
 \newcommand{\ba}{\begin{array}}
 \newcommand{\ea}{\end{array}}
\newcommand{\beqn}{\begin{eqnarray}}
 \newcommand{\eeqn}{\end{eqnarray}}
\newcommand{\bd}{\begin{definition}}
 \newcommand{\ed}{\end{definition}}
\newcommand{\bt}{\begin{theorem}}
 \newcommand{\et}{\end{theorem}}
\newcommand{\bp}{\begin{proposition}}
 \newcommand{\ep}{\end{proposition}}
\newcommand{\bl}{\begin{lemma}}
 \newcommand{\el}{\end{lemma}}
\newcommand{\bc}{\begin{cor}}
 \newcommand{\ec}{\end{cor}}
\newcommand{\br}{\begin{remark} }
 \newcommand{\er}{\end{remark}}
\newcommand{\brs}{\begin{remarks} }
 \newcommand{\ers}{\end{remarks}}
 \newcommand{\bex}{\begin{example} }
 \newcommand{\eex}{\end{example}}
\begin{document}

\begin{titlepage}
\vspace{2cm}

\begin{center}
{\Large\bf 
On   stability of solid state  in the 
\medskip\\
Schr\"odinger--Poisson--Newton model
}
\end{center}
\bigskip\bigskip

 \begin{center}
{\large A. Komech}
\\
{\it Faculty of Mathematics of Vienna University\\
and Institute for Information Transmission Problems RAS } \\
e-mail:~alexander.komech@univie.ac.at
\bigskip\\
{\large E. Kopylova}
\\
{\it Faculty of Mathematics of Vienna University\\
and Institute for Information Transmission Problems RAS} \\
 e-mail:~elena.kopylova@univie.ac.at
\end{center}
\vspace{1cm}

 \centerline{Resume}
 \medskip
 
We survey our recent results on stability of 3D crystals in the Schr\"odinger--Poisson--Newton 
model.
We establish orbital stability for the ground state in the 
case of finite periodic crystal and linear stability in the case of  infinite 
crystals under novel Jellium and Wiener conditions on the 
charge density of ions. The corresponding examples are given.

For the finite crystals, the  electron field   is 
described a) by  one-particle Schr\"odinger equation, 
or b) by  $N$-particle Schr\"odinger equation
in the space of antisymmetric wave functions respecting the  Pauli exclusion principle.
The proofs rely on positivity of the Hessian
of Hamiltonian functional in the directions orthogonal to the manifold  of ground states, 
and the Pauli exclusion principle plays the key role in the proof of this positivity.

The problem of spatial periodicity of the ground states is 
reduced to a generalisation  of  Newton-Girard formulas,
and  examples of non-periodic ground states  are constructed.

In the case of  infinite crystals the proofs rely on our novel 
spectral theory of Hamiltonian operators,
which is a special 
version of the Gohberg--Krein--Langer theory of selfadjoint 
operators in the Hilbert spaces with indefinite metric.
We  establish the existence
of the ground states and the dispersive decay for the linearised dynamics.

\bigskip

{\bf Key words and phrases:}
crystal; ion; lattice; Schr\"odinger--Poisson  equations; 
Newton  equations;
ground state;  
linearization;
linear stability; orbital stability;
dispersion relation;
dispersive decay; analytic set;
Hamiltonian equation;
Hamiltonian structure;
Hamiltonian operator;  spectral resolution;
selfadjoint operator;
indefinite metric;
spectral resolution; energy conservation; charge conservation; $U(1)$-invariance;  
 Hessian; Fourier transform; Bloch transform; limiting absorption principle; discrete spectrum; singular spectrum.
\bigskip

{\bf AMS subject classification:} 35L10, 34L25, 47A40, 81U05

\end{titlepage}
\tableofcontents
\section{Preface}
The
first mathematical results on the stability of matter were obtained 
by Dyson and Lenard 
in \ci{D1967, DL1968} 
where the energy bound from below
was established.
The thermodynamic limit
for the Coulomb systems  
was studied first by Lebowitz and Lieb 
\ci{LL1969,LL1973}, see the survey and further development in \ci{LS2010}.
These results were extended  by Catto, Le Bris, Lions 
and others to Thomas-Fermi and Hartree-Fock models
\ci{CBL1998,CBL2001,CBL2002}. 
Further results in this direction were established 
by Canc\'es,  Lahbabi, Lewin, Sabin, Stoltz, and others
 \ci{CLL2013,CS2012, BL2005, LS2014-1, LS2014-2}.
All these results concern  either
the convergence of
the ground state of finite particle systems
in the thermodynamic limit or 
the existence of the ground state
for infinite particle systems. 

\smallskip

In the Hartree-Fock model,
the crystal ground state 
 was constructed for the first time by Catto, Le Bris, and  Lions  \ci{CBL2001,CBL2002}.
For the Thomas-Fermi model, see \ci{CBL1998}.

In \ci{CS2012}, Canc\'es and Stoltz have established the well-posedness  for 
the dynamics of  
local perturbations of the crystal ground state 
in the  {\it random phase approximation}
for the reduced  Hartree-Fock equations
with the Coulomb  pairwise interaction potential $w(x-y)=1/|x-y|$.
The  space-periodic nuclear potential
in the equation \ci[(3)]{CS2012}
does not depend on time that corresponds to 
the fixed nuclei positions. 
The nonlinear Hartree-Fock dynamics
for crystals
with the Coulomb potential and
without the  random phase approximation
was not studied previously,
see the discussion in 
\ci{BL2005} and in the introductions of \ci{CLL2013,CS2012}.

In \ci{CLL2013},
E. Canc\`es, S. Lahbabi, and M. Lewin have considered
 the random reduced HF model of crystal  when 
the ions charge density and the electron density matrix are random processes,
and the action of the lattice translations on the probability space is ergodic.
The authors obtain suitable generalizations of the Hoffmann-Ostenhof 
and Lieb-Thirring inequalities  for ergodic density matrices, 
and
construct a random potential which is a solution  to 
 the Poisson equation 
with the corresponding stationary stochastic  charge density. 
The main result is the  coincidence of this model with the thermodynamic limit in  
the case of the short range Yukawa interaction.

In \ci{LS2014-1}, Lewin and Sabin have established the well-posedness for the 
reduced von Neumann equation, describing the Fermi gas,
 with density matrices of infinite trace 
and pair-wise interaction potentials $w\in L^1(\R^3)$. Moreover, the authors  
prove the asymptotic stability of translation-invariant stationary states 
for 2D Fermi gas \ci{LS2014-2}.
\medskip

Traditional {\it one-electron} Bethe--Bloch--Sommerfeld
mathematical model of crystals  reduces to the linear
Schr\"o\-din\-ger
equation with a
space-periodic static potential, which corresponds to the standing ions.
The corresponding spectral theory
is well developed, see \ci{RS4} and the references therein.
The scattering theory for short-range and long-range perturbations of
such `periodic operators' was  constructed in \ci{GN1,GN2}.

In
\ci{ADK2016}, 
Anikin, Dobrokhotov and  Katsnelson 
studied the semiclassical asymptotic approximation of the spectrum of the two-dimensional Schr\"odinger operator with a potential periodic in x and increasing at infinity in y. 
The authors
showed that the lower part of the spectrum has a band structure (where bands can overlap) and calculate their widths and dispersion relations between energy and quasimomenta. The key role in the obtained asymptotic approximation is played by librations, i.e., unstable periodic trajectories of the Hamiltonian system with an inverted potential. 
An effective numerical algorithm for computing the widths of bands 
was presented.
An applications to quantum dimers
was discussed.

\medskip

However,
the dynamical stability of crystals 
with {\it moving ions}
was never considered previously. 
This stability is 
necessary for a rigorous analysis 
of fundamental quantum phenomena in the solid state physics: 
heat conductivity, electric conductivity, thermoelectronic emission, photoelectric effect, 
Compton effect, 
etc., see \ci{BLR}.
Here we survey our recent results on stability
of 3D crystals in the 
Schr\"odinger--Poisson--Newton 
model.
\smallskip

In the present book, we establish orbital stability for the ground state in the 
case of finite periodic crystal and linear stability for infinite 
crystals under novel   Jellium and Wiener conditions on the 
charge density of ions. The corresponding examples are given.

We establish orbital stability for the ground state in the 
case of finite periodic crystal and linear stability in the case of infinite 
crystals under novel Jellium and Wiener conditions on the 
charge density of ions. The corresponding examples are given.

For the finite crystals, the  electron field   is 
described a) by  one-particle Schr\"odinger equation, 
or b) by  $N$-particle Schr\"odinger equation
in the space of antisymmetric wave functions respecting the Pauli exclusion principle.
The proofs rely on positivity of the Hessian
of Hamiltonian functional in the directions orthogonal to the manifold  of ground states, 
and the Pauli exclusion principle plays the key role in the proof of this positivity (see Remark \ref{rPa}).

The problem of spatial periodicity of the ground states is 
reduced to a generalisation  of  Newton-Girard formulas,
and  examples of non-periodic ground states  are constructed (see Sections \ref{snp} and \ref{spp}).

In the case of  infinite crystals the proofs rely on our novel 
spectral theory of Hamiltonian operators 
\ci{KK2014a, KK2014b},
which is a special 
version of the Gohberg--Krein theory of selfadjoint 
operators in the Hilbert spaces with indefinite metric \ci{GK,KL1963,L1981}.
We  establish the existence
of the ground states and the well-posedness and dispersive decay for the linearised dynamics.
\medskip\\
{\bf Acknowledgments.} The authors are grateful to Herbert Spohn for helpful 
discussions and remarks.

The authors are
 indebted to
the Faculty of Mathematics of Vienna University, and the Institute
for the Information Transmission Problems of the Russian Academy
of Sciences for providing congenial facilities for the work.

The work was supported in part by the Department of Mechanics and
Mathematics of  Moscow State University, and 
by the Austrian Science Fund (FWF) (project
nos. P28152 and P27492).
\medskip\bigskip

Alexander Komech and Elena Kopylova

\medskip\bigskip

Vienna,\qquad 15.01.2021

\part{Orbital stability of finite crystals}

In the first part of present book, we consider  finite  crystals
under periodic boundary conditions with one ion per cell of a lattice.

The electron cloud is described by one-particle or $N$-particle
Schr\"odinger equations
while the ions are described as classical charged particles 
moving in an electrostatic potential created by 
the charge densities of electrons and ions.

We
construct the global dynamics    and prove the conservation of energy and charge.
Our main result
is  the  orbital stability of every 
 ground state with periodic arrangement of ions
under novel `Jellium' and `Wiener'  conditions on the ion charge density.  

The presentation mainly relies on our papers \ci{KKjmp2016, KK-SIAM_2017} with suitable extensions.

\chapter{One-particle Schr\"odinger theory }


\centerline{Abstract}
\medskip

In this chapter we describe 
the electron cloud in the crystal  by  one-particle 
Schr\"odinger equation and moving ions.
The ions are described as classical particles   
that corresponds to the  
Born-Oppenheimer  approximation.
The ions interact with the electron cloud via 
the scalar potential, which  is a solution to the corresponding Poisson equation.

We
construct global dynamics, prove the conservation of energy and charge, and 
 give the description of all ground states.
Our main result
is  the  orbital stability of every 
 ground state with periodic arrangement of ions
under novel `Jellium' and `Wiener'  conditions on the ion charge density (\re{Wai}) and 
(\re{W1}).

Moreover, we construct  examples of  non-periodic 
ground states. In these examples  the Wiener condition fails.
This suggests that the periodicity should hold under the  Wiener condition, but this is still an open problem.  
\smallskip

The model with the one-particle Schr\"odinger equation
does not respect the Pauli exclusion principle for electrons.
In the next chapter we extend the orbital stability to the model with $N$-particle  Schr\"odinger equation
respecting the Pauli exclusion principle.
\medskip

\section{Introduction}

We consider crystals which occupy the finite torus $\T:=\R^3/N\Z^3$
and have one ion per cell of the cubic lattice ${\Ga}:=\Z^3/N\Z^3$, where $N\in\N$.
The cubic lattice  is chosen for the simplicity of notations.
We denote by  $\sigma(x)$ the charge density of one ion,
\begin{equation}\la{ro+}
\si\in C^2({\T}),\qquad
\int_{{\T}} \sigma(x)dx=eZ>0, 
\end{equation} 
where $e>0$ is the {\it elementary charge}.
Let $\psi(x,t)$ be the wave function of the electron field, 
$q(n,t)$ denotes the ion displacement  from the reference position $n\in\Ga$,
and $\phi(x)$ be the electrostatic  potential generated by the ions and electrons.
We assume $\hbar=c=\cm=1$, where $c$ is the speed of light and $\cm$ is the electron mass.
Then the considered coupled equations  read
\begin{eqnarray}\la{LPS1}
i\pa_t\psi(x,t)\!\!&=&\!\!-\fr12\De\psi(x,t)-e\phi(x,t)\psi(x,t),\qquad x\in{\T},
\\
\nonumber\\
-\De\phi(x,t)\!\!&=&\!\!\rho(x,t):=\sum_{n\in{\Ga}}
\sigma(x-n-q(n,t))-e|\psi(x,t)|^2,\qquad x\in{\T},
\la{LPS2}
\\
\nonumber\\
M\ddot q(n,t)
\!\!&=&\!\!-(\na\phi(x,t),\sigma(x-n-q(n,t))), 
\qquad n\in{\Ga}.
\la{LPS3}
\end{eqnarray}
Here the 
brackets $(\cdot,\cdot)$
 stand for the  scalar product on the real Hilbert
space $L^2({\T})$ and for its different extensions,  and $M>0$ is the mass of one ion.
All derivatives here and below are understood in the sense of distributions.
Similar finite periodic approximations of crystals are treated in all textbooks on 
quantum theory of solid state \ci{Born, Kit, Zim}. 
However,  the stability of ground states in this model was never discussed.

Obviously, 
\begin{equation}\la{r0}
\ds\int_{\T}\rho(x,t)dx=0 
\end{equation}
by the  Poisson equation (\re{LPS2}).
Hence, the potential $\phi(x,t)$ can be eliminated from  the system (\re{LPS1})--(\re{LPS3})
using the operator $G:=(-\De)^{-1}$, see (\re{fs}) for a more precise definition. 
Substituting $\phi(\cdot,t)=G\rho(\cdot,t)$
into equations (\ref{LPS1}) and (\ref{LPS3}), we can write the system as
\begin{equation}\la{vf}
\dot X(t)=F(X(t)),\qquad t\in\R,
\end{equation}
where $X(t)=(\psi(\cdot,t), q(\cdot,t), p(\cdot,t))$ with $p(\cdot,t):=\dot q(\cdot,t)$.
The system (\ref{LPS1})--(\ref{LPS3}) is equivalent,  up to a  gauge transform (see the next section), 
to equation (\ref{vfN}) with the normalization
\begin{equation}\la{rQ}
\Vert\psi(\cdot,t)\Vert_{L^2({\T})}^2=ZN^3,\qquad t\in\R,
\end{equation}
which follows from (\re{r0}). If the integral (\ref{ro+}) vanishes, 
we have $Z=0$ and  $\psi(x,t)\equiv 0$.

We will identify the complex  functions $\psi(x)$ with 
two real functions $\psi_1(x):=\rRe\psi(x) $ 
and  $\psi_2(x):=\rIm\psi(x)$.
Now
equation (\ref{vf}) 
is equivalent to the Hamiltonian system
\begin{equation}\la{HSi}
\pa_t \psi_1(x,t)=\fr12 \pa_{\psi_2(x)}E,~~\pa_t \psi_2(x,t)=-\fr12\pa_{\psi_1(x)}E,~~
\pa_t q(n,t)= \pa_{p(n)}E,~~\pa_t p(n,t)=-\pa_{q(n)} E
\end{equation}
together with the normalisation condition (\ref{rQ}).
Here  the Hamiltonian functional (energy) reads 
\begin{equation}\la{Hfor}
  E(\psi, q, p)=\fr12\int_{{\T}}|\na\psi(x)|^2dx+\fr12(\rho,G\rho)+\sum_{n\in{\Ga}} \fr{p^2(n)}{2M},
\end{equation}
where  $ q:=(q(n): ~~n\in{\Ga})\in[{\T}]^{\ov N}$, $ p:=(p(n):~~n\in{\Ga})\in\R^{3\ov N}$ with $\ov N:=N^3$, and 
\begin{equation}\la{Hfor2}
\rho(x):=
\sum_{n\in{\Ga}}\si(x-n-q(n))-e|\psi(x)|^2,\qquad x\in{\T}.
\end{equation}
We prove 
the global well-posedness of the dynamics
and
the energy and charge conservations (\ref{EQ}).
Our main goal is the stability of 
ground states, i.e.,
solutions to (\re{LPS1})--(\re{LPS3})
with minimal (zero) energy  (\re{Hfor}).
We will consider only  ${\Ga}$-periodic ground states. 
 Nonperiodic ground states exist for some degenerate 
densities $\si$, 
see Remark \re{r1} ii) and Section \re{snp}.

We will see that  ${\Ga}$-periodic ground states  can be stable  
depending on the choice of the ion density $\sigma$.
We study special densities 
$\sigma$ satisfying some conditions below. Namely,
we will assume
 the following condition on the ion charge density,
\begin{equation}\la{Wai}
 \mbox{\bf The Jellium Condition:}~~~~~ \hat\si(\xi)
  :=\int_{\T} e^{i\xi x}\si(x)dx
 =0,\quad \xi\in {\Gamma^*_1}\setminus 0,
~~~~~~~~~~~~ ~~~~~~~~~~~~~~~~~~ ~
\end{equation}
where ${\Gamma^*_1}:=2\pi\Z^3$.
This condition immediately implies that the periodized ion charge density is a positive 
constant everywhere on the torus:
\begin{equation}\la{sipi}
	\rho^i(x):=\sum_{n\in{\Ga}}\si(x-n)\equiv eZ,\qquad x\in{\T}.
\end{equation}
 The simplest example of such a 
$\sigma$ is a constant over the unit cell of a given lattice, which is what physicists 
usually call {\it Jellium} \cite{GV2005}. We give further examples in Section \re{sex}.
Here we study this model in the rigorous context of the Schr\"odinger-Poisson equations.

Furthermore, we will assume a spectral property of the Wiener type
\begin{equation}\la{W1}
\mbox{\bf The Wiener Condition:}~~~\Si(\theta):=\sum_{m\in\Z^3}\Big[
 \fr{\xi\otimes\xi}{|\xi|^2}|\hat\si(\xi)|^2\Big]_{\xi=\theta+2\pi m}>0,\
\quad \theta\in \Pi^*_N\setminus {\Gamma^*_1},
\end{equation}
where the Brillouin zone $\Pi^*_N$ is defined by
\begin{equation}\la{PPG}
 \Pi^*_N:= \{\xi=(\xi^1,\xi^2,\xi^3)\in{\Gamma^*_N}:0\le \xi^j\le 
  2\pi,~~j=1,2,3\},\quad{\Gamma^*_N}:=\fr{2\pi}N\Z^3.
\end{equation}
This condition is
 an analog of the Fermi Golden Rule
for crystals. 
It is independent of (\re{Wai}).
 We have introduced  
conditions of type (\re{Wai}) and  (\re{W1})
for the first time
 in \ci{KKpl2015} 
 in the framework of infinite crystals.
\br\la{rW}
{\rm
i) The series \eqref{W1} converges for $\theta\in{\Gamma^*_N}\setminus{\Gamma^*_1}$
by the Parseval identity since $\sigma\in L^2(\T)$ by \eqref{ro+}.
\\
ii) The matrix $\Si(\theta)$ is $\Gamma^*_1$-periodic outside  $\Gamma^*_1$.
Thus, (\re{W1}) means that $\Si(\theta)$ is a positive matrix
for $\theta\in \Pi^*_N\setminus 0$, where $\Pi^*_N$
is the `discrete Brillouin zone' $\Ga_N^*/\Gamma^*_1$.
}
\er
The series \eqref{W1} is a nonnegative matrix.
Hence,  the Wiener condition holds `generically'. 
\bex
{\rm
 \eqref{W1} holds if
\begin{equation}\la{W1s}
\hat\si(\xi)\ne 0,\qquad \xi\in {\Gamma^*_N}\setminus{\Gamma^*_1},
\end{equation}
i.e., (\re{Wai}) are the only zeros of $\hat\si(\xi)$. 
}
\eex

However,  (\re{W1}) does not hold 
for the simplest Jellium model, when $\sigma$ is constant on the unit cell, see (\re{sic}) and (\re{sJM}). 

The energy (\re{Hfor}) is nonnegative, and its minimum is zero.
We show in Lemma \re{Jgs} that under Jellium condition (\re{Wai}) all ${\Ga}$-periodic
ground states are zero energy stationary solutions of the form
\begin{equation}\la{gr}
S_{\al,r}=(\psi^\al, \ov r,0),\qquad \al\in [0,2\pi], \quad r\in{\T},
\end{equation}
where $\psi^\al(x)\equiv e^{i\al}\sqrt{Z}$ and $\ov r\in [{\T}]^{\ov N}$  is defined by
\begin{equation}\la{gr2}
 \ov r(n)=r,\qquad n\in{\Ga}.
\end{equation}
The corresponding electronic charge  density reads
\begin{equation}\label{roZ}
 \rho^e(x):=-e|\psi^\al(x)|^2\equiv -eZ,\qquad x\in {\T}.
 \end{equation}
Hence,
the corresponding total charge density (\re{Hfor2}) identically vanishes by (\re{sipi}). 
Let us emphasize that 
 both ionic and  electronic charge densities  are uniform for the ground state under the Jellium condition.

\smallskip

Our main result (Theorem \re{tm}) is the stability  
of the real 4-dimensional `solitary manifold'
\begin{equation}\la{cS}
\cS=\{S_{\al ,r}: 
~\al \in [0,2\pi],~r\in{\T} \}.
\end{equation}
The stability means that any solution $X(t)=(\psi(\cdot,t), q(\cdot,t), p(\cdot,t))$ 
to (\re{vf})
 with initial data, 
lying in the vicinity of
the manifold $\cS$, is close to it uniformly in time. 
This is the `orbital stability' in the sense of \ci{GSS87}, since the manifold
$\cS=S^1\times{\T}\times \{0\}$ is the orbit of the symmetry group $U(1)\times{\T}$.
 Obviously, 
\begin{equation}\la{ES}
E(S)= 0,\qquad S\in\cS.
\end{equation}

Let us comment on our approach.
We prove the local well-posedness for the system  (\re{HSi})
by
the contraction mapping principle.
The global  well-posedness for the equation (\re{vf})
and the charge and energy conservation
follow by the Galerkin approximations and the uniqueness of solutions. 
We need  the charge 
conservation to return back 
from 
 (\re{HSi})
to the system  (\re{LPS1})--(\re{LPS3}).

The orbital stability of the solitary manifold $\cS$ 
is deduced
from  the lower energy estimate
\begin{equation}\la{BLi}
E(X)\ge \nu\,d^2(X,\cS)\qquad{\rm if}\qquad d(X,\cS)\le \delta,\quad X\in\cM,
\end{equation}
where 
$\cM$ is the manifold defined by  the normalization  (\re{rQ}) (see Definition \re{dM});
 $\nu,\delta>0$ and `$d$' is the distance in the `energy norm'.
This estimate obviously implies  the stability of the solitary manifold $\cS$.
We deduce  (\re{BLi}) from the positivity of the Hessian $E''(S)$
for $S\in \cS$ in the orthogonal directions to $\cS$ on the manifold $\cM$.
The Jellium and Wiener conditions are sufficient 
for this positivity.
We expect that these conditions are also necessary;
however, this is still an open  problem.
Anyway,
the positivity can break
down when these conditions  fail. We have shown this in \ci[Lemma 10.1]{KKpl2015}
in the context of infinite crystals, however the proof extends directly to the finite
crystals.
\begin{remarks}\label{r1}
{\rm
 i) In the case of infinite crystal, corresponding to $N=\infty$, the orbital stability seems  impossible. 
 Namely, for  $N=\infty$ the estimates (\ref{eq}), (\ref{GP2}) and (\ref{fp}) break down,
 as well as the estimate of type (\ref{BLi}) which is due to the discrete spectrum of the energy
 Hessian $E''(S)$ on the compact torus.
\smallskip\\
ii) We  show that the identity of type 
(\re{sipi}) holds  for a wide set of  
arrangements of ions  which are not $\Gamma_1$-periodic,
if $\si$ satisfy additional spectral conditions.
The corresponding examples are given, but in all our examples 
 the Wiener condition breaks down. We suppose that the Wiener condition provides
the periodicity (\re{gr2}), however this is a challenging open problem, 
see Section \re{spp}.  We prove the orbital stability only  for $\Gamma_1$-periodic ground states.

}
\end{remarks}

 This chapter is organized as follows.
 In Section 2 we eliminate the potential and reduce the dynamics  to the integral equation.
In Sections 3 
we prove the well-posedness.
In Section 4  we prove 
the stability of the solitary manifold $\cS$ establishing  
the lower estimate for the energy. 
In Appendices we 
prove the conservation
of the energy and charge, 
describe all ground states and give some examples.


\setcounter{equation}{0}
\section{Reduction to the integral equation}
The operator $G:=(-\De)^{-1}$   is well defined in the Fourier series:
\begin{equation}\la{fs}
\rho(x)=\sum_{\xi\in{\Gamma^*_N}}\hat\rho(\xi)e^{i\xi x},
\qquad G\rho:=\sum_{\xi\in{\Gamma^*_N}\setminus 0}\fr{\hat\rho(\xi)}{\xi^2}e^{i\xi x}.
\end{equation}
 The Poisson equation (\ref{LPS2})
implies that $\hat\rho(0,t)=\ds\int\rho(x,t)\,dx=0$, which is  equivalent to (\re{rQ}). 
 Hence,
$\phi(\cdot,t)=G\rho(\cdot,t)$ up to an additive constant $C(t)$ which can be compensated by a gauge transform 
\be\la{GT}
\psi(x,t)\mapsto\psi(x,t)\exp(-ie\ds\int_0^t C(s)ds).
\ee 
The system (\re{HSi}) 
can be written as 
\begin{equation}\la{HS}
\dot X(t)=JE'(X(t)),
\qquad X(t):=(\psi_1(t), \psi_2(t),  q(t),p(t)),
\end{equation}
where 
\begin{equation}\la{HS2}
J=\left(
\begin{array}{rrrr}
0&1/2 &0&0\\
-1/2&0&0&0\\
0&0   &0&1\\
0&0   &-1&0
\end{array}
\right).
\end{equation}
For $\psi,\vp\in L^2(\T)$ denote
\be\la{sp}
(\psi,\vp):=\int_{\T}\psi(\ov x)\cdot\vp(\ov x)d\ov x,
\ee
where $\cdot$ is the inner product of the corresponding vectors in $\R^2$.
In particular,
\begin{equation}\la{1i}
(1,i)=0.
\end{equation}

\bd
i) Denote the  real Hilbert spaces
\be\la{XW}
\cX:=L^2({\T})\oplus\R^{3\ov N}\oplus \R^{3\ov N},
\qquad \cW:=H^1({\T})\oplus\R^{3\ov N}\oplus \R^{3\ov N}.
\ee
ii) $\cV:=
H^1({\T})\times[{\T}]^{\ov N}\times \R^{3\ov N}$ is 
the Hilbert manifold endowed with the  metric 
\begin{equation}\la{dVs}
d_{\cV}(X,X'):=\Vert\psi-\psi'\Vert_{H^1({\T})}+|q-q'|+|p-p'|,\qquad X=(\psi,q,p),
\quad
 X'=(\psi',q',p')
\end{equation}
and with the `quasinorm'
\begin{equation}\la{cVs}
|X|_{\cV}:=\Vert\psi\Vert_{H^1({\T})}+|p|,\qquad X=(\psi,q,p).
\end{equation}

\ed

The linear space $\cW$ is  isomorphic to the tangent space to 
the Hilbert manifold $\cV$ at each point $X\in\cV$. 
Denote 
by the brackets
$\langle\cdot,\cdot\rangle$
the scalar product in $\cX$:
\begin{equation}\la{dWW}
\langle Y,Y'\rangle:=(\vp,\vp')+\vka\vka'+\pi\pi',\qquad Y=(\vp,\vka,\pi), 
\quad Y'=(\vp',\vka',\pi').
\end{equation}
The total charge of electrons is defined (up to a factor) by
\begin{equation}\la{Q}
Q(X):= \int|\psi(x)|^2dx,\qquad X=(\psi,q,p)\in\cX.
\end{equation}
Obviously,
\begin{equation}\la{EQV}
|X|_\cV^2\le C[E(X)+Q(X)],\qquad X\in\cV,
\end{equation}
The system 
(\re{HS})
is a  nonlinear finite-dimensional perturbation of the free Schr\"odinger equation.
We will prove  that a 
solution $X\in C(\R,\cV)$ exists and is unique for any initial state $X(0)\in\cV$, and the
energy and the electronic charge are conserved,
\begin{equation}\la{EQ}
E(X(t))=E(X(0)),\quad Q(X(t))=Q(X(0)),\qquad t\in\R.
\end{equation}
The energy (\re{Hfor}) and the charge are well defined and continuous 
on $\cV$ in the metric $d_\cV$ by the estimate (\re{Th}) below.
The charge conservation holds 
by the Noether theory \ci{A, GSS87, KQ} due to the $U(1)$-invariance of the Hamiltonian functional:
\begin{equation}\la{U1}
E(e^{i\al }\psi,q,p)=E(\psi,q,p),\qquad (\psi,q,p)\in\cV,\quad \al \in\R.
\end{equation}
We rewrite the system (\ref{HS}) in the integral  form
\begin{equation}\la{LPSi}
\left\{\begin{array}{lll}
\psi(t)&=&e^{-iH_0t} \psi(0)+ie\ds\int_0^t e^{-iH_0(t-s)} [ \phi(s)\psi(s) ]ds,\\
q(n,t)&=&q(n,0)+\frac 1M\ds\int_0^t p(n,s)ds\mod N\Z^3,\\
p(n,t)&=&p(n,0)-\ds\int_0^t (\nabla \phi(s),\sigma(\cdot-n-q(n,s))) ds,
\end{array}\right|
\end{equation}
where $H_0:=-\fr12\De$ and $\phi(s):=G\rho(s)$.
In the vector form (\ref{LPSi}) reads
\begin{equation}\la{LPSiv}
X(t)=e^{-At}X(0)+\int_0^t  e^{-A(t-s)} N(X(s)) ds,
\qquad 
A=\left(\begin{array}{ccc}
iH_0 & 0 & 0\\
0&0&0\\
0&0&0
\end{array}\right),
\end{equation}
where
\begin{equation}\la{HN}
N(X)=(ie \phi\psi ~, p,~f),\qquad 
f(n):=-(\nabla \phi,\sigma(\cdot-n-q(n))),\qquad\phi:=G\rho,
\end{equation}
and $\rho$, $G\rho$ are defined by (\re{Hfor2}) and 
(\ref{fs}) respectively.

\setcounter{equation}{0}
\section{Global dynamics}\la{Gd}
In this section we prove the well-posedness of the dynamics.

\bt\label{TLWP1}(Global well-posedness).
Let   (\re{ro+}) hold and $X(0)\in\cV$. Then 
\medskip\\
i) 
 Equation (\ref{HS}) admits  a unique  solution $X\in C(\R,{\cal V})$,
 and the maps $U(t):X(0)\mapsto X(t)$ are continuous in $\cV$ for $t\in\R$.
 \medskip\\
ii) The conservation laws (\re{EQ}) hold.
\medskip\\
iii) $X$ is the solution to (\ref{LPS1})--(\ref{LPS3}) if 
\begin{equation}\la{rQ2}
Q(X(0))=Z\ov N^3.
\end{equation}
\et

First,  let us  prove the local well-posedness.

\bp\label{TLWP}(Local well-posedness).
Let   (\re{ro+}) hold and $|X(0)|_\cV\le R$. Then 
there exists $\tau=\tau(R)>0$ such that
  equation (\ref{HS}) has  a unique  solution $X\in C([-\tau,\tau],{\cal V})$,
 and the maps $U(t):X(0)\mapsto X(t)$ are continuous in $\cV$ for $t\in [-\tau,\tau]$.

 \ep
In the next two lemmas
we prove
the boundedness and the local Lipschitz continuity of the nonlinearity $N:\cV\to\cW$.
With this proviso Proposition \re{TLWP} follows from the integral form 
(\re{LPSiv}) of the equation (\ref{HS})
 by the contraction mapping principle, since $e^{-At}$ is an isometry of $\cW$.
\begin{lemma}\label{p1}
For any $R>0$ and $|X|_\cV\le R$
\begin{equation}\label{bN}
\Vert N(X)\Vert_\cW\le C(R)
\end{equation}
\end{lemma}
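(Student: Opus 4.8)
The plan is to estimate separately the three components of $N(X)=(ie\phi\psi,\,p,\,f)$ from \eqref{HN} against the corresponding summands of the norm on $\cW=H^1(\T)\oplus\R^{3\ov N}\oplus\R^{3\ov N}$. The middle component is immediate, since $|p|\le|X|_\cV\le R$ directly from the definition \eqref{cVs} of the quasinorm. Everything else is controlled by a single auxiliary bound on the potential, namely $\Vert\phi\Vert_{H^2(\T)}\le C(R)$ for $\phi=G\rho$, which I would establish first; once it is available, the two remaining components reduce to routine Sobolev estimates in dimension three.

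To obtain the bound on $\phi$ I would first estimate the charge density $\rho$ of \eqref{Hfor2} in $L^2(\T)$. The ionic part $\sum_{n\in\Ga}\si(x-n-q(n))$ is a sum of $\ov N$ translates of $\si$; as translation is an isometry of $L^2(\T)$ and $\si\in C^2(\T)\subset L^2(\T)$ by \eqref{ro+}, the triangle inequality gives an $L^2$-norm at most $\ov N\Vert\si\Vert_{L^2(\T)}$, a constant independent of $q$. For the electronic part the Sobolev embedding $H^1(\T)\hookrightarrow L^4(\T)$, valid in three dimensions, yields $\Vert e|\psi|^2\Vert_{L^2(\T)}=e\Vert\psi\Vert_{L^4(\T)}^2\le C\Vert\psi\Vert_{H^1(\T)}^2\le CR^2$. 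Hence $\Vert\rho\Vert_{L^2(\T)}\le C(R)$. Since the dual lattice $\Gamma^*_N=\frac{2\pi}{N}\Z^3$ carries a spectral gap $|\xi|\ge 2\pi/N$ away from the origin, the Fourier multiplier $1/|\xi|^2$ defining $G$ in \eqref{fs} gains two derivatives: the symbol $(1+|\xi|^2)^2/|\xi|^4$ is bounded on $\Gamma^*_N\setminus 0$, so $\Vert\phi\Vert_{H^2(\T)}\le C_N\Vert\rho\Vert_{L^2(\T)}\le C(R)$.

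With $\phi\in H^2(\T)$ in hand the two nontrivial components follow. For the force $f(n)=-(\na\phi,\si(\cdot-n-q(n)))$, the Cauchy--Schwarz inequality and the isometry property of translations give $|f(n)|\le\Vert\na\phi\Vert_{L^2(\T)}\Vert\si\Vert_{L^2(\T)}\le\Vert\phi\Vert_{H^2(\T)}\Vert\si\Vert_{L^2(\T)}\le C(R)$ uniformly in $n$, whence $|f|\le\ov N^{1/2}C(R)$. The genuinely delicate point is the Schr\"odinger component $\phi\psi$, since $H^1(\T)$ is not a Banach algebra in three dimensions and one cannot simply multiply two $H^1$ factors. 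I would split $\na(\phi\psi)=(\na\phi)\psi+\phi\na\psi$ and treat each term by H\"older's inequality: the embeddings $H^2(\T)\hookrightarrow L^\infty(\T)$ and $H^1(\T)\hookrightarrow L^6(\T)$ give $\Vert\phi\na\psi\Vert_{L^2(\T)}\le\Vert\phi\Vert_{L^\infty(\T)}\Vert\psi\Vert_{H^1(\T)}$ and, using $\tfrac13=\tfrac16+\tfrac16$ together with $L^3(\T)\hookrightarrow L^2(\T)$, the critical term $\Vert(\na\phi)\psi\Vert_{L^2(\T)}\le C\Vert\na\phi\Vert_{L^6(\T)}\Vert\psi\Vert_{L^6(\T)}\le C\Vert\phi\Vert_{H^2(\T)}\Vert\psi\Vert_{H^1(\T)}$; the zeroth-order term $\Vert\phi\psi\Vert_{L^2(\T)}\le\Vert\phi\Vert_{L^\infty(\T)}\Vert\psi\Vert_{L^2(\T)}$ is handled the same way. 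Combining, $\Vert\phi\psi\Vert_{H^1(\T)}\le C\Vert\phi\Vert_{H^2(\T)}\Vert\psi\Vert_{H^1(\T)}\le C(R)$, and summing the three contributions gives \eqref{bN}. The main obstacle is exactly this product estimate at the energy-critical Sobolev exponents; the remainder is bookkeeping of constants depending only on $N$ and $R$.
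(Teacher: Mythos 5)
Your proof is correct, and it reaches \eqref{bN} by a genuinely different route than the paper. The paper never establishes $\phi\in H^2(\T)$: it proves the two weaker facts $\Vert\phi\Vert_{C(\T)}\le C\Vert\rho\Vert_{L^2(\T)}$ (Cauchy--Schwarz in the Fourier series, using $\sum_{\xi\in\Gamma^*_N\setminus 0}|\xi|^{-4}<\infty$, estimate \eqref{Gpsi}) and $\Vert\na\phi\Vert_{L^3(\T)}\le C\Vert\na\rho\Vert_{L^{3/2}(\T)}$ (Hausdorff--Young plus H\"older, using $\sum|\xi|^{-6}<\infty$, estimate \eqref{eq}), the latter requiring the additional bound $\Vert\na\rho^e\Vert_{L^{3/2}}\le C\Vert\psi\Vert_{H^1}^2$ of \eqref{np}; the product is then controlled via $\Vert\psi\na\phi\Vert_{L^2}\le\Vert\psi\Vert_{L^6}\Vert\na\phi\Vert_{L^3}$, and $f(n)$ is handled after moving the derivative onto the ion density, $|f(n)|\le\Vert\phi\Vert_{C(\T)}\Vert\na\si\Vert_{L^1(\T)}$. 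You instead exploit the spectral gap $|\xi|\ge 2\pi/N$ of the discrete dual lattice once and for all to gain two full derivatives, $\Vert\phi\Vert_{H^2}\le C_N\Vert\rho\Vert_{L^2}$, which collapses the paper's three potential estimates into a single elliptic bound and removes any need to differentiate $\rho$ at all; your treatment of $f$ by bare Cauchy--Schwarz is likewise simpler, and your H\"older bookkeeping ($H^2\hookrightarrow L^\infty$, $\na\phi\in H^1\hookrightarrow L^6$, $L^6\cdot L^6\subset L^3\subset L^2$) is sound. Both arguments hinge on the same infrared cutoff of $\Gamma^*_N$ --- the paper through convergent lattice sums, you through the bounded symbol $(1+|\xi|^2)^2/|\xi|^4$ --- so both yield $N$-dependent constants and degenerate as $N\to\infty$, exactly as noted in Remark \ref{r1}~i). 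What the paper's more laborious route buys is reusable byproducts: the same estimates \eqref{Gpsi}--\eqref{eq} are recycled verbatim for the differences $\rho_1-\rho_2$ in the Lipschitz Lemma \ref{p2}, and the $W^{1,3/2}(\T)$ regularity of $\rho$ from \eqref{np} is precisely what later gives precompactness of the Galerkin densities $\rho_m$ in $L^2(\T)$ in Proposition \ref{lgal}; your $H^2$ bound, being linear in $\rho$, would serve equally well in Lemma \ref{p2}, but for the compactness step one needs control of derivatives of $\rho$ itself rather than of $\phi$, so the paper's intermediate estimate \eqref{np} cannot be dispensed with globally even if your shortcut suffices for the present lemma.
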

\begin{proof}
First, we estimate the potential $\phi(\cdot,t):=G\rho(\cdot,t)$.
Namely, 
we split $\rho(x,t)$ as
\begin{equation}\label{split}
  \rho(x,t)=\rho^i(x,t)+\rho^e(x,t),
\end{equation}
where $\rho^i$ and $\rho^e$ are the ion and electron charge densities respectively,
\[
\rho^i(x,t)=\sum_{n\in{\Ga}}\sigma(x-n-q(n,t)),\qquad \rho^e(x,t)=-e|\psi(x,t)|^2.
\]
Applying the Cauchy-Schwarz inequality to the second formula (\ref{fs}), 
we obtain that
\begin{equation}\label{Gpsi}
\Vert \phi\Vert_{C({\T})}\le C\Vert\hat \rho\Vert_{L^2({\Gamma^*_N})}=C\Vert \rho\Vert_{L^2({\T})}
\le C(\Vert \rho^i\Vert_{L^2({\T})}+e\Vert \psi\Vert_{L^4({\T})}^2)\le
 C_1 (1+\Vert\psi\Vert_{H^1({\T})}^2)
\end{equation}
since $H^1({\T})\subset L^6({\T})$ by the Sobolev
embedding theorem.
On the other hand, the H\"older inequality implies that
\begin{equation}\label{np}
\Vert\nabla \rho^e\Vert_{L^{3/2}({\T})}\le e
\Vert\na|\psi|^2\Vert_{L^{3/2}({\T})}
\le C_1\Vert\psi\Vert_{L^6({\T})}\Vert\nabla\psi\Vert_{L^2({\T})}
\le C_2\Vert\psi\Vert_{H^1({\T})}^2.
\end{equation}
Therefore, we get by the Hausdorff-Young and the H\"older inequalities  \ci{Her1}
\begin{eqnarray}\nonumber
\Vert \nabla \phi\Vert_{L^3({\T})}\!\!&\le&\!\! C\Vert\widehat{\nabla \phi}\Vert_{L^{3/2}({\Gamma^*_N})}
\le C_1\Vert \xi\hat \rho\Vert_{L^3({\Gamma^*_N})}
\Big[\sum_{\xi\in{\Gamma^*_N}\setminus 0}|\xi|^{-6}\Big]^{1/3}
\le C_2\Vert\nabla \rho\Vert_{L^{3/2}({\T})}\\
\label{eq}
\!\!&\le&\!\! C_2(\Vert\nabla \rho^{i}\Vert_{L^{3/2}({\T})}+\Vert\nabla \rho^{e}\Vert_{L^{3/2}({\T})})
\le C_3(1+\Vert\psi\Vert_{H^1({\T})}^2).
\end{eqnarray}
Now  (\ref{Gpsi})  and  (\ref{eq}) imply by the H\"older inequality 
\begin{eqnarray}
\Vert\psi \phi\Vert_{L^2({\T})}&\le& \Vert \phi\Vert_{C({\T})}\cdot\Vert\psi \Vert_{L^2({\T})}
\le C(1+\Vert\psi\Vert_{H^1({\T})}^3)
\nonumber\\
\nonumber\\
\Vert \nabla\psi\phi\Vert_{L^2({\T})}&\le& \Vert \phi\Vert_{C({\T})}\Vert\nabla\psi \Vert_{L^2({\T})}\le C(1+\Vert\psi\Vert_{H^1({\T})}^3)
\nonumber\\
\nonumber\\
\Vert \psi\nabla \phi \Vert_{L^2({\T})}
&\le& C\Vert\psi\Vert_{L^6({\T})}\cdot \Vert\nabla \phi\Vert_{L^3({\T})}
\le C_1(1+\Vert\psi\Vert_{H^1({\T})}^3).
\end{eqnarray}
Hence,
\begin{equation}\label{eq2}
\Vert\phi\psi \Vert_{H^1({\T})}\le C(1+\Vert\psi\Vert_{H^1({\T})}^3).
\end{equation}
Finally, (\ref{eq}) and (\ref{ro+}) imply that
\begin{equation}\label{eq3}
|f(n)|\le \Vert \phi\Vert_{C({\T})}\Vert\na\si\Vert_{L^1({\T})}
\le C(1+\Vert\psi\Vert_{H^1({\T})}^2),\qquad n\in{\Ga}.
\end{equation}
At last, (\ref{bN}) holds by  (\ref{eq2}) and  (\ref{eq3}).
\end{proof}
It remains
to prove  that the nonlinearity is  locally Lipschitz.

\begin{lemma}\label{p2}
For any $R>0$ and $X_1,X_2\in\cV$ 
\begin{equation}\label{lN}
\Vert N(X_1)-N(X_2)\Vert_\cW\le C'(R) d_\cV(X_1, X_2)\qquad{\rm if}\quad
| X_1|_\cV, |X_2|_\cV\le R.
\end{equation}
\end{lemma}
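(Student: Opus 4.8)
I would prove this by reducing the estimate to \emph{difference} versions of the bounds already assembled in Lemma~\re{p1}, exploiting the multilinear structure of $N$. Write $X_j=(\psi_j,q_j,p_j)$ and, correspondingly, $\rho_j$, $\phi_j:=G\rho_j$, $f_j$ for $j=1,2$. The second component of $N$ is the linear map $X\mapsto p$, so its contribution to $\Vert N(X_1)-N(X_2)\Vert_\cW$ is exactly $|p_1-p_2|\le d_\cV(X_1,X_2)$ and needs no work. Everything else is controlled once I establish the two key difference estimates
\begin{equation}
\Vert\phi_1-\phi_2\Vert_{C(\T)}\le C(R)\,d_\cV(X_1,X_2),\qquad
\Vert\na\phi_1-\na\phi_2\Vert_{L^3(\T)}\le C(R)\,d_\cV(X_1,X_2),
\end{equation}
which are the linearised analogues of (\re{Gpsi}) and (\re{eq}).

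To obtain these I decompose $\rho_1-\rho_2=(\rho^i_1-\rho^i_2)+(\rho^e_1-\rho^e_2)$ as in (\re{split}). The electronic difference $\rho^e_1-\rho^e_2=-e(|\psi_1|^2-|\psi_2|^2)$ factors into a sum of products each carrying exactly one factor $\psi_1-\psi_2$ (namely $(\psi_1-\psi_2)\ov{\psi_1}+\psi_2\ov{(\psi_1-\psi_2)}$), and is handled precisely as the terms $\Vert\psi\Vert_{L^4}^2$ and $\Vert\na\rho^e\Vert_{L^{3/2}}$ in Lemma~\re{p1}, now using Hölder together with the Sobolev embedding $H^1(\T)\subset L^6(\T)$ to extract one factor $\Vert\psi_1-\psi_2\Vert_{H^1(\T)}$ and one bounded factor $\le C(R)$. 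The ionic difference $\rho^i_1-\rho^i_2=\sum_{n\in\Ga}[\si(\cdot-n-q_1(n))-\si(\cdot-n-q_2(n))]$ is where the $q$-translation enters: since $\si\in C^2(\T)$ by (\re{ro+}), the mean value theorem gives $\Vert\si(\cdot-n-q_1(n))-\si(\cdot-n-q_2(n))\Vert_{L^2(\T)}\le C|q_1(n)-q_2(n)|$ and, using second derivatives, $\Vert\na\si(\cdot-n-q_1(n))-\na\si(\cdot-n-q_2(n))\Vert_{L^{3/2}(\T)}\le C|q_1(n)-q_2(n)|$. Summing over the finitely many $n\in\Ga$ (that is, $\ov N=N^3$ terms) and feeding the result through the same Cauchy--Schwarz and Hausdorff--Young/Hölder chains as in (\re{Gpsi}) and (\re{eq}) yields the two displayed bounds.

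With these in hand the remaining two components are routine. For the Schrödinger component I write $\phi_1\psi_1-\phi_2\psi_2=\phi_1(\psi_1-\psi_2)+(\phi_1-\phi_2)\psi_2$ and estimate each product in $H^1(\T)$ exactly as (\re{eq2}) was derived, using the boundedness $\Vert\phi_j\Vert_{C(\T)},\Vert\na\phi_j\Vert_{L^3(\T)}\le C(R)$ from Lemma~\re{p1} for the undifferenced factors and the two difference bounds above for the differenced ones; this gives $\Vert e(\phi_1\psi_1-\phi_2\psi_2)\Vert_{H^1(\T)}\le C(R)d_\cV(X_1,X_2)$. For the force component, after integrating by parts so that $f_j(n)=(\phi_j,\na\si(\cdot-n-q_j(n)))$ as in (\re{eq3}), I split $f_1(n)-f_2(n)=(\phi_1-\phi_2,\na\si(\cdot-n-q_1(n)))+(\phi_2,\na\si(\cdot-n-q_1(n))-\na\si(\cdot-n-q_2(n)))$; the first term is $\le\Vert\phi_1-\phi_2\Vert_{C(\T)}\Vert\na\si\Vert_{L^1(\T)}$, and the second is $\le\Vert\phi_2\Vert_{C(\T)}\cdot C|q_1(n)-q_2(n)|$ again by $\si\in C^2(\T)$, so summation over $n\in\Ga$ gives $|f_1-f_2|\le C(R)d_\cV(X_1,X_2)$. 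Collecting the three components yields (\re{lN}).

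The only genuinely delicate point---the hard part---is the $q$-dependence entering through the rigid translations $\si(\cdot-n-q(n))$ and $\na\si(\cdot-n-q(n))$: Lipschitz continuity in $q$ of these translated profiles, measured in the relevant $L^2(\T)$ and $L^{3/2}(\T)$ norms, is exactly what forces the hypothesis $\si\in C^2(\T)$ in (\re{ro+}), and the finiteness of the lattice $\Ga$ is what lets me sum the $\ov N$ contributions without losing control of the constant. All other steps merely reprise, in differenced form, the inequalities already established in the proof of Lemma~\re{p1}.
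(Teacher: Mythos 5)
Your proof is correct and follows essentially the same route as the paper: the same splitting of $\phi_1\psi_1-\phi_2\psi_2$ and of $f_1(n)-f_2(n)$ into one-difference terms, the same difference bounds on $\Vert\phi_1-\phi_2\Vert_{C(\T)}$ and $\Vert\na(\phi_1-\phi_2)\Vert_{L^3(\T)}$ via the ionic/electronic decomposition of $\rho_1-\rho_2$, and the same Lipschitz estimate $|\si(x)-\si(x-a)|\le C|a|$ coming from (\re{ro+}). Your only deviations---relabeling which factor carries the difference, and explicitly noting the trivial $p$-component---are cosmetic.
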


\begin{proof}
Writing  $X_k=(\psi_k,q_k,p_k)$ and $\phi_k=G\rho_k$, we obtain that
\begin{equation}\label{NN1}
\Vert\phi_1\psi_1-\phi_2\psi_2\Vert_{H^1({\T})}
\le
\Vert (\phi_1-\phi_2)\psi_1\Vert_{H^1({\T})}
+\Vert \phi_2(\psi_1-\psi_2)\Vert_{H^1({\T})}.
\end{equation}
Similarly to (\ref{Gpsi})\,-\,(\ref{eq}) we obtain
\begin{eqnarray}\nonumber
\Vert \phi_2(\psi_1-\psi_2)\Vert_{H^1({\T})}&\le&
\Vert \phi_2\Vert_{C({\T})}
\Vert\psi_1-\psi_2\Vert_{H^1({\T})}+\Vert \nabla \phi_2\Vert_{L^3({\T})}
\Vert\psi_1-\psi_2\Vert_{L^6({\T})}\\
\label{NN2}
&\le&C(1+R^2)\Vert\psi_1-\psi_2\Vert_{H^1({\T})}\le C(R)d_\cV(X_1,X_2)
\end{eqnarray}
Further, similarly to (\ref{np}),
\begin{equation}\label{np2}
\Vert\nabla (\rho^e_1-\rho^e_2)\Vert_{L^{3/2}({\T})}
\le C\Vert\psi_1-\psi_2\Vert_{H^1({\T})}[\Vert\psi_1\Vert_{H^1({\T})}+\Vert\psi_2\Vert_{H^1({\T})}].
\end{equation}
Moreover, $|\si(x)-\si(x-a)|\le C|a|$,
where $|a|:=\min_{r\in a} |r|$ for $a\in \T$.
Hence, similarly to (\ref{eq}),
\begin{align*}
\Vert(\phi_1-\phi_2)\psi_1\Vert_{H^1({\T})}\le
\Vert \phi_1-\phi_2\Vert_{C({\T})}
\Vert\psi_1\Vert_{H^1({\T})}+
\Vert \nabla (\phi_1-\phi_2)\Vert_{L^3({\T})}
\Vert\psi_1\Vert_{L^6({\T})}
\\
\le 
C R\Big[\Vert \rho_1^i-\rho_2^i\Vert_{L^2({\T})}+\Vert \rho_1^e-\rho_2^e\Vert_{L^2({\T})}
+\Vert\nabla(\rho_1^i-\rho_2^i)\Vert_{L^{3/2}({\T})}+\Vert\nabla(\rho_1^e-\rho_2^e)\Vert_{L^{3/2}({\T})}\Big]
\\
\le C_1 R(|q_1-q_2|+R\Vert \psi_1-\psi_2\Vert_{H^1({\T})})
\le C(R)d_\cV(X_1,X_2).
\end{align*}
Now (\re{NN1}) and  (\re{NN2}) give
\begin{equation}\la{Th}
\Vert\phi_1\psi_1-\phi_2\psi_2\Vert_{H^1({\T})}\le C(R)
d_\cV(X_1,X_2).
\end{equation}
Similarly,
\begin{eqnarray}
&&| (\nabla \phi_1,\sigma(\cdot-n-q_1(n)))- (\nabla \phi_2,\sigma(\cdot-n-q_2(n)))|
\nonumber\\
\nonumber\\
&&\le |(\nabla \phi_1-\nabla\phi_2,\sigma(\cdot-n-q_1(n)))|
+ |(\nabla \phi_2,\sigma(\cdot-n-q_1(n))-\sigma(\cdot-n-q_2(n)))|
 \nonumber\\
\nonumber\\
&&\le C(\Vert \phi_1-\phi_2\Vert_{C({\T})}+\Vert \phi_2\Vert_{C({\T})}| q_1-q_2|\le C(R)
d_\cV(X_1,X_2).
\end{eqnarray}
This estimate and  (\re{Th}) imply (\re{lN}).
\end{proof}
Now Proposition \re{TLWP} follows from Lemmas \ref{p1} and \ref{p2}. 
\medskip\\
{\bf Proof of Theorem \re{TLWP1}.}
The local solution  $X\in C([-\tau,\tau],\cV)$  
of (\ref{HS})
exists and is unique by Proposition \re{TLWP}.
On the other hand, the conservation laws (\ref{EQ}) (proved in Proposition \re{lgal} iii)) 
together with   (\re{EQV}) imply a priori bound
\begin{equation}\la{ab}
| X(t)|_\cV^2\le C[E(X(0))+ Q(X(0))],\qquad t\in [-\tau,\tau].
\end{equation}
Hence, the local solution admits an extension to the global one $X\in C(\R,\cV)$. 
Further, the normalisation (\ref{rQ2}) implies that $Q(X(t))=Z\ov N^3$ for all $t\in\R$ by the charge conservation 
(\re{EQ}). Hence, (\ref{HS}) implies (\ref{LPS1})--(\ref{LPS3}).
\hfill$\Box$

\setcounter{equation}{0}
\section{Conservation laws}
We deduce the conservation  laws (\ref{EQ}) by the Galerkin approximations \ci{Lions}. 

\bd
i)$\cV_m$ with $m\in\N$  denotes finite dimensional  submanifold of $\cV$ formed by
\begin{equation}\la{Vm}
(\sum_{k\in{\Gamma^*_N}(m)} C_k e^{ikx},q,p),  \qquad q\in{\T}^{\ov N}, \quad p\in\R^{3\ov N}.
\end{equation}
where ${\Gamma^*_N}(m):=\{k\in{\Gamma^*_N}:  k^2\le m\}$.
\medskip\\
ii) $\cW_m$ with $m\in\N$  denotes the finite dimensional linear subspace of $\cW$
spanned by 
\begin{equation}\la{Wm}
(\sum_{k\in{\Gamma^*_N}(m)} C_k e^{ikx},\vka,v),  \qquad \vka\in\R^{3\ov N}, \quad v\in\R^{3\ov N}.
\end{equation}
\ed
 Obviously, $\cV_1\subset\cV_2\subset...$, 
the union $\cup_m\cV_m$ is dense in $\cV$, and 
 $\cW_m$ are  invariant with respect 
to $A$ and $J$.
Let us denote by $P_m$
the orthogonal projector $\cX\to\cW_m$. This projector is also orthogonal in $\cW$.
Let us 
approximate the system 
(\re{HS}) by the finite-dimensional Hamiltonian systems  on the manifold $\cV_m$,
\begin{equation}\la{gal}
\dot X_m(t)=JE_m'(X_m(t)),\qquad t\in\R,
\end{equation}
where $E_m:=E|_{\cV_m}$ and $X_m(t)=(\psi_m(t),q_m(t),p_m(t))\in C(\R,\cV_m)$.
The equation (\re{gal}) can be also  written as
\begin{equation}\la{gali}
\langle \dot X_m(t),Y\rangle=-\langle E'(X_m(t)),JY\rangle,\qquad Y\in\cW_m.
\end{equation}
This form of the equation (\re{gal}) holds since $E_m:=E|_{\cV_m}$ and $\cW_m$ is  invariant 
with respect to  $J$. Equivalently,
\begin{equation}\la{gali2}
\dot X_m(t)=-A\, X_m(t) + P_m N(X_m(t)).
\end{equation}

The Hamiltonian form guarantees the energy and charge conservation
 (\ref{EQ}):
\begin{equation}\la{EQ2}
E(X_m(t))=E(X_m(0)),\quad Q(X_m(t))=Q(X_m(0)),\qquad t\in\R.
\end{equation}
Indeed, the energy conservation holds by the Hamiltonian form  (\re{gal}), 
while the charge conservation holds
by
the Noether theory \ci{A,GSS87, KQ} due to the
$U(1)$-invariance of $E_m$, see (\re{U1}).

The equation (\re{gali2}) admits a unique local solution for every initial state
 $X_m(0)\in\cV_m$ since the right hand side 
is locally bounded and Lipschitz continuous. 
The global solutions exist by   (\re{EQV})
and
the energy and charge conservation (\re{EQ2}).
\medskip

Finally, we take any $X(0)\in\cV$ and choose a sequence 
\begin{equation}\la{s0}
X_m(0)\to X(0),\qquad m\to\infty,
\end{equation}
where the convergence holds in the metric of $\cV$.
 Therefore, 
\begin{equation}\la{EQm}
E(X_m(0))\to E(X(0)),\qquad Q(X_m(0))\to Q(X(0)).
\end{equation}
Hence, (\ref{EQ2}) and (\re{EQV}) imply the basic uniform bound 
\begin{equation}\la{Vb}
R:=\sup_{m\in\N}\,\,\sup_{t\in\R}| X_m(t)|_\cV <\infty.
\end{equation}
Therefore, 
(\ref{gali2}) and Lemma \re{p1} imply  the second basic uniform bound
\begin{equation}\la{Vb2}
\sup_{m\in\N}\,\sup_{t\in\R}\,\Vert \dot X_m(t)\Vert_{\cW^{-1}} <C(R),
\end{equation}
since  the operator 
$A:\cW\to\cW^{-1}$ is bounded, and the projector
$P_m$ is also a bounded operator in $\cW\subset \cW^{-1}$.
Hence, 
the Galerkin
approximations $X_m(t)$ are uniformly Lipschitz-continuous with values in $\cV^{-1}$:
\begin{equation}\la{ecg}
\sup_{m\in\N}\, d_{\cV^{-1}}(X_m(t),X_m(s))\le C(R)|t-s|,\qquad s,t\in\R.
\end{equation}
Let us show that  
the uniform estimates   (\ref{Vb}) and (\ref{ecg}) provide a compactness of the 
Galerkin approximations and the conservation laws. 
Let us recall that $\cX:=\cV^0$ and $\cV:=\cV^1$.
\bp\la{lgal} Let   (\re{ro+}) hold and $X(0)\in\cV$. Then 
\medskip\\
i) 
There exists
a subsequence $m'\to\infty$ such that
\begin{equation}\la{ss}
X_{m'}(t)\tocX X(t),\qquad m'\to\infty,\qquad t\in\R,
\end{equation}
where $X(\cdot)\in C(\R, \cX)$.
\medskip\\
ii) Every limit function $X(\cdot)$
is a solution to  (\re{LPSiv}), and $X(\cdot)\in C(\R,\cV)$.
\medskip\\
iii) The conservation laws (\re{EQ}) hold.
\ep
\begin{proof}
i) The convergence (\re{ss}) follows from 
 (\re{Vb}) and  (\re{Vb2}) 
by the Dubinsky
 `theorem on three spaces' \ci{Dub65}  (Theorem 5.1 of
\ci{Lions}). Namely, the embedding $\cV\subset\cX$ is compact by the Sobolev theorem 
\ci{Adams},
and hence, (\re{ss}) holds by 
(\re{Vb})
for $t\in D$, where $D$ is a countable dense set. 
Finally, let us use the  interpolation inequality  and  (\re{Vb}), (\re{ecg}):
for any $\ve>0$
\begin{equation}\la{inti}
 d_\cX(X_m(t),X_m(s))\le\ve d_\cV(X_m(t),X_m(s))
+ C(\ve)  d_{\cV^{-1}}(X_m(t),X_m(s))\le 2\ve R+C(\ve,R)|t-s|.
\end{equation}
This inequality 
implies the 
equicontinuity of the Galerkin approximations with the values in $\cX$. Hence,
convergence 
(\re{ss}) holds for all $t\in\R$ since it holds for 
the dense set of $t\in D$. 
The same equicontinuity also implies  the continuity of the limit function $X\in C(\R, \cX)$.
\medskip\\
ii) 
Integrating  equation (\re{gali2}), we obtain 
\begin{equation}\la{gal2}
\int_0^t\langle \dot X_m(t),Y\rangle\,ds=-\int_0^t
\langle X_m(s), AY)\,ds + \int_0^t\langle N(X_m(s)),Y\rangle\,ds,
\qquad Y\in\cW_m,
\end{equation}
Below we will write $m$ instead of $m'$. 
To prove   (\re{LPSiv}) it suffices to check that  in the limit $m\to\infty$, we get
\begin{equation}\la{gal4}
\int_0^t\langle \dot X(t),Y\rangle\,ds
=-\int_0^t\langle X(s),AY\rangle\,ds + \int_0^t\langle N(X(s)),Y\rangle\,ds,
\qquad Y\in\cW_n,\qquad n\in\N.
\end{equation}
The convergence of  the left hand side and of the first term on the right hand side 
of (\re{gal2}) follow from (\re{ss}) and  (\re{s0}) since $ AY\in\cW_m$.

It remains to consider the last integral of (\re{gal2}).
The integrand is uniformly bounded by (\re{Vb}) and Lemma \re{p1}. 
Hence, it suffices to check the pointwise convergence
\begin{equation}\la{Nm}
\langle N(X_m(t), Y\rangle-\!\!\!-\!\!\!\!\to \langle N(X(t), Y\rangle,\quad m\to\infty,\qquad Y\in\cW_n
\end{equation}
 for any $t\in\R$. Here $N(X_m(t))=(ie\phi_m(t)\psi_m(t),p_m(t),f_m(t))$ according to 
the notations (\re{HN}), and $Y=(\vp,\vka,v)\in \cW_n$. Hence, (\re{Nm}) reads
 \begin{equation}\la{Nm2}
 ie(\phi_m(t)\psi_m(t),\vp)+p_m(t)\vka+f_m(t)v\,\to\,  ie(\phi(t)\psi(t),\vp)+p(t)\vka+f(t)v,\quad m\to\infty.
\end{equation}
 The convergence of $p_m(s)\vka$ follows from (\re{ss}) (with $m'=m$) .
 To prove the convergence of two remaining terms, we first show that
\begin{equation}\la{2rt}
\phi_m(t):=G\rho_m \toLC \phi(t):=G\rho,\quad m\to\infty.
\end{equation}
Indeed,  (\re{ss}) implies that 
\begin{equation}\la{qp}
\psi_m(t)\toLt\psi(t),\qquad q_m(t)\to q(t),\quad m\to\infty.
\end{equation}
	The sequence $\psi_m(t)$ is bounded in $H^1({\T})$ by (\re{Vb}).
	Hence, $\psi(t)\in H^1({\T})$ and 
	the sequence $\rho_m(t)$ is bounded in the Sobolev space $W^{1,3/2}({\T})$ by (\re{np}).
	Therefore, the sequence $\rho_m(t)$ is precompact in $L^2({\T})$ by  the Sobolev compactness theorem. 
	Hence,
	\begin{equation}\la{qp2}
	\rho_m\toLt\rho,\quad m\to\infty
	\end{equation}
	by (\re{qp}).
	Therefore, (\re{2rt}) holds since the operator $G:L^2({\T})\to C({\T})$ is continuous. 
	From (\re{2rt}) and (\re{qp}) it follows  that
	\begin{equation}\la{Nm3}
	\phi_m(t)\psi_m(t)\toLt\5 \phi(t)\psi(t),\quad f_m(t)\to f(t),\quad m\to\infty,
	\end{equation}
  which proves (\re{Nm2}). Now (\re{gal4}) is proved for $Y\in\cV_n$ with any $n\in\N$.
 Hence, $X(t)$ is a solution to (\re{HS}).  Finally,  
	$\Vert N(X(\cdot))\Vert_\cW$ is a bounded function  by
  (\re{Vb}) and Lemma \re{p1}.
  Hence, 
 (\re{LPSiv}) implies that  $X(\cdot)\in C(\R,\cV)$.
 \medskip\\
 iii) The conservation laws (\re{EQ2}) and the convergences (\re{s0}), (\re{ss}) imply that
 \begin{equation}\la{EQ3}
E(X(t))\le E(X(0)),\quad Q(X(t))\le Q(X(0)),\qquad t\in\R.
\end{equation}
The last inequality holds by the first convergence of (\re{qp}). The first inequality follows from the representation
\begin{equation}\la{EQ4}
E(X_m(t))=\fr12 \Vert\na\psi_m(t)\Vert_{L^2({\T})}^2+\fr12 
\Vert \sqrt{G}\rho_m(t)\Vert_{L^2({\T})}^2 +\sum_{n\in\Gamma_n}\fr{p_m^2(n,t)}{2M}.
\end{equation}
Namely, the last two terms on the right hand side converge by (\re{qp2}) and (\re{ss}). Moreover, 
the first term is bounded by (\re{Vb}). Hence,  the first convergence of  (\re{qp}) 
implies the weak convergence 
 \begin{equation}\la{EQ5}
\na\psi_{m}(t)\toLwt\na\psi(t)
 \end{equation}
 by the Banach theorem.
Now the first inequality of (\re{EQ3}) follows by the property of the weak convergence in the Hilbert space.
Finally, the opposite inequalities to (\re{EQ3}) are also true by the uniqueness of 
solutions $X(\cdot)\in C(\R,\cV)$, which is proved in Proposition \re{TLWP}.
\end{proof}

\setcounter{equation}{0}
\section{Jellium ground states}
We describe all 
 solutions to (\re{LPS1})--(\re{LPS3}) with minimal energy  (\re{Hfor}),
give some examples of ion densities illustrating the Jellium and 
the Wiener conditions, 
and show the existence of non-periodic ground states.

\subsection{Description of all jellium ground states}

The following lemma gives the description of all ground states
of the system (\re{LPS1})--(\re{LPS3}).  
\bl\la{Jgs}
Let the  Jellium condition (\re{Wai}) hold. Then 
all solutions 
to (\re{LPS1})--(\re{LPS3})
of minimal (zero) energy are $(e^{i\al}\sqrt{Z},q^*,0)$ 
with 
$\al\in[0,2\pi]$ and
$q^*\in\T^{\ov N}$ satisfying 
the identity
\begin{equation}\la{sipiq}
	\sum_{n\in{\Ga}}\si(x-q^*(n))\equiv eZ,\qquad x\in{\T},
\end{equation}

\el
\begin{proof}
First,
let us note that the ${\Ga}$-periodic solutions (\re{gr}) 
have the zero energy, and 
the identity (\re{sipiq}) holds for $q^*=\ov r$ by  (\re{sipi}).

Further, for any solution with zero energy  (\re{Hfor})
all summands on the right hand side of  (\re{Hfor}) vanish.
The first integral  vanishes only
for constant functions. Hence, the normalization condition (\re{rQ}) gives
  \begin{equation}\la{ppo}
  \psi(x,t)\equiv
\psi_{\al(t)}(x)\equiv e^{i\al(t)}\sqrt{Z},\qquad\al(t)\in \R.
\end{equation}
Then 
\begin{equation}\label{roZ2}
 \rho^e(x,t):=-e|\psi_{\al(t)}(x)|^2\equiv -eZ,\qquad x\in {\T},\,\,\,\,t\in\R,
 \end{equation}
similarly to (\re{roZ}).
Further,
the second summand of (\re{Hfor}) vanishes only for $\rho(x,t)\equiv 0$. Hence, $\rho^i(x,t)\equiv eZ$
that is equivalent to (\ref{sipiq}) 
with $q(n,t)$ instead of $q^*(n)$
by (\re{roZ2}).
However, $M\pa_t q(n,t)=p(n,t)\equiv 0$  for the zero energy  (\re{Hfor}).
Hence,
\begin{equation}\la{qnc}
q(n,t)\equiv q^*(n),\qquad t\in\R,
\end{equation}
where $q^*$ satisfies  (\re{sipiq}).
Moreover, 
the Poisson equation (\re{LPS2}) with $\rho(x,t)\equiv 0$
implies that
$\phi(x,t)\equiv 0$ 
 after a gauge transformation (\ref{GT}).
Hence,   finally,
substituting  (\re{ppo}) into  (\re{LPS1}) 
with $\phi(x,t)\equiv 0$,
we obtain that $\al(t)\equiv{\rm const}$.
 \end{proof}

This lemma implies that all  ${\Ga}$-periodic ground states are given by
(\re{gr}).

 
 \subsection{Jellium and Wiener conditions. Examples}\la{sex}

The Wiener condition (\re{W1}) for the ground states (\re{gr}) holds 
under the generic assumption
(\re{W1s}).
On the other hand,  (\re{W1}) does not hold
  for the simplest Jellium model,
 when $\si(x)$ is the characteristic function
\begin{equation}\la{sic}
\si(x)=\si_1(x):=\left\{
\begin{array}{ll}
eZ,& x\in\Pi\\ 
0,& x\in{\T}\setminus\Pi
\end{array}\right|,
\end{equation}
where $\Pi:=[-1/2,1/2]^3$.
 Indeed, 
in this case the Fourier transform 
\begin{equation}\la{sJM}
\hat\sigma_1(\xi)=eZ\hat\chi_1(\xi_1)\hat\chi_1(\xi_2)\hat\chi_1(\xi_3);\qquad
\hat\chi_1(s)=\fr {2\sin s/2}s,\quad s\in\R\setminus 0,
\end{equation}
where 
$\chi_1(s)$ is the characteristic function of the interval 
 $[-1/2,1/2]$.
In this case we have
for $\theta= (0,\theta_2,\theta_3)$,
\begin{equation}\la{DK2}
\Si(\theta)= \sum_{m\in\Z^3:\,m_1=0}\Big[
 \fr{\xi\otimes\xi}{|\xi|^2}|\hat\si(\xi)|^2\Big]_{\xi=\theta+2\pi m},
\end{equation}
which is a degenerate matrix since $\xi_1=0$ in each summand. Hence, (\re{W1}) fails.
Similarly, the Wiener condition  fails for 
$\si_k(x)=eZ\chi_k(x_1)\chi_k(x_2)\chi_k(x_3)$ where 
$\chi_k=\chi_1*...*\chi_1$ ($k$ times)
 with $k=2,3,...$,
since in this case
\begin{equation}\la{sJM2}
\hat\sigma_k(\xi)=eZ\hat\chi_k(\xi_1)\hat\chi_k(\xi_2)\hat\chi_k(\xi_3);\qquad
\hat\chi_k(s)=\Big[\fr {2\sin s/2}s\Big]^k,\quad s\in\R\setminus 0.
\end{equation}

 
 \subsection{Non-periodic ground states}\la{snp}

It is easy to construct ground states   which are not $\Gamma_1$-periodic
in the case  of characteristic function (\re{sic}). Namely, the identity (\re{sipiq}) obviously holds 
for periodic arrangement  of ions (\re{gr2}). Now let us modify this periodic arrangement as follows:
\begin{equation}\la{mod}
q^*(n)=
(r_1,r_2,r_3+\tau(n_1,n_2)),\qquad n\in{\Ga},
\end{equation}
where $\tau(n_1,n_2)$ is an arbitrary point of
the circle
$\R/N\Z$. Now (\re{sipiq}) obviously holds for any arrangement of ions (\re{mod}).

Next lemma gives a more general spectral 
assumptions on $\si$ which provide ground states
with non-periodic ion arrangements. For example, let us assume that
\begin{equation}\la{spc}
\si(\xi)=0,\quad \xi_3\in 2\pi\Z\setminus 0, \qquad{\rm and}
\qquad\si(\xi_1,\xi_2,0)=0,\quad (\xi_1,\xi_2)\in 2\pi\Z^2\setminus 0.
\end{equation}
In particular, this holds for the densities (\re{sJM2}) with all $k=1,...$

\bl\la{lnonp}
Let $\si$ satisfy the spectral condition (\re{spc}).
Then there exist  ground states  which are not $\Gamma_1$-periodic.

\el
\begin{proof}
In the Fourier transform (\re{sipiq}) reads
\begin{equation}\la{sip4}
\int_{\T} e^{i\xi x}\sum_{n\in{\Ga}}\si(x-n-q^*(n))dx=
\hat\si(\xi)\sum_{n\in{\Ga}} e^{i\xi(n+q^*(n))}=
\left\{
\begin{array}{rl}
eZ\ov N,&\xi=0\\
0,&\xi\in{\Gamma^*_N}\setminus 0.
\end{array}\right.
\end{equation}
These identities hold for any density $\si$ satisfying (\re{Wai})
if 
\begin{equation}\la{sip5}
\sum_{n\in{\Ga}} e^{i\xi(n+q^*(n))}=0,\qquad \xi\in{\Gamma^*_N}\setminus{\Gamma^*_1}.
\end{equation}
In particular, $q^*=\ov r$
satisfies the system
(\re{sip5})  since then
\begin{equation}\la{sip6}
\sum_{n\in{\Ga}} e^{i\xi(n+q^*(n))}=\sum_{n\in{\Ga}} e^{i\xi(n+r)}=
e^{i\xi r}
\sum_{n\in{\Ga}} e^{i\xi n}=0,\qquad \xi\in{\Gamma^*_N}\setminus{\Gamma^*_1}.
\end{equation}
Indeed, 
\begin{equation}\la{sip7}
\sum_{n\in{\Ga}} e^{i\xi n}=\sum_{n\in{\Ga}} e^{i(\xi_1 n_1+\xi_2 n_2+\xi_3 n_3)}
=\sum_{n_1=0}^{ N-1} e^{i\xi_1 n_1}
\sum_{n_2=0}^{ N-1} e^{i\xi_2 n_2}
\sum_{n_3=0}^{ N-1} e^{i\xi_3 n_3}
=0
\end{equation}
since at least one $\xi_k\not\in 2\pi\Z$ for $\xi\in{\Gamma^*_N}\setminus{\Gamma^*_1}$.
Now we modify $\ov r$ as follows:
\begin{equation}\la{modr}
q^*(n):=( a_1(n_1,n_2)),a_2(n_1,n_2), r_3),\qquad n\in{\Ga},
\end{equation}
where $a_1(n_1,n_2))$ and $a_2(n_1,n_2)$ are {\it arbitrary points} of the circle $\R/N\Z$.
Obviously, $q^*(n)$ can be non-periodic in $n_1$ and $n_2$.

On the other hand, (\ref{sip4}) holds. Indeed, for $\xi_3\in2\pi\Z$
this follows from (\ref{spc}), while for $\xi_3\not\in2\pi\Z$
we have
\begin{equation}\la{sip62}
\sum_{n\in{\Ga}} e^{i\xi(n+q^*(n))}=\sum_{n_1,n_2=0}^{N-1}
e^{i(\xi_1(n_1+a_1(n_1,n_2)+\xi_2(n_2+a_2(n_1,n_2)}
\sum_{n_3=0}^{N-1} e^{i\xi_3 n_3}=0.
\end{equation}
\end{proof}

\subsection{On the problem of periodicity}\la{spp}
In the non-periodic examples above 
the Wiener condition (\ref{W1}) breaks down for the case
 $\theta_3=0$ by (\ref{spc}). We suppose that the Wiener condition provides
only periodic ground states, however this is an open problem. 
For densities $\si$ satisfying a more strong condition (\re{W1s}), the  
identity (\re{sipiq}) is equivalent  to the system (\re{sip5}) by (\re{sip4}).
The system 
 (\re{sip5})
can be written as an `algebraic system'
\begin{equation}
\sum_{n\in{\Ga}} w_1^{m_1}(n) w_2^{m_2}(n) w_3^{m_3}(n)=0,\qquad m=(m_1,m_2,m_3)\in\Z^3\setminus N\Z^3
\end{equation}
for 
\begin{equation}
w_j(n_1,n_2,n_3):=e^{i\ds\fr{2\pi}N [n_j+q^*_j(n)]},
\qquad n\in{\Ga},\quad j=1,2,3.
\end{equation}
The ${\Ga}$-periodicity of $q^*$ 
is equivalent to the fact that
 only solutions are 
\begin{equation}
w_j(n_1,n_2,n_3)=C_j \lam^{n_j},\qquad j=1,2,3,
\end{equation}
where 
$\lam:=e^{i\ds\fr{2\pi}N}$.

\br
{\rm
For the corresponding 1D analog
\begin{equation}
\sum_{n_1=0}^{N-1} w_1^{m_1}(n_1)=0,\qquad m=1,...,N-1,
\end{equation}
the only solutions are $w_1(n_1):=C_1 \lam^{n_1}$ that follows easily 
from the 
Newton-Girard formulas guessed by Girard in 1629 
and rediscovered (without a proof) by Newton in 1666. The formulas were 
proved by Euler in 1747, see \ci{Euler1747,T2001}.
}
\er

\setcounter{equation}{0}
\section{The orbital stability of the ground state}
In this section  we expand the energy into the Taylor series and prove the orbital stability 
checking the  positivity of the energy Hessian.
\subsection{The Taylor expansion of the Hamiltonian functional}
\la{sT}

We will deduce the lower estimate (\re{BLi}) using
 the  Taylor expansion of $E(S+Y)$ for 
$S=S_{\al,r}:=(\psi^\al,\ov r,0)\in \cS$ and $Y=(\vp,\vka,\pi)\in\cW$:
\begin{equation}\la{te} 
E(S+Y)=E(S)+\langle E'(S),Y\rangle +\fr12 \langle Y,E''(S)Y\rangle + R(S,Y)=
\fr12 \langle Y,E''(S)Y\rangle + R(S,Y)
\end{equation}
since $E(S)=0$ and $E'(S)=0$.
First,
we expand 
 the  charge density (\re{Hfor2})
corresponding to 
$S+Y=(\psi^\al+\vp,\ov r+\vka,\pi)$:
\begin{equation}\la{ro}
\rho(x)=\rho^{(0)}(x)+\rho^{(1)}(x)+\rho^{(2)}(x),\qquad x\in{\T},
\end{equation}
where $\rho^{(0)}$ and  $\rho^{(1)}$ are respectively 
the  terms of zero and first  order in 
$Y$, while  $\rho^{(2)}$ is the remainder.
However, $\rho^{(0)}(x)$ is the total charge density of the ground state
which is identically zero by (\ref{sipi}) and (\re{roZ}):
\begin{equation}\la{ro0}
\rho^{(0)}(x)=\rho^i_0(x)-e|\psi^\al(x)|^2\equiv 0,\qquad x\in{\T}.
\end{equation}
Thus, $\rho=\rho^{(1)}+\rho^{(2)}$.
Expanding (\re{Hfor2}), we obtain 
\begin{eqnarray}
\la{ro11}\rho^{(1)}(x)\!\!\!\!&\!\!\!\!=\!\!\!\!&\!\!
\si^{(1)}(x)-2e\psi^\al (x)\cdot\vp(x),\quad \si^{(1)}(x)=-\sum_{n\in{\Ga}} \vka(n)\cdot\na\si(x-n-r),
\smallskip\\
\la{ro13}\rho^{(2)}(x)\!\!\!\!&\!\!=\!\!\!\!&\!\!\si^{(2)}(x)-e|\vp(x)|^2,~~ \si^{(2)}(x)=\fr12\sum_{n\in{\Ga}}
\int_0^1\!\!(1\!-\!s)
[\vka(n)\cdot\na]^2\si(x-n-r-s\vka(n))ds,
\end{eqnarray}
where $\cdot$ means the inner product of real vectors, and in particular,
$\psi^\al (x)\cdot\vp(x)$ is the inner product of the corresponding vectors in $\R^2$.
Substituting  $\psi=\psi^\al +\vp$ and $\rho=\rho^{(1)}+\rho^{(2)}$  
into (\ref{Hfor}), we obtain that
the  quadratic part  of (\re{te}) reads 
\begin{equation}\la{B2}
\fr12\langle Y,E''(S) Y \rangle=\fr12\int_{{\T}}|\na \vp(x)|^2dx+
\fr12 (\rho^{(1)},G\rho^{(1)})+K(\pi),~~~~~~ K(\pi):=\ds\sum_n\fr{\pi^2(n)}{2M}~~
\end{equation}
and the remainder equals
\begin{equation}\la{B3}
R(S,Y)=\fr 12(2\rho^{(1)}+\rho^{(2)},G\rho^{(2)}). 
\end{equation}

\subsection{The null space of the energy Hessian}

In this section we calculate the null space 
 \begin{equation}\la{KYd}
  \cK(S):=\Ker\, E''(S)\Big|_\cW,\qquad S\in \cS
  \end{equation}
 under the Wiener condition.

  \bl
  Let the Jellium and the Wiener conditions (\re{Wai}),  
(\re{W1}) hold and $S\in\cS$. Then
  \begin{equation}\la{KY}
  \cK(S)=\{(C,\ov s,0):~~ C\in\C,~~s\in\R^3 \},
  \end{equation}
  where $\ov s\in\R^{3\ov N}$ is defined similarly to (\re{gr2}): 
  $\ov s(n)\equiv s$.
  \el
  \begin{proof}
  All summands  of
the
energy (\re{B2}) are nonnegative. Hence,  this expression is zero if and only if
 all the summands vanish:
\begin{equation}\la{rb}
\vp(x)\equiv C,\quad
(\rho^{(1)}, G\rho^{(1)})=\Vert\sqrt{G}[\si^{(1)}-2e\psi^\al \cdot\vp]\Vert_{L^2({\T})}^2
=0,\quad \pi=0.
\end{equation}
Note that $\sqrt{G}\psi^\al \cdot\vp=\sqrt{G}\psi^\al \cdot C=0$ since 
the operator $G$ annihilates the constant functions 
by (\re{fs}). Hence, (\re{rb}) implies that
\begin{equation}\la{rb2}
\sqrt{G}\si^{(1)}=0.
\end{equation}
On the other hand, (\re{ro11}) gives
in the Fourier transform 
\begin{equation}\la{B314}
\hat\si^{(1)}(\xi)=\hat\si(\xi)\xi\cdot\sum_{n\in{\Ga}} ie^{i\xi (n+r)}\vka(n)
=i\hat\si(\xi)\xi\cdot e^{i\xi r}\hat \vka(\xi),\qquad\xi\in {\Gamma^*_N},
\end{equation}
where $\hat \vka(\xi):=\sum_{n\in{\Ga}} ie^{i\xi n}\vka(n)$ is a $2\pi\Z^3$-periodic function on
${\Gamma^*_N}$.
Hence, definition (\re{fs}) and the Jellium condition (\re{Wai}) imply that
\begin{eqnarray}\la{B315}
0=\Vert\sqrt{G}\si^{(1)}\Vert_{L^2({\T})}^2&=&
 N^{-3}\sum_{{\Gamma^*_N}\setminus {\Gamma^*_1}} |\hat\si(\xi)\fr{\xi\hat \vka(\xi)}{|\xi|}|^2
\nonumber\\
\nonumber\\
&=&N^{-3}\sum_{\theta\in\Pi^*_N\setminus {\Gamma^*_1}} 
\langle\hat \vka(\theta),
 \sum_{m\in\Z^3}\Big[\fr{\xi\otimes\xi}{|\xi|^2}|\hat\si(\xi)|^2\Big]_{\xi=\theta+2\pi m}\hat \vka(\theta)\rangle
 \nonumber\\
\nonumber\\
&=&N^{-3}\sum_{\theta\in\Pi^*_N\setminus {\Gamma^*_1}} 
\langle\hat \vka(\theta),
 \Si(\theta)
 \hat \vka(\theta)\rangle.
\end{eqnarray}
As a result, 
\begin{equation}\la{B316}
\hat \vka(\theta)=0,\qquad \theta\in \Pi^*_N\setminus {\Gamma^*_1}
\end{equation}
by the Wiener condition
(\re{W1}).   
On the other hand, $\hat \vka(0)\in\R^3$ remains arbitrary.
	Respectively, 
	$\vka=\ov s$ with 
	an arbitrary $s\in\R^3$.
\end{proof}
\br
{\rm 
The key point of the proof is 
the explicit calculation (\re{B314}) in the Fourier transform.
This calculation relies on the invariance of
the Hessian $E''(S)$ with respect to ${\Ga}$-translations which is 
due to the periodicity of the ions arrangement of the ground state.
}
\er
\br ({\it Beyond the Wiener condition.})
{\rm
 If
the Wiener condition (\re{W1})  fails, the dimension of the space
\begin{equation}\la{V}
V:=\{v\in \R^{3\ov N}:~~ v(n)=\sum_{\theta\in\Pi^*_N\setminus{\Gamma^*_1}}
e^{-i\theta n} \hat v(\theta),
\qquad \hat v(\theta)\in\C^3,~~ \Si(\theta)\hat v(\theta)  \equiv\,  0\}
\end{equation}
is positive.
The above calculations show that in this case 
 \begin{equation}\la{KYg}
  \cK(S)=\{(C, \ov s+v,0):~~ C\in\C,~~s\in{\T},~~v\in V \}.
  \end{equation}
 The subspace $V\subset \R^{3\ov N}$ is orthogonal to the $3D$ subspace 
$\{\ov s:s\in\R^3\}\subset \R^{3\ov N}$ by the Parseval theorem.
Hence,  $\dim \cK(S)=5+d$,  where $d:=\dim V>0$. 
Thus, $\dim \cK(S)>5$.
Under the Wiener condition $V=0$, and (\re{KYg}) coincides with (\re{KY}).
}
 \er

\subsection{The positivity  of the Hessian}\la{spoH}
Denote by $N_S\cS$ the normal subspace to $\cS$ at a point $S$:
\begin{equation}\la{L0N}
N_S\cS:=\{Y\in\cW: \langle Y,\tau\rangle=0,~~\tau\in T_S\cS\},
\end{equation}
where $T_S\cS$ is the
tangent space to $\cS$ at the point $S$  and $\langle\cdot,\cdot\rangle$
stands for the scalar product (\re{dWW}).

\bd\la{dM}
Denote by $\cM$ the Hilbert manifold 
\begin{equation}\la{cM}
\cM:=\{X\in\cV: Q(X)=Z N^3\}.
\end{equation}
\ed
Obviously, $\cS\subset\cM$, and a tangent space to $\cM$ at a point
$S=(\psi^\al,\ov r, 0)$
is given by 
 \begin{equation}\la{TSM}
T_S\cM=\{(\vp, \vka,\pi)\in\cW:\vp\bot\psi^\al \},
\end{equation}
since $DQ(\psi^\al,\ov r,0)=(\psi^\al,0,0)$.
\bl
Let the Jellium condition (\re{Wai}) hold and $S=S_{\al ,r}\in \cS$. Then
the Wiener condition (\re{W1}) is necessary and sufficient for the positivity
of the Hessian $E''(S)$
in the orthogonal directions to $\cS$ on $\cM$, i.e., 
\begin{equation}\la{cM2}
E''(S)\Big|_{N_S\cS\cap T_S\cM}>0.
\end{equation}
\el
\begin{proof}
i) Sufficiency.
Differentiating $S_{\al,r}=(e^{i\al}\psi_0,\ov r,0)\in \cS$ in the parameters $\al \in [0,2\pi]$ and $r\in{\T}$, 
we obtain 
\begin{equation}\la{tv}
T_S\cS=\{(iC\psi^\al ,\ov s,0): ~~C\in\R,~~s\in\R^3\}.
\end{equation}
Hence, (\re{KY})  implies that 
\begin{equation}\la{tv2}
K(S):=\cK(S)\cap N_S\cS=
\{(C\psi^\al ,0,0): ~~C\in\R\}
\end{equation}
by (\re{1i}) and (\re{dWW}).
Therefore,
	\begin{equation}\la{tv22}
	\cK(S)\cap  N_S\cS\cap  T_S\cM=K(S)\cap  T_S\cM=
	(0 ,0,0),
	\end{equation}
since the vector $(\psi^\al ,0,0)$
 is orthogonal to $T_S\cM$ by (\re{TSM}). Now (\re{cM2}) follows since 
$E''(S)\ge 0$ by (\re{B2}).
\medskip\\
ii)	Necessity.
If the Wiener condition (\re{W1}) fails,
the null space $\cK(S)$ is given by (\re{KYg}). Hence, 
(\re{tv}) implies that now
\begin{equation}\la{tv3}
K(S)=
\{(C\psi^\al ,v,0): ~~C\in\R, ~~ v\in V\}.
\end{equation}
However,  $(\psi^\al,\psi^\al)>0$.
Hence, (\re{TSM}) implies that $(\psi^\al ,v,0)\not\in T_S\cM$ 
and
the intersection
\begin{equation}\la{tv4}
K(S)\cap T_S \cM=
\{0,v,0): ~~ v\in V\}
\end{equation}
is the nontrivial subspace of the dimension  $d>0$. 
Thus, the Hessian $E''(S)$ vanishes on this nontrivial subspace
of  $N_S\cS\cap  T_S\cM$.
\end{proof}

\br\la{rS} {\rm The positivity of type (\re{cM2}) breaks down for the 
submanifold 
$\cS(r):=\{S_{\al ,r}:\al \in[0,2\pi]\}$ 
with a fixed $r\in{\T}$ instead of  the solitary manifold $\cS$.
Indeed, in this case the corresponding tangent space  is smaller, 
\begin{equation}\la{tvr}
T_S\cS(r)=\{(iC\psi^\al ,0,0): ~~C\in\R\},
\end{equation}
and hence, the normal subspace $N_S\cS(r)$ is larger, containing all 
vectors $(0,\ov s,0)$
generating  the shifts of the torus. However, all these vectors also 
belong 
to the null space (\re{KY}) and to $T_S\cM$.
Respectively, the null space of the Hessian $E''(S)$ in $T_S\cM\cap N_S\cS(r)$ 
is  3-dimensional.
}
\er



\subsection{The orbital stability}
Here we prove our   main result.
\bt\la{tm}
Let the conditions (\re{Wai}),  (\re{W1}) and  (\re{ro+}) hold, 
and $\cS$ is the solitary 
manifold (\re{cS}). 
Then
for any 
$\ve>0$  there exists $\de=\de(\ve)>0$ such that for 
$X(0)\in\cM$ with
$d_\cV(X(0),\cS)<\de$ we have
\begin{equation}\la{m}
d_\cV(X(t),\cS)<\ve,\qquad t\in\R
\end{equation}
for the corresponding solution
$X(t)\in C(\R,\cV)$  to (\re{LPS1})--(\re{LPS3}).
\et
For the proof is suffices to check
the lower energy estimate (\re{BLi}): 
\begin{equation}\la{BL}
E(X)\ge \nu\,d^2_\cV(X,\cS)\quad{\rm if}\quad d_\cV(X,\cS)\le\de,\quad X\in\cM
\end{equation}
with some $\nu,\de>0$. 
This estimate implies Theorem \re{tm}, since the energy is conserved
along all trajectories.
First, we prove similar lower bound for the energy Hessian. 
\bl
Let all conditions of Theorem \re{tm}  hold. Then  for each $S\in \cS$
\begin{equation}\la{L02}
 \langle Y, E''(S)Y\rangle > \nu\Vert Y\Vert_\cW^2,  \qquad Y\in  N_S\cS \cap T_S\cM,
\end{equation}
where $\nu>0$.
\el
\begin{proof}
It suffices to prove (\ref{L02}) for $S=(\psi_0,0,0)$. Note that $E''(S)$ is not complex linear due to the integral 
in (\re{Hfor}). Hence, we express the action of $E''(S)$ in $\vp_1(x):=\rRe\vp(x)$ and $\vp_1(x):=\rIm\vp(x)$:
formulas (\ref{ro11}) and (\ref{B2}) imply that 
\begin{equation}\la{E''}
E''(S)Y=\left(\begin{array}{cccl}
 2H_0+4e^2\psi_0 G\psi_0 & 0 & 2L & 0
 \medskip\\
 0 & 2H_0  &0 & 0\medskip\\
 2L^{\5*}  &    0  &   T    & 0  \\
      0      &    0            &   0    &  M^{-1} \\
\end{array}\right)Y
\qquad{\rm for}\quad
Y=\left(\begin{array}{c}\vp_1 \\ \vp_2 \\ \vka \\ \pi \end{array}\right),
\end{equation}
where $H_0=-\fr12\De$ ,
 the operator $L$  corresponds to the matrix
\begin{equation}\la{S}
 L(x,n):=e\psi_0(x)G\na\si(x-n):~~ ~~x\in\R^3,~n\in\Ga,
\end{equation}
 and  $T$ corresponds to the real matrix with the entries
\begin{equation}\la{T}
T(n-n'):=-\ds\langle  G\na\otimes\na\si(x-n'),  \si(x-n) \rangle,\quad n, n'\in\Ga.
\end{equation}

Thus, $E''(S)$  is the selfadjoint operator in $\cX$ with the discrete spectrum, and
  (\re{cM2}) implies that the minimal eigenvalue of
$E''(S)$  in the invariant space $N_S\cS\cap   T_S\cM$ is  positive. Therefore, (\re{L02}) follows.
\end{proof}
\medskip

The positivity (\re{L02}) implies
the lower energy estimate (\re{BL}),  since the 
higher-order terms in (\re{te}) 
are negligible by the following lemma.

\bl\la{lre}
Let $\si(x)$ satisfy  (\re{ro+}).
Then the  remainder (\re{B3}) admits the estimate
\begin{equation}\la{B31}
|R(S,Y)|\le C
\Vert Y\Vert_\cW^3\quad\,\,\,{\rm for}\quad  \,\,\,\Vert Y\Vert_\cW \le 1.
\end{equation}
\el
\begin{proof} 
It suffices to prove the 
estimates
\begin{equation}\la{B312}
\Vert\sqrt{G}\rho^{(1)}\Vert_{L^2({\T})}\le C_1\Vert Y\Vert_\cW,\quad  
\Vert\sqrt{G}\rho^{(2)}\Vert_{L^2({\T})}\le C_2\Vert Y\Vert_\cW^2
\quad   \,\,\,{\rm for}\,\,\, \quad    \Vert Y\Vert_\cW  \le 1.
\end{equation}
Then  (\re{B31}) will follow from (\re{B3}).
\medskip\\
i)
By (\re{ro11}) we have
for $Y=(\vp,\vka,\pi)$ 
\begin{equation}\la{B313}
\sqrt{G}\rho^{(1)}=\sqrt{G}\si^{(1)}-2e\sqrt{G}\psi^\al (x)\cdot\vp(x).
\end{equation}
The operator $\sqrt{G}$ is bounded in $L^2(\R^3)$ by the definition (\re{fs}). 
Hence,  
\begin{equation}\la{GP1}
\Vert\sqrt{G}\si^{(1)}\Vert_{L^2({\T})}\le C |\vka|
\end{equation}
by (\re{ro11}). Applying to the second term the 
Cauchy-Schwarz and Hausdorff-Young inequalities,
we obtain 
\begin{equation}\la{GP2}
\Vert\sqrt{G}\psi^\al (x)\cdot\vp\Vert_{L^2({\T})}\le 
C\Big[\sum_{\xi\in{\Gamma^*_N}}\fr{|\hat\vp(\xi)|^2}{|\xi|^2}\Big]^{1/2}\le 
C\Vert\hat\vp\Vert_{L^4({\Gamma^*_N})}\Big[\sum_{\xi\in{\Gamma^*_N}}|\xi|^{-4}\Big]^{1/2}
\le
C\Vert\vp\Vert_{L^{4/3}({\T})}.
\end{equation}
Hence, the first inequality (\re{B312}) is proved.
\medskip\\
ii) Now we prove the second 
 inequality  (\re{B312}). By (\re{ro13}) we have
for $Y=(\vp,\vka,\pi)$ 
\begin{equation}\la{B317}
\sqrt{G}\rho^{(2)}(x)=\sqrt{G}\si^{(2)}(x)-e\sqrt{G}|\vp(x)|^2.
\end{equation}
Similarly to  (\re{GP1})
\begin{equation}\la{GP3}
\Vert\sqrt{G}\si^{(2)}\Vert_{L^2({\T})}\le C|\vka|^2.
\end{equation}
 Finally, 
denoting $\beta(x):=|\vp(x)|^2$, we obtain similarly to (\re{GP2}) 
\begin{eqnarray}\la{fp}
\Vert\sqrt{G}|\vp(x)|^2\Vert_{L^2({\T})}
&\le&
C\Big[\sum_{\xi\in{\Gamma^*_N}}\fr{|\hat \beta(\xi)|^2}{|\xi|^2}\Big]^{1/2}\le 
C\Vert \hat\beta\Vert_{L^4({\Gamma^*_N})}\Big[\sum_{\xi\in{\Gamma^*_N}}|\xi|^{-4}\Big]^{1/2}
\nonumber\\
&\le&
C_1\Vert \beta\Vert_{L^{4/3}({\T})}
=
C_1\Vert \vp\Vert_{L^{8/3}({\T})}^2
\le C_2\Vert\vp\Vert_{H^1({\T})}^2
\end{eqnarray}
by the Sobolev embedding theorem \ci{Adams}.
Now the lemma is proved.
\end{proof}



\chapter{N-particle Schr\"odinger theory}

\centerline{Abstract}
\medskip

In this chapter, we
 consider the coupled
Schr\"odinger--Poisson--Newton equations for finite  crystals
with moving ions and 
 antisymmetric $N$-particle
wave functions. This antisymmetry expresses the Pauli exclusion principle. 
The dynamics of electron field is described by  the many-particle 
Schr\"odinger equation.

We
construct  global dynamics,
prove the conservation of energy and charge, and
  give the description of all ground states.
Our main result
is  the  orbital stability of every 
 ground state with periodic arrangement of ions
under the same `Jellium' and  Wiener-type  conditions on the ion charge density (\ref{Wai}), (\ref{W1}).
The Pauli exclusion principle
plays the key role in the proof of stability, see Remark \ref{rPa}.

\setcounter{equation}{0}
\section{Introduction}

As in Chapter I, we consider crystals which occupy the finite torus $\T:=\R^3/N\Z^3$
and have one ion per cell of the cubic lattice $\Ga:=\Z^3/N\Z^3$, where $N\in\N$ and $N>1$. We also assume the conditions (\ref{ro+}).
Now we denote 
\be\la{FvN}
\ov\T:=
\T^{\ov N}
:=\{\ov x=(x_1,...,x_{\ov N}): x_j\in\T,\quad j=1,...,\ov N\},\qquad \ov N:=N^3.
\ee
\bd $\cF$ is the `fermionic' Hilbert space of  complex antisymmetric functions
 $\psi(x_1,...,x_{\ov N})$ on $\ov\T$
with the norm
\be\la{FN}
\Vert\psi\Vert_\cF^2:= \Vert\na^\otimes\psi\Vert_{L^2(\ov\T)}^2+\Vert\psi\Vert_{L^2(\ov\T)}^2,
\ee
where $\na^\otimes$ denotes the gradient with respect to $\ov x\in\ov\T$.
\ed
Let 
$\psi(\cdot,t)\in\cF$ for $t\in\R$  be the antisymmetric  
wave function of the fermionic electron field, 
$q(n,t)$ denotes the ion displacement  from the reference position $n\in\Ga$,
and
$\phi(x,t)$ be the electrostatic  potential generated by the ions and electrons.
We assume $\hbar=c=\cm=1$, where $c$ is the speed of light and $\cm$ is the electron mass.
Let us denote the `second quantized' operators on $\cF$,
\be\la{2De}
\De^\otimes:=\sum_{j=1}^{\ov N} \De_{x_j};\qquad \phi^\otimes(\ov x,t):=\sum_{j=1}^{\ov N} \phi(x_j,t).
\ee
The coupled Schr\"odinger-Poisson-Newton equations  read as follows
\beqn\la{LPS1N}
i\dot\psi(\ov x,t)\!\!&=&\!\!-\fr12\De^\otimes\psi(\ov x,t)-e\phi^\otimes(\ov x,t)\psi(\ov x,t),\qquad \ov x\in\ov\T,
\\
\nonumber\\
-\De\phi(x,t)\!\!&=&\!\!\rho(x,t):=\sum_{n\in\Ga}
\sigma(x-n-q(n,t))+\rho^e(x,t),\qquad x\in\T,
\la{LPS2N}
\\
\nonumber\\
M\ddot q(n,t)
\!\!&=&\!\!-(\nab\phi(x,t),\sigma(x-n-q(n,t))), 
\qquad n\in\Ga.
\la{LPS3N}
\eeqn
Here
the 
brackets $(\cdot,\cdot)$
 stand for the  scalar product on the real Hilbert
space $L^2(\T)$ and for its different extensions,   
$M>0$ is the mass of one ion,
and the electronic charge density is defined by
\be\la{reN}
\rho^e(x,t)
:=-e\int_{\ov\T}\sum_{j=1}^{\ov N} \de(x-y_j)|\psi(\ov y,t)|^2\,d\ov y,~~ x\in\T.
\ee
 Similar finite periodic approximations of crystals are treated in all textbooks on 
quantum theory of solid state
\ci{Born, Kit, Zim}. 
However,   the  stability 
of ground states in
this model was newer discussed.
 \medskip
 
The normalization (\re{rQ}) now reads,
\be\la{rQN}
\Vert\psi(\cdot,t)\Vert_{L^2(\ov\T)}^2=Z,\qquad t\in\R.
\ee
Hence, as in Chapter I, the potential $\phi(x,t)$ can be eliminated
from  the system (\re{LPS1N})--(\re{LPS3N})
using 
the operator $G:=(-\De)^{-1}$ defined in (\ref{fs}).
Substituting $\phi(\cdot,t)=G\rho(\cdot,t)$
into the remaining equations (\ref{LPS1N}) and (\ref{LPS3N}) we can write these equations as
\be\la{vfN}
\dot X(t)=F(X(t)),\qquad t\in\R,
\ee
where $X(t)=(\psi(\cdot,t), q(\cdot,t), p(\cdot,t))$ with $p(\cdot,t):=\dot q(\cdot,t)$.
We will identify  complex  functions $\psi(\ov x)$ with 
two real functions $\psi_1(\ov x):=\rRe\psi(\ov x) $ 
and  $\psi_2(\ov x):=\rIm\psi(\ov x)$.
With these notations,
equation (\ref{vfN}) 
is equivalent to the Hamiltonian system
\begin{equation}\la{HSiN}
\pa_t \psi_1(\ov x,t)=\fr12 \pa_{\psi_2(\ov x)}E,~~\pa_t \psi_2(\ov x,t)=-\fr12\pa_{\psi_1(\ov x)}E,~~
\pa_t q(n,t)= \pa_{p(n)}E,~~\pa_t p(n,t)=-\pa_{q(n)} E
\end{equation}
together with the normalisation condition (\ref{rQN}).
Now
the Hamiltonian functional (energy)
reads 
\be\la{HforN}
  E(\psi, q, p)=\fr12\int_{\ov\T} |\na^\otimes\5\psi(\ov x)|^2\,d\ov x+
  \fr12 (\rho,G\rho)+
  \sum_{n\in\Ga} \fr{p^2(n)}{2M}.
\ee
As in (\ref{Hfor2}),  the total charge density
 $\rho(x)$ is the sum of
the ion and electronic charge densities,
\be\la{Hfor2N}
\rho(x):=\rho^i(x)+\rho^e(x),\quad
\rho^i(x):=\sum\limits_{n\in\Ga}\si(x-n-q(n)),\qquad x\in\T.
\ee 
Similarly to notations of  Chapter I, we denote the Hilbert manifolds
\be\la{cMN}
 \cV:=\cF\otimes \ov\T\otimes \R^{3\ov N},\qquad
\cM:=\{X\in\cV: Q(X)=Z\}.
\ee
We prove 
the global well-posedness of the dynamics:
for any $X(0)\in\cM$
there exists a unique
solution 
$X(t)\in C(\R,\cV)$ 
to (\re{HSiN}), and
the energy and charge conservations  hold:
\be\la{EQN}
E(X(t))=E(X(0)),\quad Q(X(t))=Q(X(0)),\qquad t\in\R.
\ee

Our main goal is the stability of 
ground states, i.e.,
solutions to (\re{HSiN})
with minimal energy  (\re{HforN}) and 
with  $\Ga$-periodic arrangement of ions
under  the same 
 Jellium and the Wiener conditions (\re{Wai}) and  (\re{W1}) 
onto the  ion densities $\si(x)$. 
Nonperiodic arrangements 
exist for some degenerate 
densities $\si$ as in Section \ref{snp}.

\medskip

The energy (\re{HforN}) is nonnegative, and its minimum is zero.
We show  that under the Jellium condition 
 all
ground states  with $\Ga$-periodic arrangement of ions have the form 
\begin{equation}\la{grN}
S(t):=(\psi_0e^{-i\om_0t},\ov r,0), ~~~~~r\in\T.
\end{equation}
Here
\begin{equation}\la{gr2N}
 \ov r\in \ov\T:\quad \ov r(n)=r,\,\,n\in\Ga,
\end{equation}
while $\psi_0\in\cF$ is a normalised  eigenfunction 
\be\la{eig}
-\fr12 \De^\otimes\5\psi_0(\ov x)=\om_0\psi_0(\ov x),\qquad \ov x\in\ov\T;\qquad \Vert\psi_0\Vert_{L^2(\ov\T)}=1,
\ee
which corresponds to the minimal eigenvalue  
$
\om_0:=\min \5\5\Spec(-\fr12 \De^\otimes).
$
\medskip

We establish
the stability  
of the real  4-dimensional `solitary manifold'
\be\la{cSN}
\cS=\{S_{\al,r}=(\psi^\al,\ov r,0):~\psi^\al(\ov x)
\equiv e^{i\al}\psi_0(\ov x),~~\al\in [0,2\pi];~~ r\in\T\},
\ee
where $\psi_0$ is a fixed eigenfunction, satisfying  the additional restriction (\re{adrN}).
 The normalization (\re{rQN}) and the identity (\re{eig}) imply that
\be\la{ESN}
E(S)= \om_0Z,\qquad S\in\cS.
\ee

The main result of this chapter  is the following theorem.

\bt\la{tmN}
Let the Jellium and Wiener conditions  (\re{Wai}) and (\re{W1}) hold as well as  (\re{adrN}).
Then
for any 
$\ve>0$  there exists $\de=\de(\ve)>0$ such that for 
$X(0)\in\cM$ with
$d_\cV(X(0),\cS)<\de$ we have
\be\la{mN}
d_\cV(X(t),\cS)<\ve,\qquad t\in\R,
\ee
where  $X\in C(\R,{\cal V})$
is the corresponding solution
 to (\ref{HSiN}).

\et

In Chapter I,
we have proved  similar result for the system  
(\re{LPS1})--(\re{LPS3}) with the 
one-particle Schr\"odinger equation instead of (\re{LPS1N}).
The general plan of analysis in the 
present chapter is also similar. However, now
all estimates and the uniformity (\re{roZ})
require completely new arguments.
\medskip

Let us comment on our approach.
We prove the local well-posedness for the system  (\re{HSiN})
by
the contraction mapping principle.
The global  well-posedness we deduce
from the  energy conservation which follows
by the Galerkin approximations. 
\medskip

The orbital stability of the solitary manifold $\cS$ is deduced
from  the lower energy estimate
\be\la{BLiN}
E(X)-\om_0Z\ge \nu\,d^2(X,\cS)\qquad{\rm if}\qquad d(X,\cS)\le \de,\qquad X\in\cM,
\ee
where
$\nu,\de>0$ and `$d$' is the distance in the `energy norm'.
We deduce this estimate  from the positivity of the Hessian $E''(S)$
for $S\in \cS$ in the orthogonal directions to $\cS$ on the manifold $\cM$.
We show that the Wiener condition \eqref{W1} is necessary for this positivity under 
the Jellium condition \eqref{Wai}.
We expect that this  condition is also necessary for the positivity of $E''(S)$;
however, this is still an open challenging problem.


This chapter is organized as follows.
In Section 2 we introduce function spaces.
In Section 3 we collect all our assumptions.
In Section 4
we describe all fermionic jellium ground states and give basic examples.
In Section 5  we prove 
the stability of the solitary manifold $\cS$ establishing  
the positivity  the energy Hessian. 
In Appendices we construct the global dynamics.


\setcounter{equation}{0}
\section{Function spaces and integral equation}

Equation (\ref{vfN}) with the normalization (\ref{rQN}) 
is equivalent, up to a gauge transform,
to the
system (\ref{LPS1N})--(\ref{LPS3N}).
The system (\ref{HSiN}) can be written as 
\begin{equation}\la{HSN}
\dot X(t)=JE'(X(t)),
\end{equation}
where $J$ is the matrix (\ref{HS2}). 
For $\psi,\vp\in L^2(\ov\T)$ denote
\be\la{spN}
(\psi,\vp):=\rRe\int_{\ov\T}\psi(\ov x)\cdot\vp(\ov x)d\ov x,
\ee
where $\cdot$ is the inner product of the corresponding vectors in $\R^2$.

\bd
i)  Denote the  real Hilbert spaces
$$
\cX:=L^2(\ov\T)\oplus \R^{3\ov N}\oplus \R^{3\ov N},\qquad
\cW:=\cF\oplus \R^{3\ov N}\oplus \R^{3\ov N}.
$$
ii) $\cV:=\cF\otimes \ov\T\otimes \R^{3\ov N}$ is 
the real Hilbert manifold endowed with the metric 
\begin{equation}\la{dVsN}
d_{\cV}(X,X'):=\Vert\psi-\psi'\Vert_{\cF}+|q-q'|+|p-p'|,\qquad X=(\psi,q,p),\quad
X'=(\psi',q',p')
\end{equation}
and with the `quasinorm'
\be\la{cVsN}
|X|_{\cV}:=\Vert\psi\Vert_{\cF}+|p|,\qquad X=(\psi,q,p).
\ee

\ed
The linear space $\cW$ is 
isomorphic to
 the tangent space to 
the Hilbert manifold $\cV$ at each point $X\in\cV$. 
As in Chapter I, we will use notation (\ref{dWW}) and  the inequality (\ref{EQV}).
The system (\re{HSN})
 in the integral  form reads similarly to (\ref{LPSi}),
\begin{equation}\la{LPSiN}
\left\{\begin{array}{lll}
\psi(t)&=&e^{-iH_0t } \psi(0)+ie\ds\int_0^t e^{-iH_0(t-s)} [ \phi^\otimes(s)\psi(s) ]ds,\\
\\
q(n,t)&=&q(n,0)+\frac 1M\ds\int_0^t p(n,s)ds\mod N\Z^3,\\
\\
p(n,t)&=&p(n,0)-\ds\int_0^t (\nabla \phi(s),\sigma(\cdot-n-q(n,s))) ds,
\end{array}\right|
\end{equation}
where 
$H_0=-\fr 12\De^\otimes$
and $\phi(s):=G\rho(s)$.
In the vector form (\ref{LPSiN}) reads
\begin{equation}\la{LPSivN}
X(t)=e^{-tA}X(0)+\int_0^t  e^{-(t-s)A} N(X(s)) ds,
\qquad 
A=\left(\begin{array}{ccc}
	 iH_0  & 0 & 0\\
	0&0&0\\
	0&0&0
\end{array}\right),
\end{equation}
where
\be\la{HNN}
N(X)=(ie \phi^\otimes\5\psi ~, p,~f),\,\,\, f(n):=-(\nabla \phi,\sigma(\cdot-n-q(n))),\,\,\,\phi:=G\rho,
\ee
and $\rho$ is now defined by (\re{Hfor2N}), (\ref{reN}).

\setcounter{equation}{0}
\section{Main assumptions}

We assume the same 
 Jellium and the Wiener conditions (\re{Wai}) and  (\re{W1}) 
onto the  ion densities $\si(x)$. 
However, now we need an additional condition for the orbital stability 
of the ground state (\re{grN}).
Namely, recall that
the exterior product of functions 
$f_j\in L^2(\T)$
is defined by
\be\la{LamN}
[\Lam_{j=1}^{\ov N} f_j](\ov x):=\fr1{\sqrt{\ov N!}} \sum_{\pi\in S_{\ov N}} 
(-1)^{|\pi|} 
\prod\limits_{j=1}^{\ov N}f_j(x_{\pi(j)}),
\qquad \ov x=(x_1,...,x_{\ov N})\in\ov\T,
\ee
where $S_{\ov N}$ is the symmetric group and $|\pi|$ denotes the sign (or parity) of a 
transposition $\pi$.
Every eigenfunction  (\re{eig}) admits an expansion
in the exterior products
\be\la{gexN}
\psi_0(\ov x)=\sum_{\ov k} C(\ov k)\wedge_{j=1}^{\ov N} e^{ik_j x_j}.
\qquad k_j\in\Ga^*_N:=\fr{2\pi}N\Z^3.
\ee
Here  $\ov k:=\{k_1,...,k_{\ov N}\}$, where $k_j$ are different for 
distinct $j$, and $\fr12 \sum_{j=1}^{\ov N}k_j^2=\om_0$.
We will consider the eigenfunctions 
(\re{gexN}) with the additional restriction
\be\la{adrN}
\#(\ov k\setminus\ov k')\ge 2\quad{\rm if}
\quad \ov k\ne \ov k'. 
\ee
This condition implies that
the corresponding electronic charge density 
is uniform
(see Lemma \re{lre}),
\begin{equation}\label{roZN}
\rho^e(x)\equiv -eZ,\qquad x\in \T.
\end{equation}
This identity plays a crucial role in our approach. It implies that
the corresponding total charge density (\re{Hfor2N}) identically vanishes by (\re{sipi}).
Let us emphasize
that both ionic and electronic charge densities are uniform for the ground state under 
the Jellium
condition together with (\re{adrN}).
\medskip

\setcounter{equation}{0}
\section{Global dynamics}

Here we prove the global well-posedness of the 
system (\ref{HSN}).

\bt\label{TLWP1N}
Let   (\re{ro+}) hold and $X(0)\in\cM$.
Then 
\medskip\\
i) 
There exists a unique
solution 
$X(t)\in C(\R,\cV)$ 
to (\re{HSN}).
\medskip\\
ii) The energy and charge conservations (\re{EQN}) hold.

\et

First we construct the local solutions by contraction arguments.  
To construct the global solutions we will prove
 energy conservation 
using the Galerkin approximations.

\bt\label{TLWPN}(Local well-posedness).
Let   (\re{ro+}) hold and
$X(0)=(\psi_0,q_0,p_0)\in\cV$
with
 $| X(0)|_\cV\le R$. Then 
there exists $\tau=\tau(R)>0$ such that
  equation (\ref {HSN}) has  a unique  solution $X\in C([-\tau,\tau],{\cal V})$,
 and the maps $U(t):X(0)\mapsto X(t)$ are continuous in $\cV$ for $t\in [-\tau,\tau]$.

\et
In the
 next two propositions we prove
the boundedness and the local Lipschitz continuity of the nonlinearity 
$N:\cV\to\cW$ defined in (\re{HNN}).
With  this proviso Theorem \re{TLWP} follows from the integral form 
(\re{LPSivN}) of the equation (\ref{HSN})
 by the contraction mapping principle, since $e^{-At}$ is an isometry of $\cW$.
First, we prove the boundedness of $N$.
\bp\label{p1N}
For any $R>0$ and $X=(\psi,q,p)\in\cV$
\begin{equation}\label{bNN}
\Vert N(X)\Vert_\cW\le C(R)\qquad{\rm for}\quad |X|_\cV\le R.
\end{equation}
\ep
\begin{proof}
We need appropriate bounds for the charge density   $\rho$
and for the corresponding potential  $\phi$.

\bl The charge density (\re{Hfor2N}) admits the bounds 
\be\la{re4N}
\Vert\rho\Vert_{L^3(\T)}+\Vert\na\rho\Vert_{L^{3/2}(\T)}\le C(1+\Vert\psi\Vert_\cF^2).
\ee

\el
\begin{proof}
 We have
  $\rho(x)=\rho^i(x)+\rho^e(x)$ by (\re{Hfor2N}).
The bound
(\re{re4N})
for $\rho^i$ holds by (\re{ro+}). It remains to prove the bound for $\rho^e$.
Definition (\re{reN})   implies that  
\be\la{re2N}
\rho^e(x)=-e\sum_{j=1}^{\ov N} \int_{\T^{\ov N-1}} |\psi(\ov y)|^2\Big|_{y_j=x}\,dx_1...\widehat{dy_j}...dy_{\ov N},
\quad x\in\T,
\ee
where the hat means that this differential is omitted. 
Differentiating, we obtain that
\be\la{re22N}
\na\rho^e(x)=-e\sum_{j=1}^{\ov N} \int_{\T^{\ov N-1}} \na_{y_j}|\psi(\ov y)|^2\Big|_{y_j=x}\,dy_1...\widehat{dy_j}
...dy_{\ov N},\quad x\in\T.
\ee
Applying the Cauchy--Schwarz inequality to (\re{re2N}), we get 
\beqn\la{re23N}
\Vert\rho^e(x)\Vert_{L^3(\T)}&\le& C\sum_{j=1}^{\ov N} \int_{\T^{\ov N-1}} [\int_\T|\psi(\ov y)|^6dy_j]^{1/3}\,dy_1...
\widehat{dy_j}
...dy_{\ov N}
\nonumber\\
&\le&
\int_{\T^{\ov N-1}} [\int_\T|\na_{y_j}\psi(\ov y)|^2dy_j]\,dy_1...
\widehat{dy_j}
...dy_{\ov N}
\le
 C\Vert\psi\Vert_{H^1(\ov\T)}^2
\eeqn
by the Sobolev embedding theorem \ci[Theorem 5.4, Part I]{Adams}.
Similarly, (\re{re22N}) implies that
\beqn\la{re24N}
\Vert\na\rho^e(x)\Vert_{L^{3/2}(\T)}&\le& C\sum_{j=1}^{\ov N}
\int_{\T^{\ov N-1}} [\int_\T|\psi(\ov y)\na_{y_j}\ov\psi(\ov y)|^{3/2}dy_j]^{2/3}\,dy_1...
\widehat{dy_j}
...dy_{\ov N}
 \nonumber\\
&\le&
\int_{\T^{\ov N-1}} [\int_\T|\psi(\ov y)|^6dy_j]^{1/6}[\int_\T|\na_{y_j}\psi(\ov y)|^2dy_j]^{1/2}\,dy_1...
\widehat{dy_j}
...dy_{\ov N}
 \nonumber\\
&\le&
\int_{\T^{\ov N-1}} [\int_\T|\na_{y_j}\psi(\ov y)|^2dy_j]\,dy_1...
\widehat{dy_j}
...dy_{\ov N}
\le
 C\Vert\psi\Vert_{H^1(\ov\T)}^2
\eeqn
by the H\"older inequality and  the Sobolev embedding theorem.
\end{proof}

\bl
 The potential $\phi:=G\rho$ admits the bound
\be\la{PbN}
\Vert \phi\Vert_{C(\T)}+\Vert \na\phi\Vert_{L^3(\T)}\le C(1+\Vert\psi\Vert_\cF^2).
\ee
\el
\begin{proof}
 Applying   the H\"older and Hausdorff--Young inequalities to 
(\re{fs}), we obtain that
\be\la{s1N}
\Vert \phi\Vert_{C(\T)}\le C\Vert\fr{\ti\rho(\xi)}{\xi^2}\Vert_{L^1(\Ga^*_N\setminus 0)}
\le C_1\Vert \xi\tilde \rho\Vert_{L^3(\Ga^*_N)}
\Big[\sum_{\xi\in\Ga^*_N\setminus 0}|\xi|^{-9/2}\Big]^{2/3}
\le C_2\Vert\nabla \rho\Vert_{L^{3/2}(\T)}.
\ee
Similarly, 
\be\la{s2N}
\Vert \na\phi\Vert_{L^3(\T)}\le C\Vert\fr{\ti\rho(\xi)}{|\xi|}\Vert_{L^{3/2}(\Ga^*_N\setminus 0)}
\le C_1\Vert \xi\tilde \rho\Vert_{L^3(\Ga^*_N)}
\Big[\sum_{\xi\in\Ga^*_N\setminus 0}|\xi|^{-6}\Big]^{1/3}
\le C_2\Vert\nabla \rho\Vert_{L^{3/2}(\T)}.
\ee
Now the bound (\re{PbN}) follows from (\re{re4N}).
\end{proof}

Now we can prove the estimate (\re{bN}).
First, we will prove
\begin{equation}\label{eq2N}
\Vert\phi^\otimes\5\psi\Vert_\cF\le C(1+\Vert\psi\Vert_\cF^3)
\end{equation}
in the notation (\re{2De}).
According to definition (\re{FN})
it suffices to  check that
\begin{equation}\la{c1N}
\Vert\phi^\otimes\5\psi \Vert_{L^2(\ov\T)}+
\Vert \phi^\otimes\, 
\nabla^\otimes\5\psi\Vert_{L^2(\ov\T)}
+
\Vert \psi\nabla^\otimes\,\phi^\otimes\Vert_{L^2(\ov\T)}\le C(1+\Vert\psi\Vert_\cF^3).
\end{equation}
The first two summands   admit the needed estimate by (\re{PbN}). The third summand
requires some additional argument. Namely,
\beqn\la{aaN}
\Vert \psi 
\nabla^\otimes\,\phi^\otimes\5\Vert_{L^2(\ov\T)}^2
&=& \int_{\ov\T}|\sum_{j=1}^{\ov N} \na\phi(x_j)\psi(\ov x)|^2\,d\ov x
\le C\sum_1^{\ov N} \int_{\ov\T}|\na\phi(x_j)\psi(\ov x)|^2\,d\ov x
\nonumber\\
 &=&C\sum_{j=1}^{\ov N} \int_{\T^{\ov N-1}}
 \Big[\int_{\T}|\na\phi(x_j)\psi(\ov x)|^2\,d x_j\Big]dx_1...\widehat{dx_j}...dx_{\ov N}.
\eeqn
The inner integral is estimated as follows
\be\la{iiN}
\int_\T|\na\phi(x_j)\psi(\ov x)|^2\,d x_j\le \Vert \na\phi\Vert_{L^3(\T)}^2\Big[\int _\T|\psi(\ov x)|^6\,d x_j\Big]^{1/3}
\le C\Vert \na\phi\Vert_{L^3(\T)}^2\int_\T|\na_{x_j}\psi(\ov x)|^2\,d x_j
\ee
by the H\"older inequality and the Sobolev embedding theorem. Substituting this estimate into (\re{aaN}), we obtain
\begin{equation}\label{c3N}
\Vert \psi 
\nabla^\otimes\,\phi^\otimes\Vert_{L^2(\ov\T)}\le C\Vert \na\phi\Vert_{L^3(\T)}
\Vert\psi\Vert_\cF.
\end{equation}
This and (\re{PbN}) imply (\re{c1N}) for the third summand. 
Finally, using 
(\re{HNN}),
(\ref{PbN}) and (\ref{ro+}),
\begin{equation}\label{eq3N}
|f(n)|\le \Vert \phi\Vert_{C(\T)}\Vert\na\si\Vert_{L^1(\T)}
\le C(1+\Vert\psi\Vert_\cF^2),\qquad n\in\Ga.
\end{equation}
Hence, (\ref{eq2N}) and (\ref{eq3N}) imply (\ref{bNN}). Proposition \re{p1N} is proved.
\end{proof}

It remains to prove  that the nonlinearity is locally Lipschitz.

\begin{proposition}\label{p2N}

For any $R>0$ and $X_1,X_2\in\cV$ with $| X_1|_{\cal V},| X_2|_\cV\le R$
\begin{equation}\label{lNN}
\Vert N(X_1)-N(X_2)\Vert_\cW\le C(R) d_\cV(X_1,X_2).
\end{equation}
\end{proposition}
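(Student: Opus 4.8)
The plan is to imitate the scheme of Lemma~\re{p2} from Chapter~I, writing $X_k=(\psi_k,q_k,p_k)$ and $\phi_k:=G\rho_k$, and splitting $N(X_1)-N(X_2)$ into its three components according to (\re{HNN}). The middle (momentum) component contributes only $|p_1-p_2|\le d_\cV(X_1,X_2)$, so the work concentrates on the Schr\"odinger nonlinearity $ie(\phi_1^\otimes\psi_1-\phi_2^\otimes\psi_2)$, measured in $\cF$, and on the ionic force difference $f_1-f_2$. The whole argument reduces to proving a Lipschitz version of the density bound (\re{re4N}) and of the potential bound (\re{PbN}), after which the boundedness estimates of Proposition~\re{p1N} can be repeated on differences.

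First I would establish the Lipschitz density estimate
\[
\Vert\rho_1-\rho_2\Vert_{L^3(\T)}+\Vert\na(\rho_1-\rho_2)\Vert_{L^{3/2}(\T)}\le C(R)\,d_\cV(X_1,X_2),
\]
splitting $\rho_k=\rho_k^i+\rho_k^e$ as in (\re{Hfor2N}). The ionic part is controlled by $\si\in C^2(\T)$, which gives $|\si(x-n-q_1(n))-\si(x-n-q_2(n))|\le C|q_1(n)-q_2(n)|$ together with the analogous bound for $\na\si$, so that $\rho^i$ is Lipschitz in $q$. For the electronic part I would polarize $|\psi_1|^2-|\psi_2|^2=(\psi_1-\psi_2)\cdot(\psi_1+\psi_2)$ inside the partial-trace formula (\re{reN}) and repeat the per-coordinate Cauchy--Schwarz, H\"older, and Sobolev estimates of (\re{re23N})--(\re{re24N}); this yields the bilinear bound $\le C\Vert\psi_1-\psi_2\Vert_\cF(\Vert\psi_1\Vert_\cF+\Vert\psi_2\Vert_\cF)\le C(R)\Vert\psi_1-\psi_2\Vert_\cF$. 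Feeding this into the proof of (\re{PbN}) gives at once $\Vert\phi_1-\phi_2\Vert_{C(\T)}+\Vert\na(\phi_1-\phi_2)\Vert_{L^3(\T)}\le C(R)\,d_\cV(X_1,X_2)$.

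With these potential-difference bounds secured, I would estimate the Schr\"odinger term through the decomposition $\phi_1^\otimes\psi_1-\phi_2^\otimes\psi_2=(\phi_1^\otimes-\phi_2^\otimes)\psi_1+\phi_2^\otimes(\psi_1-\psi_2)$. For the second summand I reuse the three $\cF$-estimates (\re{c1N}) verbatim with $\psi$ replaced by $\psi_1-\psi_2$ and $\phi$ by $\phi_2$, the decisive third estimate being exactly (\re{c3N}); this gives $\Vert\phi_2^\otimes(\psi_1-\psi_2)\Vert_\cF\le C(R)\Vert\psi_1-\psi_2\Vert_\cF$. For the first summand the same three estimates apply with $\phi_1^\otimes-\phi_2^\otimes$ in place of $\phi^\otimes$ and $\psi_1$ fixed, now invoking the potential-difference bounds from the previous paragraph. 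For the ionic force I integrate by parts, $f(n)=(\phi,\na\si(\cdot-n-q(n)))$, and split $f_1(n)-f_2(n)$ into a term pairing $\phi_1-\phi_2$ with $\na\si$ (bounded by $\Vert\phi_1-\phi_2\Vert_{C(\T)}\Vert\na\si\Vert_{L^1(\T)}$) and a term pairing $\phi_2$ with $\na\si(\cdot-n-q_1(n))-\na\si(\cdot-n-q_2(n))$ (bounded by $C\Vert\phi_2\Vert_{C(\T)}|q_1(n)-q_2(n)|$ since $\si\in C^2$), whence $\sup_n|f_1(n)-f_2(n)|\le C(R)\,d_\cV(X_1,X_2)$.

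The main obstacle will be the electronic charge-density difference. Unlike the one-particle case, $\rho^e$ is a partial trace of $|\psi|^2$ over $\ov N-1$ variables, and one must verify that the polarized bilinear form still admits the per-coordinate embedding $H^1\hookrightarrow L^6$ used in (\re{re23N})--(\re{re24N}), uniformly in the omitted variables, so that the resulting constant is bilinear in $\Vert\psi_1\Vert_\cF,\Vert\psi_2\Vert_\cF$. Once this Lipschitz bound on $\rho^e$, and hence on $\phi$, is in hand, the remaining estimates are routine repetitions of the boundedness proof of Proposition~\re{p1N}, and (\re{lNN}) follows.
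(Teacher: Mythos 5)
Your proposal follows essentially the same route as the paper's own proof: the same decomposition (\re{NN1N}) of the Schr\"odinger nonlinearity, the same reuse of (\re{c1N})--(\re{c3N}) and (\re{PbN}) applied to differences, the Lipschitz bound on $\na(\rho_1-\rho_2)$ in $L^{3/2}(\T)$ obtained by polarizing $|\psi_1|^2-|\psi_2|^2$ and repeating the per-coordinate estimates of (\re{re23N})--(\re{re24N}), fed through (\re{s1N})--(\re{s2N}) to control $\phi_1-\phi_2$, and the same two-term splitting of $f_1(n)-f_2(n)$ using $\si\in C^2$. The one subtlety you flag---that the bilinear per-coordinate Sobolev argument for the partial-trace density survives polarization---is precisely the step the paper compresses into ``as in (\re{re4N})'' at (\re{re43N}), and your explicit treatment of it is correct.
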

\begin{proof}
Writing  $X_k=(\psi_k,q_k,p_k)$ and $\phi_k=G\rho_k$, we obtain that
\begin{equation}\label{NN1N}
\Vert\phi^\otimes_1\5\psi_1-\phi^\otimes_2\5\psi_2\Vert_\cF\le\Vert(\phi^\otimes_1-\phi^\otimes_2)\5\psi_1\Vert_\cF
+\Vert\phi^\otimes_2 \5(\psi_1-\psi_2)\Vert_\cF.
\end{equation}
Using   (\ref{c3N}) and (\ref{PbN}), we obtain 
\begin{eqnarray}\nonumber
\Vert \phi^\otimes_2\5(\psi_1-\psi_2)\Vert_\cF&\le&
(\Vert \phi_2\Vert_{C(\T)}+\Vert \nabla \phi_2\Vert_{L^3(\T)})
\Vert\psi_1-\psi_2\Vert_\cF\\
\nonumber\\
&\le&C(1+R^2)\Vert\psi_1-\psi_2\Vert_\cF\le C(R)
d_\cV(X_1,X_2).\label{NN2N}
\end{eqnarray}
Further,
 (\ref{s1N}) and (\ref{s2N}) give that
\be\la{P12N}
\Vert\phi_1-\phi_2\Vert_{C(\T)}+\Vert\na\phi_1-\na\phi_2\Vert_{L^3(\T)}
\le\Vert\na\rho_1-\na\rho_2\Vert_{L^{3/2}(\T)}
\ee
However, 
 $|\si(x)-\si(x-a)|\le C|a|$,
where $|a|:=\min_{r\in a} |r|$ for $a\in \T$ (by definition, $a\subset\R^3$ is a class of equivalence mod $N\Z^3$).
Therefore,
as in (\ref{re4N}),
\be\la{re43N}
\Vert\na(\rho_1-\rho_2)\Vert_{L^{3/2}(\T)}\le CR
(| q_1-q_2|+\Vert\psi_1-\psi_2\Vert_\cF).
\ee
Hence,
\beqn
\Vert(\phi^\otimes_1-\phi^\otimes_2)\psi_1\Vert_\cF&\le&
(\Vert \phi_1-\phi_2\Vert_{C(\T)}+\Vert \nabla (\phi_1-\phi_2)\Vert_{L^3(\T)})\Vert\psi_1\Vert_\cF
\nonumber\\
\nonumber\\
&\le&
C R\Vert\na(\rho_1-\rho_2)\Vert_{L^{3/2}(\T)}\Vert\psi_1\Vert_\cF
\le C(R)d_\cV(X_1,X_2).  \label{ek}
\eeqn
Now (\re{NN1N})--(\re{ek}) give
\be\la{ThN}
\Vert\phi^\otimes_1\5\psi_1-\phi^\otimes_2\5\psi_2\Vert_{H^1(\T)}\le C(R)
d_\cV(X_1,X_2).
\ee
Similarly,  (\re{P12N}), (\re{re43N}) and (\ref{PbN})  imply 
\begin{eqnarray}\nonumber
&&\Vert\langle\na \phi_1,\sigma(\cdot-n-q_1(n))\rangle-\langle\na\phi_2,\sigma(\cdot-n-q_2(n))\rangle\Vert
\nonumber\\
\nonumber\\
&\le&\Vert\langle\na(\phi_1-\phi_2),\sigma(\cdot-n-q_1(n))\rangle\Vert
+\Vert\langle\nabla \phi_2,\sigma(\cdot-n-q_1(n))-\sigma(\cdot-n-q_2(n))\rangle\Vert
\nonumber\\
\nonumber\\
&\le&C(\Vert\phi_1-\phi_2\Vert_{C(\T)}+\Vert \phi_2\Vert_{C(\T)}) | q_1-q_2|)\le C(R)
d_\cV(X_1,X_2).
\end{eqnarray}
This estimate together with  (\re{ThN}) prove (\re{lNN}).
\end{proof}
Now 
Theorem \re{TLWPN} follows from 
Propositions \ref{p1N} and \ref{p2N}. 
\medskip\\
{\bf Proof of Theorem \re{TLWP1N}.}
The local solution  $X\in C([-\tau,\tau],\cV)$ to  (\ref {HSN})
exists and is unique by Theorem \re{TLWPN}.
On the other hand,
the conservation laws (\ref{EQN}) (proved in
Proposition \re{lgalN} iii) below)   
together with   (\re{EQV}) imply a priori bound
	\be\la{abN}
	| X(t)|_\cV^2\le C[E(X(0))+ Q(X(0))],\qquad t\in [-\tau,\tau].
	\ee
Hence, the local solution
admits an extension to the global solution
$X\in C(\R,\cV)$. 
\hfill$\Box$

\br
{\rm 
The condition $X(0)\in\cM$ implies that $X(t)\in\cM$ for all $t\in\R$ by 
  the charge conservation 
(\re{EQN}). 
	Hence,  (\ref {HSN}) implies (\ref{LPS1N})--(\ref{LPS3N}) with 
	the potential $\phi(\cdot,t)=G\rho(\cdot,t)$.
}
\er

\setcounter{equation}{0}
\section{Conservation laws}

We deduce the conservation  laws (\ref{EQN}) by the Galerkin approximations \ci{Lions}.

\bd
i)$\cV_m$ with $m\in\N$  denotes the finite-dimensional  submanifold of 
the Hilbert manifold
$\cV$ 
\be\la{VmN}
\cV_m:=\{(\sum_{\ov k} C(\ov k)\Lam_{j=1}^{\ov N} e^{ik_jx_j},q,p):k_j\in\Ga^*_N,~~
C(\ov k)\in\C,~~
\sum\limits_{j=1}^{\ov N}
k_j^2\le m,~ 
q\in\ov\T,~ p\in\R^{3\ov N}\},
\ee
where $\ov k:=(k_1,...,k_{\ov N})$.
\medskip\\
ii) $\cW_m$ with $m\in\N$  denotes the finite-dimensional linear subspace of the Hilbert space
$\cW$
\be\la{WmN}
\cW_m:=\{(\sum_{\ov k} C(\ov k) \Lam_{j=1}^{\ov N} e^{ik_jx_j},\vka,v):k_j\in\Ga^*_N,~~
C(\ov k)\in\C,~~
\sum\limits_{j=1}^{\ov N}k_j^2\le m,  ~ \vka\in\R^{3\ov N}, ~ v\in\R^{3\ov N}\}.
\ee
\ed
Obviously, $\cV_1\subset\cV_2\subset...$, 
the union $\cup_m\cV_m$ is dense in $\cV$, and 
$\cW_m$ are  invariant with respect 
to $H$ and $J$.
Let us denote by $P_m$
the orthogonal projector $\cX\to\cW_m$. This projector is also orthogonal in 
the Hilbert space $\cW$.
Let us 
approximate the system 
(\re{HSN}) by finite-dimensional Hamiltonian systems  on the manifold $\cV_m$,
\be\la{galN}
\dot X_m(t)=JE_m'(X_m(t)),\qquad t\in\R,
\ee
where $E_m:=E|_{\cV_m}$ and $X_m(t)=(\psi_m(t),q_m(t),p_m(t))\in C(\R,\cV_m)$.
The equation (\re{galN}) can be also  written as
\begin{equation}\la{galiN}
\langle \dot X_m(t),Y\rangle=-\langle E'(X_m(t)),JY\rangle,\qquad Y\in\cW_m.
\end{equation}
This form of the equation (\re{galN}) holds 
since $E_m:=E|_{\cV_m}$ and $\cW_m$ is  invariant with respect 
to  $J$.
Equivalently,
\begin{equation}\la{gali2N}
\dot X_m(t)=H X_m(t) + P_m N(X_m(t)).
\end{equation}
The Hamiltonian form guarantees the energy and charge conservation
 (\ref{EQN}):
\begin{equation}\la{EQ2N}
E(X_m(t))=E(X_m(0)),\quad Q(X_m(t))=Q(X_m(0)),\qquad t\in\R.
\end{equation}
Indeed, the energy conservation holds by the Hamiltonian form  (\re{galN}), 
while the charge conservation holds
by
the Noether theory \ci{A,GSS87, KQ} due to the
$U(1)$-invariance of $E_m$, see (\re{U1}).

The equation (\re{gali2N}) admits a unique local solution for every initial state
 $X_m(0)\in\cV_m$ since the right hand side 
is locally bounded and Lipschitz continuous. 
The global solutions exist by   (\re{EQV})
and
the energy and charge conservation (\re{EQ2N}).
\medskip

Finally, we take any $X(0)\in\cV$ and choose a sequence 
\be\la{s0N}
X_m(0)\to X(0),\qquad m\to\infty,
\ee
where the convergence holds in the metric of $\cV$.
Therefore, 
\be\la{EQmN}
E(X_m(0))\to E(X(0)),\qquad Q(X_m(0))\to Q(X(0)).
\ee
Hence, (\ref{EQ2N}) and (\re{EQV}) imply the basic uniform bound 
\be\la{VbN}
R:=\sup_{m\in\N}\,\,\sup_{t\in\R}| X_m(t)|_\cV <\infty.
\ee
Therefore, 
(\ref{gali2N})  and  Proposition \re{p1N} imply the second basic uniform bound
\be\la{Vb2N}
\sup_{m\in\N}\,\sup_{t\in\R}\,\Vert \dot X_m(t)\Vert_{\cW^{-1}} <C(R),
\ee
since  the operator 
$H:\cW\to\cW^{-1}$ is bounded, and the projector
$P_m$ is also a bounded operator in $\cW\subset \cW^{-1}$.
Hence, 
the Galerkin
approximations $X_m(t)$ are uniformly Lipschitz-continuous with values in $\cV^{-1}$:
\be\la{ecgN}
\sup_{m\in\N}\, d_{\cV^{-1}}(X_m(t),X_m(s))\le C(R)|t-s|,\qquad s,t\in\R.
\ee
Let us show that  
the uniform estimates   (\ref{VbN}) and (\ref{ecgN}) imply a compactness of the 
Galerkin approximations and the conservation laws. 
\bp\la{lgalN} Let   (\re{ro+}) hold and $X(0)\in\cV$. Then 
\medskip\\
i) 
There exists
a subsequence $m'\to\infty$ such that
\be\la{ssN}
X_{m'}(t)\tocX X(t),\qquad m'\to\infty,\qquad t\in\R,
\ee
where $X(\cdot)\in C(\R, \cX)$.
\medskip\\
ii) Every limit function $X(\cdot)$
is a solution to  (\re{HSN}), and $X(\cdot)\in C(\R,\cV)$.
\medskip\\
iii) The conservation laws (\re{EQN}) hold.
\ep
\begin{proof}
	i) The convergence (\re{ssN}) follows from 
	(\re{VbN}) and  (\re{Vb2N}) 
	by the Dubinsky
	`theorem on three spaces' \ci{Dub65}  (Theorem 5.1 of
	\ci{Lions}). Namely, the embedding $\cV\subset\cX$ is compact by the Sobolev theorem,
	and hence, (\re{ssN}) holds by 
	(\re{VbN})
	for $t\in D$, where $D$ is a countable dense set. 
	Finally, let us use the  interpolation inequality  and  (\re{VbN}), (\re{ecgN}):
	for any $\ve>0$
	\be\la{intiN}
	d_\cX(X_m(t),X_m(s))\le\ve d_\cV(X_m(t),X_m(s))
	+ C(\ve)  d_{\cV^{-1}}(X_m(t),X_m(s))\le 2\ve R+C(\ve,R)|t-s|.
	\ee
	This inequality 
	implies the 
	equicontinuity of the Galerkin approximations with values in $\cX$. Hence,
	convergence 
	(\re{ssN}) holds for all $t\in\R$ since it holds for 
	the dense set of $t\in D$. 
	The same equicontinuity also implies the continuity of the limit function $X\in C(\R, \cX)$.
	\medskip\\
	ii) 
	Integrating equation (\re{gali2N}), we obtain 
	\begin{equation}\la{gal2N}
	\int_0^t\langle \dot X_m(t),Y\rangle\,ds=\int_0^t\langle X_m(s),HY)\,ds + \int_0^t\langle N(X_m(s)),Y\rangle\,ds,
	\qquad Y\in\cW_m,
	\end{equation}
	Below we will write  $m$ instead of $m'$. 
	To prove  (\re{LPSivN}) it suffices to
	check that 
	in the limit $m\to\infty$, we get
	\begin{equation}\la{gal4N}
	\int_0^t\langle \dot X(t),Y\rangle\,ds
	=\int_0^t(X(s),HY)\,ds + \int_0^t\langle N(X(s)),Y\rangle\,ds,
	\qquad Y\in\cW_n,\qquad n\in\N.
	\end{equation}
	The convergence of  the left hand side and of the first term on the right hand side 
	of (\re{gal2N})
	follow
	from (\re{ssN}) and  (\re{s0N}) since $HY\in\cW_m$.

	It remains to consider the last integral in (\re{gal2N}).
	The integrand is uniformly bounded by (\re{VbN}) and Proposition \re{p1N}. 
	Hence, it suffices to check the pointwise
	convergence
	\be\la{NmN}
	\langle N(X_m(s), Y\rangle-\!\!\!-\!\!\!\!\to \langle N(X(s), Y\rangle,\qquad Y\in\cW_n
	\ee
	for any $s\in\R$. Here 
$N(X_m(s))=(ie\phi^\otimes_m(s)\psi_m(s),p_m(s),f_m(s))$ according to 
	the notation (\re{HNN}), and
	$Y=(\vp,\vka,v)\in \cW_n$. Hence, 
	(\re{NmN})
	reads
	\be\la{Nm2N}
	ie[\phi^\otimes_m(s)\psi_m(s),\vp]+p_m(s)\vka+f_m(s)v\,\to\,  
	ie(\phi^\otimes(s)\5\psi(s),\vp)+p(s)\vka+f(s)v,
	\ee
	where $[\cdot,\cdot]$ is the scalar product in $L^2(\ov\T)$.
	The convergence of $p_m(s)\vka$ follows from (\re{ssN}) (with $m'=m$) .
	To prove the convergence of the two remaining terms we first show that
	\be\la{2rtN}
	\phi_m(s):=G\rho_m \toLC \phi(s):=G\rho.
	\ee
	Indeed, 
	 (\re{ssN}) implies that 
	 \be\la{qpN}
	 \psi_m(s)\toLtt\psi(s),\qquad q_m(s)\to q(s)
	\ee
	Further, the sequence $\psi_m(s)$ is bounded in $H^1(\ov\T)$ by (\re{VbN}). 
	Hence, the sequence $\rho_m(s)$ is bounded in 
	the Sobolev space
	$W^{1,3/2}(\T)$ by (\re{re4N}).
	Therefore, the sequence $\rho_m(s)$ is precompact in $L^2(\T)$ by  the Sobolev compactness theorem. 
	Hence,
	\be\la{qp2N}
	\rho_m\toLt\rho
	\ee
	by (\re{qpN}).
	Therefore,
	(\re{2rtN}) holds
	since the operator $G:L^2(\T)\to C(\T)$ is continuous. 
	Finally, (\re{2rtN}) and (\re{qpN}) imply that
	\be\la{Nm3N}
	\phi^\otimes_m(s)\psi_m(s)\toLtt\5 \phi^\otimes(s)\psi(s),\qquad
	f_m(s)\to f(s),
	\ee
	which proves (\re{Nm2N}). Now (\re{gal4N}) is proved for $Y\in\cV_n$ with any $n\in\N$.
	Hence, $X(t)$ is a solution to (\re {HSN}). Finally,  
	$N(X(t))$ is  bounded in $\cW$ by
  (\re{VbN}) and Proposition \re{p1N}.
  Hence, 
 (\re{LPSivN}) implies that  $X(\cdot)\in C(\R,\cV)$.
	\medskip\\
	iii) The conservation laws (\re{EQ2N}) and the convergences (\re{s0N}), (\re{ssN}) imply that
	\be\la{EQ3N}
	E(X(t))\le E(X(0)),\quad Q(X(t))\le Q(X(0)),\qquad t\in\R.
	\ee
	The last inequality holds by the first convergence of (\re{qpN}).
The first inequality follows from the representation
\begin{equation}\la{EQ4N}
E(X_m(t))=\fr12 \Vert\na\psi_m(t)\Vert_{L^2({\T})}^2+\fr12 
\Vert \sqrt{G}\rho_m(t)\Vert_{L^2({\T})}^2 +
\sum_{n\in\Gamma_n}\fr{p_m^2(n,t)}{2M}.
\end{equation}
Namely, the last two terms on the right hand side converge 
by (\re{qp2N}) and (\re{ssN}). Moreover, 
the first term is bounded by (\re{VbN}). Hence,  the first convergence of
 (\re{qpN}) implies the weak convergence 
 \begin{equation}\la{EQ5N}
\na\psi_{m}(t)\toLwt\na\psi(t)
 \end{equation}
 by the Banach theorem.
Now the first inequality of (\re{EQ3N}) follows by the property of the weak convergence
in the Hilbert space.
Finally, the opposite inequalities 
 to (\re{EQ3N})
 are also true by the uniqueness of 
solutions $X(\cdot)\in C(\R,\cV)$, which is proved in Proposition \re{TLWPN}.
\end{proof}

\setcounter{equation}{0}
\section{Jellium fermionic ground states}
In this section we check the key identity (\re{roZN})
and
construct all solutions to (\re{HSN}) with minimal 
energy  (\re{HforN}). Furthermore we
give examples illustrating the Jellium and 
the Wiener conditions.

\subsection{Uniform electronic charge density}
Let us  establish the   identity (\re{roZN}).
\bl\la{lreN} 
Let  the condition (\re{adrN}) hold for an
 eigenfunction (\re{gexN}), and 
\be\la{normN}
\int_{\ov\T}|\psi_0(\ov x)|^2d\ov x=Z.
\ee
Then the identity  (\re{roZN}) holds. 
\el
\begin{proof}
By the  antisymmetry of  $\psi_0(x_1,...,x_{\ov N})$
it remains to prove that 
\be\la{norm2N}
\int_{\ov\T}\de(x-x_1)|\psi_0(\ov x)|^2d\ov x=Z/\ov N,\qquad x\in\T.
\ee
Substitute here the expansion (\re{gexN}).
The normalization condition (\re{normN}) gives
\be\la{ppo12N}
\sum_{\ov k}|C(\ov k)|^2 \ov N^{\ov N}=Z.
\ee
Further,
\beqn
&&
\int_{\ov\T}\de(x-x_1)|\psi_0(\ov x)|^2d\ov x
= \fr 1{\ov N!}\sum_{\ov k}\Big\{|C(\ov k)|^2
\int_{\ov\T}\de(x\!-\!x_1)
\Big[\sum_{\pi,\pi'\in S_{\ov N}} 
(-1)^{|\pi|+|\pi'|} \prod\limits_{j=1}^{\ov N}
e^{i[k_{\pi(j)}-k_{\pi'(j)}]x_j}
\Big]
d\ov x\Big\} \nonumber 
\\
&+&\fr 1{\ov N!}\rRe\sum_{\ov k\ne \ov k'}\Big\{C(\ov k)\ov C(\ov k')
\Big[\sum_{\pi,\pi'\in S_{\ov N}} 
(-1)^{|\pi|+|\pi'|} \int_{\ov\T}\de(x\!-\!x_1)
\prod\limits_{j=1}^{\ov N}
e^{i[k_{\pi(j)}-k'_{\pi'(j)}]x_j}
d\ov x\Big]\Big\}. \la{aN}
\eeqn
The integrals in the last line vanish 
since $k_{\pi(j)}-k'_{\pi'(j)}\ne 0$
at least for one $j\ne 1$ by (\re{adrN}). 
On the other hand,
the integrals in the first line do not vanish
only in the case when $k_{\pi(j)}\equiv k_{\pi'(j)}$ for $j\ne 1$, i.e.,
when $\pi=\pi'$. Hence, 
\be
\int_{\ov\T}\de(x-x_1)|\psi_0(\ov x)|^2d\ov x
= 
 \ov N^{\ov N-1}
\sum_{\ov k}|C(\ov k)|^2
\int_{\T}\de(x\!-\!x_1)
d x_1=\ov N^{\ov N-1}
\sum_{\ov k}|C(\ov k)|^2=Z/\ov N
\ee
by (\re{ppo12N}).

\end{proof}
\br\la{rPAN}
Similar calculations 
show that the uniformity (\re{norm2N}) can break down
for the wave functions (\re{gexN}) if the condition (\re{adrN}) fails.
\er

\subsection{Description of  all ground states}

The following lemma gives the  description all ground states
of the system (\re{HSN}).  

\bl\la{JgsN}
Let the  Jellium condition (\re{Wai}) hold. Then 
all solutions 
to (\re{LPS1N})--(\re{LPS3N})
of minimal energy are $(e^{i\al}\psi_0(\ov x)e^{-i\om_0t},q^*,0)$ 
with
$\al\in[0,2\pi]$ and
 $q^*\in\T^{\ov N}$ satisfying 
the identity
\begin{equation}\la{sipiqN}
	\sum_{n\in{\Ga}}\si(x-q^*(n))\equiv eZ,\qquad x\in{\T},
\end{equation}

\el
\begin{proof}
First,
let us note that the ${\Ga}$-periodic solutions (\re{grN}) 
have the minimal energy, and 
the identity (\re{sipiqN}) holds for $q^*=\ov r$ by  (\re{sipi}).

Further, a solution is of  minimal  energy  (\re{HforN}) if
the first summand
 on the right hand side of  (\re{HforN}) takes the minimal values $\om_0$ while the second and the third summands should vanish.
Hence, the normalization condition (\re{rQN}) gives
  \begin{equation}\la{ppoN}
  \psi(\ov x,t)\equiv e^{i\al(t)}\psi_0(\ov x),\qquad\al(t)\in \R,
\end{equation}
while
\be\la{rodqp}
 \rho(x)\equiv 0,\qquad M\pa_t q(n,t)=p(n,t)\equiv 0.
 \ee
Therefore, (\re{roZN}) implies 
\begin{equation}\label{roZ2N}
 \rho^i(x,t)\equiv eZ,\qquad q(n,t)\equiv q^*(n),
 \end{equation}
which is
equivalent to (\ref{sipiqN}).

Moreover, the Poisson equation (\re{LPS2N}) with $\rho(x,t)\equiv 0$
implies that $\phi(x,t)\equiv 0$ after a gauge transformation (\ref{GT}).
Hence,   finally,
substituting  (\re{ppoN}) into  (\re{LPS1N}) 
with $\phi(x,t)\equiv 0$,
we obtain that $-\dot\al(t)\equiv\om_0$.
 \end{proof}

This lemma implies that all  ${\Ga}$-periodic ground states are given by
(\re{grN}).
Recall that the examples of charge densities satisfying 
Jellium and the Wiener conditions are given in Section \ref{sex}.

\setcounter{equation}{0}
\section{The orbital stability of the ground state}

In this section we expand the energy into the Taylor series and prove the orbital stability checking the
positivity of the energy Hessian.



\subsection{The Taylor expansion of energy functional}\la{sTN}

 We will deduce the lower estimate (\re{BLiN}) using
 the  Taylor expansion of $E(S+Y)$ at
$S=S_{\al,r}=(\psi^\al,\ov r,0)\in \cS$ and $Y=(\vp,\vka,\pi)\in\cW$.
Further calculations are very close to the one of 
Section 
\ref{sT}:
 \be\la{teN} 
E(S+Y)=E(S)+\langle E'(S),Y\rangle +\fr12 \langle Y,E''(S)Y\rangle + R(S,Y)=\om_0Z+
\fr12 \langle Y,E''(S)Y\rangle + R(S,Y)
\ee
since $E(S)=\om_0Z$ by (\re{ESN}), and $E'(S)=0$.
Here  $E'(S)$ and  $E''(S)$ stand for the G\^ ateaux differentials.
Let us recall that $\psi^\al=e^{i\al}\psi_0(x)$ where $\psi_0(x)$
is given by (\re{gexN}), and the condition (\re{adrN}) holds.

First,
we expand 
 the  charge density (\re{Hfor2N}), (\re{reN})
corresponding to $S+Y=(\psi^\al+\vp,\ov r+\vka,\pi)$:
\be\la{roN}
\rho(x)=\rho^{(0)}(x)+\rho^{(1)}(x)+\rho^{(2)}(x),\qquad x\in\T,
\ee
where $\rho^{(0)}$ and  $\rho^{(1)}$ are respectively 
the  terms of zero and first  order in 
$Y$, while  $\rho^{(2)}$ is the remainder.
However, $\rho^{(0)}(x)$ is the total charge density of the ground state
which is identically zero
by
(\ref{sipi}) and (\re{roZN}):
\begin{equation}\la{ro0N}
\rho^{(0)}(x)=\rho^i_0(x)-e|\psi^\al(x)|^2\equiv 0,\qquad x\in{\T}.
\end{equation}
Thus, $\rho=\rho^{(1)}+\rho^{(2)}$.
Expanding
 (\re{Hfor2N}) and  (\re{reN}), we obtain
\beqn
\la{ro11N}
\!\!\!\!\!\!\!\!\!\!\!\!\!\!\!\!\!\!\!\!\!\!\!\!\!\!\rho^{(1)}(x)\!\!\!\!&\!\!\!\!=\!\!\!\!&\!\!
\si^{(1)}(x)
-2e\,
\Si(\psi^\al)\vp(x)
,\quad \si^{(1)}(x)=-\sum_{n\in\Ga} \vka(n)\cdot\na\si(x-n-r),\quad
\\
\la{ro13N}
\!\!\!\!\!\!\!\!\!\!\!\!\!\!\!\!\!\!\!\!\!\!\!\!\!\!\rho^{(2)}(x)\!\!\!\!&\!\!=\!\!\!\!&\!\!\si^{(2)}(x)
\!-\!e
\,
\Si(\vp)\vp(x)
,~ \si^{(2)}(x)\!=\!\fr12\sum_{n\in\Ga}
\int_0^1\!\!(1\!-\!s)
[\vka(n)\cdot\na]^2\si(x\!-\!n\!-\!r-\!s\vka(n))ds,
\eeqn
where $\Si(\psi)$ denotes the real-linear operator defined by 
\be\la{wwdN}
\Si(\psi)\vp(x):=\sum\limits_{j=1}^{\ov N}
%
\int_{\ov\T}\de(x-y_j)\psi(\ov y)\cdot\vp(\ov y)\,d\ov y,
\qquad x\in\T,
\ee
where $\cdot$ is the inner product of the corresponding vectors  $\psi(\ov y)=(\psi_1(\ov y),\psi_2(\ov y))\in\R^2$
and
$\vp(\ov y)=(\vp_1(\ov y),\vp_2(\ov y))\in~\R^2$.
The operator $\Si(\psi^\al)$ is continuous $L^2(\ov\T)\to L^2(\T)$, which follows similarly to (\ref{re23N}). 
Obviously, 
\be\la{si12}
\Si(\psi^\al)\vp(x)=\Si(\psi^\al_1)\vp_1(x)+\Si(\psi^\al_2)\vp_2(x).
\ee
Note that $\Si(\vp)\vp\in L^2(\T)$ for $\vp\in\cF$ by estimate (\ref{re4N}).
\smallskip

Substituting  $\psi=\psi^\al +\vp$ and $\rho=\rho^{(1)}+\rho^{(2)}$ 
 into (\ref{HforN}), we obtain that
the  quadratic part  of (\re{teN}) reads 
\begin{equation}\la{B2N}
\fr12\langle Y,E''(S) Y \rangle=\fr12\int_{\ov\T}|\na \vp(\ov x)|^2]d\ov x+
\fr12 (\rho^{(1)},G\rho^{(1)})+K(\pi),~~~~~~ K(\pi):=\ds\sum_n\fr{\pi^2(n)}{2M}~~
\end{equation}
and the remainder equals
\begin{equation}\la{B3N}
R(S,Y)=\fr 12 (2\rho^{(1)}+\rho^{(2)},G\rho^{(2)}). 
\end{equation}

\subsection{The null space of Hessian}
In this section
 we calculate the null space 
\be\la{KYdN}
\cK(S):=\Ker\, \Big[E''(S)\Big|_\cW\Big],\qquad S\in \cS.
\ee

\bl\la{lWN}
Let the Jellium condition (\re{Wai}) and the Wiener condition   
(\re{W1}) hold, and $S\in\cS$. Then
\be\la{KYN}
\cK(S)=\{(0,\ov s,0):~~s\in\R^3 \},
\ee
where $\ov s\in\R^{3\ov N}$ is defined similarly to (\re{gr2}): 
$\ov s(n)\equiv s$.
\el
\begin{proof}
	All the summands of
	the
	energy (\re{B2N}) are nonnegative. Hence,  this expression is zero if and only if
	all the summands vanish: in the notation (\re{ro11N})
	\be\la{rbN}
	\vp(\ov x)\equiv C,\quad
	(\rho^{(1)}, G\rho^{(1)})=\Vert\sqrt{G}[\si^{(1)}-2e
	\Si(\psi^\al)\vp
	]\Vert_{L^2(\T)}^2
	=0,\quad \pi=0.
	\ee
	Here $C=0$ by the antisymmetry of $\vp$ since $N>1$. Therefore, 
	 $\Si(\psi^\al)\vp= 0$, and hence,
 (\re{rbN}) implies that
	\be\la{rb2N}
	\sqrt{G}\si^{(1)}=0.
	\ee
	On the other hand, 
	in the Fourier transform 
	(\re{ro11N}) reads
	\be\la{B314N}
	\ti\si^{(1)}(\xi)=\ti\si(\xi)\xi\cdot\sum_{n\in\Ga} ie^{i\xi [n+r]}\vka(n)
	=i\ti\si(\xi)\xi\cdot  e^{i\xi r} \hat \vka(\xi),\qquad\xi\in \Ga^*_N,
	\ee
	where $\hat \vka(\xi):=\sum_{n\in\Ga} e^{i\xi n}\vka(n)$ is a $2\pi\Z^3$-periodic function on
	$\Ga^*_N$.
	Hence, Definition (\re{fs}) and the Jellium condition (\re{Wai}) imply that
	\beqn\la{B315N}
	0=\Vert\sqrt{G}\si^{(1)}\Vert_{L^2(\T)}^2&=&
	N^{-3}\sum_{\Ga^*_N\setminus \Ga_1^*} |\ti\si(\xi)\fr{\xi\hat \vka(\xi)}{|\xi|}|^2
	\nonumber\\
	\nonumber\\
	&=&N^{-3}\sum_{\theta\in\Pi^*\setminus \Ga_1^*} 
	\langle\hat \vka(\theta),
	\sum_{m\in\Z^3}\Big[\fr{\xi\otimes\xi}{|\xi|^2}|\ti\si(\xi)|^2\Big]_{\xi=\theta+2\pi m}\hat \vka(\theta)\rangle
	\nonumber\\
	\nonumber\\
	&=&N^{-3}\sum_{\theta\in\Pi^*\setminus \Ga_1^*} 
	\langle\hat \vka(\theta),
	\Si(\theta)
	\hat \vka(\theta)\rangle.
	\eeqn
	As a result, 
	\be\la{B316N}
	\hat \vka(\theta)=0,\qquad \theta\in \Pi^*\setminus \Ga_1^*
	\ee
	by the Wiener condition
	(\re{W1}).
	On the other hand, $\hat \vka(0)\in\R^3$ remains arbitrary.
	Respectively, 
	$\vka=\ov s$ with 
	an arbitrary $s\in\R^3$.\end{proof}
\brs\la{rPa}
{\rm
i) The annihilation of $\vp(x)\equiv C$ in  (\ref{rbN}) is the main point where
we use the antisymmetry of wave functions (`Pauli exclusion principle').
\smallskip\\
ii)
The key point of the proof is the explicit calculation (\re{B314N}) 
in the Fourier transform.
This calculation relies on the invariance of
the Hessian $E''(S)$ with respect to $\Ga$-translations which is 
due to the periodicity of the ions arrangement of the ground state.
}
\ers

\br{\it  Beyond the Wiener condition}
{\rm
If the Wiener
condition (\re{W1}) fails, the dimension of the space
\be\la{VN}
V:=\{v\in \R^{3\ov N}:~~ v(n)=\sum_{\theta\in\Pi^*\setminus\Ga_1^*} e^{-i\theta n} \hat v(\theta),
\qquad \hat v(\theta)\in\C^3,~~ \Si(\theta)\hat v(\theta)=0\}
\ee
 is positive.
The above calculations  show that in this  case
\be\la{KYgN}
\cK(S)=\{(0, \ov s+v,0):~~s\in\R^3,~~v\in V \}.
\ee
The subspace $V\subset \R^{3\ov N}$ is orthogonal to the $3D$ subspace 
$\{\ov s:s\in\R^3\}\subset \R^{3\ov N}$
by the Parseval theorem.
Hence,
$\dim \cK(S)=3+d$, 
where $d:=\dim V>0$. 
Thus, $\dim \cK(S)>3$.
Under the Wiener condition $V=0$, and (\re{KYgN}) coincides with (\re{KYN}).
}
\er

\subsection{The positivity  of  Hessian}

As in Section \ref{spoH}, denote by
$ N_S\cS$ the normal subspace to $\cS$ at a point $S$:
\be\la{L0NN}
 N_S\cS:=\{Y\in\cW: \langle Y,\tau\rangle=0,~~\tau\in  T_S\cS\},
\ee
where $ T_S\cS$ is the
tangent space to $\cS$ at the point $S$ and $\langle\cdot,\cdot\rangle$
stands for the scalar product 
 (\re{dWW}).
Obviously, $\cS\subset\cM$, and  the tangent space to $\cM$ at a point
$S=(\psi^\al,\ov r, 0)$
is given by 
 \begin{equation}\la{TSMN}
T_S\cM=\{(\vp, \vka,\pi)\in\cW:\vp\bot\psi^\al \},
\end{equation}
since $DQ(\psi^\al,\ov r,0)=2(\psi^\al,0,0)$.

\bl\la{lW2}
Let
 the 
Jellium condition (\re{Wai}) hold, and $S=S_{\al,r}\in \cS$. Then the Wiener condition (\re{W1}) is necessary and sufficient
for the positivity of the Hessian 
$E''(S)$
in the orthogonal directions to $\cS$ on $\cM$,
i.e., 
\be\la{cM2N}
E''(S)\Big|_{N_S\cS\cap  T_S\cM}>0.
\ee
\el
\begin{proof} i) Sufficiency.
Differentiating $S_{\al,r}=(e^{i\al}\psi_0,\ov r,0)\in \cS$ in the parameters $\al \in [0,2\pi]$ and $r\in\T$, 
we obtain 
\begin{equation}\la{tvN}
T_S\cS=\{(iC\psi^\al ,\ov s,0): ~~C\in\R,~~s\in\R^3\}.
\end{equation}
	Hence,
	(\re{KYN}) implies that 
	\be\la{tv2N}
	\cK(S)\cap  N_S\cS=
	(0 ,0,0).
	\ee	
Now  (\re{cM2N}) follows since $E''(S)\ge 0$ by (\re{B2N}).
	\medskip\\
 ii)	Necessity.
If the Wiener condition (\re{W1}) fails,
the space $\cK(SN)$ is given by (\re{KYgN}), and hence, 
(\re{tvN}) implies that now
\be\la{tv3N}
\cK(S)\cap  N_S\cS=
\{0 ,v,0): ~~C\in\R, ~~ v\in V\}\subset T_S\cM.
\ee
Therefore, the Hessian $E''(S)$ vanishes on the nontrivial space
$\cK(S)\cap  N_S\cS\subset   T_S\cM$ of the
dimension  $d>0$. 
Respectively, the positivity (\re{cM2N}) breaks down.
\end{proof}

\br\la{rSN} {\rm The positivity of type (\re{cM2N}) breaks down for the submanifold 
	$\cS(r):=\{S_{\al ,r}:~\al\in[0,2\pi]\}$ 
	with a fixed $r\in\T$ instead of  the solitary manifold $\cS$.
	Indeed, then the corresponding tangent space  is smaller:
	\be\la{tvrN}
	 T_S\cS(r)=\{(iC\psi^\al ,0,0):\,\,C\in\R\}.
	\ee
	Hence, the normal subspace $ N_S\cS(r)$ is larger, in particular containing all 
the vectors $(0,\ov s,0)$
	generating  the shifts of the torus. However, all these vectors also 
belong 
	to the null space (\re{KYN}) and to $ T_S\cM$.
	Respectively, the null space of the Hessian $E''(S)$ in $ T_S\cM\cap N_S\cS(r)$ is  at least 3-dimensional.
}
\er

\subsection{The orbital stability}
Here we prove  Theorem \re{tmN} which is the  main result
of this chapter.
For the proof is suffices to check
the lower energy estimate (\re{BLiN}): 
\be\la{BLN}
E(X)-\om_0Z\ge \nu\,d^2_\cV(X,\cS)\quad{\rm if}\quad d_\cV(X,\cS)\le \de,
\quad X\in\cM
\ee
with some $\nu,\de>0$. 
This estimate implies Theorem \re{tmN} since the energy is conserved
along all trajectories.
First, we prove similar lower bound for the energy Hessian. 
\bl
Let conditions of Theorem \re{tmN} hold. Then  for each $S\in \cS$
\be\la{L02N}
\langle Y, E''(S)Y\rangle > \nu\Vert Y\Vert_\cW^2,  \qquad Y\in N_S\cS\cap   T_S\cM ,
\ee
 where $\nu>0$.
\el
\begin{proof}
First, we note that $E''(S)$ is not complex linear due to the integral 
in (\re{HforN}). Hence, we should express the action of $E''(S)$ 
in $\vp_1(x):=\rRe\vp(x)$ and $\vp_1(x):=\rIm\vp(x)$:
formulas (\ref{B2N}) and
(\ref{ro11N}), (\ref{si12})
 imply that similarly to (\ref{E''}),
\begin{equation}\la{E''N}
E''(S)Y=
\left(\begin{array}{cccl}
  -\De^\otimes+4e^2\Si^*(\psi^\al_1) G\ti\Si(\psi^\al_1) & 4e^2\Si^*(\psi^\al_1) G\ti\Si(\psi^\al_2) & 2L & 0
 \medskip\\
 4e^2\ti\Si^*(\psi^\al_2) G\Si(\psi^\al_1) &   -\De^\otimes +4e^2\Si^*(\psi^\al_2) G\ti\Si(\psi^\al_2) &0 & 0\medskip\\
 2L^{\5*}  &    0  &   T    & 0  \\
      0      &    0            &   0    &  M^{-1} \\
\end{array}\right)Y
\end{equation}
for $Y=(\vp_1 , \vp_2 , \vka , \pi )$.
Similarly to (\ref{S}), 
the operator $L$  corresponds to
the matrix
\begin{equation}\la{SN}
 L(\ov x,n):=e\Si^*(\psi^\al)G\na\si(x-n):~~~x\in\R^3,~n\in\Ga,
\end{equation}
while $T$ coincides with  (\ref{T}).
\smallskip

Thus, $E''(S)$  is the selfadjoint operator in $\cX$ with the discrete spectrum, and
  (\re{cM2N}) implies that the minimal eigenvalue of
$E''(S)$  in the invariant space $N_S\cS\cap   T_S\cM$ is  positive. Therefore, (\re{L02N}) follows.
\end{proof}
\medskip

The positivity (\re{L02N}) implies
the lower energy estimate (\re{BLN})  since the 
higher-order terms in (\re{teN}) 
are negligible by the following  lemma.

\bl\la{lre2N}
Let $\si(x)$ satisfy  (\re{ro+}).
Then the  remainder  (\re{B3}) admits the bound
\be\la{B31N}
|R(S,Y)|\le C
\Vert Y\Vert_\cW^3
\quad\,\,\,{\rm for}\quad  \,\,\,
\Vert Y\Vert_\cW \le 1.
\ee
\el
\begin{proof} Due to (\re{B3N}) 
	it suffices to prove the 
	estimates
	\be\la{B312N}
	\Vert\sqrt{G}\rho^{(1)}\Vert_{L^2(\T)}\le C_1\Vert Y\Vert_\cW,\quad  
	\Vert\sqrt{G}\rho^{(2)}\Vert_{L^2(\T)}\le C_2\Vert Y\Vert_\cW^2
	\quad \,\,\,{\rm for}\,\,\,\quad     \Vert Y\Vert_\cW    \le 1.
	\ee
	i)
	By (\re{ro11N}) we have
	for $Y=(\vp,\vka,\pi)$ 
	\be\la{B313N}
	\sqrt{G}\rho^{(1)}=\sqrt{G}\si^{(1)}-2e\sqrt{G}\,
	\Si(\psi^\al)\vp
	%
	\ee
	in the notation (\re{wwdN}).
	The operator $\sqrt{G}$ is bounded in $L^2(\R^3)$ by (\re{fs}). 
	Hence,  (\re{ro11N}) and (\ref{ro+}) imply by the Cauchy-Schwarz inequality,  
	\be\la{GP1N}
	\Vert\sqrt{G}\si^{(1)}\Vert_{L^2(\T)}\le C |\vka|.
	\ee
 Finally, applying the Cauchy--Schwarz and Hausdorff--Young inequalities
to the second term on the RHS of  (\re{B313N}),
	we obtain 
	\beqn\la{GP2N}
	\Vert\sqrt{G}\,
	\Si(\psi^\al)\vp
	\Vert_{L^2(\T)}&\le& 
	C\Big[\sum_{\xi\in\Ga^*_N\setminus 0}\fr{|\ti\vp(\xi)|^2}{|\xi|^2}\Big]^{1/2}
\le C\Vert\ti\vp\Vert_{L^4(\Ga^*_N)}\Big[\sum_{\xi\in\Ga^*_N\setminus 0}|\xi|^{-4}\Big]^{1/2}
\nonumber\\
&\le& C_1\Vert\vp\Vert_{L^{4/3}(\T)}\le C_2\Vert\vp\Vert_{H^1(\T)}^2
	\eeqn
by the Sobolev embedding theorem.
	Hence, the first inequality (\re{B312N}) is proved.
	\medskip\\
	ii) Now we prove the second 
	inequality  (\re{B312N}).  According to (\re{ro13N}),
	\be\la{B317N}
	\sqrt{G}\rho^{(2)}(x)=\sqrt{G}\si^{(2)}(x)-e\sqrt{G}\,
	\Si(\vp)\vp
	\ee
	Now (\ref{ro+}) and the triangle inequality give
	\be\la{GP3N}
	\Vert\sqrt{G}\si^{(2)}\Vert_{L^2(\T)}\le C|\vka|^2.
	\ee
	At last, 
	we obtain similarly to (\re{GP2N}) 
	\beqn\la{fpN}
	\Vert\sqrt{G}\,
	\Si(\vp)\vp
	\Vert_{L^2(\T)}
	\le
	C\Big[\sum_{\xi\in\Ga^*_N\setminus 0}\fr{|\ti \beta(\xi)|^2}{|\xi|^2}\Big]^{1/2}
	\le
	C_1\Vert \beta\Vert_{L^{4/3}(\T)}.
	\eeqn
Finally,
applying the triangle inequality and the Sobolev embedding theorem, we obtain
\beqn\la{fp2N}
\Vert \beta\Vert_{L^{4/3}(\T)}
&\le&
\int_{\T^{\ov N-1}} [\int_\T|\vp(\ov x)|^{8/3} dx_j]^{3/4}\,dx_1...
\widehat{dx_j}
...dx_{\ov N}
\nonumber\\
&\le&
\int_{\T^{\ov N-1}} [\int_\T|\na_{x_j}\vp(\ov x)|^2dx_j]\,dx_1...
\widehat{dx_j}
...dx_{\ov N}
\le
 C\Vert\vp\Vert_{H^1(\T)}^2.
\eeqn
Now the lemma is proved.
\end{proof}

\part{Linear stability  of infinite crystals}

In the second part of present book, we 
consider infinite crystals with many ions per cell of a lattice.

We construct the ground state for 1D, 2D and 3D lattices 
in 3D space.
The main results are  well-posedness of the linearised
dynamics in the space of finite-energy states and
its dispersive decay.

The presentation mainly relies on our papers \ci{KK2014a, KK2014b, K2014, K2015, KKpl2015, KK_2018} with suitable extensions.

\chapter{On existence of  ground states}

\centerline{Abstract} 
\medskip

A~space-periodic ground state is shown to exist for lattices of
smeared
ions  in $\R^3$ coupled to the Schr\"odinger and scalar fields. 
The  elementary cell is necessarily neutral. 
 
The 1D, 2D and 3D lattices in $\R^3$ are considered, and 
a~ground state is constructed by minimizing the energy per cell. 
The case of a~3D lattice is rather standard,  because 
the elementary cell is compact, and the spectrum of the Laplacian is discrete. 

In the cases of  1D and 2D lattices, the energy functional 
is differentiable only on a dense set of variations, 
due to the presence of the continuous spectrum 
of the Laplacian that causes the infrared divergence of the Coulomb bond. 
Respectively,  
the construction of electrostatic potential 
and
the derivation of the Schr\"odinger equation for the minimizer 
 in these cases require an extra argument.

 The space-periodic ground states for 1D and 2D lattices 
 give the model of
 the nanostructures similar to 
 the carbon nanotubes and graphene respectively.



\setcounter{equation}{0} 
\setcounter{section}{-1} 
\setcounter{section}{0} 
\section 
{Introduction} 
In this chapter
we prove the existence
 of  ground states for infinite crystals in $\R^3$
with $d$-dimensional ion lattices
\be\la{Ga3} 
\Ga_d:=\{\x(\n)=\bba_1\n^1+\dots+\bba_d\n^d: \n\in\Z^d  \}, 
\ee 
where $d=1,2,3$ and $\bba_k\in\R^3$ are linearly independent periods. 
A~2D (respectively, 1D lattice) is a mathematical model of a~monomolecular film (a~wire). 
 
 The ions are
considered as classical nonrelativistic particles governed by the Lorentz equations 
neglecting the magnetic field, 
while the electrons are  described by the Schr\"odinger 
equation neglecting the electron spin. 
The scalar potential 
is the solution to the corresponding Poisson equation. 

We consider the crystal with $N$ ions per cell.
Let us denote  by $\rho_j$
the charge density of an ion and by $M_j>0$ its mass, $j=1,...,N$.
Then the coupled equations 
 read 
\beqn 
i\h\dot\psi(\x,t)&=&-\fr{\h^2}{2\cm}\De\psi(\x,t)-e\phi(\x,t)\psi(\x,t), ~~~~~\x\in\R^3,
\la{LPS1g} 
\\ 
\nonumber\\ 
\qquad -\De 
\phi(\x,t)&=&\rho(\x,t):= 
\sum_{j=1}^N\5\sum _{\n\in\Z^d} 
\rho_j(\x-\x(\n)-\x_j(\n,t))-e|\psi(\x,t)|^2, ~~~~~\x\in\R^3,\la{LPS2g} 
\\ 
\nonumber\\ 
M_j\ddot\x_j(\n,t) 
&=&-(\na\phi(\x,t),\rho_j(\x-\x(\n)-\x_j(\n,t))),
\quad \n\in\Z^d,\quad 
j=1,\dots,N. 
\la{LPS3w} 
\eeqn 
Here 
$e>0$ is the elementary charge, 
$\cm$ is the electron mass, 
$\psi(\x,t)$ denotes the 
wave function of the electron field, and 
$\phi(\x,t)$ is the electrostatic  potential 
generated by the ions and the electrons. 
Further, 
$(\cdot,\cdot )$ 
stands for the scalar product in the Hilbert 
space $L^2(\R^3)$. 
All derivatives here and below 
are understood in the sense of distributions. 
The system is  nonlinear and translation invariant, 
i.e., $\psi(\x-\bba,t)$, $\phi(\x-\bba,t)$, 
$\x_j(\n,t)+\bba$ is also a~solution for any $\bba\in\R^3$ . 
 
 
The ground state is
 a~$\Ga_d$-periodic stationary solution 
$\psi^0(\x)e^{-i\om ^0t}$,
$\phi^0(\x)$, 
$\ov\x=(\x^0_1,~\dots,\x^0_N)$  of the system 
(\re{LPS1g})--(\re{LPS3w}): 
\beqn\la{LPS1w} 
\h\om^0\psi^0(\x)&=&-\fr{\h^2}{2\cm}\De\psi^0(\x)-e\phi^0(\x)\psi^0(\x), 
~~~~\x\in T_d,
\\ 
\nonumber\\ 
-\De\phi^0(\x)&=&\rho^0(\x):= 
\si^0(\x)-e|\psi^0(\x)|^2, ~~~~~~\x\in T_d, 
\la{LPS4} 
\\ 
\nonumber\\ 
\la{LPS3g} 
0&=&- \5\langle\na\phi^0(\x), 
\rho^{\rm per}_j(\x-\x^0_j)\rangle, \qquad  j=1,\dots,N. 
\eeqn 
Here, $T_d:=\R^3/\Ga_d$ denotes  the `elementary  cell' of the crystal, 
$\langle\cdot,\cdot\rangle$ 
stands for the  scalar product in the Hilbert space $L^2(T_d)$ 
and its different extensions, and 
\be\la{rrr} 
\si^0(\x):= 
\sum_{j=1}^N\rho^{\rm per}_{j}(\x-\x^0_j),\quad \rho^{\rm per}_j(\x):= 
\sum_{\n\in\Z^d}\rho_j(\x-\x(\n)), 
\ee 
where we assume that the series converge in an appropriate sense. 
More precisely, we will construct a solution to the system (\re{LPS1w})--(\re{LPS3g}) with $\si^0(\x)$ given by 
the first equation of (\re{rrr}) where  $\rho^{\rm per}_j$ satisfy the following condition:
\be\la{roL1}
\mbox{\bf Condition I.}\qquad\rho^{\rm per}_j\in 
L^1(T_d)\cap 
L^2(T_d),\qquad j=1,...,N.\qquad\qquad
\ee 
Obviously, $\rho^{\rm per}_j\in  L^1(T_d)$ if $\rho_j\in  L^1(\R^3)$.
So we consider the case of smeared ions. The case of  point ions we
have considered in \ci{K2015}.
In the cases $d=2$ and $d=1$ we will assume additional conditions  (\re{rous5}) and (\re{rous51})
respectively.

The elementary cell $T_d$ is isomorphic to the 3D torus for $d=3$, 
to the direct product of the 2D torus by $\R$ for $d=2$, 
and to the 
direct product of the 1D torus (circle)  by $\R^2$ for $d=1$. 

The system (\re{LPS1w})--(\re{LPS3g}) is translation invariant similarly to (\re{LPS1g})--(\re{LPS3w}).
Let us note that 
$\om^0$ should be real since
$\rIm\om^0\ne 0$ means an instability of the ground state:
the decay 
as $t\to\infty$
in the case $\rIm\om^0 < 0$ and the explosion  if $\rIm\om^0 > 0$.

Let us denote $\ds Z_j:=\int_{T_d}\rho^{\rm per}_j(\x)d\x/e$. Then 
\be\la{intro}
\int_{T_d}\si^0(\x)d\x=eZ,\qquad Z:=\sum_j Z_j. 
\ee 
The total charge per cell vanishes (cf. \ci{BBL2003}): 
\be\la{neu10} 
\int_{T_d} \rho^0(\x)d\x= 
\int_{T_d} [\si^0(\x) -e|\psi^0(\x)|^2]d\x=0. 
\ee 
For $d=3$  this neutrality condition
follows directly from equation (\re{LPS4}) by integration using  
$\Ga_3$-periodicity of $\phi^0(\x)$. For  $d=1$ and $d=2$ 
we will deduce it  from the finiteness
of energy per cell (see (\ref{neu2}) and (\ref{neu21})).
Equivalently,
the neutrality condition can be written as the normalization
\be\la{neuZ} 
\int_{T_d} |\psi^0(\x)|^2d\x 
=Z. 
\ee 
We allow arbitrary 
$Z_j\in\R$, however we
assume that  $Z>0$: otherwise the theory is trivial.

\smallskip

Let us comment on our approach. 
The neutrality condition (\re{neuZ}) 
defines the submanifold $\mathcal M$ in the space 
$H^1(T_d)\times T_d^N$
of 
space-periodic configurations $(\psi^0,\ov\x^0)$. 
We construct a~ground state as a~minimizer 
over $\mathcal M$
of the energy per cell 
(\re{HamsT}), (\re{HamsT22}), (\re{HamsT221}).

Our techniques in the 
case of 3D lattice is rather standard, and we use it 
as an `Ariadne's thread' to manage the more complicated 
cases of  2D and 1D lattices,  because the corresponding elementary cells 
are unbounded.

Namely, the derivation of the equations \eqref{LPS1w}--\eqref{LPS3g} for the minimizer 
in the cases of  2D and 1D lattices
is not straightforward. The difficulty is that 
the energy per cell   is finite 
only 
on a dense subset of~$\mathcal M$ due to 
the infrared divergence of the Coulomb bond.
In these cases 
we restrict ourselves by one ion per cell, i.e., by $N=1$.
Then $\ov\x^0=\x^0_1$ can be chosen arbitrary
because of the translation invariance of the system  \eqref{LPS1w}--\eqref{LPS3g}.
Respectively, 
now the energy per cell should be minimized over 
$\psi\in M$, where $M$ is the submanifold of $H^1(T_d)$
defined by the neutrality condition (\re{neuZ}).

\smallskip 
The  main  novelties 
of our approach  behind the 
technical proofs for 2D and 1D lattices are as follows:

\smallskip 
 
I. The energy per cell consists of two 
contributions: the kinetic energy, and the 
Coulomb bond. 
Generally, the Coulomb bond for  2D and 1D lattices
is infinite due to 
the infrared divergence
which 
is caused by the continuous spectrum of the Laplace operator
on the corresponding elementary cells.
The spectrum is continuous since 
the elementary cells are unbounded in the case of  
2D and 1D lattices in $R^3$. 
Let us note that the continuous spectrum and the infrared 
singularity also appear in the Schr\"odinger--\allowbreak 
Poisson molecular systems in $\R^3$ studied in 
\ci{Ben02,Kawohl2012,Nier93} where the singularity 
is summable, contrary to the space-periodic  case. 

We indicate suitable conditions (\re{rous5}), (\re{rous51}) which provide 
the finiteness of the Coulomb bond for a dense set of the fields in the case
of 2D and 1D lattice respectively.

Both contributions to the energy per cell (the kinetic energy and the 
Coulomb bond) are nonnegative.
Hence, 
for any minimizing sequence, both contributions 
are bounded. 
The bound for the kinetic energy ensures the compactness in 
each finite region of a~cell 
by the Sobolev embedding theorem.
However, 
this bound cannot prevent the decay of the electron field, i.e., 
its escape to infinity. 
Nevertheless, 
the Coulomb interaction prevents even the partial escape  to infinity,
as we show in Lemma \re{lM2}.
Physically this means that 
the electrostatic potential of the remaining positive charge 
becomes confining. 
 
\smallskip 
 
II. We construct the
solution  to the Poisson equation 
\eqref{LPS4}  
as the contour integral, 
providing the continuity and a bound for the electrostatic potential.
The main difficulty is a verification of the Schr\"odinger 
equation \eqref{LPS1w} for the minimizer. Namely,
the Lagrange method of multipliers is not applicable
because
the energy per cell is infinite outside the submanifold 
$M\subset H^1(T_d)$ due to the infrared divergence
of the  Coulomb bond. 
Moreover, the Coulomb bond is infinite 
for a dense set of $\psi\in  M$. Hence, to differentiate 
the energy functional, we should construct the smooth paths in $ M$ 
lying outside this dense set.

\smallskip 
 
III. Finally, the proof that
$\om^0$ is real
(which is the stability condition for  the ground state)
is not straightforward
for  2D and 1D lattices, 
since the potential $\phi^0(x)$ a~priori can grow at infinity.
The correponding bounds for the potentials are given by 
\eqref{wfi2} and  \eqref{wfi1}.

\smallskip

The minimization strategy ensures the existence of a~ground state for any
 lattice~\eqref{Ga3}. One could expect
 that a stable lattice should provide a local minimum of
 the energy per  cell
 for fixed $d$, $N$ and  functions $\rho_j$, but this is still an open
problem.

\medskip 
 
Let us comment on related works. 
For atomic systems in $\R^3$, a~ground state was constructed by Lieb, Simon  and P.~Lions 
in the case of the Hartree and Hartree--\allowbreak Fock models
\ci{LS1977,Lions1981, Lions1987}, and 
by Nier
for the Schr\"odinger--\allowbreak Poisson  model \ci{Nier93}. 
The Hartree--\allowbreak Fock dynamics 
for molecular  systems in $\R^3$
has been constructed
by Canc\`es and Le Bris \ci{CB}.

A mathematical theory of the stability of  matter 
started from the pioneering works of 
Dyson, Lebowitz, Lenard, Lieb and others 
 for the Schr\"odinger many body model 
\ci{D1967, Lieb2005, LL1972, Lieb2009}; 
see the survey in~\ci{Lemm}. 
Recently, the theory was extended to the 
quantized Maxwell field \ci{LL2005}.

These results and methods were developed 
last two decades
by Blanc, Le Bris, Catto, P. Lions and others 
to justify 
the  thermodynamic limit for the Thomas--Fermi and Hartree--\allowbreak Fock 
models 
with space-periodic  ion arrangement 
\ci{BBL2007,CBL1996,CBL1998,CBL2001} 
and to construct the corresponding space-periodic ground states 
\ci{CBL2002}, 
see the survey and further references in \ci{BL2005}. 
 
Recently, Giuliani, Lebowitz and Lieb have established 
the periodicity of the thermodynamic limit 
in 1D local mean field model 
without the assumption of periodicity of the ion arrangement~\ci{GLL2007}. 
 
Canc\`es and others studied  short-range perturbations 
of the  Hartree--\allowbreak Fock 
model and proved that 
the 
density matrices of the perturbed and  unperturbed 
ground states 
differ by a~compact operator, 
\ci{CL2010,CS2012}. 
 

 \medskip

Let us note that 2D and 1D crystals in $\R^3$ were not considered
previously. 
The space-periodic ground states for 1D and 2D lattices 
 give the model of
 the nanostructures similar to 
 the carbon nanotubes and graphene respectively.

\medskip 
 
The plan of this chapter is as follows. 
In Section 2, we consider the $3$-dimensional lattice.
In Section 3, 
we construct a~ground state, derive equations 
\eqref{LPS1w}--\eqref{LPS3g} and 
study smoothness properties of a~ground state 
for $2$-dimensional lattice. 
In  Section 4, we consider the  $1$-dimensional lattice. 
Finally, in Appendix we construct and estimate the potential for 1D lattice.

\medskip

\setcounter{subsection}{0} 
\setcounter{theorem}{0} 
\setcounter{equation}{0} 
 
\section{3D lattice}

We consider the system  \eqref{LPS1w}--(\re{LPS3g}) 
for the corresponding functions on the torus  $T_3=\R^3/\Ga_3$ 
and with $\x^0_j\5{\rm mod}\5\Ga_3\in T_3$. 
For $s\in\R$, we denote by $H^s$ the complex Sobolev 
space on the torus $T_3$, and for $1\le p\le \infty$, we denote by $L^p$ 
the complex Lebesgue space of  functions on $T_3$. 
 
\subsection{Energy per cell} 
The ground state will be constructed by minimizing the 
energy  in the cell $T_3$. 
To this aim, we will minimize the energy with respect to 
$\ov\x:=(\x_1,\dots,\x_N)\in(T_3)^N$ and 
$\psi\in H^1$ satisfying the neutrality condition 
(\re{neu10}): 
\be\la{neu1} 
\int_{T_3} \rho(\x)d\x=0,~~~~~~~~~~\rho(\x):=\si(\x) -e|\psi(\x)|^2.
\ee 
where we set 
\be\la{rrr2} 
\si(\x):= 
\sum_{j}\rho^{\rm per}_{j}(\x-\x_j), 
\ee 
similarly to (\re{rrr}). Let us note that $\rho\in L^2$ for $\psi\in H^1$
by our condition  (\re{roL1}) since $\psi\in L^6$ by the Sobolev embedding theorem.

We define the energy in the periodic cell 
for $\psi\in H^1$ by 
\be\la{HamsT} 
E(\psi,\ov\x)\!:=\!\! 
\int_{T_3}\Bigl[ 
\fr{\h^2}{2\cm}|\na\psi(\x)|^2 
+ 
\fr12\phi(\x)\rho(\x)\Bigr]d\x ,\qquad \phi(\x):=(-\De)^{-1}\rho, 
\ee 
where 
$(-\De)^{-1}\rho$ is well-defined due to (\re{neu1}). 
Namely, consider the dual lattice 
\be\la{Ga3d} 
\Ga_3^*=\{\bk(\n)=\bb_1 n_1+\bb_2 n_2+\bb_3 n_3: \n=(n_1,n_2,n_3)\in\Z^3  \}, 
\ee 
where 
$\bb_k\bba_{k'}=2\pi\de_{kk'}$. 
Every function $\rho\in L^2$ admits the Fourier representation
\be\la{Fou} 
\rho(\x)=\fr1{\sqrt{|T_3|}}\sum_{\bk\in\Ga_3^*}\hat\rho(\bk) e^{-i\bk\x},\qquad 
\hat\rho(\bk)=\fr1{\sqrt{|T_3|}}\int e^{i\bk\x}\rho(\x)d\x.
\ee 
Respectively, we set 
\be\la{Fou2} 
\phi(\x)= 
(-\De)^{-1}\rho(\x) 
:= \fr1{\sqrt{|T_3|}}\sum_{\bk\in\Ga_3^*\setminus 0}\fr{\hat\rho(\bk)}{\bk^2} 
e^{-i\bk\x}. 
\ee 
This function 
$\phi\in H^2$
and satisfies 
the Poisson equation $-\De\phi=\rho$, 
since $\hat\rho(0)=0$ 
due to the neutrality condition (\re{neu1}).
Finally,
\be\la{Fou3} 
\int_{T_3}\phi(\x)d\x=0. 
\ee 
Now 
it is clear that the energy (\re{HamsT}) is finite for $\psi\in H^1$.
Let us rewrite the energy  as
\be\la{Upsi}
E(\psi,\ov\x)=I_1+I_2, 
\ee
where 
\beqn 
I_1(\psi)&:=&\fr{\h^2}{2\cm}\int_{T_3}|\na\psi(\x)|^2d\x\ge 0, 
\la{I1}\\ 
\nonumber\\ 
I_2(\phi)&:=& 
\fr12\int_{T_3} 
(-\De)^{-1}\rho(\x) 
\cdot\rho(\x) 
d\x=\fr12 \int_{T_3} 
|\na\phi(\x)|^2 
d\x\ge 0. \la{I2}
\eeqn 
The functional (\re{HamsT}) is chosen,  because  
\be\la{ener1} 
\fr{\de E}{\de\x_j}=
- \langle 
(-\De)^{-1}\rho(\x),\na\rho^{\rm per}_j(\x-\x_j)\rangle= 
 \langle\na\phi(\x),\rho^{\rm per}_j(\x-\x_j)\rangle, 
\ee
and the variational derivatives {\it formally} reads 
\be
\fr{\de E}{\de\Psi(\x)}= 
-2\fr{\h^2}{2\cm} 
\De\psi-
2 e 
(-\De)^{-1}\rho(\x)\psi(\x)= 
-2\fr{\h^2}{2\cm} 
\De\psi-
2e 
\phi(\x)\psi(\x).  \la{enerid} 
\ee
The variation in  (\re{enerid})  is taken 
over $\Psi(\x)=(\psi_1(\x),\psi_2(\x))\in L^2(T_3,\R^2)$, 
where 
$\psi_1(\x)=\rRe\psi(x)$ and $\psi_2(\x)=\rIm\psi(x)$. 
Respectively, all the terms in  (\re{enerid})  are identified with 
the corresponding 
$\R^2$-valued distributions.

\subsection{Compactness of minimizing sequence}

Our purpose here is to minimize the energy with respect to 
$$ 
(\psi,\ov\x)\in \mathcal M:=M\times T_3^N, 
$$ 
where $M$ denotes the manifold (cf. (\re{neuZ}))
\be\la{MZ} 
M=\{\psi\in H^1:~ \int_{T_3} |\psi(\x)|^2d\x 
=Z \}. 
\ee 
The energy is bounded from below since   
$
E(\psi,\ov\x)\ge 0 
$ 
by (\re{Upsi})-(\re{I2}).
We choose a minimizing sequence $(\psi^n, \ov\x^n)\in\mathcal M$ such that 
\be\la{min} 
E(\psi^n,\ov\x^n)\to E^0:=\inf_\mathcal M~ 
E(\psi,\ov\x), \qquad n\to\infty. 
\ee 
Our main result for a~3D lattice is the following: 
\bt\la{t3} 
Let condition \eqref{roL1} hold. Then
\medskip\\
i) There exists  $(\psi^0,\ov\x^0)\in \mathcal M$ with
\be\la{U0min} 
E(\psi^0,\ov\x^0)= E^0. 
\ee 
ii) Moreover,  $\psi^0\in H^2$ and satisfies equations \eqref{LPS1w}--\eqref{LPS3g} with $d=3$,
where the  potential $\phi^0\in H^2$ is real, 
 and  $\om^0\in\R$. 

\et 
To prove item i),
let us denote 
\be\la{ron}
\rho^n(\x):=\si^n(\x)-e|\psi^n(\x)|^2,\qquad\qquad \si^n(\x):=\sum_{j}\rho^{\rm per}_{j}(\x-\x_{jn}).
\ee
Now the sequence $\psi^n$ and the corresponding sequence 
$\phi^n:=(-\De)^{-1}\rho^n$ 
are bounded in $H^1$ 
by (\re{Upsi})-(\re{I2}), (\re{Fou3}) and 
(\re{MZ})-(\re{min}). 
Hence, both sequences are precompact in~$L^p$ 
for any $p\in[1,6)$ by the Sobolev embedding theorem \ci{Adams,Sob}. 
Therefore, the sequence $\rho^n$ is precompact in $L^2$ by our assumption (\re{roL1}), and respectively, 
 the sequence $\phi^n$ is precompact in $H^2$.
As the result,
there exist a  subsequence $n'\to\infty$ for which  
\be\la{subs} 
\psi^{n'}\toLp\psi^0, \qquad 
\phi^{n'}\toHd\phi^0, \qquad
\ov\x^{n'}\to \ov\x^0 , \qquad \qquad n'\to\infty
\ee 
with any $p\in[1,6)$.
Respectively,
\be\la{subs1} 
\si^{n'}\toLd \si^0,  \qquad \rho^{n'}\toLd \rho^0, \qquad \qquad n'\to\infty,
\ee 
where $\si^0(\x)$ and $\rho^0(\x)$ are defined by (\re{rrr}) and (\re{LPS4}). 
Hence, the neutrality condition (\re{neu10}) holds, 
$(\psi^0,\ov\x^0)\in \mathcal M$, $\phi^0\in H^2$, and for these limit functions we have
\be\la{phi0}
-\De\phi^0=\rho^0, \qquad \ds\int_{{T_3}}\phi^0(\x)d\x=0.
\ee 
To prove identity (\re{U0min}), we take 
into account that 
$I_1(\psi)$ is lower semicontinuous on $L^2$, while
$I_2(\phi)$ is continuous on $H^2$; i.e., 
\be\la{sem} 
I_1(\psi^0)\le \liminf_{n'\to\infty} I_1(\psi^{n'}), 
\qquad I_2(\phi^0)= \lim_{n'\to\infty} I_2(\phi^{n'}). 
\ee 
These limits, together with (\re{min}),  imply that 
\be\la{sem2} 
E(\psi^0,\ov\x^0) 
=I_1(\psi^0)+I_2(\phi^0) 
\le E^0. 
\ee 
Now (\re{U0min}) follows from the definition of $E^0$, since 
$(\psi^0,\ov\x^0)\in\mathcal M$. Thus Theorem \re{t3} i) is proved.
\medskip

We will prove the item ii) in next sections.

\subsection{Variation of the energy} 

Theorem \re{t3} ii) follows from next proposition.

\bp\la{tgs23} The limit functions 
\eqref{subs}
satisfy equations {\rm \eqref{LPS1w}--\eqref{LPS3g}} with $d=3$
and $\om^0\in\R$.

\ep

Equation (\re{LPS4}) is proved in (\re{phi0}), and the equation
 \eqref{LPS3g} follows from (\re{ener1}) and (\re{U0min}). 
 It remains to prove the Schr\"odinger equation \eqref{LPS1w}. 
Let us denote $\E(\psi):=E(\psi,\ov\x^0)$.
We derive \eqref{LPS1w} in next sections, 
equating the variation of $\E(\cdot)|_{M}$ to zero at $\psi=\psi^0$. 
In this section we calculate the corresponding G\^ateaux variational derivative.

We should work directly on~$M$ introducing an atlas in a~neighborhood 
of $\psi^0$ in $M$.
We define the atlas 
as the stereographic projection from the tangent plane 
$TM(\psi^0)=(\psi^0)^\bot:=\{\psi\in H^1: 
\langle\psi,\psi^0\rangle=0\}$ 
to the sphere (\re{MZ}): 
\be\la{3atlas} 
\psi_\tau= \fr{\psi^0+\tau~~~}{\Vert\psi^0+\tau\Vert_{L^2}} 
\sqrt{Z}, \qquad  \tau\in (\psi^0)^\bot. 
\ee 
Obviously, 
\be\la{3tau} 
\fr d{d\ve}\Big|_{\ve=0} \psi_{\ve\tau}=\tau,\qquad \tau\in (\psi^0)^\bot, 
\ee 
where the derivative exists in $H^1$. We define
the `G\^ateaux derivative' of $\E(\cdot)|_{M}$  as 
\be\la{3Gder} 
D_\tau \E(\psi^0):=\lim_{\ve\to 0}\fr{\E(\psi_{\ve\tau})-\E(\psi^0)}{\ve},
\ee 
if this limit exists. We should restrict the set of allowed tangent vectors~$\tau$. 
 
\bd 
$\ccT^0$ is the space of test functions 
$\tau\in(\psi^0)^\bot\cap C^\infty(T_3)$. 
\ed 
Obviously, $\ccT^0$ is dense in $(\psi^0)^\bot$ in the norm of  $H^1$. 
 Let us rewrite the energy (\re{HamsT}) as
 \be\la{3HamsT22} 
\E(\psi):= 
\int_{T_3}\Bigl[ 
\fr{\h^2}{2\cm}|\na\psi(\x)|^2 
+ 
\fr12 |\Lam\rho(\x)|^2 
\Bigr]d\x, \qquad \rho(\x):=\si(\x) -e|\psi(\x)|^2, 
\ee 
 where $\Lam:=(-\De)^{-1/2}$ is defined similarly to (\re{Fou2}):
 \be\la{Fou2L} 
\Lam\rho(\x) 
:= \fr1{\sqrt{|T_3|}}\sum_{\bk\in\Ga_3^*\setminus 0}\fr{\hat\rho(\bk)}{|\bk|} 
e^{-i\bk\x}\in L^2 \qquad \mbox{\rm for } \qquad  \rho\in L^2. 
\ee

\bl\la{3lvar} 
Let 
$\tau\in\ccT^0$. Then the  derivative \eqref{3Gder} exists, and  {\rm (}cf. \eqref{enerid}{\rm )}, 
\be\la{3Gder2} 
D_\tau \E(\psi^0)=\int_{T_3}  \Big[\fr{\h^2}{2\cm} 
(\na\tau\ov{\na\psi^0}+\na\psi^0\ov{\na \tau}) 
-e {\Lam\rho^0}\Lam(\tau\ov{\psi^0}+\psi^0\ov \tau )\Big]d\x. 
\ee 
\el 
\begin{proof} 
Let us denote $\rho_{\ve\tau}(\x):=\si^0(\x)-e|\psi_{\ve\tau}(\x)|^2$.

\bl\la{3lL2}
 For $\tau\in\ccT^0$ we have
 \be\la{3Gder3} 
D_\tau \Lam\rho:= 
\lim_{\ve\to 0}\fr{\Lam\rho_{\ve\tau}-\Lam\rho^0}{\ve} 
 =e\Lam(\tau\ov{\psi^0}+\psi^0\ov \tau ), 
\ee 
where the limit converges in  $L^2$.

\el
\begin{proof} In the polar coordinates
\be\la{3al} 
\psi_{\ve\tau}=(\psi^0+\ve\tau)\cos\al,\qquad 
\al=\al(\ve)=\arctan\fr{\ve\Vert\tau\Vert_{L^2}}{\Vert\psi^0\Vert_{L^2}}. 
\ee 
Hence,
\beqn\la{3rod} 
\Lam\rho_{\ve\tau}&=&\Lam\si^0 -e\cos^2\al\Lam|\psi^0+\ve\tau|^2 
\nonumber\\
\nonumber\smallskip\\
&=& 
\Lam\rho^0-e\ve\cos^2\al\Lam(\tau\ov{\psi^0}+\psi^0\ov\tau)+ 
e\Lam[\ve^2|\tau|^2\cos^2\al-|\psi^0|^2\sin^2\al]. 
\eeqn 
Here 
$\Lam\rho^0\in L^2$ since 
$\rho^0\in L^2$, and similarly $\Lam [\psi^0\ov\tau]\in L^2$ since 
$\psi^0 \ov \tau\in L^2$.
It remains to 
estimate 
 the last term 
of (\re{3rod}),
\be\la{3lt}
R_\ve:=\Lam[\ve^2|\tau|^2\cos^2\al-|\psi^0|^2\sin^2\al].
\ee
Here $|\psi^0|^2\in L^2$ 
since $\psi^0\in H^1\subset L^6$. Finally,  $|\tau|^2\in L^2$ and $\sin^2\al\sim \ve^2$.
Hence, the convergence (\re{3Gder3}) holds in $L^2$.\end{proof}
\medskip

Now (\re{3Gder2}) follows by differentiation 
in $\ve$ of (\re{3HamsT22}) with $\psi=\psi_{\ve\tau}$ and $\rho=\rho_{\ve\tau}$.
\end{proof}

 \subsection{The variational identity} 
 
Since $\psi^0$ is a minimal point, the G\^ateaux derivative (\re{3Gder2}) 
vanishes: 
\be\la{3GaD} 
\int_{T_2}  \Big[\fr{\h^2}{2\cm} 
(\na\tau\ov{\na\psi^0}+\na\psi^0\ov{\na \tau}) 
-
e {\Lam\rho^0}\Lam(\tau\ov{\psi^0}+\psi^0\ov \tau )
\Big]d\x=0. 
\ee 
Substituting $i\tau$ instead of $\tau$ in this identity 
and subtracting, we obtain 
\be\la{3GaD2} 
-\fr{\h^2}{2\cm}\langle\De\psi^0, \tau\rangle 
- e \langle 
\Lam\rho^0,\Lam(\ov{\psi^0} \tau) \rangle=0. 
\ee 
Next step 
we should evaluate the ``nonlinear'' term.

\bl For the limit functions 
\eqref{subs}--\eqref{subs1} we have 
\be\la{3sp} 
\langle 
\Lam\rho^0,\Lam(\ov{\psi^0} \tau) \rangle= 
\langle\phi^0\psi^0, \tau \rangle,\qquad \tau\in\ccT^0.
\ee

\el 
\begin{proof} 
Let us substitute 
$\rho^0=-\De\phi^0$. 
 Then, by the Parseval--\allowbreak Plancherel identity, 
\be\la{3GaD3} 
\langle 
\Lam\rho^0,\Lam(\ov{\psi^0} \tau) \rangle= \sum_{\bk\in\Ga_3^*\setminus 0}
\fr{\bk^2\hat\phi^0(\bk)}{|\bk|}\cdot
\fr{\widehat{\ov{\psi^0} \tau}(\bk)}{|\bk|} = 
\langle\hat \phi^0,\widehat{\ov{\psi^0} \tau} \rangle
=\langle \phi^0,\ov{\psi^0} \tau \rangle=\langle \phi^0\psi^0, \tau \rangle. 
\ee 
which proves (\re{3sp}).\end{proof} 
\medskip 
 
Using (\re{3sp}), we can rewrite (\re{3GaD2}) as the variational identity (cf.\ (\re{enerid})) 
\be\la{3GaD22} 
\langle -\fr{\h^2}{2\cm}\De\psi^0 
- e 
\phi^0\psi^0,\tau \rangle=0,    \qquad \tau\in\ccT^0. 
\ee 
 
\subsection{The Schr\"odinger equation} 
 
Now we prove the Schr\"odinger equation (\re{LPS1w}) with $d=3$.
\bl\la{lse}
$\psi^0$ is the eigenfunction of the 
Schr\"odinger operator $H=-\fr{\h^2}{2\cm}\De+e\phi^0$:
\be\la{3Hpsi} 
H\psi^0=\lam\psi^0,
\ee 
where 
$\lam\in\R$. 
\el
\begin{proof}
First, $H\psi^0$ is a well-defined distribution since $\phi^0\in H^2\subset C(T_3)$ by (\re{subs}).
Second,
 $\psi^0\ne 0$ since $\psi^0\in M$ and $Z>0$. Hence, there exists a~test function 
$\theta\in C^\infty(T_3)\setminus\ccT^0$, i.e., 
\be\la{3test} 
\langle\psi^0,\theta\rangle\ne 0. 
\ee 
Then 
\be\la{3test2} 
\langle(H-\lam)\psi^0,\theta\rangle= 0. 
\ee 
for an appropriate $\lam\in\C$. However, 
$(H-\lam)\psi^0$ also  annihilates $\ccT^0$ by (\re{3GaD22}), 
hence it  annihilates the whole space 
$C^\infty(T_3)$. This implies (\re{3Hpsi}) in the sense of 
distributions with a $\lam\in\C$. Finally,
the potential is  real, and 
 $\phi^0\in C(T_3)$. Hence,
 $\lam\in\R$.\end{proof}
 \medskip

  This lemma implies  equation (\re{LPS1w})  with $\hbar\om^0=\lam$.
   Hence, $\psi^0\in H^2$ since $\phi^0\in C(T_3)$.
Now Theorem \re{t3} ii) is proved.

\subsection{Smoothness of  ground state} 

We have proved that $\psi^0\in H^2$ under condition (\re{roL1}).
Using the Schr\"odinger equation (\re{3Hpsi})
we can improve further the smoothness
 of $\psi^0$ strengthening  the condition (\re{roL1}).
Namely, let us assume that
\be\la{rpem3} 
\rho_j^{\rm per}\in C^\infty(T_3), \qquad j=1,...,N.
\ee 
Then also
\be\la{rper} 
\si^0(\x):=\sum_{j=1}^N \rho^{\rm per}_j(\x-\x^0_j)\in C^\infty(T_3). 
\ee 
For example, (\re{rpem3}) and (\re{rper}) hold if $\rho_j\in  \cS(\R^3)$,
where $\cS(\R^3)$ is the Schwartz space of test functions.

\bl\la{lpp} 
Let  condition \eqref{rpem3} hold, and 
$\psi^0\in H^2$,  $\phi^0\in H^2$ be a solution to equations 
\eqref{LPS1w}--\eqref{LPS3g} with $d=3$ and some $\x\in T_3^N$. Then 
the functions $\psi^0$ and $\phi^0$ are smooth. 
\el 
\begin{proof} 
First, $\phi^0\psi^0\in H^2$
since  $H^s$ is the algebra for $s>3/2$. 
Hence, equation ~\eqref{LPS1w} implies that 
\be\la{HC2} 
\psi^0\in H^4\subset C^2(T_3). 
\ee 
Now 
$\rho^0:=\si^0+e|\psi^0|^2\in H^4$ by \eqref{rpem3}. Then~\eqref{LPS4} implies that $\phi^0\in H^6\subset C^4(T_3)$.
Hence, $\phi^0\psi^0\in H^4$, $\psi^0\in H^6$, $\rho^0\in H^6$, etc.\end{proof}

\setcounter{equation}{0} 
 
\section{2D lattice} 
 
For simplicity of notation we will consider the 2D lattice $\Ga_2=\Z^2$
 and 
construct a solution to system   \eqref{LPS1w}--\eqref{LPS3g} 
for the corresponding functions on the `cylindrical cell' 
 $T_2:=\R^3/\Ga_2=\T^2\times\R$ with  
 the coordinates $\x=(x_1,x_2,x_3)$, where $(x_1,x_2)\in \T^2$
and $x_3\in\R$. 
Now we denote by $H^s$ the complex Sobolev space on $T_2$, and 
by $L^p$, the  complex Lebesgue space of functions on $T_2$.

We will construct a~ground state by minimizing the  
energy  (\re{HamsT}), 
where the integral is extended 
over $T_2$ instead of $T_3$. 
The neutrality condition 
of type (\re{neu1})
holds for 
$\Ga_2$-periodic states with finite energy, as we show below.

\subsection{The energy per cell} 
We restrict ourselves by $N=1$, so $\ov\x^0=\x^0_1$ can be chosen arbitrary
because of the translation invariance of the system  \eqref{LPS1w}--\eqref{LPS3g}.
For example, we can set $\x^0_1=0$.

The  energy in the cylindrical cell $T_2$
is defined similarly to  (\re{HamsT}), which we rewrite as  (\re{3HamsT22}):
\be\la{HamsT22} 
\E(\psi):= 
\int_{T_2}\Bigl[ 
\fr{\h^2}{2\cm}|\na\psi(\x)|^2 
+ 
\fr12 |\Lam\rho(\x)|^2 
\Bigr]d\x, \qquad \rho(\x):=\si^0(\x) -e|\psi(\x)|^2. 
\ee 
Here $\si^0(\x)$ is defined by 
 (\re{rrr2}) with $N=1$ and $\x^0_1=0$:
\be\la{rrr22} 
\si^0= 
\rho^{\rm per}_1\in L^1\cap L^2
\ee 
according to our condition (\re{roL1}).
Hence, we have 
\be\la{intro2} 
\ds\int_{T_2} 
\si^0(\x)d\x=eZ_1,~~~~~Z_1>0. 
\ee  
Further, $\Lam$ is the operator $(-\De)^{-1/2}$ defined by 
the Fourier transform. Namely, we denote
 $\Ga_2^*=2\pi\Ga_2$, 
and define the Fourier representation for the test functions 
$\vp\in C_0^\infty(T_2)$ by 
\be\la{FR2} 
\vp(\x)= 
\fr 1{\sqrt{2\pi}} \sum_{\bk\in \Ga_2^*} e^{-i(\bk_1x_1+\bk_2x_2)} 
\int_\R e^{-i\xi x_3}\hat \vp(\bk,\xi)d\xi, \qquad \x\in T_2,
\ee 
where 
\be 
\la{FT2} 
\hat 
\vp(\bk,\xi)=F\vp(\bk,\xi)=\fr 1{\sqrt{2\pi}} 
\int_{T_2} e^{i(\bk_1x_1+\bk_2x_2+\xi x_3)} \vp(\x)d\x,
\qquad  (\bk,\xi)\in\Si_2:=\Ga_2^*\times \R.
\ee 
The operator $\Lam$ is defined for $\vp\in L^1\cap L^2$ by 
\be\la{Lam} 
\Lam \vp=F^{-1} \fr{\hat \vp(\bk,\xi)}{\sqrt{\bk^2+\xi^2}} 
\ee 
provided the quotient belongs to $L^2(\Si_2)$. In this case
\be\la{02} 
\hat \vp(0,0)=0. 
\ee 
 Let us note that $\rho\in L^1\cap L^2$ for $\psi\in H^1$
by our condition  (\re{roL1}) since $\psi\in L^p$ with 
$p\in [2,6]$ 
by the Sobolev embedding theorem.
For $\psi\in H^1$ 
with finite energy (\re{HamsT22}) 
we have $\Lam\rho\in L^2(\Si_2)$. Therefore,
(\re{02}) with  $\vp=\rho$ implies 
the neutrality condition 
(\re{neu10}) with $d=2$: 
\be\la{neu2} 
\hat\rho(0,0)=\int_{T_2} \rho(\x)d\x= 
\int_{T_2} [\si^0(\x) -e|\psi(\x)|^2]d\x=0. 
\ee 
In other words, the finiteness of the Coulomb energy 
$\Vert\Lam\rho\Vert^2$ prevents the electron charge from escaping to infinity, 
as mentioned in Introduction. 
Now (\re{intro2}) gives
\be\la{M2} 
\int_{T_2}
|\psi(\x)|^2d\x=Z_1.
\ee 

 \bd 
$M_2$ denotes the set of  
$\psi\in H^1$ satisfying the neutrality condition \eqref{M2}. 
\ed 
It is important that the energy be finite for a nonempty set of $\psi\in H^1$.
To find the corresponding condition, let us rewrite the 
energy (\re{HamsT22})
using the Parseval-Plancherel identity:
\be\la{HamsT22F} 
\E(\psi)= 
\sum 
_{\bk\in \Ga_2^*} 
\fr{\h^2}{2\cm}
\int_\R(\bk^2+\xi^2)|\hat\psi(\bk,\xi)|^2 d\xi
+ 
\fr12 
\sum 
_{\bk\in \Ga_2^*} 
\int_\R
\fr {|\hat\rho(\bk,\xi)|^2} {\bk^2+\xi^2} d\xi. 
\ee 
 Here the first term on the right hand side is finite for all $\psi\in H^1$.
The second term is finite up to the infrared divergence at 
the point $(\bk,\xi)=(0,0)$
since $\rho\in L^2(\Si_2)$ for $\psi\in H^1$.

 We note that (\re{intro2}) can be written as
$\hat\rho^{\rm per}_1(0)-eZ_1=0$. 
We will assume that moreover,
\be\la{rous5} 
\mbox{\bf Condition II.}\qquad
\fr{\hat\rho^{\rm per}_1(0,\xi)-eZ_1}{|\xi|}\in L^2(-1,1). \qquad\qquad\qquad\qquad
\ee 
For example, this condition holds, 
provided that  
\be\la{rous5if} 
\int_{T_2}  |x^3||\rho^{\rm per}_1(\x)|d\x <\infty.
\ee 

\bl\la{lne}
Let conditions \eqref{roL1} and \eqref{rous5} hold, $N=1$ and $\x^0_1\in T_2$. Then 
the energy \eqref{HamsT22F} is finite for a dense set of   $\psi\in H^1$.

\el
\begin{proof} By definition, $\hat \rho(0,\xi)=\hat\rho^{\rm per}_1(0,\xi)-e\hat P(0,\xi)$, where $P(\x):=|\psi(\x)|^2$.
Hence, (\re{rous5}) implies that
the energy (\re{HamsT22F}) is finite for   $\psi\in M_2$ with finite momenta 
$
\ds\int_{T_2}  |x^3|~|\psi(\x)|^2d\x<\infty.
$
\end{proof}

\medskip 
\subsection{Compactness of minimizing sequence} 
 
Similarly to the 3D case, the energy is nonnegative, and 
we choose a minimizing sequence $\psi^n\in  M_2$ 
such that 
\be\la{min2} 
\E(\psi^n)\to \E^0:=\inf_{M_2}~ 
\E(\psi),\qquad n\to\infty. 
\ee 

The second main result of this chapter is the following.
 
\bt\la{tgs2} 
Let conditions \eqref{roL1} and \eqref{rous5} hold, and $N=1$. Then 
\medskip\\
i) There exists $\psi^0\in M_2$ with
\be\la{U0min2} 
\E(\psi^0)= \E^0. 
\ee 
ii) Moreover,  $\psi^0\in H^2_{\rm loc}(T_2)$ and  satisfies equations \eqref{LPS1w}--\eqref{LPS3g} with $d=2$,
where 
the potential  $\phi^0\in H^2_{\rm loc}(T_2)$ is real, $\x^0_1=0$,
 and  $\om^0\in\R$. 
\medskip\\
iii) The following bound holds
\be\la{wfi2} 
|\phi^0(\x)|\le 
C(1+|x_3|)^{1/2},\qquad\qquad \x\in T_2.
\ee

\et
To prove item i),
let us note 
that the sequence $\psi^n$  
is bounded in $H^1$
due to 
(\re{HamsT22}),
(\re{M2}) and (\re{min2}).
Hence,  by the Sobolev embedding theorem \ci{Adams,Sob}, 
the sequence $\psi^n$  
is bounded in $L^p$ with each $p\in[2,6)$ and
compact in $L^p_R:=L^p(T_2(R))$ 
for  any $R>0$, 
where $T_2(R)=\{\x\in T_2:~|x_3|<R\}$. 
Therefore, there exists a~subsequence 
\be\la{subs2} 
\psi^{n'}\toLpR\psi^0,\qquad \rho^{n'}:=\rho^{\rm per}_1-e|\psi^{n'}|^2\toLLR\rho^0,
\qquad 
n'\to\infty,\quad \forall R>0,
\ee 
since 
$\rho^{\rm per}_1\in L^1\cap L^2$ by (\re{roL1}).
Hence, $\psi^0\in H^1\cap L^p$,  and
\be 
\la{pr} 
\rho^0(\x)=\rho^{\rm per}_1(\x)-e|\psi^0(\x)|^2\in L^1\cap L^2. 
\ee 
Next problem is to  check the neutrality condition (\re{M2}) for the 
limit charge density $\rho^0$ since the convergence (\re{subs2}) itself is not sufficient.

\bl\la{lM2} The limit function
$\psi^0\in M_2$, 
and the energy \eqref{HamsT22} for $\psi^0$ is finite. 
\el
\begin{proof}
Let us prove that
\be\la{UU} 
\E(\psi^0)\le \E^0. 
\ee 
Indeed, 
(\re{HamsT22F}) with $\psi=\psi^{n'}$  reads 
\be \la{En}
\E(\psi^{n'}):= 
\Bigl\langle 
\fr{\h^2}{2\cm}|f^{n'}(\bk,\xi)|^2 
+ 
\fr 12 |g^{n'}(\bk,\xi)|^2 
\Bigr\rangle_{\Si_2}, 
\ee 
where $\langle \dots \rangle_{\Si_2}$ stands for 
$\ds\sum_{\bk\in \Ga_2^*}\int_\R~\dots~d\xi$ and 
$$ 
f^{n'}(\bk,\xi):=\sqrt{\bk^2+\xi^2} 
\hat\psi^{n'}(\bk,\xi),\qquad g^{n'}(\bk,\xi):=\fr {\hat\rho^{n'}(\bk,\xi)} 
{\sqrt{\bk^2+\xi^2}}. 
$$ 
The  functions  $\hat\psi^{n'}$ and $\hat\rho^{n'}$
are bounded in $L^2(\Si_2)$, and are 
converging in the sense of distributions due to (\re{subs2}). Hence,
\be\la{tow2} 
\hat\psi^{n'}\tow \hat\psi^0,\qquad\hat\rho^{n'}\tow \hat\rho^0,
\qquad  n'\to\infty.
\ee
Similarly, the  functions $f^{n'}$ and $g^{n'}$ 
are bounded in $L^2(\Si_2)$ by (\re{En}), (\re{min2}), and are
converging in the sense of distributions due to (\re{tow2}).
Therefore, 
\be\la{tow} 
f^{n'}\tow f^0,\qquad g^{n'}\tow g^0,\qquad n'\to\infty. 
\ee 
Hence, for the limit functions, 
$$ 
f^0(\bk,\xi)=\sqrt{\bk^2+\xi^2} 
\hat\psi^0(\bk,\xi), 
\qquad g^0(\bk,\xi)=\fr {\hat\rho^0(\bk,\xi)} 
{\sqrt{\bk^2+\xi^2}}, \qquad\qquad  a.a. \,\,\, (\bk,\xi)\in\Si_2.
$$ 
Therefore, (\re{UU}) holds since  
\be\la{Uro} 
\E(\psi^0)= 
\Bigl\langle 
\fr{\h^2}{2\cm}|f^0(\bk,\xi)|^2 
+ 
\fr 12 |g^0(\bk,\xi)|^2 
\Bigr\rangle_{\Si_2}\le \E^0 
\ee 
by the week convergence (\re{tow}). 
In particular,  
\be\la{Lamro}
\Lam\rho^0\in L^2.
\ee
Therefore, $\hat\rho^0(0,0)=0$ as in (\re{neu2}) since $\rho^0\in L^1$ by (\re{pr}).
Hence,  $\psi^0\in  M_2$.
\end{proof}

Now (\re{UU}) implies 
(\re{U0min2}). Thus Theorem \re{tgs2} i) is proved.

\subsection{The Poisson equation} 
Our aim here is to construct the potential which is the solution to 
the Poisson equation 
(\re{LPS4}) with $d=2$. It suffices to solve the equation 
\be\la{rho2} 
\na\phi^0(\x)=G^0(\x),\qquad\qquad \x\in T_2,
\ee
where 
$G^0(\x):= 
- 
 iF^{-1}\fr{(\bk,\xi)}{\bk^2+\xi^2}\hat\rho^0(\bk,\xi)$ 
  is a~real vector field, 
$G^0\in L^2\otimes \R^3$ by (\re{Lamro}), and $\rot G^0(\x)\equiv 0$.

\bl\la{lP2}
The equation \eqref{rho2} 
admits real  solution $\phi^0\in H^2_{\rm loc}(T_2)$
which is unique up to an additive constant, and satisfies the bound \eqref{wfi2}.

\el
\begin{proof}  The uniqueness up to constant is obvious. If the solution exists, then 
$\phi^0\in H^2_{\rm loc}(T_2)$  by (\re{pr}). Local solutions exist since  $\rot\, G^0(\x)\equiv 0$. However, the 
existence of the global solution is not obvious since the cell $T_2$ is not 1-connected.

We will prove the existence using the following arguments. 
Formally $\phi^0(x)=F^{-1}\fr{\hat\rho^0(\bk,\xi)}{\bk^2+\xi^2}$. However, the last expression is not correctly defined distribution
in the neighborhood of the point $(0,0)$. To avoid this infrared divergence, we split $\hat\rho^0=\hat\rho_1+\hat\rho_2$ where 
\be\la{splitt}
 \hat \rho_1(\bk,\xi)=\left\{
 \ba{ll}
 \hat \rho^0(0,\xi),& \bk=0,~|\xi|<1,\\
 0,& \mbox{\rm otherwise.}
 \ea\right.
\ee
Respectively, 
$G^0=G_1+G_2$, and
the solution $\phi^0=\phi_1+\phi_2$. Obviously,
\be\la{G1}
G_1(\x)=- 
 iF^{-1}\fr{(0,\xi)}{\xi^2}\hat\rho_1(0,\xi)=\e_3g_1(x_3),\qquad \e_3:=(0,0,1),
\ee
and $g_1(x_3)$ is a smooth function. Moreover, 
(\re{pr}) implies that $g_1(x_3)$
 is the real function, and $g_1\in L^2(\R)$ since $G^0\in L^2\otimes \R^3$.
Hence, the solution $\phi_1(\x)=\ds\int_0^{x_3} g_1(s)ds$ is smooth and  continuous, 
and depends on $x_3$ only.
The bound (\re{wfi2}) for $\phi_1$ follows by the Cauchy-Schwartz inequality.

The second solution is given by 
$\phi_2(\x)=F^{-1}\fr{\hat\rho_2(\bk,\xi)}{\bk^2+\xi^2}$, where  $\hat\rho_2\in L^2(\Si_2)$ by (\re{pr}). Moreover, $\hat\rho_2(0,\xi)=0$ for $|\xi|<1$, and hence
$\phi_2\in H^2$.\end{proof}
\medskip
\brs
i) The function $\phi^0(\x)=(1+|x_3|)^{1/2-\ve}$ with $\ve>0$ 
shows that the bound \eqref{wfi2} is exact under the condition $\na\phi^0\in L^2$. 
Note that the potential of uniformly charged plane grows linearly with the distance. 
\medskip\\
ii)
In the Fourier transform, \eqref{rho2} implies that 
\be\la{fm2} 
(\bk,\xi)\hat\phi^0(\bk,\xi)\in L^2(\Si_2)\otimes\C^3. 
\ee 
 \ers

\subsection{Variation of the energy} 

Theorem \re{tgs2} ii) follows from next proposition.

 \bp\la{tgs22} 
The functions $\psi^0$, $\phi^0$ 
satisfy equations {\rm \eqref{LPS1w}--\eqref{LPS3g}} with $d=2$
and $\om^0\in\R$.

\ep

The equation (\re{LPS4}) is
 proved above, and the equation
 \eqref{LPS3g} follows from (\re{ener1}) and (\re{U0min2})
 by the translation invariance of the energy.
 It remains to prove the Schr\"odinger equation \eqref{LPS1w}. 
We are going to derive \eqref{LPS1w}, 
equating the variation of $\E(\psi)|_{M_2}$ to zero at $\psi=\psi^0$. 
 In this section we calculate the corresponding G\^ateaux variational derivative.

 Similarly to (\re{3atlas}),
we define the atlas in a~neighborhood 
of $\psi^0$ in $M_2$ 
as the stereographic projection from the tangent plane 
$TM_2(\psi^0)=(\psi^0)^\bot:=\{\psi\in H^1: 
\langle\psi,\psi^0\rangle=0\}$ 
to the sphere (\re{M2}): 
\be\la{atlas} 
\psi_\tau= \fr{\psi^0+\tau~~~}{\Vert\psi^0+\tau\Vert_{L^2}} 
\sqrt{Z_1}, \qquad  \tau\in (\psi^0)^\bot. 
\ee

\bd 
$\ccT^0$ is the space of test functions 
$\tau\in(\psi^0)^\bot\cap C_0^\infty(T_2)$. 
\ed 
 
Obviously, $\ccT^0$ is dense in $(\psi^0)^\bot$ in the norm of  $H^1$. 
 
\bl\la{lvar} 
Let $\tau\in\ccT^0$. Then 
 \medskip\\
{\rm i)} 
The energy 
$\E(\psi_{\ve\tau})$ is finite for $\ve\in\R$. 
 \medskip\\
{\rm ii)} The  G\^ateaux derivative \eqref{3Gder} exists, and  similarly to  \eqref{3Gder2},
\be\la{Gder2} 
D_\tau \E(\psi^0)=\int_{T_2}  \Big[\fr{\h^2}{2\cm} 
(\na\tau\ov{\na\psi^0}+\na\psi^0\ov{\na \tau}) 
-e {\Lam\rho^0}\Lam(\tau\ov{\psi^0}+\psi^0\ov \tau )\Big]d\x. 
\ee 
\el 
\begin{proof} 
i)  We should prove the bound 
\be\la{HamsT223} 
\E(\psi_{\ve\tau}):= 
\fr{\h^2}{2\cm}\int_{T_2}|\na\psi_{\ve\tau}(\x)|^2d\x 
+ 
\fr12 \int_{T_2}|\Lam\rho_{\ve\tau}(\x)|^2 
d\x<\infty, 
\ee 
where $\rho_{\ve\tau}(\x):=\si^0(\x)-e|\psi_{\ve\tau}(\x)|^2$.
The first integral in (\re{HamsT223}) 
is finite, since $\psi_{\ve\tau}\in H^1$. 

\bl\la{lL2}
 $\Lam\rho_{\ve\tau}\in L^2$ for $\tau\in\ccT^0$ and  $\ve\in\R$, and
 \be\la{Gder3} 
D_\tau \Lam\rho:= 
\lim_{\ve\to 0}\fr{\Lam\rho_{\ve\tau}-\Lam\rho^0}{\ve} 
 =e\Lam(\tau\ov{\psi^0}+\psi^0\ov \tau ), 
\ee 
where the limit converges in  $L^2$.
\el
\begin{proof} 
We use the polar coordinates  (\re{3al}) and the corresponding representation
(\re{3rod}):
\be\la{3rod2} 
\Lam\rho_{\ve\tau}= 
\Lam\rho^0+e\ve\cos^2\al\Lam(\tau\ov{\psi^0}+\psi^0\ov\tau)-
e\Lam[\ve^2|\tau|^2\cos^2\al-|\psi^0|^2\sin^2\al]. 
\ee 
Now $\Lam\rho^0\in L^2$ according to (\re{Lamro}). 
Further, $\Lam [\tau\ov{\psi^0}]\in L^2$ by the following arguments:
\medskip\\
a) $\tau\ov{\psi^0}  \in L^2$, 
\medskip\\
b) $\widehat{\tau\ov{\psi^0}}$ is the smooth function on $\Si_2$,
and 
\medskip\\
c) the orthogonality $\tau\bot\psi^0$ implies
that
\be\la{T0} 
\widehat{\tau\ov{\psi^0}}(0,0)=0. 
\ee 
It remains to estimate 
 the last term 
of (\re{3rod2}),
Let us denote $T(\x):=|\tau(\x)|^2$ and  $P(\x):=|\psi^0(\x)|^2$. Then the last  
term (up to a constant factor) reads
\be\la{lt}
R_\ve(\x):=\Lam[\ve^2 T(\x)
\cos^2\al-P(\x)\sin^2\al].
\ee
\bl\la{l37}
$R_\ve\in L^2$
for $\ve\in\R$, and
\be\la{ltm} 
\Vert R_\ve\Vert_{L^2}=\cO(\ve^2),~~~~~~~\ve\to 0.
\ee
\el
\begin{proof}
{\it i)} It suffices to check that 
\beqn \la{sc} 
&&\fr{\ve^2{\hat T(0,\xi)}\cos^2\al-{\hat P(0,\xi)}\sin^2\al} 
{|\xi|} 
\nonumber\\ 
\nonumber\\ 
&=& 
\fr{(\ve^2{\hat T(0,\xi)}-Z_1\tan^2\al)\cos^2\al}{|\xi|}- 
\fr{({\hat P(0,\xi)}-Z_1)\sin^2\al}{|\xi|} 
\in L^2(-1,1). 
\eeqn 
Let us consider each term of the last line of (\re{sc}) separately.
\medskip

1) The first  quotient
belongs to  $L^2(-1,1)$, since 
\be\la{las3} 
\ve^2\hat T(0,0)-Z_1\tan^2\al=
\int_{T_2} \ve^2|\tau|^2 d\x-Z_1\tan^2\al=0~ 
\ee  
by the definition of $\al$ in (\re{3al}) since $\Vert\psi^0\Vert=\sqrt{Z_1}$. 
 \medskip

2) The second quotient belongs to  $L^2(-1,1)$, since 
\be\la{las} 
\fr{\hat\rho^0}{|\xi|}=\fr{{\hat\rho^{\rm per}_1}}{|\xi|}-e\fr{{\hat P}}{|\xi|}= 
\fr 
{{\hat\rho^{\rm per}_1}+eZ_1}{|\xi|}-
e\fr{{\hat P}-Z_1}{|\xi|}, 
\ee 
where all the functions are taken at the point $(0,\xi)$. 
Here the left-hand side belongs to $L^2(-1,1)$, 
since $\Lam\rho^0\in L^2$, while the first term on the right 
belongs to $L^2(-1,1)$ 
 by our assumption (\re{rous5}). 
\medskip\\
{\it ii)}
 The bound (\re{ltm}) holds for both terms of (\re{sc}) 
by the arguments above
since 
$\tan\al\sim\sin\al\sim\ve$ as $\ve\to 0$.\end{proof}
\medskip

Formula (\re{3rod2}) implies (\re{Gder3}),
where the limit converges in  $L^2$ by (\re{ltm}).
\end{proof}

ii) 
Lemma \re{lL2} implies the bound (\re{HamsT223}). Formula
 (\re{Gder2})  follows by differentiation of  (\re{HamsT223}) in $\ve$.
\end{proof}

\subsection{The variational identity} 
 
Since $\psi^0$ is a minimal point, the G\^ateaux derivative (\re{Gder2}) 
vanishes: 
\be\la{GaD} 
\int_{T_2}  \Big[\fr{\h^2}{2\cm} 
(\na\tau\ov{\na\psi^0}+\na\psi^0\ov{\na \tau}) 
-
e {\Lam\rho^0}\Lam(\tau\ov{\psi^0}+\psi^0\ov \tau )
\Big]d\x=0. 
\ee 
Substituting $i\tau$ instead of $\tau$ in this identity 
and subtracting, we obtain 
\be\la{GaD2} 
-\fr{\h^2}{2\cm}\langle\De\psi^0, \tau\rangle 
- e \langle 
\Lam\rho^0,\Lam(\tau\ov{\psi^0} ) \rangle=0. 
\ee 
Next step 
we should evaluate the ``nonlinear'' term.

\bl For the limit functions \eqref{subs2} we have 
\be\la{spg} 
\langle 
\Lam\rho^0,\Lam(\tau\ov{\psi^0} ) \rangle= 
\langle\phi^0\psi^0, \tau \rangle, \qquad \tau\in\ccT^0,
\ee 
where $\phi^0$ is any potential satisfying \eqref{rho2}.

\el 
\begin{proof} 
First we note that $\Lam\rho^0\in L^2$ by \re{Lamro}), and 
$\Lam(\tau\ov{\psi^0} )\in L^2$ as we have established in the proof of Lemma \re{lL2}.
Moreover, 
$\rho^0=-\De\phi^0$. 
 Then, by the Parseval--\allowbreak Plancherel identity, 
\be\la{GaD3} 
\langle 
\Lam\rho^0,\Lam(\tau\ov{\psi^0} ) \rangle= 
\sum_{\bk\in \Ga_2^*\setminus 0}\int \hat\phi^0(\bk,\xi)\ov{\widehat{\tau\ov{\psi^0} }(\bk,\xi)}d\xi+
\lim_{\ve\to 0+}
\int_{|\xi|>\ve}
\hat\phi^0(0,\xi)
\ov{\widehat{ \tau\ov{\psi^0}}(0,\xi)}d\xi= 
\langle\hat \phi^0,\widehat{\tau\ov{\psi^0} } \rangle, 
\ee 
where $\hat \phi^0$ is the distribution on $\Si_2$.
The last identity holds 
(and the right hand side is well defined)
by (\re{T0}) since $\xi\hat\phi^0(0,\xi)\in L^2(-1,1)$ due to (\re{rho2})
with $G^0\in L^2\otimes\R^3$. Finally, 
\be\la{GaD4} \langle\hat \phi^0,\widehat{\tau\ov{\psi^0} } \rangle 
= \langle\phi^0,\tau\ov{\psi^0} \rangle=\int \phi^0(\x)\ov{\tau}(\x)\psi^0(\x)d\x
\ee
by an obvious extension of the Parseval--\allowbreak Plancherel identity.\end{proof} 
\medskip 
 
Using (\re{spg}), we can rewrite (\re{GaD2}) as the variational identity similar to (\re{3GaD22}):
\be\la{GaD22} 
\langle -\fr{\h^2}{2\cm}\De\psi^0 
- e 
\phi^0\psi^0, \tau \rangle=0,\qquad \tau\in\ccT^0. 
\ee 
 
\subsection{The Schr\"odinger equation}

 Now we prove the Schr\"odinger equation (\re{LPS1w}) with $d=2$.
\bl\la{lse2}
$\psi^0$ is the eigenfunction of the 
Schr\"odinger operator:
\be\la{Hpsi} 
H\psi^0=\lam\psi^0,
\ee 
where 
$\lam\in\R$. 
\el
\begin{proof}
 This equation  with $\lam\in\C$ follows as in Lemma \re{lse}.
It remains to verify that $\lam$ is real. 
Our plan is standard: to multiply 
(\re{Hpsi}) by $\psi^0$ and to integrate. 
{\it Formally}, we would obtain 
\be\la{Hpsi2} 
\langle H\psi^0,\psi^0\rangle = 
\lam\langle\psi^0,\psi^0\rangle. 
\ee 
However, it is not clear that 
the left-hand side is well defined and real since the potential $\phi^0(\x)$
can grow by (\re{wfi2}).

To avoid this problem, we  multiply by a  function
$\psi_\ve\in H^1$ with compact support, where $\ve>0$, and 
$\Vert\psi_\ve-\psi^0\Vert_{H^1}\to  0$ as $\ve\to 0$. Then 
\be\la{Hpsi3} 
\langle H\psi^0,\psi_\ve\rangle = 
\lam\langle\psi^0,\psi_\ve\rangle, 
\ee 
and the right-hand side converges to  the one of  (\re{Hpsi2}) as $\ve\to 0$. 
Hence, the left-hand sides also converge.  
 In  detail, 
\be\la{Hpsi4} 
\langle H\psi^0,\psi_\ve\rangle = 
-\fr{\h^2}{2\cm}\langle \De\psi^0,\psi_\ve\rangle+ 
\langle\phi^0\psi^0,\psi_\ve\rangle. 
\ee 
For the middle term, the limit exists and is real. 
Therefore, identity (\re{Hpsi3}) 
implies that the last term is also converging, and hence
it remains to 
make its  limit  real by 
a~suitable choice of approximations $\psi_\ve$. 
We note that 
\beqn\la{cond2} 
\langle\phi^0\psi^0,\psi_\ve\rangle=\lim_{\de\to 0} 
\langle\phi^0\psi_\de,\psi_\ve\rangle 
=\lim_{\de\to 0} 
\langle\phi^0,\ov\psi_\de\psi_\ve\rangle, 
\eeqn 
since $\phi^0\in H^2_{\rm loc}(T_2)\subset C(T_2)$. 
Hence,  we can set 
\be\la{appr} 
\psi_\ve(\x)=\chi(\ve x_3)\psi^0(\x). 
\ee 
where $\chi$ is a real function from $C_0^\infty(\R^3)$ with $\psi(0)=1$. 
Now the functions $\ov \psi_\de(\x)\psi_\ve(\x)$ are real 
for all $\ve,\de>0$. It remains to note that the potential $\phi^0(\x)$ 
is also real by Lemma \re{lP2}.\end{proof}
 \medskip
 
 This lemma implies  equation (\re{LPS1w}). Therefore, $\psi^0\in H^2_{\rm loc}(T_2)$ since $\phi^0\in C(T_2)$.
 Theorem \re{tgs2} ii) is proved.

\subsection{Smoothness of ground state} 
  We have proved that 
  $\psi^0\in H^2_{\rm loc}(T_2)$ under conditions (\re{roL1}) and (\re{rous5}). 
   Using the Schr\"odinger equation (\re{LPS1w}) 
  we can improve the smoothness
 of $\psi^0$ strengthening  the condition (\re{roL1}).
Namely, let us assume that
\be\la{rpem} 
\rho_1^{\rm per}\in C^\infty(T_2).
\ee 
For example,  (\re{rpem}) holds if $\rho_1\in  \cS(\R^3)$,
where $\cS(\R^3)$ is the Schwartz space of test functions.

\bl\la{lpp2} 
Let condition \eqref{rpem} hold, and 
$\psi^0\in H^2_{\rm loc}(T_2)$, $\phi^0\in H^2_{\rm loc}(T_2)$ is a solution to equations 
\eqref{LPS1w}--\eqref{LPS3g} with $d=2$. Then 
the functions $\psi^0,\phi^0$ are smooth.

\el 
The proof is similar to the one of Lemma \re{lpp}.

 
\setcounter{subsection}{0} 
\setcounter{theorem}{0} 
\setcounter{equation}{0} 
 
\section{1D lattice} 
 
The case of a~one dimensional lattice 
$\Ga_1$ is very similar to the 2D case, 
though some of our  constructions and arguments require  suitable 
modifications. 
For $d=1$
we can assume  $\Ga_1=\Z$ without loss of generality and
construct a solution to system   \eqref{LPS1w}--\eqref{LPS3g} 
for the corresponding functions on the `slab' 
 $T_1:=\R^3/\Ga_1=\T^1\times\R^2$ 
with coordinates $\x=(x_1,x_2,x_3)$, where 
$x_1\in \T^1$, and $(x_2,x_3\in\R^2$. 
Now we denote by $H^s$ the  complex Sobolev space on $T_1$, and 
by $L^p$, the  complex Lebesgue space of  functions on $T_1$.

 \subsection{The existence of ground state}
The existence of the ground state follows
by minimizing the energy  
(\re{HamsT}), where the integral is extended 
over $T_1$ instead of $T_3$. 
The neutrality condition 
of type (\re{neu1})
holds for 
$\Ga_1$-periodic states with finite energy, as for $d=2$. 

Again we restrict ourselves by $N=1$, 
so $\ov\x^0=\x^0_1$ can be chosen arbitrary, and we set $\x^0_1=0$.

The energy in the slab $T_1$
is defined by expression similar to (\re{HamsT22}): 
\be\la{HamsT221} 
\E(\psi):= 
\int_{T_1}\Bigl[ 
\fr{\h^2}{2\cm}|\na\psi(\x)|^2 
+ 
\fr12 |\Lam\rho(\x)|^2 
\Bigr]d\x, \qquad \rho(\x):=\si^0(\x) -e|\psi(\x)|^2. 
\ee 
Here $\si^0=\rho_i^{\rm per}\in L^1\cap L^2$ as in  (\re{rrr22}).
Hence, 
\be\la{intro21} 
\ds\int_{T_1} 
\si^0(\x)d\x=eZ_1,~~~~~Z_1>0. 
\ee  
Now 
the Fourier representation for the test functions 
$\vp(x)\in C_0^\infty(T_1)$
is defined by 
\be\la{FR21} 
\vp(\x)= \fr 1{2\pi} 
\sum_{\bk\in \Ga_1^*} e^{-i\bk x_1} 
\int_{\R^2} e^{-i(\xi_1x_2+\xi_2x_3)}\hat \vp(\bk,\xi)d\xi, 
\ee 
where  $\Ga_1^*=2\pi\Ga_1$
and 
\be 
\la{FT21} 
\hat 
\vp(\bk,\xi)=F\vp(\bk,\xi)=\fr 1{2\pi} 
\int_{T_1} e^{i(\bk x_1+\xi_1x_2+\xi_2x_3)} \vp(\x)d\x,\qquad  (\bk,\xi)\in  \Si_1:=\Ga_1^*\times \R^2. 
\ee 
The operator $\Lam=(-\De)^{1/2}$ is defined for $\vp\in L^1\cap L^2$ by 
the same formula (\re{Lam})
provided the quotient belongs to $L^2(\Si_1)$. This implies 
\be\la{021} 
\hat \vp(0,0)=0. 
\ee 
For $\psi\in H^1$ with finite energy (\re{HamsT221}) 
we have $\Lam\rho\in L^2(\Si_1)$, and 
hence, (\re{021}) with  $\vp=\rho$ implies 
the neutrality condition 
(\re{neu10}) with $d=1$: 
\be\la{neu21} 
\hat\rho(0,0)=\int_{T_1} \rho(\x)d\x= 
\int_{T_1} [\si^0(\x) -e|\psi(\x)|^2]d\x=0. 
\ee 
Now (\re{intro21}) gives
\be\la{M21} 
\int|\psi(\x)|^2d\x=Z_1.
\ee 
Thus, the finiteness of the Coulomb energy 
$\Vert\Lam\rho\Vert^2$ prevents the electron charge from escaping to infinity, 
as in 2D case.

Finally, the Fourier transform $F:\psi\mapsto\hat\psi$ 
is a unitary operator from $L^2(T_1)$ to $L^2(\Si_1)$. Hence, 
energy (\re{HamsT22}) reads 
\be\la{HamsT22F1} 
\E(\psi)= 
\sum 
_{\bk\in \Ga_1^*} 
\int_{\R^2}\Bigl[ 
\fr{\h^2}{2\cm}(\bk^2+\xi^2)|\hat\psi(\bk,\xi)|^2 
+ 
\fr12 
\fr {|\hat\rho(\bk,\xi)|^2} {\bk^2+\xi^2} 
\Bigr]d\xi. 
\ee 
 
\bd 
$M_1$ denotes the set of  
$\psi\in H^1$ satisfying the neutrality condition (\re{M21}). 
\ed 
 
We note that 
(\re{intro21}) can be written as
$\hat\rho^{\rm per}_1(0)-eZ_1=0$. 
We assume  moreover, 
\be\la{rous51} 
\mbox{\bf Condition III.}\qquad
\fr{\hat\rho^{\rm per}_1(0,\xi)-eZ_1}{|\xi|}\in L^2(D),\qquad
D:=\{\xi\in\R^2: |\xi|\le 1\} 
\qquad
\ee 
similarly to (\re{rous5}).
For example, this condition holds, 
provided that  
\be\la{rous5if1} 
\int_{\R^3}(1+|x_2|+|x_3|)|\rho_1(\x)|d\x <\infty.
\ee 
The third main result of this chapter is the following.
 
 \bt\la{tgs1} 
Let conditions \eqref{roL1} and \eqref{rous51} hold,  and $N=1$. Then 
\medskip\\
i) There exists 
$\psi^0\in M_1$ with
\be\la{U0min1} 
\E(\psi^0)=\inf_{\psi\in M_1}\E(\psi).
\ee 
ii) Moreover, $\psi^0\in H^2_{\rm loc}(T_1)$ and
satisfies equations \eqref{LPS1w}--\eqref{LPS3g} with $d=1$,
where the
potential $\phi^0\in H^2_{\rm loc}(T_1)$ is real, $\x^0_1=0$, 
 and $\om^0\in\R$. 
\medskip\\
iii) The following bound holds
\be\la{wfi1} 
|\phi^0(\x)|\le 
C(1+|x_2|+|x_3|)^{1/2},\qquad\qquad \x\in T_2.
\ee 
\et

The proof is similar to the one of Theorem \re{tgs2}.
As in 2D case, we obtain 
$\psi^0\in M_1$ as a minimizer for the energy (\re{HamsT221}).
The  potential 
$\phi^0$ can be constructed by a modification of  Lemma \re{lP2}, see Appendix below.
 \medskip

 Finally, next lemma follows similarly to Lemma \re{lpp}.

 \bl\la{lpp1} 
 The functions $\psi^0,\phi^0$ are smooth 
under condition 
\be\la{rpem1} 
\rho_1^{\rm per}\in C^\infty(T_1).
\ee 

\el

 \subsection{The bound  for the potential} 
 We start with obvious modifications of the proof of Lemma \re{lP2}.
 Namely, 
 the potential $\phi^0(\x)$ for the 1D lattice 
 satisfies the equation of type (\re{rho2}) with 
 \be\la{G0}
 G^0:= 
-  iF^{-1}\fr{(\bk,\xi)}{\bk^2+\xi^2}\hat\rho^0(\bk,\xi)
 \in L^2(T_1), \qquad \rot G^0(\x)\equiv 0.
 \ee
 We use  the splitting of type  (\re{splitt}), and
 respectively, the solution splits as $\phi^0=\phi_1+\phi_2$. 
The second solution 
$\phi_2\in H^2$  as in the proof of Lemma \re{lP2}.
Hence, $\phi_2$
is bounded continuous function on $T_1$ by the 
Sobolev embedding theorem.
 
 On the other hand, the analysis of the first solution needs some 
 modifications.
Now $G_1(\x)=g_1(x_2,x_3)\in L^2(\R^2)\otimes \R^2$
 is the real vector field,
and $\supp \hat g_1\subset \{\xi\in\R^2: |\xi|\le 1\}$.
Therefore,
$g_1$ is the smooth function, and 
\be\la{Dphi1}
 \De \phi_1=\na\cdot g_1\in L^2(\R^2), \qquad \rot g_1(\x)\equiv 0.
 \ee
  Respectively, the solution to 
$\na \phi_1= g_1$
is given by 
the contour integral
\be\la{pott} 
\phi_1(\x)=\int_0^\x g_1(\y)d\y+C,\qquad\qquad \x\in \R^2,
\ee 
which does not depend on the path in $\R^2$. This solution is real and smooth.
\medskip

We still should prove the estimate (\re{wfi1}).
We will 
deduce it from the corresponding 
estimate 'in the mean'. 
Let us denote the circle $B:=\{\x\in\R^2:|\x|<1\}$.

\bl\la{lpot}
For any unit vector 
$\e\in\R^2$ 
\be\la{esm}
\Vert\phi_1 \Vert_{L^2(B+\e R)}\le C(1+R)^{1/2},\qquad \qquad R>0.
\ee
\el
\begin{proof}  First, (\re{pott}) implies that 
\be\la{pot2} 
\phi_1(\x+\e R)-\phi_1(\x)=\int_0^R g_1(\x+t\e)dt, \qquad\qquad \x\in \R^2
\ee 
for any $R\in\R$. 
Now
the Cauchy-Schwartz inequality implies that
\be\la{pot23} 
|\phi_1(\x+\e R)|^2\le C_1+2R\int_0^R |g_1(\x+t\e)|^2dt, \qquad\qquad \x\in B
\ee 
since  the function $\phi_1$ is bounded in $B$.
Finally, averaging over $\x\in B$, we get 
\be\la{pot24} 
\int_B|\phi_1(\x+\e R)|^2d\x\le C_1|B|+2R\int_0^R \int_B|g_1(\x+t\e)|^2d\x \5 dt\le C_1+C_2R\Vert g_1\Vert_{L^2(\R^2)}^2.
\ee 
Hence, 
 (\re{esm})
is proved.\end{proof}
\medskip

Now  (\re{wfi1}) follows from the Sobolev embedding theorem:
\be\la{sapr}
\max_{\x\in B+\e R}|\phi_1(\x)|\le C_3   \Vert\phi_1 \Vert_{H^2(B+\e R)}\le 
C_4[ \Vert\De\phi_1 \Vert_{L^2(B+\e R)}+ \Vert\phi_1 \Vert_{L^2(B+\e R)})]\le C(1+R)^{1/2}
\ee
since $\De \phi_1\in L^2(\R^2)$ by (\re{Dphi1}). 
\medskip

\br
Our estimate  (\re{wfi1}) seems to be  far from optimal since
 the potential of uniformly charged line grows logarithmically  with the distance,
One could expect an optimal estimate 
$$
|\phi^0(\x)|\le C[\log(2+|x_2|+|x_3|)]^{1/2}
$$
in the case $\na\phi^0\in L^2$
due to the example
$
\phi(\x)=[\log(2+|x_2|+|x_3|)]^{1/2-\ve}
$
with $\na \phi(\x)\in L^2$ for  $\ve>0$.

\er

\chapter{Linear Stability}

\centerline{Abstract}
\medskip

In this chapter
 we consider the
Schr\"odinger--Poisson--Newton equations for 
infinite
crystals
with one ion per cell.
We linearize this dynamics at the periodic minimizers of energy per cell (such minimizers were constructed in previous chapter)
and
introduce he corresponding `Jellium' and the `Wiener' conditions on the  ion charge densities that ensures
the stability of the linearized dynamics.
The conditions are 
suitable (but not identical) versions  of the 
conditions (\re{Wai}) and (\ref{W1}) respectively, introduced in the context
of finite crystals.

Our main result
is the {\it energy positivity}
for the
Bloch generators
of the
linearized dynamics  under 
these conditions.

The Bloch generators  are
nonselfadjoint (and even nonsymmetric) Hamiltonian operators.
We diagonalize these generators using our theory of
spectral resolution of the Hamiltonian ope\-rators {\it with positive definite energy}
\ci{KK2014a,KK2014b}.
The stability of the linearized crystal dynamics is established using this  spectral resolution which 
is a special 
version of the Gohberg--Krein--Langer theory of selfadjoint 
operators in the Hilbert spaces with indefinite metric \ci{GK,KL1963,L1981}.

In the next chapter, we estabilsh the dispersive decay in suitable norms 
for the linearised dynamics.

\section{Introduction}
In this chapter, we analyse 
the dynamic stability of a~crystal periodic minimizer
of energy per cell
in linear approximation for the simplest Schr\"o\-din\-ger--Poisson model.
The periodic minimizers for this model were constructed in 
the previous chapter.

We consider infinite crystals with one ion per cell.
The electron cloud is described by  the one-particle Schr\"odinger equation;  the ions are looked upon as  classical particles
that corresponds to the Born--Oppenheimer  approximation.
The ions interact with the electron cloud via
the scalar potential, which  is a~solution to the corresponding Poisson equation.

This model does not respect the Pauli exclusion principle for electrons.
Nevertheless, it provides a convenient framework
to introduce suitable functional tools that might be instrumental for physically
more realistic models (the Thomas--Fermi, Hartree--Fock, and second quantized models).
In particular, we find a novel 
stability criterion \eqref{4-W1}, \eqref{4-Wai}.
\medskip

We denote by
$\sigma(x)\in L^1(\R^3)$ the charge density of one ion,
\be\la{4-ro+}
\int_{\R^3} \sigma(x)dx=eZ>0,
\ee
where $e>0$ is the elementary charge.
We assume througout this chapter that
\be\la{4-L123i}
\langle x\rangle^4\si\in L^2(\R^3),\qquad (\De-1)\si\in L^1(\R^3).
\ee
We consider the cubic
lattice  $\Ga= \Z^3$ for the simplicity of notations. 
Let $\psi(x,t)$ be the wave function of the electron field,
$q(n,t)$ denote the ions displacements,
and
$\phi(x)$ be the electrostatic  potential generated by the ions and electrons.
We assume that $\hbar=c=\cm=1$, where $c$ is the speed of light and $\cm$ is the electron mass.
The coupled Schr\"odinger--Poisson--Newton equations  read 
\beqn\la{4-LPS1}
i\dot\psi(x,t)\!\!&=&\!\!-\fr12\De\psi(x,t)-e\phi(x,t)\psi(x,t),\qquad x\in\R^3,
\\
\nonumber\\
-\De\phi(x,t)\!\!&=&\!\!\rho(x,t):=\sum_n\sigma(x-n-q(n,t))-e|\psi(x,t)|^2,\qquad x\in\R^3,
\la{4-LPS2}
\\
\nonumber\\
M\ddot q(n,t)
\!\!&=&\!\!-\langle\nab\phi(x,t),\sigma(x-n-q(n,t))\rangle,
\qquad n\in\Z^3.
\la{4-LPS3}
\eeqn
Here the
brackets
 stand for the Hermitian scalar product in the Hilbert
space $L^2(\R^3)$ and for its various extensions, the  series (\ref{4-LPS2}) converges in
a suitable sense, and $M>0$.
All the derivatives here and below are understood in the sense of distributions.
These equations can be  {\it formally} written as a~Hamiltonian system
\be\la{4-HSi}
i\dot \psi(x,t)=\pa_{\ov \psi(x)}\cH,
~~~
\dot q(n,t)=\pa_{p(n)}\cH,
~~~
\dot p(n,t)=-\!\pa_{q(n)}\cH,
\ee
where $\pa_{\ov z}:=\fr12(\pa_{z_1}+i\pa_{z_2})$ with $z_1=\rRe z$ and $z_2=\rIm z$, and 
the {\it formal} Hamiltonian  functional is
\be\la{4-Hfor}
 \cH(\psi,q,p)=\fr12\int_{\R^3}[|\na\psi(x)|^2+\rho(x)G\rho(x)]dx+\sum_{n} \fr{p^2(n)}{2M},
\ee
where
$ q:=(q(n): ~n\in\Z^3)$, $p:=(p(n): ~n\in\Z^3)$,  $\rho(x)$ is defined
similarly to
(\ref{4-LPS2}), and $G:=(-\De)^{-1}$.
\medskip

The system  (\ref{4-LPS1})--(\ref{4-LPS3})  admits
ground states that is 
$\Gamma$-periodic solutions  of type
\be\la{4-gr}
\psi^0(x)e^{-i\om^0 t}~,~~~ \phi^0(x)~,~~~~q^0(n)=q^0~~{\rm and}~~p^0(n)=0~~~~
{\rm for}~~ n\in\Z^3,
\ee
which are minimizers of  energy per cell (\ref{HamsT}):
\be\la{4-HamsT} 
E(\psi)\!:=\!\! 
\int_{\Pi}\Bigl[ 
\fr{\h^2}{2\cm}|\na\psi(\x)|^2 
+ 
\fr12\phi(\x)\rho(\x)\Bigr]d\x ,\qquad \phi(\x):=(-\De)^{-1}\rho, 
\ee
where 
$(-\De)^{-1}\rho$ is defined by (\ref{Fou2}), and
 $\Pi:=[0,1]^3$.

Such periodic minimizers
were constructed in previous chapter  for general lattice with several ions per cell. In  our case
the ion position  $q^0\in\R^3$ can be chosen arbitrarily, and we set
$q^0=0$ everywhere below. The corresponding ion charge density reads
\be\la{4-ro+2}
\si^0(x):=
\sum_{n}\si(x-n).
\ee
We will see that the periodic minimizers can be stable 
depending on the choice of the ion density $\sigma$. 
However,
we only study very special densities 
$\sigma$ satisfying some conditions discussed below.
Our first basic assumption coincides with (\ref{Wai}),
\be\la{4-Wai}
 ~~~~~~~\mbox{\bf The Jellium Condition:}~~~~~~~~~~~~ \ti\si(2\pi m)=0,\quad m\in\Z^3\setminus 0.
~~~~~~~~~~~~~~~~~~~~~~~~~~~~~~~~~~~~~~~~~~~~~~~~~~~~~~~~~
\ee
This condition  immediately implies that
the periodized ions charge density (\ref{4-ro+2})
is a positive constant everywhere in space, and
\be\la{persi}
\si^0(x)\equiv eZ,\qquad x\in\R^3
\ee
by (\ref{4-ro+}).
This identity  implies that
under condition (\ref{4-Wai})  there exist the ground states
\be\la{4-ppo}
\psi^0(x)\equiv \sqrt{Z} e^{i\theta},\qquad
\phi^0(x)\equiv 0,\qquad\om_0=0
\ee
with  $\theta\in[0,2\pi]$,
which are stationary solutions of
 (\ref{4-LPS1})--(\ref{4-LPS3}) 
 and correspond to the minimal zero 
energy  per  cell  (\ref{4-HamsT})
since $\rho(x)\equiv 0$,
so, 
the ion and electron densities cancel each other.
The simplest example of such a 
$\sigma$ is a constant over the unit cell of a given lattice, which is what physicists 
usually call Jellium \cite{GV2005}.
Here we study this model in the rigorous context of the Schr\"odinger-Poisson equations.  
\medskip

In this chapter, we prove
the stability of  the
{\it formal linearization}
of the nonlinear  system (\ref{4-LPS1})--(\ref{4-LPS3})
at the periodic minimizer (\ref{4-ppo}).
The system is $U(1)$-invariant, that 
is for any solution $(\psi, q,p)$,
the function
$(e^{i\theta}\psi, q,p)$
with $\theta\in\R$
 is also a solution. Hence,
it suffices to consider only 
 the case $\theta=0$.
Substituting
\be\la{4-lin2i}
 \psi(x,t)=\psi_0(x)+\Psi(x,t)
\ee
into the nonlinear equations (\ref{4-LPS1}), (\ref{4-LPS3})
with $\phi(x,t)=G\rho(x,t)$,
we {\it formally} obtain  the linearized equations 
\be\la{4-LPS1Li}
\left\{
\!\!\!\!\!
\ba{rcl}
\!\!\!\! &\!\!\!\!&\!\!\!\!i\dot\Psi(x,t)=-\fr12\De\Psi(x,t)\Psi(x,t)-e\psi_0(x) G\rho_1(x,t),\quad x\in\R^3\\\\
\!\!\!\! &\!\!\!\!&\!\!\!\!M\dot q(n,t)=p(n,t),
\qquad \dot p(n,t)=-\langle\nab G\rho_1(t), \sigma(x-n)\rangle,
\qquad n\in\Z^3 
\ea\right|.
\ee
Here
$\rho_1(x,t)$ is the linearized charge density
\be\la{4-ro1i}
 \rho_1(x,t)=-\sum_n q(n,t)\cdot\na
 \si(x-n) -2e\psi_0(x)\,\, \rRe\Psi(x,t).
\ee
The system (\ref{4-LPS1Li}) is linear over~$\R$, but it is not complex linear.
This is due to the  last term in (\ref{4-ro1i}),  which appears from
the linearization of the term  $|\psi|^2=\psi\ov\psi$ in (\ref{4-LPS2}).
However, we need the complex linearity for the application of the spectral theory.
That is  why we will consider below the complexification  of system (\ref{4-LPS1Li})
by writing it in the variables
$\Psi_1(x,t):=\rRe\Psi(x,t),\Psi_2(x,t):=\rIm \Psi(x,t)$.
Then (\ref{4-LPS1Li})  can be written as
\be\la{4-JDi}
\dot Y(t)=AY(t),\qquad
 A=\left(\ba{ccrl}
 0   &  H^0 &  0  & 0\medskip\\
-H^0-2e^2\psi_0 G\psi_0 &   0  & -S  & 0\\
        0                         &           0                        &   0    &   M^{-1}\\
     -2S^{\5*}                   &              0            &  -T    &  0\\
\ea\right),
\ee
where we denote
$Y(t)=(\Psi_1(\cdot,t),\Psi_2(\cdot,t),q(\cdot,t),p(\cdot,t))$,
$H^0:=-\fr12\De$,
the operators $S$ and $T$ correspond to
matrices
(\ref{4-S}) and (\ref{4-T}), respectively.
The Hamiltonian representation (\ref{4-HSi}) implies that
(cf. (\ref{E''}) and (\ref{E''N}))
\be\la{4-AJB}
A=JB,\qquad B=
\left(\ba{cccl}
 2H^0+4e^2\psi_0 G\psi_0 & 0 & 2S & 0
 \medskip\\
 0 & 2H^0  &0 & 0\medskip\\
 2S^{\5*}  &    0  &   T    & 0  \\
      0      &    0            &   0    &  M^{-1} \\
\ea\right),\qquad
J=\left(\ba{cccc}
0  & \fr12 &  0 & 0\\
-\fr12 & 0 &  0 & 0\\
0 & 0 &  0 & 1\\
0 & 0 & -1 & 0
\ea\right).
\ee
We show that the generator $A$ is densely defined
in the Hilbert space
\be\la{4-cX}
{\cX^0}:=L^2(\R^3)\oplus L^2(\R^3)\oplus\R^3\oplus\R^3.
\ee

The linear system (\ref{4-JDi}) is Hamiltonian with the Hamiltonian functional which is the quadratic form
\be\la{Ham0}
\cH^0(Y)=\fr12\langle Y,BY\rangle=
\fr12\int[|\na\Psi(x)|^2+\rho_1(x)G\rho_1(x)]dx+
\sum_n\fr {p^2(n)}{2M},\qquad Y=(\Psi_1,\Psi_2,q,p).
\ee

Our main result is the stability of the linearized system (\ref{4-JDi}):
for any initial state of finite energy there exists a unique  global solution which is bounded in the energy norm.
\medskip

Obviously, 
the generator $A$
commutes with  translations by vectors from $\Ga$.
Hence, the  equation
\eqref{4-JDi} can be reduced with the help of the Fourier--Bloch--\allowbreak Gelfand--\allowbreak Zak
transform
(\ref{4-YPi})
to  equations
with the corresponding Bloch generators  $\ti A(\theta)=J\ti B(\theta)$, given by (\ref{4-tiA}),
which depend on the parameter $\theta$ from the Brillouin zone
 $\Pi^*:=[0,2\pi]^3$, and
\be\la{4-hess2i}
 \ti B(\theta) =\!\left(\!\ba{cccl}
 2\ti H^0(\theta)+4e^2\psi_0 \ti G(\theta)\psi_0&  0 & 2\ti S(\theta)&0
 \medskip\\
 0 &2\ti H^0(\theta)&  0 &0
\medskip\\
 2\ti S^{\5*}(\theta) &   0
 &   \hat T(\theta)             & 0  \\
      0    &    0            &   0 &  M^{-1} \\
\ea\!\right),\qquad
\theta\in\Pi^*
\setminus\Ga^*,
\ee
 where $\Ga^*:=2\pi\Z^3$, and
$\ti H^0(\theta):=-\fr12(\na-i\theta)^2$.
Further, $\ti G(\theta)$ is the inverse of the operator
$(i\na+\theta)^2:H^2(T^3)\to L^2(T^3)$. Finally, $\ti S(\theta)$ and $\hat T(\theta)$
are defined, respectively, by (\ref{4-tiHS}) and  (\ref{4-K3}).

The operator $\ti B(\theta)$ is selfadjoint in the Hilbert space  ${\ti\cX^0}(T^3)$ with the domain
 ${\ti\cX^2}(T^3)$, where we denote
\be\la{4-XTs}
{\ti\cX^s}(T^3):=H^s(T^3)\oplus H^s(T^3)\oplus \C^3\oplus \C^3,\qquad  T^3:=\R^3/\Ga
\ee
for $s\in\R$; its spectrum is discrete.
 However,
the operator $A$ is not selfadjoint and even not symmetric in $\ti\cX^0$ -- this a typical situation in the linearization of $U(1)$-invariant
nonlinear equations  \ci[Appendix B]{KK2014a}.
Respectively,  the  Bloch generators
$\ti A(\theta)$ are not  selfadjoint in ${\ti\cX^0}(T^3)$

The main crux here  is that
we cannot apply the von Neumann
spectral theorem to the nonselfadjoint generators $A$ and $\ti A(\theta)$.
We solve this problem by applying our spectral theory of abstract Hamiltonian operators
with positive energy \ci{KK2014a,KK2014b}.
This is why  we need
the positivity of the energy operator $\ti B(\theta)$: for $\ti Y\in{\ti\cX^2}(T^3)$
\be\la{4-Hpos2k}
\cE(\theta,\ti Y):=\langle \ti Y, \ti B(\theta)\ti Y\rangle_{{\ti\cX^0}(T^3)}\ge \vka(\theta)\Vert \ti Y\Vert_{{\ti\cX^1}(T^3)}^2, \quad \mbox{\rm where }\,\, \vka(\theta)>0\,\,\,\, \mbox{\rm  for a.e.}~~\theta\in
\Pi^*\setminus\Ga^*
\ee
and the brackets  denote the scalar product in ${\ti\cX^0}(T^3)$. Equivalently,
\be\la{4-Hpos2}
\ti B_0(\theta):=\inf\,\,\spec \ti B(\theta)>0\qquad \mbox{\rm  for a.e.}~~\theta\in
\Pi^*\setminus\Ga^*.
\ee
The main result of this chapter is  the positivity
\eqref{4-Hpos2}
for the ions charge  densities
 $\si$ satisfying the Jellium condition (\ref{4-Wai}) and
 also
\be\la{4-W1}
\mbox{\bf The Wiener Condition:}~~~~~~  \Si(\theta):=\sum_m\Big[
 \fr{\xi\otimes\xi}{|\xi|^2}|\ti\si(\xi)|^2\Big]_{\xi=2\pi m+\theta}>0~
 \quad  ~~{\rm for\ a.e.}~~\theta\in \Pi^*\setminus\Ga^*,~~~~~~
\ee
where the series converges by (\ref{4-L123i}).
Equivalently,
\be\la{4-W12}
\Si_0(\theta)>0~~~~~~~~~
 \quad  ~~{\rm for\ a.e.}~~\theta\in \Pi^*\setminus\Ga^*,~~~~~~~~~~~~~~
 ~~~~~~~~~~~~~~~~~~~~~~~~
\ee
where  $\Si_0(\theta)$
 is the minimal eigenvalue of the matrix
$\Si(\theta)$.
This condition is
 an analog of the Fermi Golden Rule for crystals.
It is easy to construct examples of densities
 $\si(x)$ satisfying conditions \eqref{4-W1} and \eqref{4-Wai}.

\begin{example}\la{4-ex}
 \eqref{4-W1} holds for
 $\si$ satisfying (\ref{4-L123i})
if
\be\la{4-Wex}
\ti\si(\xi)\ne 0\qquad ~~\mbox{\rm for \ a.e.}~~\xi\in\R^3.
\ee

\end{example}

\begin{example}\la{4-ex2}
Let us define  the function $s(x)$ for $x\in\R$ by its Fourier transform
$\ti s(\xi):=\ds\fr{2\sin\ds\fr\xi2}{\xi}e^{-\xi^2}$, and set
\be\la{4-exW}
\si(x):=eZ s(x_1)s(x_2)s(x_3),\qquad x\in\R^3.
\ee
Then $\si(x)$ is a holomorphic function of $x\in\C^3$ satisfying conditions
\eqref{4-W1},  \eqref{4-Wai}, (\ref{4-ro+}), (\ref{4-L123i} ), and besides,
\be\la{4-rd}
 |\pa^\al \si(x)|\le C(\al,a)e^{-a|x|},~~~~x\in\R^3,
\ee
for any $a>0$ and $\al$ by the Paley--Wiener theorem.
\end{example}

 We prove the positivity \eqref{4-Hpos2}  in Sections \ref{sLD}--\ref{sEP}.
 In Section \ref{4-sred}
 we apply the positivity 
to give a meaning to the associated linearized dynamics.

\medskip

We prove \eqref{4-Hpos2} with
\be\la{4-vkat}
\ti B_0(\theta)\ge
\ve d^4(\theta)\Si_0(\theta),\qquad
\theta\in\Pi^*\setminus\Ga^*,
\ee
where $\ve>0$ is sufficiently small and  $d(\theta):=\dist(\theta,\Ga^*)$.
This implies that $\spec B\subset [0,\infty)$.
Moreover, we show in Theorem \ref{4-tpose} ii) that
\be\la{4-vkat2}
\inf\spec\ti B_0(\theta)\le \Si_0(\theta),\qquad
\theta\in\Pi^*\setminus\Ga^*.
\ee
This inequality implies that $0\in \spec B$. Indeed,
the conditions (\ref{4-W1}) and  \eqref{4-Wai} imply
that $\Si(\theta)$ is a continuous $\Ga^*$-periodic function, which admits
the asymptotics
\be\la{4-ask}
\Si(\theta)\sim \fr{\theta\otimes\theta}{|\theta|^2}\ti\si(0)+\cO(|\theta|^2),\qquad \theta\to 0.
\ee
However,
the matrix $\theta\otimes\theta$ is degenerate, and hence,   $\Si_0(\theta)\to 0$ as
$\theta\to 0$
 by the asymptotics (\ref{4-ask}).
Therefore, the positivity (\ref{4-Hpos2})
breaks down   at  $\theta\in\Ga^*\cap \Pi^*$ by (\ref{4-vkat2}).
Examples \ref{4-ex} and \ref{4-ex2} demonstrate that the positivity  can also break down at
some  other points and submanifolds of $\Pi^*$ that depend on the ion charge density $\si$.
\medskip

Let us comment on our approach.
The structure of the periodic minimizer (\ref{4-ppo}) under condition
\eqref{4-Wai}  seems trivial. However, even in this case the proof
of the positivity \eqref{4-Hpos2} is not straightforward, since the operators
$\ti S(\theta)$ and $\hat T(\theta)$
in  $\ti B(\theta)$
depend on the fuctional parameter $\si$.
Our proof of
\eqref{4-Hpos2}
 relies on i) a novel
factorization (\ref{4-fact}) of the
matrix elements of  $\ti B(\theta)$, and ii)
Sylvester-type arguments for matrix operators (see Remark \ref{4-rS}).

We show that the condition \eqref{4-W1} is necessary for the positivity \eqref{4-Hpos2}.
We expect that the  condition  \eqref{4-Wai} is also necessary for the positivity \eqref{4-Hpos2},
however, this is still an open challenging problem.

Finally, the positivity (\ref{4-Hpos2}) allows us to construct the spectral resolution
of $\ti A(\theta)$, which results in the stability for
the linearized dynamics \eqref{4-JDi}. The spectral resolution
is constructed with application of our
spectral theory of abstract Hamiltonian operators \ci{KK2014a,KK2014b}.

In concluzion,
all our methods and results extend obviously to 
 equations (\ref{4-LPS1})--(\ref{4-LPS3})  in the case of
general lattice 
\begin{equation}\la{4-gG}
\Ga=\{n_1a_1+n_2a_2+n_3a_3: (n_1,n_2,n_3)\in\Z^3\},
\end{equation}
where the generators $a_k\in\R^3$ are linearly independent. In this case
 the condition \eqref{4-Wai} becomes 
 \begin{equation}\la{4-Waig}
\ti\si(\ga^*)=0,\quad \ga^*\in\Ga^*\setminus 0,
\end{equation}
where $\Ga^*$ denotes the dual lattice, i.e.,
$
\Ga^*=\{m_1b_1+m_2b_2+m_3b_3: (m_1,m_2,m_3)\in\Z^3\}
$
with $\langle a_k,b_j\rangle=2\pi\de_{kj}$. 
The condition \eqref{4-Waig} 
claryfies the relation between the properties 
of the ions and the resulting crystal geometry.

\br
{\rm
Conditions \eqref{4-Wai}, \eqref{4-Waig} seem 
to be rather restrictive. On the other hand, the distinction between the
ions and electron field is not too sharp, since each ion
contains in itself a number of bonding electrons. Physically,
the ion charge density $\si(x)$ might vary
during the process of the crystal formation
due to interaction with the electron field. Respectively,
one could expect that identities \eqref{4-Wai}, \eqref{4-Waig}  may result from
this process. 
}
\er

This chapter is organized as follows.
In Section 2  we recall our result \ci{K2014} on the existence of a periodic minimizer
In Sections 3--5 we study the Hamiltonian structure
of the linearized dynamics and find a~bound of the energy from below.
In Section 6 we calculate the generator of the
linearized dynamics
in the  Fourier--Bloch representation.
In
Section 7 we prove  the  positivity of the energy.
In Section 8 we apply this positivity to
the stability of the linearized dynamics.
Finally, in  Sections 9 and 10 we establish small charge asymptotics of the periodic minimizer
and  construct examples of negative energy.
Some technical calculations are carried out in Appendices.

\setcounter{equation}{0}
\section{Linearized dynamics}\la{sLD}
Let us calculate the entries of the matrix operator
(\ref{4-JDi})
under conditions (\ref{4-L123i}).
For $f(x)\in C_0^\infty(\R^3)$ the Fourier transform is defined by
\be\la{4-Fu2}
f(x)=\fr 1{(2\pi)^3}\int_{\R^3} e^{-i\xi x}\ti f(\xi)d\xi,\qquad x\in \R^3;
\qquad
\ti f(\xi)=\int_{\R^3} e^{i\xi x}f(x)dx,\qquad \xi\in \R^3.
\ee
The  conditions \eqref{4-L123i} imply that
\be\la{4-L123}
(\De-1)\ti\si\in L^2(\R^3),\qquad \langle\xi\rangle^2\ti\si(\xi)\le \const.
\ee
The system (\ref{4-LPS1Li}) and (\ref{4-ro1i}) imply that
the operator-matrix $A$
 is given by  \eqref{4-JDi}, where
$ S$  denotes the operator with the `matrix' similar to (\ref{S}) and (\ref{SN}) 
\be\la{4-S}
 S(x,n):=e\psi_0 G\na\si(x-n):~~n\in\Z^3,~x\in\R^3.
\ee
Finally, $T$ is the real matrix similar to (\ref{T})
\be\la{4-T}
T(n,n'):=-\ds\langle  G\na\otimes\na\si(x-n'),  \si(x-n) \rangle.
\ee
The operators
$G: L^2(\R^3)\to L^2(\R^3)$ and $S:l^2:=l^2(\Z^3)\otimes \C^3\to L^2(\R^3)$
are not bounded due to the `infrared divergence', see Remark 
\ref{4-cAA*} i).
In the next section, we will construct a dense domain for all these operators.

On the other hand, the
operator $T$  is bounded in view of the following lemma.
Denote by $\Pi$  the primitive cell
\be\la{4-cellPi}
\Pi:=\{(x_1,x_2,x_3):0\le x_k\le 1,~k=1,2,3\}.
\ee
Let us define the Fourier transform on $l^2$ as
\be\la{4-Fu}
\hat q(\theta)=\sum_{n\in\Z^3} e^{in\theta}q(n)\qquad~~{\rm for\  a.e.}~~ \theta\in \Pi^*;\qquad
q(n)=\fr 1{|\Pi^*|}\int_{\Pi^*}   e^{-in\theta}\hat q(\theta)d\theta,~~n\in\Z^3,
\ee
where  $\Pi^*=2\pi \Pi$ denotes the primitive cell of the lattice $\Ga^*$,  the series converging in $L^2(\Pi^*)$.
\begin{lemma}\la{4-lT}
Let conditions (\ref{4-L123i})  hold. Then
the operator $T$ is bounded in  $l^2$.
\end{lemma}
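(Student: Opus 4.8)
The plan is to show that the matrix $T(n,n')$ defined in \eqref{4-T} depends only on the difference $n-n'$, so that $T$ is a convolution operator on $l^2$, and then to compute its symbol $\hat T(\theta)$ explicitly in the Fourier transform \eqref{4-Fu}. Boundedness on $l^2$ is then equivalent to the essential boundedness of $\theta\mapsto\hat T(\theta)$ on $\Pi^*$, so the whole matter reduces to a uniform bound on the symbol.

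First I would rewrite the entry \eqref{4-T} in the Fourier transform \eqref{4-Fu2}. Using $G=(-\De)^{-1}$, the translation covariance $\si(x-n')\mapsto e^{i\xi n'}\ti\si(\xi)$, and the Parseval identity, the bilinear pairing becomes
\be
T(n,n')=-\langle G\na\otimes\na\si(\cdot-n'),\si(\cdot-n)\rangle
=\fr1{(2\pi)^3}\int_{\R^3}\fr{\xi\otimes\xi}{|\xi|^2}|\ti\si(\xi)|^2 e^{i\xi(n-n')}\,d\xi,
\ee
which manifestly depends only on $n-n'$; write $T(n,n')=T(n-n')$. This confirms the convolution structure and identifies $T(m)$ as the Fourier coefficient of the matrix-valued function $\xi\mapsto\fr{\xi\otimes\xi}{|\xi|^2}|\ti\si(\xi)|^2$. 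Folding the integral over $\R^3$ into a sum over the dual lattice by writing $\xi=\theta+2\pi m$ with $\theta\in\Pi^*$ gives the symbol
\be
\hat T(\theta)=\sum_{m\in\Z^3}\Big[\fr{\xi\otimes\xi}{|\xi|^2}|\ti\si(\xi)|^2\Big]_{\xi=\theta+2\pi m}=\Si(\theta),
\ee
the very matrix appearing in the Wiener condition \eqref{4-W1}. Thus $T$ acts as multiplication by $\Si(\theta)$ in the $\hat q$ variable, and $\Vert T\Vert_{l^2\to l^2}=\esssup_{\theta\in\Pi^*}\Vert\Si(\theta)\Vert$.

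It remains to bound $\Si(\theta)$ uniformly. Since $\fr{\xi\otimes\xi}{|\xi|^2}$ is a rank-one orthogonal projection of norm $1$, each summand is bounded by $|\ti\si(\theta+2\pi m)|^2$, so $\Vert\Si(\theta)\Vert\le\sum_{m}|\ti\si(\theta+2\pi m)|^2$. The decay hypothesis \eqref{4-L123i}, which yields $\langle\xi\rangle^2\ti\si(\xi)\le\const$ in \eqref{4-L123}, gives $|\ti\si(\xi)|\le C\langle\xi\rangle^{-2}$, whence $\sum_m|\ti\si(\theta+2\pi m)|^2\le C^2\sum_m\langle\theta+2\pi m\rangle^{-4}$, a series uniformly convergent in $\theta$ since $\sum_m\langle m\rangle^{-4}<\infty$ in three dimensions. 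Therefore $\esssup_\theta\Vert\Si(\theta)\Vert<\infty$ and $T$ is bounded.

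The only subtle point — and the step I would be most careful about — is the passage from the pairing in \eqref{4-T} to the spectral integral: one must justify interchanging $G$ with the second-order derivatives and check that the infrared factor $|\xi|^{-2}$ is harmless here. This is exactly where the Jellium-type vanishing plays no role but the smoothness of $\si$ does: the two gradients supply a factor $\xi\otimes\xi$ that cancels the $|\xi|^{-2}$ singularity of $G$, leaving the bounded projection $\xi\otimes\xi/|\xi|^2$, so no infrared divergence survives in $T$ (unlike in $S$ and $G$, per Remark \ref{4-cAA*}). The integrability near $\xi=0$ then follows from $|\ti\si(\xi)|^2$ being continuous and bounded there, and at infinity from the $\langle\xi\rangle^{-4}$ decay; both are guaranteed by \eqref{4-L123}.
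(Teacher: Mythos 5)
Your proof is correct and follows essentially the same route as the paper: identify $T$ as a convolution, compute the symbol $\hat T(\theta)=\Si(\theta)$, and bound it uniformly via $|\ti\si(\xi)|\le C\langle\xi\rangle^{-2}$ from \eqref{4-L123}, summing $\langle m\rangle^{-4}$ over $\Z^3$. The only cosmetic difference is that you fold the absolutely convergent Fourier integral over $\R^3$ into the Brillouin zone directly (arguably cleaner), whereas the paper obtains the same identity \eqref{4-K3} by invoking the Poisson summation formula with delta functions.
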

\begin{proof}
The operator $T$ reads as the convolution $Tq(n)=\sum T(n-n')q(n')$,
where
\be\la{4-Kn}
T(n)=-\langle G\na\otimes \na\si(x),   \si(x-n) \rangle.
\ee
By the Fourier transform (\ref{4-Fu}), the convolution operator $T$ becomes the
multiplication,
\be\la{4-K22}
\widehat{T q}(\theta)=\hat T(\theta) \hat q(\theta  )\qquad ~~{\rm for\ a.e.}~~\theta\in\Pi^*\setminus\Ga^*.
\ee
By the Bessel-Parseval identity it suffices to check that
the `symbol' $\hat T(\theta)$ is a bounded function. This follows 
from (\ref{4-T})
by direct
calculation. First, we apply the Parseval identity
\beqn\la{4-K3}
\!\!\!\!\!\!\!\!\!\!\!\!\!\!\!\!\hat T(\theta)\!\!\!\!&\!\!=\!\!&\!\!\!
-\sum_n e^{in\theta}\langle  G\na\otimes \na\si(x),\si(x-n)\rangle
=
\fr1{(2\pi)^3}\sum_n e^{in\theta}\langle  \fr{\xi\otimes\xi}{|\xi|^2}
\ti\si(\xi),\ti\si(\xi)e^{in\xi}\rangle
\nonumber\\
\nonumber\\
\!\!\!\!\!\!\!\!\!\!\!\!\!\!\!\!\!\!\!\!&\!\!=\!\!&\!\!\!
\fr 1{(2\pi)^{3}}\langle
\fr{\xi\otimes\xi}{|\xi|^2}\ti\si(\xi), \ti\si(\xi)
\sum_n
e^{in(\theta+\xi)}\rangle
=
\sum_m
\Big[\fr{\xi\otimes\xi}{|\xi|^2}|\ti\si(\xi)|^2\Big]_{\xi=2\pi m-\theta}=\Si(\theta)~,
\quad\theta\in\Pi^*\setminus\Ga^*,
\eeqn
since the  last sum over $n$ equals $\ds|\Pi^*|\sum_m \de(\theta+\xi-2\pi m)$
by the Poisson summation formula \ci{Her1}. Finally, $|\ti\si(\xi)|\le C\langle
\xi\rangle^{-2}$ by (\ref{4-L123}). Hence,
\be\la{4-K4}
\Vert\hat T_1(\theta)\Vert\le \sum_m|
\ti\si(2\pi m-\theta)|^2\le C^2\sum_m\langle m\rangle^{-4}<\infty.
\ee
\end{proof}


\setcounter{equation}{0}
\section{The Hamiltonian structure and the domain}

In this section we study  the domain  of the generator $A$ given by (\ref{4-JDi}) and (\ref{4-AJB}).

\begin{definition}\la{4-dD}
i) $\cS_+:= \cup_{\ve>0} \cS_\ve$, where $\cS_\ve$  is the space of functions
$\Psi\in\cS(\R^3)$ whose Fourier transforms  $\hat\Psi(\xi)$ vanish in the $\ve$-neighborhood of the lattice $\Ga^*$,
\medskip\\
ii) $l_c$ is the space of sequences $q(n)\in R^3$ such that
$q(n)=0$, $n>N$ for some $N$.
\medskip\\
iii)
$  \cD:=\{Y=(\Psi_1,\Psi_2,q,p):  \Psi_1,\Psi_2\in \cS_+,~~~ q, p\in l_c \}.$
\end{definition}
Obviously, $\cD$ is dense in ${\cX^0}$.

\begin{theorem}\la{4-tHam}
Let conditions \eqref{4-L123i}  hold. Then
$B\cD\subset {\cX^0}$ and
$B$ is a symmetric operator on
the domain $\cD$.
\end{theorem}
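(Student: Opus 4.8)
The plan is to verify the two assertions separately: first that $B$ maps $\cD$ into ${\cX^0}$, and then that $\langle BY,Y'\rangle=\langle Y,BY'\rangle$ for all $Y,Y'\in\cD$. Throughout I would use that on the minimizer \eqref{4-ppo} the function $\psi_0\equiv\sqrt Z$ is a real constant (we have fixed $\theta=0$), so that $4e^2\psi_0 G\psi_0=4e^2ZG$ and $S(x,n)=e\sqrt Z\,(G\na\si)(x-n)$ by \eqref{4-S}. Writing $Y=(\Psi_1,\Psi_2,q,p)$, the matrix \eqref{4-AJB} produces $BY$ block by block, and it suffices to control each entry on the test elements of $\cD$.

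For the inclusion $B\cD\subset{\cX^0}$ I would treat each block of \eqref{4-AJB}. The Laplacian terms $2H^0\Psi_j=-\De\Psi_j$ preserve the Schwartz class and the Fourier-support condition, hence remain in $\cS_+\subset L^2(\R^3)$; likewise $\widehat{G\Psi_1}(\xi)=\hat\Psi_1(\xi)/|\xi|^2$ is smooth and rapidly decreasing since $\hat\Psi_1$ vanishes in a neighborhood of $0$, so $ZG\Psi_1\in\cS_+$. The term $Sq$ with $q\in l_c$ is a \emph{finite} sum of translates of $G\na\si$, and the key point is that $G\na\si\in L^2(\R^3)$: its Fourier transform $\fr{i\xi}{|\xi|^2}\ti\si(\xi)$ is square integrable near $\xi=0$ because $\ti\si(0)=eZ$ and $\int_{|\xi|<1}|\xi|^{-2}d\xi<\infty$ in three dimensions, and at infinity because $|\ti\si(\xi)|\le C\langle\xi\rangle^{-2}$ by \eqref{4-L123}. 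For the discrete components, $Tq\in l^2$ by Lemma \re{4-lT}, $M^{-1}p\in l^2$ trivially, and $S^*\Psi_1\in l^2$ follows from a Fourier computation: writing $(S^*\Psi_1)(n)=e\sqrt Z\int\ov{(G\na\si)(x-n)}\Psi_1(x)\,dx=:e\sqrt Z\,F(n)$, one has $\hat F(\xi)=-\hat\Psi_1(\xi)\fr{i\xi}{|\xi|^2}\ov{\ti\si(\xi)}$, the vanishing of $\hat\Psi_1$ near $\Ga^*$ suppresses the singularity at the origin, so $\hat F$ is smooth and rapidly decreasing, its $\Ga^*$-periodization lies in $L^2(\Pi^*)$, and the sampling (Poisson) identity gives $\sum_n|F(n)|^2<\infty$.

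For symmetry I would check the diagonal blocks and the single coupled pair. The blocks $2H^0$ are symmetric by integration by parts (no boundary terms on Schwartz functions), and $ZG$ is symmetric since its symbol $1/|\xi|^2$ is real and even, so $2H^0+4e^2ZG$ is symmetric; the block $T$ is symmetric because by Lemma \re{4-lT} its Fourier symbol is the Hermitian matrix $\Si(\theta)$ of \eqref{4-K3}, and $M^{-1}$ is a real scalar. The only genuinely coupled terms are the cross contributions $\langle 2Sq,\Psi_1'\rangle$ and $\langle 2S^*\Psi_1,q'\rangle$; because $q\in l_c$ is a finite sum, Fubini applies and yields $\langle Sq,\Psi_1'\rangle_{L^2}=\langle q,S^*\Psi_1'\rangle_{l^2}$, which is exactly the adjoint relation matching the $(1,3)$ and $(3,1)$ blocks of \eqref{4-AJB}. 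Collecting the six terms gives $\langle BY,Y'\rangle=\langle Y,BY'\rangle$.

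The main obstacle, and the reason the domain $\cD$ is engineered as it is, is the infrared singularity of $G=(-\De)^{-1}$ at $\xi=0$. The delicate steps are precisely the two $L^2$-bounds involving $G$: the inclusion $G\na\si\in L^2$ survives only because the three-dimensional measure tames the $|\xi|^{-1}$ behaviour forced by $\ti\si(0)=eZ\ne 0$, and the bound $S^*\Psi_1\in l^2$ works only because the Fourier-support condition defining $\cS_+$ deletes a neighborhood of $\Ga^*$, killing the singularity of $\xi/|\xi|^2$ at the origin. Once these two estimates are secured, the remaining verifications reduce to routine integrations by parts and applications of Parseval, and the mild decay hypotheses \eqref{4-L123i} enter only through the pointwise control $\langle\xi\rangle^2\ti\si(\xi)\le\const$ recorded in \eqref{4-L123}.
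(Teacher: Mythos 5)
Your proposal is correct, and its skeleton coincides with the paper's: the paper likewise reduces the theorem to checking that every entry of the matrix \eqref{4-AJB} is well defined on $\cD$ (Lemma \ref{4-lDB}: $H^0\Psi\in L^2$, $G\Psi\in L^2$, $Sq\in L^2$, $S^*\Psi\in l^2$, with the $T$-block handled by Lemma \ref{4-lT}) and then declares the symmetry of the matrix evident from \eqref{4-AJB} — which you verify more explicitly, block by block, including the adjoint pairing $\langle Sq,\Psi_1'\rangle=\langle q,S^*\Psi_1'\rangle$ justified by the finite support of $q\in l_c$. The one genuinely different step is your proof that $S^*\Psi_1\in l^2$. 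The paper starts from the same Fourier representation \eqref{4-conv2} but integrates by parts twice, using $\pa^\beta\ti\si\in L^2$ for $|\beta|\le 2$ (a consequence of \eqref{4-L123i}, recorded in \eqref{4-L123}), to obtain the pointwise decay $|[S^*\Psi](n)|\le C\langle n\rangle^{-2}$ in \eqref{4-conv4}, whose square is summable over $\Z^3$. You instead periodize the integrand over $\Ga^*$ and apply Parseval for Fourier series on $\Pi^*$; this uses only the pointwise bound $\langle\xi\rangle^2\ti\si(\xi)\le\const$ together with the support condition defining $\cS_+$, touches no derivatives of $\ti\si$, and yields the $l^2$ bound directly rather than through pointwise decay — a slightly more economical argument for the same conclusion. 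One small inaccuracy in your wording: you assert that $\hat F$ is \emph{smooth}, which does not follow, since under \eqref{4-L123i} the transform $\ti\si$ has only limited Sobolev regularity; but smoothness is never used in your argument — the rapid pointwise decay of $|\hat F|$ (the Schwartz factor $\ti\Psi_1$ times the factor $|\xi|^{-1}|\ti\si(\xi)|$, which is bounded off the deleted neighborhood of $\Ga^*$) already makes the $\Ga^*$-periodization uniformly convergent, hence continuous and in $L^2(\Pi^*)$, so Parseval applies. With that one word corrected, the proof is complete and fully consistent with the paper.
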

\begin{proof}
Formally the matrix  (\ref{4-AJB}) is symmetric.
The following lemma implies that
$B$ is defined on $\cD$.
\begin{lemma}\la{4-lDB}
i) $H^0\Psi\in L^2(\R^3)$ for $\Psi\in \cS_+$.
\medskip\\
ii)
$G \Psi\in L^2(\R^3)$  and $S^*\Psi\in l^2$  for
$\Psi\in\cS_+$.
\medskip\\
iii) $S q\in L^2(\R^3)$ for $q\in l_c$.
\end{lemma}
\begin{proof}
i) $H^0\Psi(x):=-\fr12 \De\Psi(x)\in L^2(\R^3)$.
\medskip\\
ii) Given a fixed $\Psi\in\cS_+$, we have $\Psi\in\cS_\ve$ with some $\ve>0$, and
\be\la{4-Qp}
 G\Psi=F^{-1}\fr{\ti\Psi(\xi)}{|\xi|^2},
\ee
where $F$ stands for the Fourier transform.
Hence,
 $G\Psi\in L^2(\R^3)$.
\medskip\\
We now consider $S^*\Psi$. Applying (\ref{4-S}), (\ref{4-ppo}), and the Parseval identity, we get
for $\Psi\in\cS_\ve$
\beqn\la{4-conv2}
[S^*\Psi](n)&=&e\psi_0\int \Psi(x)G\na\si(x-n)dx
=e\psi_0\langle\Psi(x),G\na\si(x-n)\rangle
\nonumber\\
\nonumber\\
&=&\frac{ie\psi_0}{(2\pi)^{3}}\int_{|\xi|>\ve}\ti\Psi(\xi)
\fr{\xi\ov{\ti\si}(\xi)e^{-in\xi}}{|\xi|^2}d\xi.
\eeqn
Note that
$\pa^\beta\ti\Psi
\in \cS(\R^3)$
for all $\beta$.
Moreover, $\pa^\beta\ti\si\in L^2(\R^3)$ for $|\beta|\le 2$ by (\ref{4-L123}).
Hence, integrating by parts twice, we obtain
\be\la{4-conv4}
 |[S^*\Psi](n)|\le C\langle n\rangle^{-2},
\ee
which implies that $S^*\Psi\in l^2$.
\medskip\\
iii) Let us  check that $Sq\in L^2(\R^3)$ for $q\in l_c$.
Namely,
\beqn\la{4-SF}
\widetilde{Sq}(\xi)=
e \psi_0 F_{x\to\xi}\sum_n G\na\si(x-n)q(n)
=e\psi_0 \sum_n \fr{\xi\ov{\ti\si}(\xi)e^{-in\xi}}{|\xi|^2}q(n).
\eeqn
Hence, $\widetilde{Sq}\in L^2(\R^3)$ by (\ref{4-L123}).

\end{proof}

This lemma implies that $BY\in{\cX^0}$ for $Y\in \cD$.
The symmetry of $B$ on $\cD$ is evident from (\ref{4-AJB}).
Theorem \ref{4-tHam} is proved. \end{proof}

\bc
The operator $B$
with the domain $\cD$
is nonnegative by (\ref{Ham0}), and hence, 
it admits  a canonical  extension 
 to a selfadjoint operator in $\cX^0$
 by the Friedrichs  theorem \ci{RS2}.
\ec

\brs\la{4-cAA*}
{\rm
i) The `infrared singularity' at $\xi=0$ of the integrands \eqref{4-Qp}, \eqref{4-conv2} 
demonstrates that all operators
$G:L^2(\R^3)\to L^2(\R^3)$, $S^*:L^2(\R^3)\to l^2$, and $S:l^2\to L^2(\R^3)$  are unbounded.
\smallskip\\
ii)
The proof of
Theorem \ref{4-tHam} shows that
$A\cD\subset\cX^0$, and also $A^*\cD\subset\cX^0$, where
the `formal adjoint' $A^*$ is  defined by the identity
\be\la{4-A*}
\langle AY_1,Y_2\rangle=\langle Y_1,A^*Y_2\rangle,\qquad Y_1,Y_2\in \cD.
\ee
}
\ers

\setcounter{equation}{0}
\section{Generator in  the Fourier--Bloch transform}
We reduce the operators  $A$ and $B$ with the help of
the  Fourier--Bloch--Gelfand--Zak transform \ci{DK2005,PST,RS4}.

\subsection{The discrete Fourier transform}
Let us consider a vector
$  Y=(\Psi_1, \Psi_2, q, p)\in {\cX^0}$ and denote
\be\la{4-Yn2}
 Y(n)=(\Psi_1(n,\cdot),\Psi_2(n,\cdot), q(n),p(n))~,~~~~~~n\in\Z^3,
\ee
where
\be\la{4-YP}
\Psi_j(n,y)=
\Psi_j(n+y)~~{\rm for\ a.e.}~~y\in \Pi,\qquad j=1,2.
\ee
Obviously, $Y(n)$ with different $n\in\Z^3$ are orthogonal vectors in ${\cX^0}$, and besides,
\be\la{4-ort}
Y=\sum_n Y(n),
\ee
where the sum converges in ${\cX^0}$. The norms in ${\cY^0}$ and ${\cY^1}$ can be represented as
\be\la{4-nV}
 \Vert Y\Vert_{\cX^0}^2=\sum_{n\in\Z^3}\Vert Y(n)\Vert_{{\cX^0}(\Pi)}^2,\quad\quad
 \Vert Y\Vert_{\cX^1}^2=\sum_{n\in\Z^3}\Vert Y(n)\Vert_{{\cX^1}(\Pi)}^2,
\ee
where
\be\la{4-KV}
 {\cX^0}(\Pi):=L^2(\Pi)\oplus L^2(\Pi) \oplus \C^3\oplus \C^3,\quad\quad
 {\cX^1}(\Pi):=H^1(\Pi)\oplus H^1(\Pi) \oplus \C^3\oplus \C^3.
\ee
Further,
the periodic minimizer (\ref{4-gr}) is invariant with respect to translations of the lattice $\Ga$,
and hence the operator  $A$ commutes with these translations. Namely, (\ref{S}) implies that
\be\la{4-S2}
 S (x,n)=S(x-n,0),
\ee
 Similarly,  (\ref{T})  implies
that $T$ commutes with  translations of $\Ga$.
Hence,
$A$ can be reduced by the discrete Fourier transform
\be\la{4-F}
 \hat Y(\theta)=F_{n\to \theta}Y(n):=\sum\limits_{n\in \Z^3}e^{in\theta}Y(n)
 =(\hat{\Psi}_1(\theta,\cdot),\hat{\Psi}_2(\theta,\cdot),\hat {q}(\theta),\hat{p}(\theta))
 \quad~~{\rm for \ a.e.}~~\theta\in \R^3,
\ee
where
\be\la{4-FPsi}
 \hat \Psi_j(\theta,y)=\sum\limits_{n\in \Z^3}e^{in\theta}\Psi_j(n+y)
 \quad~~{\rm for \ a.e.}~~\theta\in \R^3,\quad ~~{\rm a.e.}~~y\in\R^3.
\ee
The function $\hat Y(\theta)$ is $\Ga^*$-periodic in $\theta$.
The series (\ref{4-F}) converges in $L^2(\Pi^*,{\cX^0}(\Pi))$, since the series
(\ref{4-ort}) converges in ${\cX^0}$.
The inversion formula  is given by
\be\la{4-FI}
  Y(n)=|\Pi^*|^{-1}\int_{\Pi^*}e^{-in\theta}\hat Y(\theta)d\theta
\ee
(cf. (\ref{4-Fu})).
The Parseval--Plancherel identity gives
\be\la{4-PP}
 \Vert Y\Vert_{\cX^1}^2=|\Pi^*|^{-1}\Vert\hat Y\Vert_{L^2(\Pi^*,{\cX^1}(\Pi))}^2,
\qquad
 \Vert Y\Vert_{\cX^0}^2=|\Pi^*|^{-1}\Vert\hat Y\Vert_{L^2(\Pi^*,{\cX^0}(\Pi))}^2.
\ee
The functions $\hat {\Psi}_j(\theta,y)$ are  $\Ga$-quasiperiodic in $y$; i.e.,
\be\la{4-qp}
 \hat{\Psi}_j(\theta,y+m)=e^{-im\theta}\hat{\Psi}_j
(\theta,y), \qquad m\in\Z^3.
\ee

\subsection{Generator in the discrete Fourier transform}
Let us consider $Y\in\cD$ and calculate
the Fourier transform (\ref{4-F}) for $AY$ given by (\ref{4-JDi}).
Using  (\ref{4-T}), (\ref{4-conv2}), (\ref{4-S2}), 
we obtain
\be\la{4-CPFh}
 \widehat{AY}(\theta)=\hat A(\theta){\hat Y}(\theta)\qquad
~~{\rm for \ a.e.}~~\theta\in\R^3\setminus \Ga^*,
\ee
where $\hat A(\theta)$ is a  $\Ga^*$-periodic operator function,
\be\la{4-tiAh}
 \hat A(\theta)=\left(\!\ba{cccl}
 0 & H^0 & 0 &  0\medskip\\
 -H^0-2e^2\psi_0\hat G(\theta)\psi_0  & 0  &
 \hat S(\theta) &  0\\
 0  &  0 &  0 & M^{-1}\\
 -2\hat S^{\5*}(\theta)&0  &-\hat T(\theta)&0\\
\ea\!\right)
\ee
by \eqref{4-JDi}.
Here
\be\la{4-qp4}
\hat G(\theta)\hat\Psi(\theta,y)=\sum_m\fr{\check\Psi(\theta,m)}{(2\pi m+\theta)^2}
e^{-i2\pi m y}\qquad~~{\rm for \ a.e.}~~
\theta\in\R^3\setminus \Ga^*,
\ee
where
\be\la{psich}
\check\Psi(\theta,m)=\int_{T^3}e^{i2\pi m x}\hat\Psi(\theta,x)dx/
\ee
This expression  is well-defined for
$\Psi\in\cS_\ve$, since
\be\la{4-qp5}
\check\Psi(\theta,m)=\ti\Psi(2\pi m+\theta)=0
\quad {\rm for}\quad|2\pi m+\theta|<\ve.
\ee

\begin{lemma}\label{SST}
Let (\ref{4-L123i})
 hold.
Then the operator $\hat S(\theta)$ acts as follows:
\begin{equation}\label{4-S-act}
 \widehat {S q}(\theta)
 =\hat S(\theta)\hat q(\theta),
 \quad {\rm where}\quad \hat S(\theta)=e\psi_0\hat G(\theta)\na\hat\sigma(\theta,y).
\end{equation}
\end{lemma}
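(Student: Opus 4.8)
The plan is to compute the Fourier--Bloch transform of the operator $S$ directly from its real-space matrix $S(x,n)=e\psi_0 G\na\si(x-n)$ given in \eqref{4-S}, exploiting the lattice invariance $S(x,n)=S(x-n,0)$ established in \eqref{4-S2}. First I would take $q\in l_c$ so that all the sums are finite and every manipulation is legitimate, and write $Sq(x)=e\psi_0\sum_n G\na\si(x-n)q(n)$. Applying the discrete Fourier transform $F_{n\to\theta}$ in the lattice variable, together with the already-computed Fourier-transform formula \eqref{4-SF} for $\widetilde{Sq}(\xi)$, the convolution structure turns into multiplication by $\hat q(\theta)$, which yields $\widehat{Sq}(\theta)=\hat S(\theta)\hat q(\theta)$ as an identity in $L^2(\Pi^*,L^2(\Pi))$.

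The substance of the lemma is identifying the symbol $\hat S(\theta)$ as $e\psi_0\hat G(\theta)\na\hat\si(\theta,y)$. The key step is to recognize that summing $G\na\si(x-n)$ against $e^{in\theta}$ produces a $\Ga$-quasiperiodic function whose Fourier--Bloch fibre is exactly the Bloch Green operator $\hat G(\theta)$ defined in \eqref{4-qp4} applied to the Bloch transform of $\na\si$. Concretely, I would write $G\na\si$ in the plane-wave representation $G\na\si=F^{-1}\bigl(i\xi\ti\si(\xi)/|\xi|^2\bigr)$ and decompose $\xi=2\pi m+\theta$ with $m\in\Z^3$ and $\theta\in\Pi^*$; the sum over $n$ collapses the continuous integral onto the fibre, matching term by term the definition of $\hat G(\theta)$ acting on $\na\hat\si(\theta,y):=\sum_n e^{in\theta}\na\si(\cdot+y-n)$. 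This is the standard fibration of a periodic convolution operator, but here it must be done carefully because $G$ itself carries the infrared singularity.

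The main obstacle will be the infrared divergence of $G$ at $\xi=0$, i.e.\ at $\theta\in\Ga^*$, which is why the statement is asserted only for a.e.\ $\theta\in\Pi^*\setminus\Ga^*$ and only on the dense domain $\cD$. I would handle this exactly as in Lemma~\ref{4-lDB}(iii): for $q\in l_c$ the symbol $i\xi\ti\si(\xi)/|\xi|^2$ is controlled near $\xi=0$ by the decay $|\ti\si(\xi)|\le C\langle\xi\rangle^{-2}$ from \eqref{4-L123}, so that $\hat S(\theta)\hat q(\theta)\in L^2(\Pi)$ for a.e.\ fixed $\theta$ away from $\Ga^*$, and the quasiperiodicity \eqref{4-qp} is preserved under the action of $\hat G(\theta)$. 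The remaining verifications—that $\na\hat\si(\theta,y)$ is the correct Bloch transform of $\na\si$ and that $\hat G(\theta)$ as defined in \eqref{4-qp4} indeed inverts $(i\na+\theta)^2$ on the fibre—are routine once the plane-wave bookkeeping is set up, so I would record them briefly and conclude by the Parseval--Plancherel identity \eqref{4-PP} that the fibrewise identity assembles back to $\widehat{Sq}=\hat S(\theta)\hat q$ in the direct-integral space.
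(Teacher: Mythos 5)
Your proposal is correct and takes essentially the same route as the paper: the paper's proof simply writes $Sq(y+n)=e\psi_0\sum_m G\na\si(y+n-m)q(m)$ as a lattice convolution (using $S(x,n)=S(x-n,0)$) and applies the discrete Fourier transform (\ref{4-F}) to turn it into fibrewise multiplication by the symbol $\hat S(\theta)=e\psi_0\hat G(\theta)\na\hat\si(\theta,y)$. Your additional bookkeeping with the plane-wave decomposition $\xi=2\pi m+\theta$, the restriction to $q\in l_c$, and the infrared singularity at $\theta\in\Ga^*$ only makes explicit what the paper leaves implicit via (\ref{4-qp4}) and Lemma \ref{4-lDB}.
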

\begin{proof}
For $x=y+n$  equations \eqref{4-ro+2} and \eqref{4-S} imply
\beqn\nonumber
S q(y+n)&=&e\psi_0\sum_m G\na\sigma^0(m,y+n)q(m)\\
\nonumber
&=&e\psi_0\sum_mG\na\sigma(y+n-m)q(m).
\eeqn
Applying  the Fourier transform (\ref{4-F}), we obtain \eqref{4-S-act}.
\end{proof}
Furthermore, $\hat S^{\5*}(\theta)$ in (\ref{4-tiAh}) is the corresponding adjoint operator,
and   $\hat T(\theta)$ is the operator matrix expressed by (\ref{4-K3}).
Note that $\hat S(\theta)$, $\hat S^{\5*}(\theta)$ and  $\hat T(\theta)$
are finite-rank operators.

\subsection{Generator in the Bloch transform}

\begin{definition}\la{4-FGLZ}
The Bloch transform of $Y\in{\cX^0}$ is defined as
\be\la{4-YPi}
 \ti Y(\theta)=[\cF Y](\theta):= \cM(\theta)\hat Y(\theta):
 =(\ti{\Psi}_1(\theta,\cdot),\ti{\Psi}_2(\theta,\cdot),\hat{ q}(\theta),\hat p(\theta))
 \qquad~~{\rm for  a.e.}~~\theta\in\R^3,
\ee
where $\ti{\Psi}_j(\theta,y)=M(\theta)\hat{\Psi}_j:=e^{i\theta y}\hat{\Psi}_j(\theta,y)$
are  $\Ga$-periodic functions in $y\in\R^3$.
\end{definition}
Now the Parseval-Plancherel identities (\ref{4-PP}) read
\be\la{4-PPt}
 \Vert Y\Vert_{\cX^1}^2= |\Pi^*|^{-1}\Vert\ti Y\Vert_{L^2(\Pi^*,{\ti\cX^1}(T^3))}^2,
 \qquad
 \Vert Y\Vert_{\cX^0}^2 = |\Pi^*|^{-1}\Vert\ti Y\Vert_{L^2(\Pi^*,{\ti\cX^0}(T^3))}^2.
\ee
Hence, $\cF:{\cX^0}\to  L^2(\Pi^*,{\cX^0}(T^3))$ is an isomorphism.
The inversion is given by
\be\la{4-FZI}
Y(n)=|\Pi^*|^{-1}\int_{\Pi^*}e^{-in\theta}\cM(-\theta)\ti Y(\theta)d\theta,\qquad n\in\Z^3.
\ee
Finally, the  above calculations can be summarised as follows:
(\ref{4-CPFh}) implies that, for $Y\in\cD$,
\be\la{4-CPF}
\widetilde{AY}(\theta)=
\ti A(\theta) \ti Y(\theta)\quad~~{\rm for \ a.e.}~~\theta\in  \Pi^*\setminus\Ga^*.
\ee
Here
\be\la{4-tiA}
\ti A(\theta)\!=\!\cM(\theta)\hat A(\theta)\cM(-\theta)
 \!=\!\!\left(\!\!\!
 \ba{cccl}
0 &\!\!\ti H^0(\theta) & 0  & 0\medskip\\
 -\ti H^0(\theta)-2e^2\psi_0\ti G(\theta)\psi_0&0 &~\ti S(\theta) & 0\\
 0  & 0&\!0&\!M^{-1}\\
 -2\ti S^*(\theta)\!\!&  0 &-\hat T(\theta)&\!0\\
\ea\!\!\!\!\right),
\ee
where
\beqn\la{4-tiHS}
\ti S(\theta)&:=& M(\theta)\hat S(\theta)=
e\psi_0\ti G(\theta)\na\ti\sigma^0(\theta),\\
\nonumber\\
\ti H^0(\theta)&:=&M(\theta) H^0M(-\theta)=
-\fr 12(\na-i\theta)^2,\la{4-tiH0}\\
\nonumber\\
\la{4-tiH1}
\ti G(\theta)&:=&M(\theta) \hat G(\theta) M(-\theta)=(i\na+\theta)^{-2}.
\eeqn
Formula (\ref{4-CPF}) is obtained for $Y\in\cD$.
Respectively, the operator (\ref{4-tiA}) is considered on the space
$\cD(T^3):=C^\infty(T^3)\oplus C^\infty(T^3)\oplus\C^3\oplus\C^3$
up to now. The operator (\ref{4-tiA})
 extends uniquely
to the  continuous
 operator $\ti\cX^2(T^3)\to \ti\cX^0(T^3)$ for $\theta\in \Pi^*\setminus\Ga^*$.
We keep below the notation (\ref{4-tiA})--(\ref{4-tiH1}) for this extension.

\begin{remark}\la{4-rb} \rm
The operators $\ti G(\theta):L^2(T^3)\to H^2(T^3)$ are bounded for
$\theta\in \Pi^*\setminus\Ga^*$; however
$\Vert \ti G(\theta)\Vert \sim d^{-2}(\theta)$, where $d(\theta):=\dist(\theta,\Ga^*)$.

\end{remark}

\begin{lemma}\la{4-lB} Let conditions \eqref{4-L123i}  hold. Then
the operator  $\ti A(\theta)$ admits the representation
\be\la{4-Has}
\ti A(\theta)=J\ti B(\theta),~~~~~~~~~~\theta\in \Pi^*\setminus\Ga^*,
\ee
where $\ti B(\theta)$ is the selfadjoint operator (\ref{4-hess2i})
in ${\ti\cX^0}(T^3)$
with the domain ${\ti\cX^2}(T^3)$.

\end{lemma}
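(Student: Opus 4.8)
The plan is to prove Lemma~\ref{4-lB} by unwinding the conjugations that define $\ti A(\theta)$ and $\ti B(\theta)$ and checking the factorization $\ti A(\theta)=J\ti B(\theta)$ entry by entry, then separately verifying the selfadjointness of $\ti B(\theta)$ on the domain ${\ti\cX^2}(T^3)$. The starting point is the untransformed relation $A=JB$ from (\ref{4-AJB}), which holds on $\cD$ by the Hamiltonian structure established in Theorem~\ref{4-tHam}. Since the Bloch transform $\cF$ is the composition $\cM(\theta)F_{n\to\theta}$ and $J$ acts blockwise as a constant matrix commuting with both $F_{n\to\theta}$ and the multiplication operator $\cM(\theta)$, I expect the conjugation to pass through $J$ transparently, reducing the identity to $\ti A(\theta)=\cM(\theta)\hat A(\theta)\cM(-\theta)$ and $\ti B(\theta)=\cM(\theta)\hat B(\theta)\cM(-\theta)$, where $\hat A(\theta)=J\hat B(\theta)$ is the discrete-Fourier reduction.

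First I would record the conjugation formulas for the individual entries, which have already been computed in (\ref{4-tiHS})--(\ref{4-tiH1}): namely $\ti H^0(\theta)=M(\theta)H^0M(-\theta)=-\fr12(\na-i\theta)^2$, $\ti G(\theta)=M(\theta)\hat G(\theta)M(-\theta)=(i\na+\theta)^{-2}$, and $\ti S(\theta)=M(\theta)\hat S(\theta)=e\psi_0\ti G(\theta)\na\ti\sigma^0(\theta)$. The block $\hat T(\theta)=\Si(\theta)$ acts only on the finite-dimensional ion components and is untouched by $\cM(\theta)$ (which is the identity on the $\C^3\oplus\C^3$ factors). Multiplying out $J\ti B(\theta)$ with $J$ as in (\ref{4-AJB}) and comparing against the explicit matrix (\ref{4-tiA}) for $\ti A(\theta)$ is then a purely mechanical check: the $\fr12$ entries of $J$ combine with the factor $2$ in the $(1,1)$ and $(2,2)$ blocks of $\ti B(\theta)$ to produce the single copies of $\ti H^0(\theta)$ appearing in $\ti A(\theta)$, and the off-diagonal couplings $\ti S(\theta)$, $\ti S^*(\theta)$, $\hat T(\theta)$ fall into the correct positions.

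The more substantive part is the selfadjointness of $\ti B(\theta)$ on ${\ti\cX^2}(T^3)$ for $\theta\in\Pi^*\setminus\Ga^*$. The symmetry on the dense core $\cD(T^3)=C^\infty(T^3)\oplus C^\infty(T^3)\oplus\C^3\oplus\C^3$ follows from the symmetry of $B$ on $\cD$ (Theorem~\ref{4-tHam}) transported through the isometry $\cF$; the diagonal blocks $2\ti H^0(\theta)$ are manifestly symmetric, the ion block $\hat T(\theta)=\Si(\theta)$ is a Hermitian matrix by (\ref{4-K3}), and the blocks $2\ti S(\theta)$, $2\ti S^*(\theta)$ are genuine adjoints of one another by construction. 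To upgrade symmetry to selfadjointness I would argue that $\ti B(\theta)$ is a bounded symmetric perturbation of the diagonal operator $\mathrm{diag}(2\ti H^0(\theta),2\ti H^0(\theta),\hat T(\theta),M^{-1})$, which is selfadjoint on ${\ti\cX^2}(T^3)$ since $\ti H^0(\theta)=-\fr12(\na-i\theta)^2$ is selfadjoint on $H^2(T^3)$. The key fact making the perturbation relatively bounded is that $\ti S(\theta)$ and $\hat T(\theta)$ are \emph{finite-rank} operators (noted just after Lemma~\ref{SST}), and that $e^2\psi_0\ti G(\theta)\psi_0$ is bounded on $L^2(T^3)$ for fixed $\theta\in\Pi^*\setminus\Ga^*$ by Remark~\ref{4-rb}, where $\Vert\ti G(\theta)\Vert\sim d^{-2}(\theta)$.

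The main obstacle I anticipate is the $\theta$-dependent unboundedness of $\ti G(\theta)$ near the lattice $\Ga^*$: the bound $\Vert\ti G(\theta)\Vert\sim d^{-2}(\theta)$ means the off-diagonal and $(1,1)$-correction blocks blow up as $\theta\to\Ga^*$, so the perturbation is bounded only for each fixed $\theta$ away from $\Ga^*$ and not uniformly. This is why the statement is restricted to $\theta\in\Pi^*\setminus\Ga^*$, and it is exactly the infrared singularity flagged in Remark~\ref{4-cAA*}. For the selfadjointness claim at a single fixed $\theta$ this is harmless—boundedness of the perturbation at that $\theta$ suffices, and the Kato--Rellich theorem then applies directly—but I would be careful to phrase all estimates pointwise in $\theta$ and to invoke the finite-rank property of $\ti S(\theta)$ to control the coupling to the finite-dimensional ion block. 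Once the perturbation is seen to be a bounded symmetric operator on ${\ti\cX^0}(T^3)$ for the fixed $\theta$, selfadjointness of $\ti B(\theta)$ on the domain ${\ti\cX^2}(T^3)$ of its diagonal part is immediate, completing the proof.
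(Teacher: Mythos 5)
Your proposal is correct and follows essentially the same route as the paper's own proof: the representation $\ti A(\theta)=J\ti B(\theta)$ is read off from the factorization $A=JB$ in (\ref{4-AJB}) conjugated through the Bloch transform, and selfadjointness is obtained by viewing $\ti B(\theta)$ as a bounded symmetric perturbation (for each fixed $\theta\in\Pi^*\setminus\Ga^*$) of a diagonal operator whose only unbounded block $\ti H^0(\theta)$ is selfadjoint on $H^2(T^3)$. The paper states this more tersely ("all operators in (\ref{4-hess2i}), except for $\ti H^0(\theta)$, are bounded"), while you make explicit the pointwise-in-$\theta$ boundedness of $\ti G(\theta)$ and the finite-rank blocks, which is a faithful elaboration rather than a different argument.
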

\begin{proof}
The representation (\ref{4-Has}) follows from  (\ref{4-AJB}).
The operator $\ti B(\theta) $
is symmetric on the domain  $\ti\cX^2(T^3)$.
Moreover,  all operators in (\ref{4-hess2i}), except for $\ti H^0(\theta)$, are bounded.
Finally,  $\ti H^0(\theta)$
is  selfadjoint in $L^2(T^3)$ with the domain $H^2(T^3)$.
Hence, $\ti B(\theta)$ is also selfadjoint on  the domain 
$\ti\cX^2(T^3)$.
\end{proof}

\setcounter{equation}{0}
\section{The positivity  of energy}\la{sEP}
Here we prove the  positivity \eqref{4-Hpos2}  under conditions \eqref{4-W1} and \eqref{4-Wai}.
Recall that
\be\la{4-TT}
\hat T(\theta)=\Si(\theta),\qquad\theta\in\Pi^*\setminus\Ga^*
\ee
by (\ref{4-K3}).
This matrix  is a continuous function of $\theta\in\Pi^*\setminus\Ga^*$.
Let us denote the {\it open set}
\be\la{4-Pi+}
\Pi^*_+:=\{\theta\in \Pi^*\setminus\Ga^*: \Si(\theta)>0 \}.
\ee
Then the Wiener condition  \eqref{4-W1} means that $|\Pi^*_+|=|\Pi^*|$.
The main result of this chapter is the following theorem.

\bt\la{4-tpose}
Let  conditions \eqref{4-Wai} and \eqref{4-L123i} hold.
Then 
\medskip\\
i) The Wiener  condition  \eqref{4-W1} is necessary
and sufficient for the positivity  \eqref{4-Hpos2k}.
\smallskip\\
ii) The bound (\ref{4-vkat2}) holds.
\smallskip\\
iii)  Bound (\ref{4-vkat}) holds with  sufficiently small $\ve>0$ under the Wiener condition  \eqref{4-W1}.

\et
\begin{proof}
First, let us check that the Wiener condition \eqref{4-W1} is necessary.
Namely, 
for $\ti Y=(0,0,\hat q,0)\in {\ti\cX^1}(T^3)$
the inequality   \eqref{4-Hpos2k} implies that
\be\la{4-Hpos0}
 \cE(\theta,\ti Y)
=\hat q\hat T(\theta)\hat q
\ge \vka(\theta)|\hat q|^2\quad~~\mbox{\rm for \ a.e.}~~\theta\in\Pi^*\setminus\Ga^*.
\ee
Now (\ref{4-TT}) gives
\be\la{4-Hpos0b}
 \cE(\theta,\ti Y)
=
\hat q\Si(\theta)\hat q
\ge \vka(\theta)|\hat q|^2.
\ee
Hence, the  condition  \eqref{4-W1}
is necessary
for the positivity  \eqref{4-Hpos2k}.
Moreover, \eqref{4-Hpos0b}  implies \eqref{4-vkat2}.
\medskip

It remains to show  that the Wiener condition \eqref{4-W1} together with \eqref{4-Wai}
is sufficient for the bound \eqref{4-vkat}.
Substituting (\ref{4-ro1i}) into (\ref{Ham0}), we obtain 
for  $Y=(\Psi_1,\Psi_2,q,p)\in \cD$  
\be\la{4-bb2}
\langle Y,BY \rangle=2\sum_{j=1}^2\langle \Psi_j,H^0\Psi_j \rangle
\!+\!
\langle
2ef\Psi_1+gq,
2ef\Psi_1+gq
\rangle,
\ee
where 
 \be\la{4-fg}
f\Psi_1(x):=e\psi_0\sqrt{G}\Psi_1,\qquad  gq(x)={\sqrt{G}}\sum_n\na\si(x-n)q(n).
\ee
The operators (\ref{4-fg}) commute with the $\Ga$-translations,
and therefore similarly to (\ref{4-CPF})
and (\ref{4-tiHS})--(\ref{4-tiH1})
\be\la{4-fact}
\widetilde {f\Psi_1}(\theta):=e\psi_0\sqrt{\ti G(\theta)}\ti\Psi_1(\theta),\qquad
\widetilde{gq}(\theta)={\sqrt{\ti G(\theta)}}(\na-i\theta)\ti\si(\cdot,\theta)\ti q(\theta).
\ee
Hence,
 the (second) Parseval identity (\ref{4-PPt})  gives
\be\la{4-bb2-2}
|\Pi^*|\langle Y,BY \rangle=2\sum_{j=1}^2( \ti\Psi_j,\ti H^0\ti \Psi_j )
\!+\!
(
2e\ti f\ti \Psi_1+\ti g\ti q,
2e\ti f\ti \Psi_1+\ti g\ti q
),
\ee
where the brackets $(\cdot,\cdot)$ denote the 
 scalar product in $L^2(\Pi^*,\ti\cX^0(\Pi))$.
 Hence,  
 \be\la{4-bb2t}
\cE(\theta,\ti Y):=
 \langle \ti  Y,\ti B(\theta)\ti  Y \rangle_{{\ti\cX^0}(T^3)}\!=\!E(\theta, \ti\Psi_1,\hat q)+
 2\langle  \ti\Psi_2,\ti H^0(\theta) \ti\Psi_2 \rangle_{L^2(T^3)}
+ \hat pM^{-1} \hat p
\ee
for  $\ti Y=(\ti\Psi_1, \ti\Psi_2, \hat q, \hat p)\in \ti\cX^2(T^3)$, where
\be\la{4-sq}
E(\theta, \ti\Psi_1,\hat q)
:=
2\langle  \ti\Psi_1,\ti H^0(\theta) \ti\Psi_1 \rangle_{L^2(T^3)}+
\langle 2\ti f(\theta)   \ti\Psi_1 + \ti g(\theta) \hat q, ~2\ti f(\theta)   \ti\Psi_1+\ti g(\theta) \hat q \rangle_{L^2(T^3)}.
\ee
Recall that $\ti H^0(\theta)=
-\ds\fr12(\na-i\theta)^2$ by (\ref{4-tiH0}), so
the eigenvalues of $\ti H^0(\theta)$ equal to $\ds\fr12|2\pi m-\theta|^2$, where
$m\in \Z^3$. Therefore,
$\ti H^0(\theta)$ is positive definite: for $j=1,2$
\be\la{4-H0d}
\langle\ti\Psi_j, \ti H^0(\theta)\ti\Psi_j \rangle\ge \fr12 d^2(\theta)\Vert  \ti\Psi_j\Vert_{H^1(T^3)}^2~,
\qquad
\theta\in\Pi^*\setminus \Ga^*.
\ee
It remains to prove the following lemma.

\bl\la{4-p}
Let conditions of Theorem \ref{4-tpose} hold. Then for any $\theta\in\Pi^*_+ $ there exists $\ve_1>0$
such that
\be\la{4-ub2}
E(\theta,  \ti\Psi_1,\hat q)
\ge
\fr12 d^2(\theta)\Vert\ti\Psi_1\Vert_{H^1(T^3)}^2+\ve_1 d^4(\theta) \Si_0(\theta)|\hat q|^2.
\ee

\el
\begin{proof}
Let us denote
\be\la{4-abb}
\beta_{11}:=
\langle 2\ti f(\theta)   \ti\Psi_1,
2\ti f(\theta)   \ti\Psi_1\rangle_{L^2(T^3)},
\quad
\beta_{12}:=
 \langle 2\ti f(\theta)   \ti\Psi_1, \ti g(\theta) \hat q \rangle_{L^2(T^3)},
\quad
\beta_{22}:=
\langle \ti g(\theta) \hat q, \ti g(\theta) \hat q \rangle_{L^2(T^3)}.
\ee
Then we can write the quadratic form (\ref{4-sq}) as
\be\la{4-sq2}
E=2\al+\beta,
\ee
where $\al:=\langle  \ti\Psi_1,\ti H^0(\theta) \ti\Psi_1 \rangle_{L^2(T^3)}\ge 0$ and
\be\la{4-sq22}
\beta:=
\beta_{11}+2\rRe\beta_{12}+\beta_{22}=
\langle 2\ti f(\theta)   \ti\Psi_1 + \ti g(\theta) \hat q, ~2\ti f(\theta)   \ti\Psi_1+\ti g(\theta) \hat q \rangle_{L^2(T^3)}
\ge 0.
\ee
By (\ref{4-H0d})
it suffices to prove the estimate
\be\la{4-sq23}
E\ge \al+\ve d^4(\theta) \beta_{22},
\ee
since
\be\la{4-b22}
\beta_{22}=
\hat q\hat T_1(\theta)\hat q= \hat q\Si(\theta)\hat q
\ee
by (\ref{4-fact}) and (\ref{4-TT}).
To prove (\ref{4-sq23}), we first note that 
\be\la{4-abb3}
\al\ge \ve_2d^4(\theta)
\beta_{11},\qquad \theta\in\Pi^*\setminus\Ga^*,
\ee
where $\ve_2>0$.
Indeed, using (\ref{4-tiH1})  and applying (\ref{4-H0d}), we obtain
that
\be\la{4-b23}
\beta_{11}=4\langle\ti\Psi_1, \ti G(\theta)\ti\Psi_1\rangle
\le \fr C{d^2(\theta)}\Vert\ti\Psi_1\Vert^2_{L^2(T^3)}
\le\fr{C_1}{d^4(\theta)}\al,
\ee
Now  (\ref{4-sq2}) gives that
\be\la{4-abb4}
E\ge\al+
(1+ \ve_2 d^4(\theta))
\beta_{11}+2\rRe\beta_{12}+\beta_{22},\qquad \theta\in\Pi^*\setminus\Ga^*.
\ee
On the other hand, the Cauchy--Schwarz inequality implies that
\be\la{4-abb5}
|\beta_{12}|\le \beta_{11}^{1/2}\beta_{22}^{1/2}\le \fr12[\ga\beta_{11}+\fr1\ga\beta_{22}]
\ee
for any $\ga>0$.
Hence, (\ref{4-abb4}) implies that
\be\la{4-abb7}
b\ge\al+
(1+ \ve_2 d^4(\theta)-\ga)
\beta_{11}+(1-\fr1{\ga})\beta_{22},\qquad \theta\in\Pi^*\setminus\Ga^*.
\ee
Choosing $\ga= 1+ \ve_2 d^4(\theta)$,
we obtain (\ref{4-sq23}).
\end{proof}

At last, formula (\ref{4-bb2t}) and estimates (\ref{4-H0d}), (\ref{4-ub2}) imply
(\ref{4-vkat}) with sufficiently small $\ve>0$.\end{proof}

\bc\la{4-cK}
Bound  (\ref{4-vkat}) implies that
(\ref{4-Hpos2k}) holds with
\be\la{4-com}
\inf_{\theta\in K}\vka(\theta)>0
\ee
for any compact subset $K\subset\Pi^*_+$.
\ec

\br\la{4-rS} \rm
Lemma \ref{4-p} and its proof were inspired by the Sylvester
criterion for the positivity of $2\times 2$ matrices.
Namely, in notation (\ref{4-abb}) for the matrix $\beta=(\beta_{ij})$
we have $\beta_{11}\ge 0$, $\beta_{22}> 0$. Furthermore,
the matrix $\beta\ge 0$, since it corresponds to the perfect square, and hence
$\operatorname{det} \beta\ge 0$. Therefore, the Sylvester
criterion implies that
\be\la{4-alb}
\beta_+:=\left(
\ba{cc}
\al+\beta_{11}&\beta_{12}\\
\beta_{21}& \beta_{22}
\ea
\right)>0
\ee
since $\al+\beta_{11}> 0$, $\beta_{22}> 0$ and $\det\beta_+=\al\beta_{22}+\det\beta>0$.
These arguments are behind
 our estimates (\ref{4-abb4})--(\ref{4-abb7}), which give (\ref{4-sq23}).
\er

\setcounter{equation}{0}
\section{Weak solutions and linear stability}\la{4-sred}


We introduce weak solutions and prove the  linear stability of the dynamics
 \eqref{4-JDi} assuming (\ref{4-L123i}), (\ref{4-W1}) and (\ref{4-Wai}). Then
the  real periodic minimizer is given by
 (\ref{4-ppo}) with $\al=0$,
 and (\ref{4-Hpos2k}) and (\ref{4-vkat}) hold by Theorem \ref{4-tpose}.

\subsection{Weak solutions }
Let us define solutions $Y(t)\in C(\R,\cX^1)$
to \eqref{4-JDi} in the sense of vector-valued distributions of $t\in\R$.
Let us recall that $A^*V\in{\cX^0}$ for $V\in\cD$ by Corollary \ref{4-cAA*}.
We call
$Y(t)$ a weak solution to \eqref{4-JDi} if, for every $V\in\cD$,
\be\la{4-ws}
\langle Y(t)-Y(0),V\rangle= \int_0^t\langle Y(s),A^*V \rangle ds,\qquad t\in\R.
\ee
Equivalently,
by  the  Parseval--Plancherel identity,
\be\la{4-wsF}
\int_{\Pi^*}\langle \ti Y(\theta,t)-\ti Y(\theta,0),\ti V(\theta)\rangle_{{\ti\cX^0}(T^3)} d\theta= \int_0^t\Big[\int_{\Pi^*}
\langle \ti Y(\theta,s),\ti A^*(\theta)\ti V(\theta) \rangle_{{\ti\cX^0}(T^3)} d\theta \Big]ds
\ee
Fubini's theorem implies that
\be\la{4-FF}
\ti Y(\theta,\cdot)\in L^1_{\rm loc} (\R,\ti\cX^1(T^3))\qquad\mbox{for a.e.}\quad \theta\in \Pi^*,
\ee
and
(\ref{4-wsF}) is equivalent to
\be\la{4-wsF2}
\int_{\Pi^*}\langle \ti Y(\theta,t)-\ti Y(\theta,0),\ti V(\theta)\rangle_{{\ti\cX^0}(T^3)} d\theta= \int_{\Pi^*}\Big[\int_0^t
\langle \ti Y(\theta,s),\ti A^*(\theta)\ti V(\theta) \rangle_{{\ti\cX^0}(T^3)}ds\Big] d\theta.
\ee
Equivalently,
\be\la{4-wsF3}
 \langle \ti Y(\theta, t)-\ti Y(\theta, 0),\ti V\rangle_{{\ti\cX^0}(T^3)} \!=\! \int_0^t
\langle \ti Y(\theta,s),\ti A^*(\theta)\ti V \rangle_{{\ti\cX^0}(T^3)}  ds,\qquad t\in\R,\qquad~~
\ti V\!\in\! \cD(T^3)
\ee
for a.e. $\theta\in  \Pi^*\setminus \Ga^*$.
Formally,
\be\la{4-JD}
\dot {\ti Y}(\theta, t)= \ti A(\theta) \ti Y(\theta,t),\qquad t\in\R
\ee
for a.e. $\theta\in  \Pi^*\setminus \Ga^*$
in the sense of vector-valued distributions.

\subsection{Reduction to mild solution}
We reduce  (\ref{4-JD})  to an equation with
a selfadjoint generator by using \eqref{4-Hpos2k}  and our methods \ci{KK2014a,KK2014b}.
By  \eqref{4-Hpos2k} and (\ref{4-vkat})  the operator
$
\ti\Lam(\theta):=\ti B^{1/2}(\theta)>0
$
is invertible in  ${\ti\cX^0}(T^3)$  for   $\theta\in\Pi^*_+$ and
\be\la{4-nVet}
 \Vert \ti\Lam^{-1}(\theta)Z\Vert_{{\ti\cX^1}(T^3)}\le \fr 1{\sqrt{ \vka(\theta)}} \Vert Z\Vert_{{\ti\cX^0}(T^3)},
 \quad Z\in{\ti\cX^0}(T^3),\qquad  \theta\in\Pi^*_+.
\ee
Hence, $\ti A(\theta)=J\ti B(\theta)$ and $\ti A^*(\theta)=-\ti B(\theta)J$ are
also invertible in  ${\ti\cX^0}(T^3)$. Therefore,  \eqref{4-wsF3} can be rewritten as
\be\la{4-wsF32}
 \langle \ti Y(\theta, t)-\ti Y(\theta, 0),(\ti A^*(\theta))^{-1}\ti W\rangle_{{\ti\cX^0}(T^3)}\!=\! \int_0^t
\langle \ti Y(\theta,s),\ti W \rangle_{{\ti\cX^0}(T^3)}  ds,\qquad t\in\R,\qquad~~
\ti W\!\in\! \ti A^*(\theta)\cD(T^3)
\ee
for a.e. $\theta\in\Pi^*_+$.
\bl
The linear space $\ti A^*(\theta)\cD(T^3)$ is dense in ${\ti\cX^0}(T^3)$.

\el
\begin{proof} First, $\ti A^*(\theta)\cD(T^3)=\ti B(\theta)\cD(T^3)$, since $J\cD(T^3)=\cD(T^3)$.
Second, $\ti B(\theta)$, which is defined on $\cD(T^3)$, extends to
an invertible  selfadjoint operator in ${\ti\cX^0}(T^3)$
with the domain  ${\ti\cX^2}(T^3)$ and  $\Ran \ti B(\theta)={\ti\cX^0}(T^3)$.
\end{proof}

As a corollary,  \eqref{4-wsF32} is equivalent to the `mild solution' identity
\be\la{4-wsF33}
 \ti A^{-1}(\theta) [\ti Y(\theta, t)-\ti Y(\theta, 0)]
 =\! \int_0^t
 \ti Y(\theta,s)  ds,\qquad t\in\R\qquad \mbox{for a.e. }\theta\in \Pi^*_+.
\ee
\subsection{Reduction to selfadjoint generator}
Now we can apply our approach \ci{KK2014a} to reduce
the Hamiltonian system
(\ref{4-JDi})
to the dynamics with a selfadjoint generator.
Namely, 
   (\ref{4-FF}) implies that
\be\la{4-ZY}
Z(\theta,\cdot):=\ti\Lam(\theta)\ti Y(\theta,\cdot)\in L^1_{\rm loc} (\R,\ti\cX^0(T^3))
\qquad\mbox{for a.e.}\quad \theta\in \Pi^*.
\ee
Hence, applying $\ti\Lam(\theta)$ to the both sides of
 \eqref{4-wsF33}, we obtain the equivalent equation
 \be\la{4-wsF34}
  \ti K^{-1}(\theta) [\ti Z(\theta, t)-\ti Z(\theta, 0)]
 =-i \int_0^t
 \ti Z(\theta,s)  ds,\qquad t\in\R\qquad\mbox{for a.e.}\quad \theta\in \Pi^*,
\ee
 where $\ti K(\theta):=i\ti\Lam(\theta) J\ti\Lam(\theta)$, since $\ti A^{-1}(\theta)=\ti\Lam^{-2}(\theta)J^{-1}$.
Formally,
\be\la{4-CPF3}
 \dot {\ti Z}(\theta,t)=-i\ti K(\theta)\ti Z(\theta,t),\quad t\in\R\qquad\mbox{for a.e.}\quad \theta\in \Pi^*
\ee
in the sense of vector-valued distributions.
Now the problem  is that the domain of $\ti K(\theta)$ is unknown if the ion density
$\si(x)$ is not sufficiently smooth, so we cannot use the PDO techniques.
The following lemma plays a~key role in our approach (cf.\ Lemma 2.1 of \ci{KK2014a}).
\begin{lemma}\la{4-lH0}
i) $\ti K(\theta)$ is a selfadjoint operator in ${\ti\cX^0}(T^3)$ with a dense domain  $D_\theta=D(\ti K(\theta))\subset {\ti\cX^1}(T^3)$
for every $\theta\in\Pi^*_+$.
\medskip\\
ii) The eigenvectors of  $\ti K(\theta)$ form a complete set in ${\ti\cX^0}(T^3)$.
\end{lemma}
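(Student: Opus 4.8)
The plan is to analyse the structure of the operator $\ti K(\theta):=i\ti\Lam(\theta)J\ti\Lam(\theta)$, where $\ti\Lam(\theta)=\ti B^{1/2}(\theta)>0$ is the positive selfadjoint square root furnished by the positivity \eqref{4-Hpos2k}. The key observation is that $\ti K(\theta)$ is a \emph{bounded perturbation of the conjugation of $J$ by $\ti\Lam(\theta)$}, and that its apparent unboundedness comes only through the single unbounded block $\ti H^0(\theta)$ in $\ti B(\theta)$. First I would record that $\ti\Lam(\theta)$ is invertible on ${\ti\cX^0}(T^3)$ with $\ti\Lam^{-1}(\theta):{\ti\cX^0}(T^3)\to {\ti\cX^1}(T^3)$ bounded by \eqref{4-nVet}, and that $J$ is a bounded skew-adjoint operator with $J^*=-J$ and $J^2=-\tfrac14\oplus\cdots$ (block structure from \eqref{4-AJB}), so $iJ$ is bounded and selfadjoint.

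For part i), the natural domain is $D_\theta:=\ti\Lam^{-1}(\theta)\,D(\ti\Lam(\theta))=\{Z\in{\ti\cX^0}(T^3):\ti\Lam^{-1}(\theta)Z\in D(\ti\Lam(\theta))={\ti\cX^2}(T^3)\}$. The formal computation $\ti K^*(\theta)=-i\ti\Lam(\theta)J^*\ti\Lam(\theta)=i\ti\Lam(\theta)J\ti\Lam(\theta)=\ti K(\theta)$ shows symmetry on this domain; the content is essential selfadjointness. The cleanest route is to verify selfadjointness directly from \eqref{4-wsF34} via the resolvent, i.e.\ to show $\Ran(\ti K(\theta)\pm i)={\ti\cX^0}(T^3)$, using that $\ti A(\theta)=J\ti B(\theta)$ is invertible (established just before the lemma) and that $\ti K(\theta)=\ti\Lam(\theta)(iJ)\ti\Lam(\theta)$ is unitarily similar, through the positive operator $\ti\Lam(\theta)$, to the operator $iJ\ti B(\theta)=i\ti A(\theta)$ read in the energy inner product $\langle\cdot,\ti B(\theta)\cdot\rangle$. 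Concretely I would introduce the energy space ${\ti\cX^1}(T^3)$ equipped with the equivalent inner product $\langle\ti B(\theta)\cdot,\cdot\rangle$ (equivalent by \eqref{4-Hpos2k}) and observe that $\ti\Lam(\theta)$ is precisely the unitary intertwiner from this energy space to ${\ti\cX^0}(T^3)$. Under this unitary, $\ti K(\theta)$ corresponds to $i\ti A(\theta)=i J\ti B(\theta)$, which is selfadjoint in the energy metric because $J\ti B(\theta)$ is $\ti B(\theta)$-skew-adjoint (a direct consequence of $J^*=-J$). Selfadjointness is preserved by unitary equivalence, giving i); density of $D_\theta$ follows since ${\ti\cX^2}(T^3)$ is dense in ${\ti\cX^0}(T^3)$ and $\ti\Lam(\theta)$ is a bounded bijection ${\ti\cX^2}(T^3)\to{\ti\cX^0}(T^3)$.

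For part ii), I would use that the resolvent of $\ti K(\theta)$ is compact, whence the spectrum is discrete and the eigenvectors form a complete orthonormal set by the Hilbert--Schmidt spectral theorem for selfadjoint operators with compact resolvent. Compactness of $(\ti K(\theta)-z)^{-1}$ reduces, via the unitary equivalence above, to compactness of the resolvent of $i\ti A(\theta)$ in the energy space; since $\ti B(\theta)$ has compact resolvent in ${\ti\cX^0}(T^3)$ (its domain ${\ti\cX^2}(T^3)\hookrightarrow{\ti\cX^0}(T^3)$ compactly by Rellich on the torus $T^3$, the finite-dimensional $\C^3\oplus\C^3$ factors being harmless), the embedding ${\ti\cX^1}(T^3)\hookrightarrow{\ti\cX^0}(T^3)$ is compact, and this transfers to the resolvent of $\ti K(\theta)$.

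The main obstacle I anticipate is the rigorous identification of the domain $D_\theta$ and the verification of essential selfadjointness \emph{without} assuming smoothness of $\si$: since $\ti K(\theta)$ is built from $\ti\Lam(\theta)=\ti B^{1/2}(\theta)$, whose domain and mapping properties are only controlled abstractly through \eqref{4-Hpos2k} and \eqref{4-nVet} rather than by pseudodifferential calculus, I cannot compute $D(\ti K(\theta))$ explicitly. The resolution is exactly the unitary-equivalence argument sketched above, which bypasses any explicit description of $D_\theta$ and reduces everything to the already-established invertibility and selfadjointness of the energy-metric operator $\ti B(\theta)$ and to the compact Sobolev embedding on $T^3$; this mirrors the abstract strategy of Lemma~2.1 of \ci{KK2014a}, which I would invoke for the general Hamiltonian-operator framework and then specialise to the present $\ti B(\theta)$.
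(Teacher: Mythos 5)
Your proposal is correct in substance and, once unwrapped, coincides with the paper's own argument: both proofs rest on exactly two facts, namely that $\ti\Lam^{-1}(\theta)$ is an everywhere-defined bounded operator smoothing into ${\ti\cX^1}(T^3)$ by (\re{4-nVet}), and that the embedding ${\ti\cX^1}(T^3)\hookrightarrow{\ti\cX^0}(T^3)$ is compact. The paper works directly in ${\ti\cX^0}(T^3)$: it observes that $\ti R(\theta):=\ti K^{-1}(\theta)=i\ti\Lam^{-1}(\theta)J^{-1}\ti\Lam^{-1}(\theta)$ is bounded and symmetric, hence selfadjoint, and then invokes Theorem 13.11\,(b) of \ci{Rudin} to conclude that $\ti K(\theta)=\ti R^{-1}(\theta)$ is selfadjoint on the dense domain $\Ran\5\ti R(\theta)\subset\Ran\5\ti\Lam^{-1}(\theta)\subset{\ti\cX^1}(T^3)$; for ii) it notes $\ti\Lam^{-1}(\theta)$ is compact, so $\ti K^{-1}(\theta)$ is compact and the spectral theorem gives completeness. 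Your energy-space conjugation (transporting $\ti K(\theta)$ to $i\ti A(\theta)$ in the metric $\langle\ti B(\theta)\cdot,\cdot\rangle$) is an equivalent repackaging of the same computation, and it has the merit of making visible why $\ti K(\theta)$ is the natural selfadjoint realization within the framework of \ci{KK2014a}; the paper's inverse-first route is shorter and entirely bypasses domain bookkeeping, since $\ti R(\theta)$ is everywhere defined.

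Two imprecisions in your write-up should be repaired, though neither is fatal. First, the statement that $i J\ti B(\theta)$ is selfadjoint in the energy metric ``as a direct consequence of $J^*=-J$'' only yields \emph{symmetry}; the passage from symmetric to selfadjoint requires the range condition, and the clean way to obtain it is precisely the step you sketched but did not assemble: the inverse $-i\ti B^{-1}(\theta)J^{-1}$ is bounded and symmetric in the energy metric, hence selfadjoint, and the inverse of an injective selfadjoint operator is selfadjoint --- the same Rudin-type lemma the paper uses, so you should state it explicitly rather than appeal to $J^*=-J$ alone. Second, your identification $D(\ti\Lam(\theta))={\ti\cX^2}(T^3)$ is incorrect: ${\ti\cX^2}(T^3)$ is the domain of $\ti B(\theta)$, whereas $D(\ti\Lam(\theta))=D(\ti B^{1/2}(\theta))$ is the form domain, which by (\re{4-Hpos2k}) is contained in ${\ti\cX^1}(T^3)$; relatedly, in ii) the compact embedding ${\ti\cX^1}(T^3)\hookrightarrow{\ti\cX^0}(T^3)$ is the primitive fact (Rellich on $T^3$), not a consequence of $\ti B(\theta)$ having compact resolvent. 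With these corrections your argument closes, and its content is then identical to the paper's proof.
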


\begin{proof}
i) The operator $\ti K(\theta)$ is injective. On the other hand, $\Ran\5\ti\Lambda(\theta)={\ti\cX^0}(T^3)$, and
$J:{\ti\cX^0}(T^3)\to{\ti\cX^0}(T^3)$ is a bounded invertible operator.
Hence, $\Ran\5\ti  K(\theta)={\ti\cX^0}(T^3)$. Consider the inverse operator
\begin{equation}\la{4-G}
 \ti R(\theta):=\ti K^{-1}(\theta)=i\ti\Lam^{-1}(\theta) J^{-1}\ti\Lam^{-1}(\theta).
\end{equation}
This operator is selfadjoint, since it is bounded and symmetric. Hence,
$\Ran\5 \ti K(\theta)=D(\ti R(\theta))={\ti\cX^0}(T^3)$.
Therefore, $\ti K(\theta)=\ti R^{-1}(\theta)$ is a densely defined selfadjoint operator
by Theorem 13.11, (b) of \cite{Rudin}:
$$
\ti K^*(\theta)=\ti K(\theta)~, \quad D(\ti K(\theta))=\Ran\5 \ti R(\theta)\subset \
\Ran \5\ti\Lam^{-1}(\theta)\subset {\ti\cX^1}(T^3)
$$
where the last inclusion follows by (\ref{4-nVet}).
\medskip\\
ii) (\ref{4-nVet}) implies that $\ti\Lam^{-1}(\theta)$ is a compact operator in
${\ti\cX^0}(T^3)$ by the Sobolev embedding theorem. Hence, $\ti K^{-1}(\theta)$ is also
compact operator in  ${\ti\cX^0}(T^3)$ by (\ref{4-G}).
\end{proof}

This lemma implies that the formula
\be\la{4-ZH0}
 \ti Z(\theta,t)=e^{-i \ti K(\theta) t}\ti Z(\theta,0) \in   C_b(\R,{\ti\cX^0}(T^3))
 \ee
gives a unique solution to (\ref{4-CPF3}) for each $\theta\in \Pi^*_+$
and every $\ti Z(\theta,0)\in{\ti\cX^0}(T^3)$.
Indeed, it suffices to expand $Z(\theta,t)$ in the eigenvectors of $\ti K(\theta)$
and to note that (\ref{4-wsF34}) gives ordinary differential equations for each component.
Now we can prove the well posedness of the Cauchy problem
for  equation \eqref{4-JD} with any $\theta\in\Pi^*_+$.

\bt\la{4-tdt}
Let  conditions \eqref{4-W1},
\eqref{4-Wai} and \eqref{4-L123i} hold,
 and $\theta\in\Pi^*_+$. Then,
for every initial state $\ti Y(\theta,0)\in {\ti\cX^1}(T^3)$, there exists a
 unique solution $\ti Y(\theta,\cdot)\in C_b(\R,{\ti\cX^1}(T^3))$ to equation \eqref{4-JD}
in the sense of~\eqref{4-wsF3}. Besides,
\be\la{4-econ}
 \langle\ti \Lam(\theta) \ti Y(\theta,t),
\ti \Lam(\theta)\ti Y(\theta,t)\rangle_{{\ti\cX^0}(T^3)}=C(\theta),
\qquad t\in\R.
\ee
\et
\begin{proof} First, we note that
$\ti Z(\theta,0):=\ti\Lam(\theta)\ti Y(\theta,0)\in {\ti\cX^0}(T^3)$. Hence,
(\ref{4-ZH0})
and (\ref{4-nVet})
imply that
\be\la{4-YH0}
  \ti Y(\theta,t)=\ti\Lam^{-1}(\theta) e^{-i K(\theta) t} \ti Z(\theta,0) \in   C_b(\R,{\ti\cX^1}(T^3))
\ee
is the unique solution to \eqref{4-JD}. Finally,
$$
\langle\ti \Lam(\theta)\ti  Y(\theta,t),\ti \Lam(\theta)\ti Y(\theta,t)\rangle_{{\ti\cX^0}(T^3)}=
\langle\ti Z(\theta,t),\ti Z(\theta,t)\rangle_{{\ti\cX^0}(T^3)}=C(\theta),\qquad t\in\R,
$$
 since  $e^{-i K(\theta) t}$  is  the unitary group
in ${\ti\cX^0}(T^3)$.
\end{proof}

\subsection{Linear stability in the energy space}

Thus, we have constructed $\ti  Y(\theta,t)$ uniquely 
for a.e.\ $\theta\in\Pi^*_+$. However,  (\ref{4-econ})
does not imply that there exists the corresponding
$Y(t)\in \cX^1$, since $\ti\Lam(\theta)$ can degenerate at
some points $\theta\in\Pi^*\setminus\Pi^*_+$. In particular,
it degenerates at
$\theta=0$ due to (\ref{4-vkat2}) and (\ref{4-ask}).
Thus, we need another phase space to construct solutions to (\ref{4-ws}).
Let us denote
$$
\cD_0:=\{Y\in{\cX^1}: \ti Y(\theta)=0 ~~\mbox{in a neighborhood of}~~\Ga^*\}.
$$
Lemma \ref{4-lB} implies that
$\ti \Lam(\theta)\ti Y(\theta)\in L^2(\Pi^*_+,{\ti\cX^0}(T^3))$
for
 $Y\in\cD_0$. Moreover, Theorem \ref{4-tpose} shows that
 \be\la{4-nW}
\Vert Y \Vert_\cW:=\Vert \ti\Lam(\theta)\ti Y(\theta) \Vert_{L^2(\Pi^*_+,{\ti\cX^0}(T^3))}=
\langle Y,BY\rangle
> 0,\qquad
Y\in\cD_0\setminus 0
\ee
 under conditions \eqref{4-W1}, \eqref{4-Wai} and (\ref{4-L123i}).
Hence, $\Vert Y \Vert_\cW$ is a~norm on $\cD_0$.

\bd\la{4-dW}
The Hilbert space $\cW$ is the completion of $\cD_0$ in the norm $\Vert Y \Vert_\cW$.

\ed
Formally, we have $\Vert Y\Vert_\cW=\langle Y,BY\rangle^{1/2}$.
By Corollary \ref{4-cK},
the Fourier--Bloch transform (\ref{4-YPi}) extends to the isomorphism
\be\la{4-FBW}
\cF:\cW\to \ti\cW:=\{\ti Y(\cdot)\in L^2_{\rm loc}(\Pi^*_+,{\ti\cX^0}(T^3)):
\Vert \ti\Lam(\theta)\ti Y(\theta) \Vert_{L^2(\Pi^*_+,{\ti\cX^0}(T^3))}<\infty\}.
\ee
Hence, we can extend the definition of weak solutions
(\ref{4-ws})
to  $Y(t)\in C(\R,\cW)$
by identity  \eqref{4-ws} with $V\in\cD$ such that
\be\la{4-VV}
\operatorname{supp}\ti V(\theta)\subset \Pi^*_+.
\ee

Theorem  \ref{4-tdt} has the following corollary, which is one of  main results of this chapter.

\bc\la{4-cd}
Let all conditions of Theorem \ref{4-tdt} hold. Then,
for every initial state $Y(0)\in \cW$, there exists a unique weak
 solution $Y(\cdot)\in C_b(\R,\cW)$ to equation \eqref{4-JDi}, the energy norm being conserved{\rm :}
\be\la{4-econ2}
 \Vert  Y(t)\Vert_\cW=\const,\qquad t\in\R.
\ee
The solution is given by formula \eqref{4-YH0}:
\be\la{4-tZ0}
 Y(t)=\cF^{-1}\ti\Lam^{-1}(\theta) e^{-i \ti K(\theta) t} \ti\Lam(\theta)\ti Y(\theta,0) \in   C_b(\R,\cW).
\ee
The energy conservation (\ref{4-econ2}) follows from  (\ref{4-econ}) by integration over $\theta\in\Pi^*_+$.

\ec
This means that the linearized dynamics
\eqref{4-JDi}
is stable in the `energy space' $\cW$: a~global
solution exists and is unique
for each initial state of finite energy, and the `energy norm' is constant in time.


\chapter{Dispersive decay}

\centerline{Abstract}
\medskip

We establish the dispersive decay for the linearised system 
(\ref{4-LPS1Li})
 assuming  that the ion charge density $\si(x)$
 sasisfies 
 i)  the Wiener and Jellium conditions
 (\ref{4-Wai}) and  (\ref{4-W1}), and 
 ii) it decays exponential  at infinity.
The corresponding examples are given.

The dispersion relations are introduced via
spectral resolution for the non-selfadjoint Hamiltonian generator 
in the Bloch representation
using the positivity of the energy  (\ref{4-Hpos2k}).

The main result
of this chapter
 is 
 the dispersive decay
in the weighted Sobolev norms
for solutions with initial states from the space of  continuous spectrum of the Hamiltonian generator.
We also prove
the absence of singular spectrum and 
limiting absorption principle.
The multiplicity of every eigenvalue is shown to be infinite.

The proofs rely on novel exact bounds and compactness
for the inversion of the Bloch generators and on 
uniform asymptotics for the dispersion relations
using the energy positivity (\ref{4-Hpos2k}).
We also use the  theory of analytic sets.

\section{Introduction}
We establish the dispersive decay for the linearised system 
(\ref{4-JDi}):
\be\la{JDi}
\dot Y(t)=AY(t),\qquad
 A=\left(\ba{ccrl}
 0   &  H^0 &  0  & 0\medskip\\
-H^0-2e^2\psi_0 G\psi_0 &   0  & -S  & 0\\
        0                         &           0                        &   0    &   M^{-1}\\
     -2S^{\5*}                   &              0            &  -T    &  0\\
\ea\right),
\ee
where we denote
$Y(t)=(\Psi_1(\cdot,t),\Psi_2(\cdot,t),q(\cdot,t),p(\cdot,t))$,
$H^0:=-\fr12\De$,
the operators $S$ and $T$ correspond to
matrices
(\ref{4-S}) and (\ref{4-T}), respectively.
We keep all notations of the previous chapter
and
 assume  that the ion charge density $\si(x)$
 satisfies 
 the Wiener and Jellium conditions
 (\ref{4-Wai}) and  (\ref{4-W1}).

Our main result is 
 the dispersive decay
in the weighted Sobolev norms
for solutions with initial states from the space of  continuous spectrum of the Hamiltonian generator.
We also prove
the absence of singular spectrum and 
limiting absorption principle.
The multiplicity of every eigenvalue is shown to be infinite.

For the proof we use the formula for solutions (\ref{4-tZ0}) and
the spectral resolution  of the selfadjoint operator $\ti K(\theta)$.
We establish that its spectrum is discrete by 
the energy positivity (\ref{4-Hpos2k}).
 The dispersion relations $\om_n(\theta)$ are introduced as
 the eigenvalues of $\ti K(\theta)$. 
 The key role is played 
 by their analyticity in a complex neighborhood   of the torus $T^3$
 which allows us to apply  the  theory of analytic sets \ci{S1989}.


As in previous chapter, we assume
(\ref{ro+}) and 
\be\la{L123i}
(\De-1)\si\in L^1(\R^3)
\ee
which provides a suitable decay for the Fourier transform of $\si$.
In particular, the  series (\re{W1}) are converging.
Moreover, we  assume
 the exponential decay of the ion charge density
\be\la{rd}
 |\si(x)|\le Ce^{-\ve|x|},~~~~x\in\R^3,
\ee
where $\ve>0$.
The cubic
lattice  $\Ga= \Z^3$ is chosen for the simplicity of notations.
 The results \ci{KKpl2015} imply that the energy operator $B$ is densely defined 
and is selfadjoint on the Hilbert space
\be\la{cX0}
\cX^0(\R^3):=L^2(\R^3)\oplus L^2(\R^3)\oplus l^2(\Z^3)\oplus l^2(\Z^3).
\ee
Our main goal is  to show the dispersive decay of solutions to (\re{JDi})
in the weighted norms
\be\la{cXw}
\Vert X\Vert_\al:=\Vert \langle x\rangle^\al\,\, \Psi_1(x)\Vert_{L^2(\R^3)}+
\Vert \langle x\rangle^\al \,\, \Psi_2(x)\Vert_{L^2(\R^3)}+
\Vert \langle n\rangle^\al \, q(n)\Vert_{l^2(\Z^3)}+
\Vert \langle n\rangle^\al \, p(n)\Vert_{l^2(\Z^3)}
\ee
with $\al<0$ for $X=(\Psi_1,\Psi_2,q,p)\in\cX^0(\R^3)$.
Recall that  $\cW$  denotes the completion of the space of functions 
$Y\in\cX^1(\R^3):=H^1(\R^3)\oplus H^1(\R^3)\oplus l^2(\Z^3)\oplus l^2(\Z^3)$ with finite norm
\be\la{cW}
\Vert Y\Vert_\cW:= \Vert\Lam Y\Vert_{\cX^0(\R^3)},\qquad \Lam:= B^{1/2}>0.
\ee
By Coollary \ref{4-cd},
  for any $Y(0)\in\cW$ there exists a unique  {\it weak solution}  $Y(t)\in C(\R,\cW)$ to (\re{JDi}).
 The main result of the this chapter is the following  theorem.

\bt\la{5-tmg}
 Let  conditions \eqref{L123i},  \eqref{rd},  \eqref{W1}, and \eqref{Wai}  hold.
Then every solution $Y(t)\in C(\R,\cW)$ to (\re{JDi}) splits as follows
\be\la{FY53}
 Y(t)=\sum_1^N Y_k e^{-i\om^*_k t}+Y_c(t).
\ee
Here   $Y_k\in\cW$ and
$N\le\infty$,  
and the sum is defined by (\ref{discr}) in the case $N=\infty$. Moreover,
\be\la{Minf}
|\om^*_k|\to\infty,\qquad k\to\infty,
\ee
if $N=\infty$. The remainder $Y_c(t)$ decays in the weighted norms
(\ref{cXw}): for any $\al<-3/2$,
\be\la{FY5}
\Vert \Lam Y_c(t)\Vert_\al\to 0,~~~~~~|t|\to\infty.
\ee
\et
This theorem  means the linear  asymptotics stability of the ground state (\re{ppo})
when $N=0$.
\medskip

Let us recall basic constructions of previous chapter
relying on      the Bloch transform (\ref{4-YPi}).
Namely, the generator  $A$
commutes with  translations by vectors from $\Ga$.
Hence, the  equation
\eqref{JDi} can be reduced using the
Bloch
transform
$Y(t)\mapsto\ti Y(\cdot,t)\in L^2(\Pi^*, \cX^0(T^3))$, 
where
$T^3:=\R^3/\Ga$ is the periodic cell, and
\be\la{cX03}
\cX^s(T^3):= H^s(T^3)\oplus H^s(T^3)\oplus \C^3\oplus \C^3,\qquad s\in\R.
\ee
In the Bloch transform equation  (\re{JDi})  reads as (\ref{4-JD}),
\be\la{Ble}
\dot{\ti Y}(\theta,t)=\ti A(\theta)\ti Y(\theta,t)\quad\mbox{\rm for \,\,a.e.}\,\,\theta\in\Pi^*\setminus\Ga^*,
\qquad t\in\R,
\ee
where $\ti Y(\cdot,t)\in \cX^0(T^3)$.
The Hamiltonian representation (\re{4-AJB}) implies that
\be\la{tiAJB}
 \ti A(\theta)=J\ti B(\theta),\qquad \theta\in\Pi^*\setminus \Ga^*,
 \ee
  where the Bloch energy operators $\ti B(\theta)$
are selfadjoint in  $\cX^0(T^3)$.
The main crux here is that
the generator  $\ti A(\theta)$ is not selfadjoint 
and even is not symmetric in the Hilbert space $\cX^0(T^3)$.
Hence we cannot diagonalize it
using the von Neumann spectral theorem.
Thus, even an introduction of the `dispersion relations' $\om_k(\theta)$, which are the eigenvalues of $\ti A(\theta)$, 
is a nontrivial problem in our situation.
Let us recall the definition (\ref{PI+}),
\be\la{PI+}
\Pi^*_+:=\{ \theta\in \Pi^*\setminus \Ga^*:\Si(\theta)>0 \}. 
\ee
This is an open set of
the complete Lebesgue measure, i.e.,  $\mes(\Pi^*\setminus\Pi^*_+ )=0$, by (\re{W1}).
The key role in our approach is played by the positivity
(\ref{4-Hpos2k}) proved in the previous chapter
under the Jellium and the Wiener conditions, 
\be\la{5-Hpos2}
\langle \ti Y, \ti B(\theta)\ti Y\rangle\ge \vka(\theta)\Vert \ti Y\Vert_{{\cX^1}({T^3})}^2,
\quad\ti Y\in{\cX^1}({T^3}),\quad \theta\in\Pi^*_+,
\ee
with $\vka(\theta)>0$,  the brackets  denoting the scalar product in ${\cX^0}({T^3})$.
In this chapter,  we
use this positivity
to show
that the eigenvectors of $\ti A(\theta)$ span the Hilbert space
$\cX^0({T^3})$ by
our spectral theory of the Hamiltonian operators with positive energy \ci{KK2014a,KK2014b}.
 Namely, setting $\ti\Lam(\theta):=\ti B^{1/2}(\theta)$, we obtain 
 from (\ref{tiAJB})
 that
 \be\la{AK2}
 \ti A(\theta)=-i\ti \Lam^{-1}(\theta) \ti K(\theta)\ti \Lam(\theta),\qquad\theta\in\Pi^*_+,
  \ee
  where  $\ti K(\theta)=\ti\Lam(\theta) iJ\ti\Lam(\theta)$
   is a selfadjoint operator  in    ${\cX^0}({T^3})$
   by Lemma \ref{4-lH0}.
  Hence,
  all solutions to (\re{Ble}) admit the representation
\be\la{solexp}
 \ti Y(\theta, t)=\ti\Lam^{-1}(\theta) e^{-i \ti K(\theta)t}
\ti\Lam(\theta)\ti Y(\theta,0),\quad t\in\R,
\quad \theta\in\Pi^*_+.
\ee
Moreover, (\ref{4-nVet}) implies that 
\be\la{nVet}
 \Vert \ti\Lam^{-1}(\theta)Z\Vert_{{\ti\cX^1}(T^3)}\le \fr 1{\sqrt{ \vka(\theta)}} \Vert Z\Vert_{{\ti\cX^0}(T^3)},
 \quad Z\in{\ti\cX^0}(T^3),\qquad  \theta\in\Pi^*_+.
\ee
Hence,
(\ref{4-G}) implies that $\ti K^{-1}(\theta)=i\ti\Lam^{-1}(\theta) J^{-1}\ti\Lam^{-1}(\theta)$ and
\be\la{qfB23}
\Vert
\ti K^{-1}(\theta) \ti Z \Vert_{\cX^1({T^3})}^2
\le \fr C{\vka(\theta)}\Vert \ti Z\Vert_{\cX^0({T^3})}^2,\qquad \ti Z\in \cX^0({T^3}).
\ee

We prove that the spectrum of $\ti K(\theta)$ is discrete and obtain the lower estimate for the eigenvalues
 $\om_k(\theta)$ which are 
 also the eigenvalues of  $\ti A(\theta)$
and are called 
 the dispersion relations (or the 
Floquet eigenvalues).

\medskip

Further, we represent
the solution $Y(t)$ as the inversion of the Bloch transform  (\re{solexp}).
This inversion is the series of  oscillatory  integrals
with the phase functions $\om_k(\theta)$.
Using the decay (\re{rd}) we  show that
\medskip\\
i)
$\om_k(\theta)$ are piecewise real-analytic in $\theta\in\Pi^*\setminus \Ga^*$
for every $k$;
\medskip\\
ii)
If $\om_k(\theta)\not\equiv\const$, then the set
\be\la{deg}
\{\theta\in\Pi^*\setminus \Ga^*: \na\om_k(\theta)= 0,\,\,
\det\,{\rm Hess}\,\om_k(\theta)= 0\}
\ee
has the Lebesgue measure zero;
\medskip\\
iii) In the case $N=\infty$
the limit (\re{Minf}) holds
for the constant dispersion relations $\om_k(\theta)\equiv \om^*_k$.
\medskip

These properties of the phase functions provide
the asymptotics  (\re{FY53}) and  (\re{FY5}).
Finally, we establish the
absence  of singular spectrum and
the limiting absorption principle for the selfadjoint operator $K:=i\Lam A\Lam^{-1}$.
\smallskip

Note that
all our methods and results extend obviously to
  the case of a~general lattice (\ref{4-gG})
\medskip

Let us comment on previous results 
on the dispersive decay for space-periodic 
dynamical systems.
\smallskip

The first results on the dispersive decay
$\sim t^{-1}$
were obtained by Firsova \ci{Fir1996}
for 1D  Schr\"odinger equation with space-periodic
potential  
for finite band case. The proofs
 rely on Korotyaev's results \ci{Kor1991}
on stationary
points of the dispersion relations.

The decay
$\sim t^{-\ve}$ with a small $\ve>0$
for the 1D Schr\"odinger equation
with an infinite band potential
was  established by Cuccagna
\ci{Cuc2008s}.
This decay was applied to the asymptotic  stability of standing waves
in presence of
small nonlinear perturbations \ci{Cuc2006}.

The absense of constant  dispersion relations
for the periodic
Schr\"odinger equations was  established
by Thomas \ci{Thom1973}, see also Lemma~2~(c) of  \ci{RS4}, p.~308.

Recently Prill \ci{Prill2014} proved the decay $\sim t^{-p}$ with  $p=3/2$ and $p=1/2$
(under distinct assumptions)
for
the 1D Klein-Gordon equation  with
a periodic Lam\'e potential and its short range perturbations.

The dispersive decay
for the periodic Schr\"odinger and Klein-Gordon  equations
in higher dimensions $n\ge 2$ was not obtained
previously.
\smallskip

This chapter is organized as follows.
In Section 2 we recall some formulas from~\ci{KKpl2015} for
the Bloch representation.
In Section 3 we introduce the dispersion relations and prove their properties.
In Section 4 we prove the asymptotics (\re{FY53}), (\re{FY5}), and
in Section 5 we justify the limiting absorption principle.
In Appendix A we collect some formulas from \ci{KKpl2015}
which we need in our calculations.
\





\setcounter{equation}{0}
\section{Dispersion relations}\la{dr}
  Here we establich the properties of the eigenvalues of  $\ti K(\theta)$ which play the key role in the proof of the 
   dispersive decay.
Lemma \re{4-lH0} implies the spectral resolution
\be\la{Hsr}
 \ti K(\theta)=\sum_{k=1}^\infty \om_k(\theta)P_k(\theta),~~~~~~\theta\in \Pi^*_+,
\ee
where  $\om_k(\theta)$
 are the eigenvalues
(dispersion relations)
counted with their multiplicities,
$$
|\om_1(\theta)|\le |\om_2(\theta)|\le\dots,
$$
and
$P_k(\theta)$ are the corresponding orthogonal projections.

\bl \la{lQ}
Let   conditions
 \eqref{L123i} and    \eqref{W1},  \eqref{Wai} hold and
$Q$ be a compact subset of $\Pi^*_+$. Then
\be\la{omk1}
|\om_k(\theta)|\ge \ve(Q)k^{2/3},\qquad k\ge 1,\qquad\theta\in Q,
\ee
where $\ve(Q)>0$.
\el
\begin{proof}
The key role in the proof of (\re{omk1}) is played by the estimate \ci[(7.23)]{KKpl2015}:
\be\la{com}
b(Q):=
\inf_{\theta\in Q}\vka(\theta)>0
\ee
for any compact subset $Q\subset\Pi^*_+$.  The  expansion (\re{Hsr})  implies that 
\be\la{Hsr2}
 |\ti K^{-1}(\theta)|=\sum_{k=1}^\infty |\om_k(\theta)|^{-1}P_k(\theta),~~~~~~\theta\in \Pi^*_+.
\ee
Moreover,  by duality we have from estimate (\re{qfB23})  
\be\la{qfB24}
\Vert
\ti K^{-1}(\theta) \ti Z \Vert_{\cX^0({T^3})}^2\le \fr C{b(Q)}\Vert \ti Z\Vert_{\cX^{-1}({T^3})}^2,\qquad \ti Z\in \cX^0({T^3}),\quad \theta\in Q
\ee
due to (\re{com}), since the operator $\ti K^{-1}(\theta)$ is selfadjoint.
At last, the norm in the right-hand side of (\ref{qfB24}) can be written
as $\Vert   g \ti Z\Vert_{\cX^{0}({T^3})}$, where
\be\la{g}
g=\left(\ba{cccc}
(-\De+1)^{-1/2}&0&0&0\\
0&(-\De+1)^{-1/2}&0&0\\
0&0&1&0\\
0&0&0&1
\ea
\right)
\ee
is the positive selfadjoint operator in $\cX^{0}({T^3})$. Now (\re{qfB24}) gives that
\be\la{qfB25}
\Vert
|\ti K^{-1}(\theta) | \ti Z \Vert_{\cX^0({T^3})}\le C(Q)\Vert g \ti Z\Vert_{\cX^{0}({T^3})},\qquad \ti Z\in \cX^0({T^3}),\quad \theta\in Q.
\ee
Hence,
the Rayleigh-Courant-Fisher theorem  (\ci[Theorem 1, p.110]{A}
and \ci[Theorem XIII.1, p.91]{RS4}) implies that
\be\la{RCF}
|\om_k(\theta)|^{-1}\le C(Q) g_k,\qquad k\ge 1,\quad  \theta\in Q,
\ee
 where $g_1\ge g_2\ge\dots$ are the eigenvalues of $g$ counted with their multiplicities.
 Therefore, (\re{omk1}) holds, since $g_k\le Ck^{-2/3}$.
 The last inequality is obvious, in as much as $k\le \#(n\in\Z^3:n^2+1 \le g_k^{-1})\le C_1 g_k^{-3/2}$.
  \end{proof}
Further we use the exponential decay of the ion charge density (\re{rd}).
Example \ref{4-ex2} gives the 
 densities $\si$ satisfying
 all conditions of Theorem \re{5-tmg}:
(\re{L123i}), (\re{rd})  and the Wiener and Jellium conditions
(\re{W1}), (\re{Wai}).
 The decay (\re{rd}) implies that
 the function  $\ti\si(\theta,y)$ is analytic with respect to~$\theta$ in the complex tube
\[
 \Pi^*_\ve:=\{\theta\in [\Pi^*\setminus\Ga^*]\oplus i\R^3:~|\rIm\theta|<\ve\}.
\]
Hence, the finite rank operators $\ti S(\theta)$
and $\ti T(\theta)$ defined in (\ref{4-tiHS}) -- (\ref{4-tiH1})
are also analytic in $\theta\in \Pi^*_\ve$. Therefore, $\ti K(\theta)$
is real-analytic on  $\Pi^*_+$.
Denote the set
\be\la{spec}
\cR:=\{(\theta,\om): \,\,\theta\in  \Pi^*_+, \,\,\om\in\spec\ti K(\theta)\}.
\ee
The eigenvalues $\om_k(\theta)$ and  the projections $P_k(\theta)$ become 
 single-valued functions on $\cR$: for $R=(\theta,\om_k(\theta))$
 \be\la{func}
 \theta(R):=\theta,\qquad
 \om(R):=\om_k(\theta),\qquad
  P(R):=P_k(\theta).
 \ee
 These functions are
  continuous on the manifold
  $\cR$ endowed with natural topology  by the incluzion $\cR\subset \Pi^*\times\R$.
 They are
  piecewise analytic on $\cR$
 by the following lemma,
 which extends \ci[Lemma 1.1]{S1989} from the Schr\"odinger equation with 
 periodic potential to the system (\re{JDi}).
\begin{lemma}\la{lom}
Let  all conditions of Theorem \re{5-tmg} hold.
Then
for every point $R^*=(\theta^*,\om^*)\in\cR$  there exists a neighborhood
$U=U(R^*)\subset \cR$ with its projection $V=V(R^*)$ onto $\Pi^*_+$\, ,
and  a critical subset
$\cC=\cC(R^*)\subset \Pi^*_\ve$, which is a finite union of analytic   submanifolds
of positive complex codimension in $\Pi^*_\ve$,
with the following properties:
\medskip\\
i) For any point
$R=(\theta,\om) \in U$
we have $\om(R):=\om\in \spec \ti K(\theta)$.  
\medskip\\
ii) For  any point $\theta' \in V\setminus \cC$  
there exists a neihborhood $W=W(\theta')\subset V\setminus \cC$ such that
$R=(\theta,\om) \in U$ with $\theta\in  W$ if and only if 
$\om=\om_l(\theta)$
with some $l=1,..., L=L(R^*)$. 
\medskip\\
iii)  The eigenvalues $\om_l(\cdot)$ and the corresponding projections  $P_l(\cdot)$ are  real-analytic functions  on   $W$
and admit an analytic continuation outside $\cC$ in a complex neighborhood of $\theta^*$ in $\Pi^*_\ve$.
\medskip\\
iv) For each $l=1,..., L(R^*)$, either
\be\la{Ck}
  \na \om_{l}(\theta)\ne 0,~~~~\theta\in W,
\ee
or
\be\la{omc}
  \om_{l}(\theta)\equiv \om^*,~~~~~~\theta\in W.
\ee
v) If  (\re{omc}) holds with some $l$ for a point $R^*=(\theta^*,\om^*)$, 
then 
the constant eigenvalue
also exists for
 $(\theta,\om^*)$ with any
$\theta\in\Pi^*_+$.
\end{lemma}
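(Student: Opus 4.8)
The plan is to reduce the analysis of the spectral variety $\cR$ to Kato's analytic perturbation theory combined with the theory of analytic sets \ci{S1989}, exactly as in the model Lemma~1.1 there. The essential preliminary observation is that, although $\ti K(\theta)$ itself involves the operator square root $\ti\Lam(\theta)=\ti B^{1/2}(\theta)$ and is therefore only manifestly \emph{real}-analytic on $\Pi^*_+$, its eigenvalues coincide, up to the factor $i$, with those of the generator $\ti A(\theta)=J\ti B(\theta)$: by \eqref{AK2} one has $\ti A(\theta)=\ti\Lam^{-1}(\theta)(-i\ti K(\theta))\ti\Lam(\theta)$, so $\spec\ti A(\theta)=-i\,\spec\ti K(\theta)$ on the real locus. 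The family $\ti A(\theta)$ is a \emph{complex}-analytic family of Fredholm operators $\cX^2(T^3)\to\cX^0(T^3)$ of index zero on the whole tube $\Pi^*_\ve$: under the exponential decay \eqref{rd} the entries $\ti G(\theta)$, $\ti S(\theta)$ and $\hat T(\theta)$ of $\ti B(\theta)$ in \eqref{4-hess2i} extend holomorphically in $\theta$, while the leading part $\ti H^0(\theta)$ is elliptic on the compact torus $T^3$. Hence I would carry out all the analytic bookkeeping on $\ti A(\theta)$ and transport the conclusions back to $\ti K(\theta)$ at the end, reality of $\om_l(\theta)$ on the real locus being automatic from the selfadjointness of $\ti K(\theta)$ (Lemma \ref{4-lH0}).

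First I would fix $R^*=(\theta^*,\om^*)\in\cR$ and set $\lam^*:=-i\om^*$. By the discreteness of $\spec\ti K(\theta)$ and the lower bound of Lemma \ref{lQ}, $\lam^*$ is an isolated eigenvalue of $\ti A(\theta^*)$ of finite multiplicity $m$. Choosing a small circle $\gamma$ in the $\zeta$-plane enclosing $\lam^*$ and no other point of $\spec\ti A(\theta^*)$, the Riesz projection $P(\theta)=\fr1{2\pi i}\oint_\gamma(\zeta-\ti A(\theta))^{-1}d\zeta$ is holomorphic and of constant rank $m$ on a complex ball $\cN=\cN(\theta^*)\subset\Pi^*_\ve$. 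Fixing a holomorphic frame of $\Ran P(\theta)$ represents $\ti A(\theta)|_{\Ran P(\theta)}$ by an $m\times m$ matrix with holomorphic entries, whose characteristic polynomial $p(\theta,\lam)$ has holomorphic coefficients and satisfies $p(\theta^*,\lam)=(\lam-\lam^*)^m$; the local spectral variety $\cR\cap U$ is exactly $\{p=0\}$ under $\om=i\lam$, which gives (i). I would then apply the Weierstrass preparation theorem to factor $p$ into irreducible Weierstrass polynomials in $\lam$ over the ring of holomorphic germs in $\theta$, and take the union of their discriminant loci as the first part of $\cC$. Since each irreducible factor is separable over the field of meromorphic germs (characteristic zero), its discriminant germ does not vanish identically, so each locus is a proper analytic subset of positive complex codimension, which after stratification is a finite union of analytic submanifolds. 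Off $\cC$ all roots are simple, and the implicit function theorem splits them into single-valued holomorphic branches $\om_l(\theta)=i\lam_l(\theta)$, $l=1,\dots,L$, with holomorphic Riesz projections $P_l(\theta)$; this yields (ii) and (iii).

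To obtain the dichotomy (iv) I would enlarge $\cC$ by the critical loci $\{\na\om_l=0\}$ of those branches that are not identically constant. For a nonconstant holomorphic $\om_l$ the zero set of $\na\om_l$ is a proper analytic subset of positive codimension: otherwise $\na\om_l\equiv0$ on an open set, forcing $\om_l\equiv\const$ by the identity theorem. As there are only finitely many branches, $\cC$ remains a finite union of positive-codimension analytic submanifolds. Then for any $\theta'\in V\setminus\cC$ and a sufficiently small neighborhood $W=W(\theta')\subset V\setminus\cC$, each branch over $W$ either is constant equal to $\om^*$, giving \eqref{omc}, or satisfies $\na\om_l\ne0$ throughout $W$, giving \eqref{Ck}. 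For the persistence statement (v) I would invoke the analytic Fredholm theorem for the holomorphic Fredholm family $\theta\mapsto\ti A(\theta)-\lam^*$ on the connected tube $\Pi^*_\ve$: either $\ti A(\theta)-\lam^*$ is noninvertible for every $\theta$, or it is invertible off a proper analytic subset with empty interior. If \eqref{omc} holds for some branch on the open set $W$, then $\lam^*\in\spec\ti A(\theta)$ on $W$, which has nonempty interior, so the second alternative is excluded; hence $\lam^*\in\spec\ti A(\theta)$, i.e. $\om^*\in\spec\ti K(\theta)$, for every $\theta\in\Pi^*_+$.

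The main obstacle is the rigorous passage from the multivalued eigenvalue ``function'' to the analytic-set picture of \ci{S1989}: one must control the branch points and discriminant loci of all $m$ eigenvalues of the cluster simultaneously and verify that these loci are genuinely of positive complex codimension. This is precisely where the separability of the irreducible Weierstrass factors is used, and where the non-selfadjointness of $\ti A(\theta)$ must be handled with care, since for complex $\theta$ a priori Jordan degeneracies may occur away from the real locus even though $\ti K(\theta)$ is selfadjoint and diagonalizable on $\Pi^*_+$; the positivity \eqref{5-Hpos2} and the resolvent bound \eqref{qfB23} are what keep the whole construction uniform on compact subsets of $\Pi^*_+$.
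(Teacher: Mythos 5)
Your proposal follows, in essence, the same route as the paper's proof: reduce the cluster of spectrum near $\om^*$ to a finite-rank Riesz projection, represent the restricted operator by an $m\times m$ matrix with analytic entries, and run the analytic-set arguments of Lemma 1.1 of \ci{S1989} on its characteristic polynomial --- this is exactly (\re{Rp1})--(\re{Rp2}) in the paper, which then simply outsources items i)--iv) to Sj\"ostrand, so your Weierstrass-preparation and discriminant bookkeeping fills in details the paper omits. You deviate in two genuine ways. First, you do the holomorphic bookkeeping on $\ti A(\theta)=J\ti B(\theta)$ rather than on $\ti K(\theta)$: since the entries of $\ti B(\theta)$ continue holomorphically while $\ti K(\theta)=\ti\Lam(\theta)\3 iJ\3\ti\Lam(\theta)$ involves the square root $\ti\Lam(\theta)=\ti B^{1/2}(\theta)$, this avoids justifying the complex continuation of the square root, which the paper handles only by asserting real-analyticity of $\ti K(\theta)$ and of its Riesz projection (\re{Rp1}). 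Note, however, that the avoidance is only partial: item iii) concerns the projections $P_l(\theta)$ of $\ti K(\theta)$, and transporting your holomorphic projections for $\ti A(\theta)$ back through $P_l^{K}(\theta)=\ti\Lam(\theta)P_l^{A}(\theta)\ti\Lam^{-1}(\theta)$ reintroduces the analyticity of $\ti\Lam(\theta)$, so you cannot dispense with that point entirely. Second, for v) you invoke the analytic Fredholm alternative for $\theta\mapsto \ti A(\theta)-\lam^*$ on the connected tube, whereas the paper argues that the set of $\theta\in\Pi^*_+$ carrying the constant eigenvalue is simultaneously open and closed, by the analyticity of the branches on the connected region $\Pi^*_\ve\setminus\cC$; both arguments work, and yours is arguably more robust since it does not require tracking individual branches globally.

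Two loose ends should be repaired. (a) Your $\cC$ contains only the discriminant loci of the irreducible Weierstrass factors; you must also include the zero loci of the pairwise resultants of \emph{distinct} irreducible factors (equivalently, the discriminant locus of the squarefree part of $p$). Collisions of roots belonging to different factors are not branch points, so the eigenvalue branches remain analytic there, but the individual Riesz projections $P_l(\theta)$ cannot be defined by separated contours at such crossings, and in several variables one cannot invoke Rellich-type regularity through a crossing --- so item iii) would fail on the collision set unless it is absorbed into $\cC$. (b) Your claim that $\ti A(\theta)$ is a holomorphic Fredholm family on the \emph{whole} tube $\Pi^*_\ve$ is slightly too strong: $\ti G(\theta)=(i\na+\theta)^{-2}$ is singular on the complex zero set of $(\theta+2\pi m)^2$, $m\in\Z^3$, which enters the tube near the complexification of $\Ga^*$. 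This is harmless for your proof of v) --- that singular set is a proper complex-analytic subset, so its complement in the tube is still open and connected, which is all the analytic Fredholm alternative requires --- but it should be excluded explicitly.
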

\begin{proof}
Let us set $r:=\dist (\om^*, \spec \ti K(\theta^*)\setminus\om^*)>0$. Then
\be\la{Rp1}
 P(\theta)=-\fr1{2\pi i}
 \int_{|\om-\om^*|=r/2} [\ti K(\theta)-\om]^{-1}d\om
\ee
is a finite-rank 
Riesz 
 projection, which is analytic in a complex neighborhood of $\theta^*$.
Its range $\Ran P(\theta)$ is invariant under $\ti K(\theta)$, and hence the bifurcated  from
$\om^*$ eigenvalues of $\ti K(\theta)$ coincide with the roots of the characteristic equation
\be\la{Rp2}
 \det[m(\theta)-\om]=0,
\ee
where
$m(\theta):=\ti K(\theta)|_{\Ran P(\theta)}$.
The coefficients of this polynomial are analytic
functions of $\theta$
in a complex neighborhood of $\theta^*$, and hence
i)--iv) follow
by the arguments from the proof of  Lemma 1.1 of \ci{S1989}.

Finally,  v) follows from the fact that the set of the corresponding $\theta\in\Pi^*_+$ is closed and open at the same time by the analyticity
of each $\om_l(\theta)$ in a connected open region of  $\Pi^*_\ve\setminus\cC$.
\end{proof}

\begin{definition}\la{dOm*}
 $\Om^*$ is the set of all 
 $\om^*$ which are 
constant eigenvalues (\re{omc}) at least for one point
$R^*\in\cR$. 
\end{definition}


\setcounter{equation}{0}
\section{Dispersion decay}
Here we prove our main Theorem \re{5-tmg}.
Recall that
$\Lam: \cW\to\cX^1(\R^3)$  is an isomorphism by  the definition (\re{cW}), and hence,
 it suffices to check the corresponding asymptotics for $Z(t):=\Lam Y(t)\in C(\R,\cX^0(\R^3))$:
\be\la{FY52}
 Z(t)=\sum_1^N Z_k e^{-i\om^*_k t}+Z^c(t);
\qquad \Vert Z^c(t)\Vert_\al \to 0~,\quad |t|\to\infty,
\ee
where $Z_k\in\cX^0(\R^3)$ and $ \al<-3/2$.
Using the inversion formula  (\re{4-FZI})
and the representation 
 (\re{4-ZH0}) for $\ti Z(\theta,t)$,
we obtain the corresponding `cell representation'
\be\la{FY2}
 Z(n,t)=|\Pi^*|^{-1}\int_{\Pi^*} e^{-in\theta}\cM(-\theta)e^{-i\ti K(\theta)t}
 \ti Z(\theta,0)d\theta,\quad n\in\Z^3.
\ee
The spectral resolution
 (\re{Hsr}) implies that
\be\la{FY22}
 Z(n,t)=|\Pi^*|^{-1}\int_{\Pi^*}e^{-in\theta}\cM(-\theta)
 [\sum_k e^{-i\om_k(\theta)t}P_k(\theta)]\ti Z(\theta,0)d\theta,\qquad n\in\Z^3.
\ee
Equivalently,
\be\la{FY222}
 Z(n,t)=|\Pi^*|^{-1}\int_{\cR}e^{-in\theta}\cM(-\theta)
 e^{-i\om t} P(\theta,\omega)\ti Z(\theta,0)  d\theta,\qquad n\in\Z^3, 
\ee
where  $\theta$, $\om$ and the projection $P(\theta,\omega)$ are the single-valued continuous functions
(\re{func}) on $\cR$. We denote by $d\theta$ is the corresponding differential form on $\cR$.
The integral is well defined by Lemma \re{lom}. 
\medskip

\subsection{Discrete spectral component}  
We define the series of oscillating terms of (\re{FY52}) by its cell representation
\be\la{discr}
\sum_k Z_k (n)e^{-i\om^*_k t}=
|\Pi^*|^{-1}\int_{\{(\theta,\om)\in\cR:\om\in \Om^*\}}e^{-in\theta}\cM(-\theta)
 e^{-i\om t} P(\theta,\omega)\ti Z(\theta,0)  d\theta,\qquad n\in\Z^3.
\ee
Obviously,  (\re{Minf}) follows from (\re{omk1}).
\medskip

\subsection{Continuous spectral component} 
It remains to prove the decay (\re{FY52}) for the remainder corresponding to the cell representation
\be\la{FY222d}
 Z^c(n,t)=|\Pi^*|^{-1}\int_{\cV}e^{-in\theta}\cM(-\theta)
 e^{-i\om t} P(\theta,\omega)\ti Z(\theta,0)  d\theta,\qquad n\in\Z^3,
\ee
where the integration  spreads over the set
$\cV:=\{(\theta,\om)\in\cR:\om\not\in \Om^*\}$.
For every $\nu>0$, we split $Z^c(t)=Z^\nu_-(t)+Z^\nu_+(t)$, where
\beqn
 Z^\nu_-(n,t)&=&|\Pi^*|^{-1}\ds\int_{\cV^\nu_-} e^{-in\theta}\cM(-\theta)
  e^{-i\om t}P(\theta,\om)\ti Z(\theta,0)d\theta,
 \la{Znu}\\
 \nonumber\\
 \la{Rnu}
  Z^\nu_+(n,t)&=&|\Pi^*|^{-1}\ds\int_{\cV^\nu_+}e^{-in\theta}\cM(-\theta)
  e^{-i\om t}P(\theta,\om)\ti Z(\theta,0)d\theta.
\eeqn
Here $\cV^\nu_-:=\{(\theta,\om)\in\cV:  |\om|\le\nu\}$ and
$\cV^\nu_+:=\{(\theta,\om)\in\cV:  |\om| >\nu\}$.
\medskip\\
  {\bf  High energy  component.}
By  (\re{4-nV}) and the
Parseval--Plancherel theorem
\be\la{Zn}
\Vert  Z^\nu_+(t)\Vert_{\cX^0(\R^3)}^2=\sum_{n\in\Z^3} \Vert  Z^\nu_+(n,t)\Vert_{\cX^0(\Pi)}^2=
 |\Pi^*|^{-1}\int_{\cV^\nu_+}
\Vert P(\theta,\om)\ti Z(\theta,0)\Vert_{\cX^0({T^3})}^2 d\theta.
 \ee
 According to definition (\re{cW})
the condition $Y(0)\in\cW$ means that $Z=\Lam Y(0)\in\cX^0(\R^3)$.
Hence, the Parseval--Plancherel identity gives
\be\la{5-Wm}
\Vert Z(0)\Vert^2_{\cX^0(\R^3)}=|\Pi^*|^{-1} \int_{\Pi^*} \Vert\ti Z(\theta,0)\Vert_{\cX^0({T^3})}^2 d\theta<\infty.
\ee
Therefore,    (\re{Zn}) implies that
\be\la{Zn2}
\Vert  Z^\nu_+(t)\Vert_{\cX^0(\R^3)}
\to 0, \qquad \nu\to\infty
\ee
uniformly in $t\in\R$
by the $\si$-additivity
since $\cap_{\nu>0} \cV_+^\nu=\emptyset $. 
\medskip\\
{\bf Low energy component.}
It remains  to prove the decay (\re{FY52}) for $Z^\nu_-(t)$ corresponding to 
the cell representation
$Z^\nu_-(n,t)$.
The weighted norms  (\re{cXw}) are equivalent to the modified norms 
\be\la{eqn}
|\!|\!| Z|\!|\!|_\al ^2:=\sum_{n\in\Z^3}(1+|n|)^{2\al} \Vert Z(n) \Vert_{\cX^0(\Pi)}^2,\qquad  Z\in\cX^0(\R^3),
\ee
where $Z(n)$
 are defined as in (\re{4-Yn2}), (\re{4-YP}). Hence, 
 the decay  (\re{FY52}) for $Z^c(t)$ is equivalent to
 \be\la{dZc}
 \sum_{n\in \Z^3}(1+|n|)^{2\al} \Vert Z^c(n,t) \Vert_{\cX^0(\Pi)}^2\to 0,
 \quad t\to\infty.
\ee
It suffices to check that every norm
$\Vert Z^\nu_-(n,t) \Vert_{\cX^0(\Pi)}$ decays to zero as $t\to\infty$, since $\al<-3/2$ and
\be\la{dZc2}
 \sum_{n\in \Z^3} \Vert Z^\nu_-(n,t) \Vert_{\cX^0(\Pi)}^2=\Vert  Z^\nu_-(t) \Vert_{\cX^0}^2=\const, \qquad t\in\R
 \ee
 by (\re{4-nV}) and formula of type (\re{4-ZH0}) for $\ti Z^\nu_-(\theta,t)$.
 \medskip\\
{\bf Reduction to a compact set and partition of unity.} 
Consider an open precompact subset $Q\subset\Pi^*_+$ such that the Lebesgue measure of
$\Pi^*_+\setminus Q$ is sufficiently small, and denote
$\hat Q^\nu:=\{R=(\theta,\om)\in\cV^\nu_-: \theta\in Q\}$.
Then
the $\cX^0(\Pi)$-norm of the integral  of type (\re{Znu})  over
$\cV^\nu_-\setminus \hat Q^\nu$ is small uniformly in $t\in\R$ by (\re{5-Wm}).
Hence,
it remains to prove the decay for
\be\la{Zn3}
Z^\nu_Q(n,t):=|\Pi^*|^{-1}\int_{\hat Q^\nu}
e^{-in\theta}\cM(-\theta)
  e^{-i\om t}P(\theta,\omega)\ti Z(\theta,0)d\theta.
\ee
The  asymptotics (\re{omk1}), which are
uniform in $\theta\in Q$, imply that the set $\hat Q^\nu$
is open and precompact in $\cR$. 
Neglecting an arbitrarily  small term we can assume that $Q$ does not intersect a small neighborhood of the
critical submanifold
 $\cC_j\subset V(\theta_j)$ for every $j$.
Hence, we can cover $\hat Q^\nu$
 by a finite number of neighborhoods
$W(R_j)$ from Lemma \re{lom} 
with $R_j=(\theta_j,\om_j)\in\ov{\hat Q^\nu}$.
Then there exists a 
partition of unity
$\chi_j\in C(\cR)$   with $\supp\chi_j\subset W(R_j)$:
\be\la{pu}
 \sum_j\chi_j(R)=1,\quad R=(\theta,\omega)\in \hat Q^\nu.
 \ee
 Hence, (\re{Zn3}) becomes the finite sum
\be\la{FY223}
 Z^\nu_{jl}(n,t)=
 \sum_{j,l}
 |\Pi^*|^{-1}
 \int_{W(R_j)}
 e^{-in\theta}
 \chi_j(\theta,\om_{jl}(\theta))\cM(-\theta)e^{-i\om_{jl}(\theta)t}P_{jl}(\theta)\ti Z(\theta,0)d\theta,
\ee
where the functions $\om_{jl}$ and projections $P_{jl}$ are constructed in Lemma \re{lom}. 
Note, that  all constant dispersion relations (\re{omc}) are excluded from the integration (\re{FY223}), and hence,
the remaining nonconstant dispersion relations $\om_{jl}(\theta)$ satisfy (\re{Ck}).
 Let us approximate
\medskip\\
i) $\chi_j(\theta,\om_{jl}(\theta))$ by $\chi_{jl}(\cdot)\in C_0^\infty(W(R_j))$ and
\medskip\\
ii) $P_{jl}(\theta)\ti Z(\theta,0)$
by some  functions $D_{jl}(\theta)\in C^\infty(W(R_j),\cX^0({T^3}))$  in the norm of $L^2(Q,\cX^0({T^3}))$.
\medskip\\
Then the corresponding error in (\re{FY223})  is small
in the norm $\cX^0(\Pi)$ uniformly in $n\in\Z^3$ and $t\in\R$.
Finally,  (\re{Ck}) implies by a partial integration the decay of the integrals  (\re{FY223})
with $\mu_{jl}(\theta)D_{jl}(\theta)$ instead of $\chi_j(\theta,\om_{jl}(\theta))P_{jl}(\theta)\ti Z(\theta,0)$.
Theorem \re{5-tmg} is proved.

\setcounter{equation}{0}
\section{Spectral properties of the selfadjoint generator}
Here we study
spectral properties of the 
operator $K:=\cF^{-1}\ti K\cF$, where $\ti K$ denotes the operator of multiplication by $\ti K(\theta)$ 
in the Hilbert space
 $L^2(\Pi^*,\cX^0({T^3}))$. 
  \bl
$K$  is  a selfadjoint operator in  $\cX^0(\R^3)$ with a dense domain $D(K)$. 
 \el
 \begin{proof}
 Lemma \re{4-lH0} ii) implies that the operator  of multiplication by $\ti K^{-1}(\theta)$ is  selfadjoint and injective in $L^2(\Pi^*,\cX^0({T^3}))$.
 Hence,  its inverse   is densely defined  selfadjoint operator  in $L^2(\Pi^*,\cX^0({T^3}))$ by Theorem 13.11 (b) of \ci{Rudin}.
 \end{proof}
\bc
The Hamiltonian generator $A$ from (\re{JDi})
admits the representation
\be\la{AK2f}
 AY=-i\Lam^{-1}K \Lam Y,\qquad Y\in\Lam^{-1} D(K),
  \ee
where $\Lam:=\cF^{-1}\ti\Lam(\theta):\cW\to\cX^0(\R^3)$ is the isomorphism.
\ec

By (\re{Hsr}),
\be\la{FY24}
 KZ(n)=|\Pi^*|^{-1}\int_{\Pi^*_+}e^{-in\theta}\cM(-\theta)
 \sum_k \om_k(\theta) P_k(\theta)\ti Z(\theta)d\theta,~~~~~~~n\in\Z^3
\ee
for any $Z\in \cX^0(\R^3)$.
Similarly
\be\la{FY23}
 Z(n)=|\Pi^*|^{-1}\int_{\Pi^*_+}e^{-in\theta}\cM(-\theta)
 \sum_k P_k(\theta)\ti Z(\theta)d\theta,~~~~~~~n\in\Z^3.
\ee
Therefore,
\be\la{FY25}
(K-\om)Z(n)=|\Pi^*|^{-1}\int_{\Pi^*_+}e^{-in\theta}\cM(-\theta)
\sum_k (\om_k(\theta)-\om) P_k(\theta)\ti Z(\theta)d\theta.
\ee
Hence, the discrete spectrum $\si_p(K)$ consists of constant dispersion relations.

\begin{lemma}\la{leig}
Let all conditions of Theorem \re{5-tmg} hold. Then
\medskip\\
i)
$\si_p(K)=\Om^*$.
\medskip\\
ii) The multiplicity of every  eigenvalue is  infinite.
\end{lemma}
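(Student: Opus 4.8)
The plan is to read off both assertions directly from the diagonal representation (\re{FY25}) of $K-\om$, combined with the local analytic structure of the dispersion relations established in Lemma \re{lom}. By the Parseval--Plancherel identity (\re{4-PPt}), $\cF$ is an isomorphism $\cX^0(\R^3)\to L^2(\Pi^*,\cX^0(T^3))$, so $\om\in\si_p(K)$ if and only if there is a nonzero $\ti Z\in L^2(\Pi^*,\cX^0(T^3))$ with $\ti K(\theta)\ti Z(\theta)=\om\ti Z(\theta)$ for a.e.\ $\theta\in\Pi^*_+$. Using the spectral resolution (\re{Hsr}) and the mutual orthogonality of the projections $P_k(\theta)$, this is equivalent to $(\om_k(\theta)-\om)P_k(\theta)\ti Z(\theta)=0$ for every $k$ and a.e.\ $\theta$. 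Introducing the set $E_\om:=\{\theta\in\Pi^*_+:\om\in\spec\ti K(\theta)\}$, the condition says that $\ti Z(\theta)$ must lie in the $\om$-eigenspace of $\ti K(\theta)$, and in particular $\ti Z(\theta)=0$ for a.e.\ $\theta\notin E_\om$. Hence $\om\in\si_p(K)$ if and only if $\mes(E_\om)>0$, and part i) reduces to the dichotomy $\mes(E_\om)>0\iff\om\in\Om^*$.

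The easy implication is $\om\in\Om^*\Rightarrow\mes(E_\om)>0$: if $\om$ is a constant eigenvalue at one point of $\cR$, then by Lemma \re{lom} v) the constant eigenvalue $\om$ persists for every $\theta\in\Pi^*_+$, so $E_\om=\Pi^*_+$ has full measure. The converse $\mes(E_\om)>0\Rightarrow\om\in\Om^*$ is where the theory of analytic sets enters, and I expect it to be the main obstacle. The idea is to cover the second-countable manifold $\cR$ by countably many of the neighborhoods $U(R^*)$ from Lemma \re{lom}; in each such neighborhood the eigenvalues meeting the level $\om$ are the finitely many real-analytic branches $\om_l(\cdot)$, each of which, by Lemma \re{lom} iv), either is $\equiv\om^*$ or satisfies $\na\om_l\ne 0$ on $W\setminus\cC$. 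If $\om\notin\Om^*$, then no branch equals $\om$ identically, so inside each neighborhood the contribution to $E_\om$ is contained in the union of the critical set $\cC$ (a finite union of positive-codimension analytic submanifolds, hence Lebesgue-null) with the level sets $\{\om_l=\om\}$ of real-analytic functions having nonvanishing gradient (smooth hypersurfaces by the implicit function theorem, hence Lebesgue-null). Summing the countably many null contributions yields $\mes(E_\om)=0$, contradicting $\mes(E_\om)>0$. This establishes i).

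For ii) I would use that any $\om\in\si_p(K)=\Om^*$ is, by Lemma \re{lom} v), a constant eigenvalue for every $\theta\in\Pi^*_+$, with spectral projection $P(\theta,\om)$ of rank $\ge 1$ on the full-measure set $\Pi^*_+$. The $\om$-eigenspace of $K$ is then isomorphic, via $\cF$, to the direct integral $\int_{\Pi^*_+}^\oplus\Ran P(\theta,\om)\,d\theta$; choosing $\ti Z$ supported on arbitrarily many disjoint subsets of $\Pi^*_+$ of positive measure produces infinitely many linearly independent eigenvectors, so the multiplicity is infinite. The only technical point to verify, which I would dispatch by a measurable-selection argument based on the piecewise analyticity of $P(\theta,\om)$ from Lemma \re{lom} iii), is that such $\ti Z$ can be chosen measurably and in $L^2$; this is routine given the continuity of the projections on $\cR$.
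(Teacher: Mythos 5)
Your proposal is correct and takes essentially the same route as the paper: both directions are read off the fibered representation (\re{FY25}), with $\om^*\in\Om^*$ yielding an infinite-dimensional space of eigenvectors given by $L^2$-sections $\ti Z(\theta)$ in the range of the (piecewise analytic) constant-branch projection from Lemma \re{lom}, exactly as in the paper's construction supported on $V(\theta^*)$. Your converse --- that $\ti Z\ne 0$ in $L^2$ forces $\mes(E_\om)>0$, followed by a countable cover of $\cR$ by the neighborhoods of Lemma \re{lom} and the nullity of the critical sets $\cC$ and of the level sets $\{\om_l=\om\}$ of nonconstant branches with $\na\om_l\ne 0$ --- is a careful elaboration of the paper's one-line claim that $\ti Z(\theta^*)\ne 0$ implies (\re{omc}); since $\ti Z$ is only defined a.e., your measure-theoretic version is in fact the argument the paper's terse statement tacitly relies on.
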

\begin{proof}
Let $\om^*\in\Om^*$ is a constant eigenvalue  (\re{omc}) corresponding to a point $R^*=(\theta^*,\om^*)\in\cR$.
Let us take
any    $Z\in\cX^0(\R^3)$ with the Bloch transform $\ti Z(\theta) \in\Ran P(\om^*)$ for
$\theta\in V(\theta^*)$ and $\ti Z(\theta)\equiv 0$ for $\theta\not\in V(\theta^*)$.
Then 
(\re{omc}) and (\re{FY25})  imply that $(K-\om^*)Z=0$. Obvioulsy, the space of such $Z$ is infinite dimensional.
\medskip\\
Conversely, let $(K-\om^*)Z=0$ for some $Z\in\cX^0(\R^3)$, and $\ti Z(\theta^*)\ne 0$. Then (\re{FY25}) implies (\re{omc}) with some $l=1,...,L(\theta^*,\om^*)$.
\end{proof}

Let us show   that the continuous  spectrum of $K$ is absolutely continuous.
First,
(\re{FY25}) implies that
 the resolvent  $R_K(\om):=(K-\om)^{-1}$ for $\rIm\om\ne 0$ is given by
\be\la{FY26}
R_K(\om)Z(n)=|\Pi^*|^{-1}\int_{\Pi^*_+}e^{-in\theta}\cM(-\theta)
\sum_k (\om_k(\theta)-\om)^{-1} P_k(\theta)\ti Z(\theta)d\theta,\qquad Z\in\cX^0(\R^3).
\ee
Denote by $\cX_d$ the space  of discrete spectrum of $K$.
\bl\la{lss}
Let all conditions of Theorem \re{5-tmg} hold. Then
the singular spectrum of $K$ is empty.
\el
\begin{proof}
This follows by  Theorem XIII.20 of  \ci{RS4}. Namely, it suffices to check
the corresponding  criterion
\be\la{crit}
\sup_{0<\ve<1}
\int_a^b|\rIm\langle Z, R_K(\om+i\ve) Z\rangle |^pd\om<\infty
\ee
with any $a,b\in\R$ and some $p>1$ for a dense set of $Z\in \cX_d^\bot$.
For example, for  the linear span of vectors  $Z\in\cX^0(\R^3)$ with the Bloch transform
\be\la{YV}
\ti Z(\theta)=P_l(\theta)D(\theta),~~~~~~~~
 D\in C_0^\infty(W, \cX^0({T^3})),
\ee
as constructed in Lemma \re{lom} for each $R^*=(\theta^*,\om^*)\in\cR$,
where $P_l(\theta)$ is  the projection
corresponding to an eigenvalue $\om_l(\theta)$ satisfying   (\re{Ck}).
It suffices to check (\re{crit})
only for the vectors of type (\re{YV}). Applying Sokhotski-Plemelj's formula, we obtain  for these vectors
\beqn\la{crit2}
\rIm\langle Z, R_K(\om+i\ve) Z\rangle
&=&\,\,\,\int_W\rIm\langle P_l(\theta)D(\theta), (\om-\om_l(\theta)-i\ve)^{-1} P_l(\theta)D(\theta)\rangle_{\cX({T^3})}d\theta
\nonumber\\
&\to\!\!&\!\!-\!\pi\!\!\int_{\om_l(\theta)=\om}\fr{\langle P_l(\theta)D(\theta),P_l(\theta)D(\theta)\rangle_{\cX({T^3})}}
{|\na\om_l(\theta)|}d\theta,\qquad \ve\to 0+,
\eeqn
which implies (\re{crit}) with any $p\ge 1$.
\end{proof}

In concluzion, let us prove the Limiting Absorption Principle. Let us denote by $\cX_\al$ the Hilbert space
of functions with the finite norm (\re{eqn}).

\begin{lemma}\la{llap}
Let all conditions of Theorem \re{5-tmg} hold,
and let $Z\in\cX_d^\bot$ be
a finite linear combination of
the vectors
with the Bloch transform
of type   (\re{YV}).
Then   for any
$\om\in \R$ and
$\al<-7/2$
\be\la{lap}
R_K(\om\pm i\ve)Z\toYs R_K(\om\pm i0)Z,~~~~~~\ve\to +0.
\ee
\end{lemma}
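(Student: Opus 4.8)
The plan is to establish the Limiting Absorption Principle \eqref{lap} by reducing it, via the Bloch transform and the resolvent formula \eqref{FY26}, to a uniform estimate on a single oscillatory integral over the dispersion manifold. For a vector $Z$ with Bloch transform of type \eqref{YV}, namely $\ti Z(\theta)=P_l(\theta)D(\theta)$ with $D\in C_0^\infty(W,\cX^0({T^3}))$ supported in a neighborhood $W=W(\theta')$ from Lemma \re{lom} where the nonconstant dispersion relation $\om_l(\theta)$ satisfies the nondegeneracy \eqref{Ck}, the resolvent acts diagonally. Formula \eqref{FY26} then gives, in the cell representation,
\be\la{lapcell}
 R_K(\om\pm i\ve)Z(n)=|\Pi^*|^{-1}\int_{W}e^{-in\theta}\cM(-\theta)
 \fr{P_l(\theta)D(\theta)}{\om_l(\theta)-\om\mp i\ve}d\theta,\qquad n\in\Z^3.
\ee
The task is to prove that the right-hand side converges in the weighted norm $\Vert\cdot\Vert_\al$ with $\al<-7/2$ as $\ve\to+0$.

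First I would use the nondegeneracy \eqref{Ck}: since $\na\om_l(\theta)\ne0$ on $\supp\chi_j$, one can introduce, on $W$, local coordinates in which $\om_l$ is one of the coordinate functions. This foliates $W$ by the level sets $\{\om_l(\theta)=s\}$ and lets us apply the coarea formula, rewriting \eqref{lapcell} as a one-dimensional Cauchy-type integral in $s$ with a smooth $\cX^0({T^3})$-valued density obtained by integrating $e^{-in\theta}\cM(-\theta)P_l(\theta)D(\theta)$ over the level surfaces. The Sokhotski--Plemelj formula \eqref{crit2} then produces, in the limit $\ve\to+0$, the principal value plus the surface delta contribution, exactly as in the computation leading to \eqref{crit2}. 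The remaining point is that this limit holds in the weighted space, i.e.\ that the resulting density is sufficiently regular in $n$.

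The decay in $n$ is where the weight $\al<-7/2$ enters. After the coarea reduction the integrand on each level set is a smooth compactly supported density, so one partial integration in the tangential directions gains a factor $\langle n\rangle^{-1}$, and the resolvent kernel behaves for large $n$ like a surface oscillatory integral with stationary-phase decay $\langle n\rangle^{-1}$ off the boundary plus a $C^\infty$ density on a $2$-surface; combined with the one transverse partial integration this yields pointwise bounds $\Vert R_K(\om\pm i\ve)Z(n)\Vert_{\cX^0(\Pi)}\le C\langle n\rangle^{-1}$ uniformly in $\ve$, and the difference $R_K(\om\pm i\ve)Z-R_K(\om\pm i0)Z$ tends to zero pointwise. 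Summing against the weight $(1+|n|)^{2\al}$ as in \eqref{eqn} converges precisely when $2\al<-3$, i.e.\ $\al<-3/2$; the stronger requirement $\al<-7/2$ is needed to absorb the loss of two further powers coming from the two transverse differentiations of the Cauchy kernel $(\om_l(\theta)-\om\mp i\ve)^{-1}$ used to control the $\ve$-uniformity of the convergence near the singular level set $\{\om_l=\om\}$.

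The main obstacle I anticipate is the $\ve$-uniform control near this singular level set, where the factor $(\om_l(\theta)-\om\mp i\ve)^{-1}$ is nearly singular. The nondegeneracy \eqref{Ck} guarantees the level set $\{\om_l(\theta)=\om\}$ is a smooth surface and lets one differentiate across it, but one must verify that all boundary terms from the partial integrations remain bounded uniformly in $\ve$ and converge; this is the delicate part and is exactly what forces the extra weight $\al<-7/2$. The constancy case \eqref{omc} is harmless since all such $\theta$ contribute to the discrete spectrum $\cX_d$ and are excluded by hypothesis $Z\in\cX_d^\bot$, and the convergence for a finite linear combination follows by linearity from the single-mode estimate.
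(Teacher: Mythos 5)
Your frequency-domain strategy (coarea reduction plus Sokhotski--Plemelj applied directly to the resolvent integral (\re{FY26})) is a legitimate route in principle, but the central estimate you build on is unjustified and, as described, false, so the proposal has a genuine gap. The claimed uniform-in-$\ve$ pointwise bound $\Vert R_K(\om\pm i\ve)Z(n)\Vert_{\cX^0(\Pi)}\le C\langle n\rangle^{-1}$ cannot be produced by the mechanism you invoke: to gain decay in $n$ you must integrate by parts against the phase $e^{-in\theta}$, and each such integration puts a $\theta$-derivative on the amplitude; when that derivative hits the Cauchy factor $(\om_l(\theta)-\om\mp i\ve)^{-1}$ it produces $(\om_l(\theta)-\om\mp i\ve)^{-2}$, whose integral across the level set $\{\om_l=\om\}$ grows like $\ve^{-1}$. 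Thus $n$-decay and $\ve$-uniformity are in direct conflict in your scheme, and the ``two transverse differentiations of the Cauchy kernel'' that you offer as the source of the threshold $\al<-7/2$ make matters worse, not better: they give $(\om_l-\om\mp i\ve)^{-3}$ with $\int ds/|s\mp i\ve|^3\sim\ve^{-2}$, a divergence that no weight in $n$ can absorb. Your arithmetic is also inconsistent: with a putative bound $\langle n\rangle^{-1}$, the weighted sum (\re{eqn}) converges when $2\al-2<-3$, i.e.\ $\al<-1/2$, not when ``$2\al<-3$''. In short, the delicate point you yourself flag --- $\ve$-uniform control near the singular level set --- is exactly the point left unproved.

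The paper sidesteps this trade-off entirely by working in the time domain. By (\re{FY22}) the solution with data $Z$ has cell representation $Z(n,t)=|\Pi^*|^{-1}\int_W e^{-in\theta}\cM(-\theta)e^{-i\om_l(\theta)t}P_l(\theta)\ti Z(\theta)\,d\theta$; two integrations by parts in $\theta$ against the phase $e^{-i\om_l(\theta)t}$, legitimate by the nondegeneracy (\re{Ck}) and the analyticity of $\om_l$, $P_l$ from Lemma \re{lom}, give $\Vert Z(n,t)\Vert_{\cX^0(\Pi)}\le C(1+|n|)^2(1+|t|)^{-2}$. Here the powers of $n$ sit harmlessly in the numerator and are paid for by the weight: $(1+|n|)^{2+\al}\in l^2(\Z^3)$ precisely when $2(2+\al)<-3$, i.e.\ $\al<-7/2$ --- this is the true origin of the exponent, whereas in your account it appears only as an a posteriori rationalization. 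Consequently $\Vert Z(t)\Vert_{\cX_\al}\le C(1+|t|)^{-2}$ is integrable in $t$, and the representation $R_K(\om\pm i\ve)Z=\int_0^{\pm\infty}e^{(i\om\mp\ve)t}Z(t)\,dt$ converges absolutely in the weighted norm uniformly in $\ve\ge 0$, so dominated convergence yields (\re{lap}) for every $\om\in\R$, with no stationary-phase analysis of the resolvent kernel and no boundary terms. If you insist on the stationary route, the correct version would require proving H\"older continuity in $s$ of the level-set averages $F_n(s)=\int_{\{\om_l=s\}}e^{-in\theta}\cM(-\theta)P_l(\theta)D(\theta)\,d\Si/|\na\om_l|$ with norms growing polynomially in $n$ (each $s$-derivative costs a factor $\langle n\rangle$; there is no decay), and Privalov-type estimates would then reproduce the same polynomial loss and the same threshold $\al<-7/2$ --- considerably more work than the paper's Laplace-transform argument. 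Your treatment of the constant branches (\re{omc}) via $Z\in\cX_d^\bot$ and the reduction to a single mode by linearity are correct and agree with the paper.
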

\begin{proof}
It suffices to prove
(\re{lap}) for every vector  of type   (\re{YV}).
By (\re{FY22})
the corresponding solution $Z(t)$ with $Z(0)=Z$ reads
\[
 Z(n,t)=|\Pi^*|^{-1}\int_W e^{-in\theta}
 \cM(-\theta)e^{-i\om_l(\theta)t}P_l(\theta)\ti Z(\theta)d\theta,\quad n\in\Z^3
\]
The partial integration shows   the time-decay
\[
 \Vert Z(n,t)\Vert_{\cX^0(\Pi)}\le C(1+|n|)^2(1+|t|)^{-2}.
\]
Hence,
\[
 \Vert Z(t)\Vert_{\cX_\al}\le C(1+|t|)^{-2}.
\]
Now the convergence (\re{lap}) follows from the integral representation
\[
R_K(\om\pm i\ve)Z=\int_0^{\pm \infty} e^{(i\om\mp\ve)t}Z(t)dt.
\]
\end{proof}














\end{document}